\documentclass{article}
\usepackage[a4paper, margin=1.0in]{geometry}
\usepackage[T1]{fontenc}
\usepackage{dsfont}
\usepackage[normalem]{ulem}
\usepackage{lmodern}
\usepackage{microtype}
\usepackage{graphicx}
\usepackage{standalone}
\usepackage{tikz}
\usetikzlibrary{calc,cd,patterns}
\usepackage[dvipsnames,table]{xcolor}
\usepackage{float}
\usepackage{amsmath}
\usepackage{amsfonts,amssymb,amsthm}
\usepackage{mathtools}
\usepackage{enumerate}
\usepackage{lipsum}
\usepackage[linesnumbered,ruled,vlined]{algorithm2e}
\usepackage{csquotes}
\usepackage[backend=bibtex,style=numeric,natbib,maxbibnames=99]{biblatex}
\usepackage{hyperref}
\usepackage{aliascnt}
\usepackage{aliascnt}

\hypersetup{
  colorlinks=true,
  linkcolor=darkblue,   % internal: sections, theorems, equations
  citecolor=darkred,    % citations
  urlcolor=darkblue,    % URLs (same as links is fine)
}
   
\usepackage{titlesec}

\titleformat{\subsection}
  {\normalfont\large\bfseries\centering}
  {\thesubsection}{1em}{}

\titlespacing*{\subsection}
  {0pt}{3.25ex}{0.6ex}

\newtheorem{theorem}{Theorem}[section]

\newtheorem{theoremA}{Theorem}

\theoremstyle{plain}
\newaliascnt{lemma}{theorem}
\newtheorem{lemma}[lemma]{Lemma}
\aliascntresetthe{lemma}

\newaliascnt{conjecture}{theorem}
\newtheorem{conjecture}[conjecture]{Conjecture}
\aliascntresetthe{conjecture}

\newaliascnt{proposition}{theorem}
\newtheorem{proposition}[proposition]{Proposition}
\aliascntresetthe{proposition}

\newaliascnt{corollary}{theorem}
\newtheorem{corollary}[corollary]{Corollary}
\aliascntresetthe{corollary}

\newaliascnt{problem}{theorem}

\aliascntresetthe{problem}

\theoremstyle{definition}
\newaliascnt{example}{theorem}
\newtheorem{example}[example]{Example}
\aliascntresetthe{example}

\newaliascnt{definition}{theorem}
\newtheorem{definition}[definition]{Definition}
\aliascntresetthe{definition}

\newaliascnt{observation}{theorem}
\newtheorem{observation}[observation]{Observation}
\aliascntresetthe{observation}

\newaliascnt{remark}{theorem}
\newtheorem{remark}[remark]{Remark}
\aliascntresetthe{remark}

\newaliascnt{notation}{theorem}

\aliascntresetthe{notation}

\newtheoremstyle{algorithmstyle}
  {10pt} % Space above
  {10pt} % Space below
  {\normalfont} % Body font
  {} % Indent
  {\bfseries} % Head font
  {} % No Punctuation after head
  {1em} % Space after head
  {\thmname{#1}\thmnumber{ #2}\thmnote{ (#3)}\newline} % Head spec + forced line break
\theoremstyle{algorithmstyle}

\makeatletter
\let\c@myalgorithm\c@algocf
\makeatother

\SetKw{KwRet}{return}
\SetKw{KwTo}{to}
\SetKw{KwDownto}{downto}
\SetKw{KwAnd}{and}
\SetKw{KwBreak}{break}
\SetKw{KwContinue}{continue}
\SetKw{KwTrue}{true}
\SetKw{KwFalse}{false}

\newcommand{\bbN}{\mathbb{N}}
\newcommand{\bbZ}{\mathbb{Z}}
\newcommand{\bbR}{\mathbb{R}}
\newcommand{\bbQ}{\mathbb{Q}}
\newcommand{\bbC}{\mathbb{C}}
\newcommand{\bbF}{\mathbb{F}}
\newcommand{\K}{\mathbf k} % Which letter do we want for the base field

\newcommand{\vect}{\textbf{Vect}}
\newcommand{\simp}{\textbf{SCpx}}
\newcommand{\op}{\text{op}}
\newcommand{\Pers}[1]{\mathrm{Pers}(#1)} %
\newcommand{\Persfp}[1]{\mathrm{Pers}_\text{fp}(#1)} % category of finitely presentable persistence modules
\newcommand{\Persfpb}[1]{\mathrm{Pers}_\text{fp}^\text{b}(#1)} % category of finitely presentable persistence modules

\newcommand{\oto}[1]{\xrightarrow{#1}}

\global\long\def\into{\hookrightarrow}%
\global\long\def\onto{\twoheadrightarrow}%
\newcommand{\iso}{\xrightarrow{\,\smash{\raisebox{-0.5ex}{\ensuremath{\scriptstyle\sim}}}\,}}
\global\long\def\into{\hookrightarrow}%
\global\long\def\onto{\twoheadrightarrow}%

\DeclareMathOperator{\Img}{Im}
\DeclareMathOperator{\rank}{rank}
\DeclareMathOperator\coker{coker}

\DeclareMathOperator\Id{Id}
\newcommand\up[1]{\langle #1 \rangle} %upper set?
\DeclareMathOperator{\fun}{Fun}

\DeclareMathOperator*{\colim}{colim}

\DeclareMathOperator\Hom{Hom}

\DeclareMathOperator{\End}{End}

\DeclareMathOperator\supp{supp}

\DeclareMathOperator{\Rips}{VR}
\DeclareMathOperator{\Alpha}{D}
\DeclareMathOperator{\multicover}{MC}
\newcommand{\bH}{\mathbf{H}}
\newcommand{\Int}{\mathcal{I}\text{nt}}

\newcommand{\module}{V}% V 

\newcommand{\mathsc}[1]{\textup{\textsc{#1}}}
\newcommand{\calA}{\mathcal A}

\DeclareMathOperator{\grid}{Grid}
\newcommand{\gri}{\mathcal{G}}
\DeclareMathOperator{\diam}{diam}
\DeclareMathOperator{\Wong}{Wong}

\newcommand{\bigslant}[2]{%
  \mathchoice
    {\raisebox{.2em}{$\displaystyle #1$}\!\left/\raisebox{-.2em}{$\displaystyle #2$}\right.}
    {\raisebox{.2em}{$\textstyle #1$}\!\left/\raisebox{-.2em}{$\textstyle #2$}\right.}
    {\raisebox{.1em}{$\scriptstyle #1$}\!/\raisebox{-.1em}{$\scriptstyle #2$}}
    {\raisebox{.1em}{$\scriptscriptstyle #1$}\!/\raisebox{-.1em}{$\scriptscriptstyle #2$}}%
}

\DeclareMathOperator*{\argmax}{arg\,max}

\newcommand{\Oc}{\mathcal{O}}

\DeclareMathOperator{\udim}{\underline\dim}
\DeclareMathOperator{\intdim}{\int \udim}
\newcommand{\set}[1]{{\left\{#1\right\}}}
\definecolor{lightred}{rgb}{1, 0.5, 0.5}
\definecolor{lightblue}{rgb}{0.4, 0.5, 1}
\definecolor{darkblue}{rgb}{0.3, 0.3, 0.8}
\definecolor{darkred}{rgb}{0.8, 0.3, 0.4}

\pgfdeclarepatternformonly{thick nw lines}
  {\pgfpoint{-1pt}{-1pt}}{\pgfpoint{8pt}{8pt}}
  {\pgfpoint{7pt}{7pt}}
{
  \pgfsetlinewidth{0.8pt}
  \pgfpathmoveto{\pgfpoint{0pt}{7pt}}
  \pgfpathlineto{\pgfpoint{7pt}{0pt}}
  \pgfpathmoveto{\pgfpoint{-1pt}{1pt}}
  \pgfpathlineto{\pgfpoint{1pt}{-1pt}}
  \pgfpathmoveto{\pgfpoint{6pt}{8pt}}
  \pgfpathlineto{\pgfpoint{8pt}{6pt}}
  \pgfusepath{stroke}
}
\newcommand{\smat}[1]{{\scriptstyle\setlength{\arraycolsep}{2pt}\renewcommand{\arraystretch}{0.8}#1}}

\newcommand{\ignore}[1]{}

\newcommand{\thick}[1]{\mathfrak{t}(#1)}

\title{Computing the Skyscraper Invariant}
\author{Marc Fersztand and Jan Jendrysiak}
\date{March 2026}

\begin{document}

\maketitle

\begin{abstract}

We develop the first algorithms for computing the Skyscraper Invariant [FJNT24]. This is a filtration of the classical rank invariant for multiparameter persistence modules defined by the Harder-Narasimhan filtrations along every central charge supported at a single parameter value.

Cheng's algorithm [Cheng24] can be used to compute HN filtrations of arbitrary acyclic quiver representations in polynomial time in the total dimension, but in practice, the large dimension of persistence modules makes this direct approach infeasible.
 We show that by exploiting the additivity of the HN filtration and the special central charges, one can get away with a brute-force approach. For $d$-parameter modules, this produces an FPT $\varepsilon$-approximate algorithm with runtime dominated by $\Oc( 1/\varepsilon^d \cdot T_{\mathsc{dec}})$, where $T_{\mathsc{dec}}$ is the time for decomposition, which we compute with \textsc{aida} [DJK25]. 

We show that the wall-and-chamber structure of the module can be computed via lower envelopes of degree $d - 1$ polynomials. This allows for an exact computation of the Skyscraper Invariant whose runtime is roughly $\Oc(n^d \cdot T_{\mathsc{dec}})$ for $n$ the size of the presentation of the modules and enables a faster hybrid algorithm to compute an approximation.

For 2-parameter modules, we have implemented not only our algorithms but also, for the first time, Cheng's algorithm.
We compare all algorithms and, as a proof of concept for data analysis, compute a filtered version of the Multiparameter Landscape for biomedical data.
\end{abstract}

\begin{figure}[H]
\centering
\includegraphics[scale = 0.75]{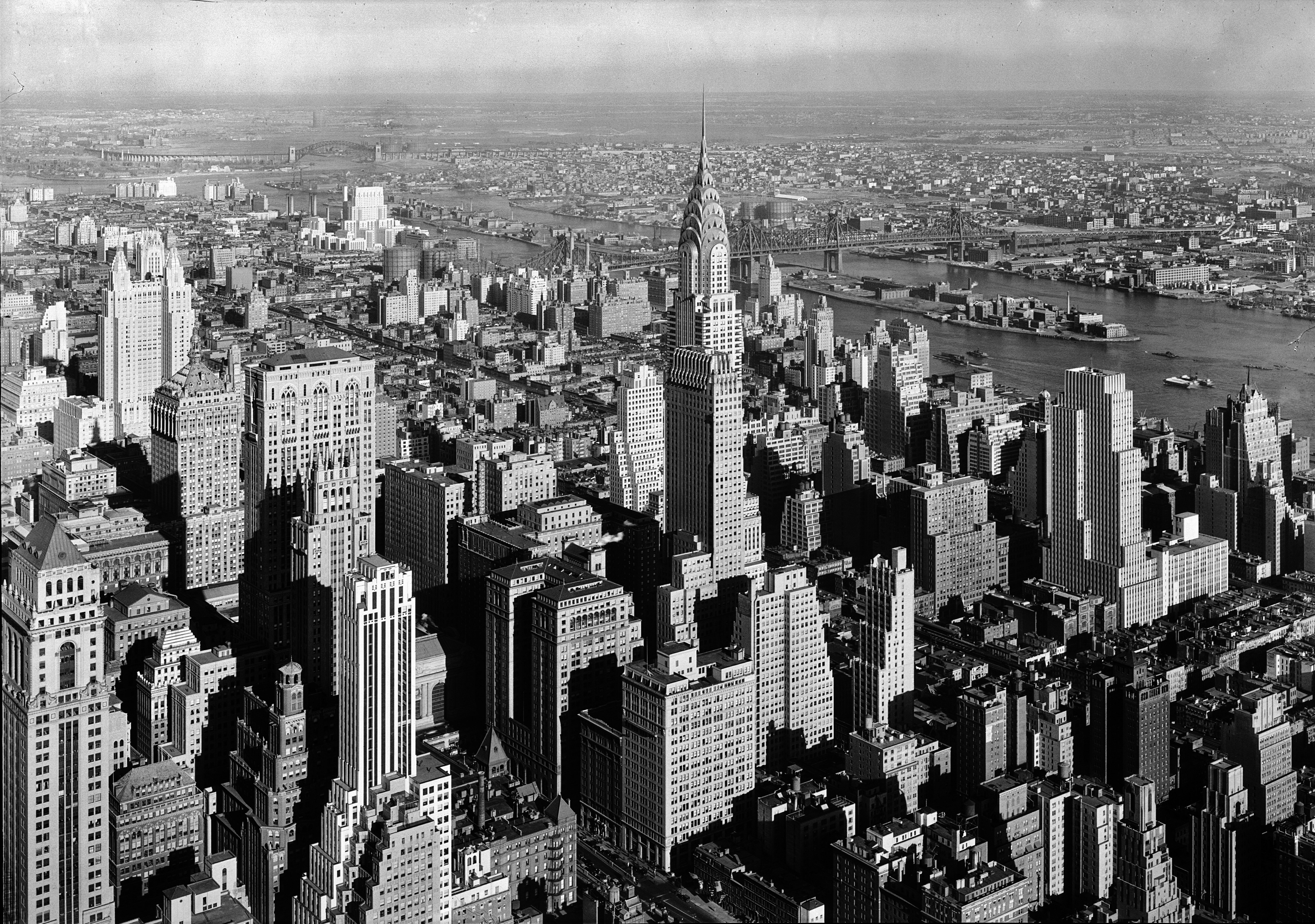}
\caption{View from Empire State Building, Gotscho-Schleisner,  1932}
\label{fig:front_picture}
\end{figure}

\paragraph{Introduction.}

Multiparameter Persistence Modules (MPM) provide a way to capture topological features of multifiltered data. Unlike the one parameter case, the decomposition of an MPM is, in general, neither easy to interpret nor robust to noise \cite{BauerScoccola}. To use MPM in practice, one needs invariants that are informative, stable, computable, and easy either to interpret or to integrate into a statistical or Machine Learning pipeline. A first example is given by the \emph{rank invariant} \cite{CZMP,CZMP-socg} and its vectorisations \cite{corbet2019kernel, vipond2018multiparameter}, which have been used for applications in biology \cite{Vipond21,Kate24}. 
In \cite{HN_discr, fersztand2024harder} Fersztand, Jacquard, Tillmann, and Nanda introduced the \emph{Skyscraper Invariant} (\autoref{fig:front_picture}), a \emph{filtration} of the rank invariant.

 Given a $d$-parameter module $\module$, $\theta \in \bbR$ and parameters $\alpha\leq \beta \in \bbR^d$,
  the value $s^\theta_V(\alpha, \beta)$ of the Skyscraper  invariant is the rank of the structure map $V_\alpha \to V_\beta$ after -- informally -- restricting it to elements that persist over a parameter region whose volume is at most $1/\theta$.
 It is defined by the \emph{Harder-Narasimhan} filtration for a central charge concentrated at $\alpha$. It is strictly stronger than the rank invariant and carries a more discriminating erosion-distance, which makes it stable with respect to the interleaving distance \cite{fersztand2024harder}. The Skyscraper Invariant induces filtrations of all the aforementioned vectorisations of the rank invariant. 

 We will describe multiple algorithms to compute the Skyscraper Invariant and also touch on the computation of Harder-Narasimhan Filtrations with respect to other stability conditions.

\paragraph{Contributions.} 
In \autoref{sec:sky_invariant}, we give an elementary introduction to Harder-Narasimhan (HN) filtrations and a post-hoc motivation for their utility in Persistence Theory. We also briefly explain that, up to an $\varepsilon$-error in the erosion distance, it suffices to compute the HN filtration at each point $\alpha$ on an $\varepsilon$-grid in the parameter space.

For quiver representations, this was recently made possible in polynomial time via Cheng's algorithm \cite{cheng2024deterministic}, described in \autoref{sec:cheng}. We present the first implementation of Cheng's algorithm\footnote{https://github.com/marcf-ox/sky-inv-quiv/} using the random method introduced in \cite{franks2023shrunk} to compute shrunk subspaces.

The obtained algorithm computes with high probability the Skyscraper Invariant in $\mathcal{O}(\thick \module^{19}\varepsilon^{-11d})$ time, up to poly-logarithmic factors, where $\thick{\module}$, the \emph{thickness} of $\module$, is the maximal pointwise dimension. We optimized Cheng's algorithm to our setting, leading in \autoref{prop:cheng_advanced_analysis} to an improvement of a factor $\thick \module^{2}\varepsilon^{-2d}$ compared to the general version of the algorithm (\autoref{cor:cheng_naive_analysis}).  Building on ideas from \cite{iwamasa2024algorithmic}, we propose further optimisations that yield considerable empirical speed-ups.

A priori, this method is generally not directly applicable to the large data considered in TDA applications.
 In \autoref{sec:approx}, we introduce a strategy which significantly cuts the computation time of the Skyscraper Invariant: \autoref{alg:skeleton} reduces the computation of the HN filtration at $\alpha \in \bbR$ to \emph{indecomposable} submodules \emph{uniquely generated} at $\alpha$. It blends the decomposition algorithm \textsc{aida} and kernel computation with \textsc{mpfree}\footnote{https://github.com/JanJend/AIDA, https://bitbucket.org/mkerber/mpfree/}.

Unfortunately, even when restricted to these small submodules, Cheng's algorithm in this form turns out to be too slow, which we prove experimentally in \autoref{tab:cheng_exp}. 
Let $k \coloneqq \max_{W \subset V} \thick{W}$ be the maximal thickness of an indecomposable, uniquely generated submodule. We observed (\autoref{obs:factor_dimension}) that for modules generated by the Persistent Homology of many typical bifiltrations, $k$ is mostly $1$ and rarely larger than $3$. 
 
 This low dimensionality enables the computation of the HN filtration with
 a brute-force method, which we describe and improve in \autoref{sec:hn_exhaustive}. Together with \autoref{alg:skeleton}, this leads to an exhaustive search algorithm to compute the Skyscraper Invariant up to an $\varepsilon$ error. Its runtime is fixed-parameter tractable (FPT) in $k$: $\Oc \left( 1/\varepsilon^d \left(T_{\mathsc{dec}}(\module)+\thick{\module} 2^{k^2 + \Oc(k)} \right) \right) $, where $T_{\mathsc{dec}}$ is the time needed for decomposition.

The bottleneck of this computation is the iteration over all $\alpha$ in an $\varepsilon$-grid, resulting in the $1/\varepsilon^d$-factor in the computation time. To avoid this, we investigate, in \autoref{sec:wall_and_chamber}, how the HN filtration at $\alpha \in \bbR^d$ changes when $\alpha$ varies. The result is our main theoretical contribution. We define an equivalence relation $\sim_{\text{HN}(V)}$ on $\bbR^d$ by requiring the HN filtrations of $V$ at two equivalent degrees to differ only by an update of the degrees of the generators and relations of their factors (\autoref{def:S_equ}).

\begin{theoremA}\label{thm:main}
Let $\module$ be a f.p. persistence module with bounded support and let $\bigcup_{j \in J} C_j$ be the rectangular tiling of the support induced by the Betti-numbers of $\module$. 
The partition induced by $\sim_{\text{HN}(V)}$ on every $C_j$ is given by the minimisation diagram of a finite set of multilinear polynomials of degree $d-1$.
\end{theoremA}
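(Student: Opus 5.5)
We first make the central charge at $\alpha$ explicit. For the skyscraper stability at $\alpha$, the slope of a submodule $W \subseteq \module$ is governed by two quantities: $\dim W_\alpha$ and the total dimension $\int \udim W$, that is, the integral of the Hilbert function over $\bbR^d$. Both have to be understood as functions of $\alpha$.

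The key observation concerns $\alpha \mapsto \int \udim \langle W \rangle_\alpha$, where $\langle W\rangle_\alpha$ is the submodule generated at $\alpha$ (or the relevant "degree-updated" subobject). Restricted to a single cell $C_j$ of the Betti tiling, this function is a polynomial in $\alpha$ that is multilinear of total degree at most $d-1$. The reason is that every pointwise dimension is locally constant on the grid cells. Hence the volume of the support of a submodule generated at $\alpha$ is a signed sum, by inclusion–exclusion over generators and relations, of volumes of boxes $\prod_i (b_i - \alpha_i)$. Within $C_j$ the upper corners $b$ are fixed grid coordinates. So each box volume is a product of affine functions in the distinct coordinates $\alpha_i$. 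That is multilinear of degree $d$, and the degree-$d$ term $\prod(-\alpha_i)$ should cancel once we subtract or normalise against the dimension at $\alpha$, or take differences of slopes. I would verify this cancellation carefully, since it is where "degree $d-1$" comes from.

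Next, the HN filtration at $\alpha$ is determined by a finite combinatorial datum. Within $C_j$, the poset of subspaces of $\module_\alpha$ and the generated submodules are combinatorially independent of $\alpha$, up to translating the generator degrees to $\alpha$. This matches \autoref{def:S_equ}. Hence there is a finite list of candidate subquotients or "flags" $S$ whose slopes $\mu_S(\alpha)$ are ratios with constant numerator $\dim$ and polynomial denominator. Comparing slopes $\mu_S(\alpha) \ge \mu_{S'}(\alpha)$ becomes, after clearing the constant numerators, a comparison of degree $d-1$ multilinear polynomials. The HN filtration is built greedily: take the maximal destabilising subobject, then recurse on the quotient. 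It is therefore the argmin of a finite family of polynomials $p_S(\alpha)$, where $S$ ranges over admissible filtrations encoded lexicographically. Two points are $\sim_{\mathrm{HN}(\module)}$-equivalent exactly when the same $S$ attains the minimum, which gives the minimisation diagram.

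The main obstacle is the final step. The HN filtration is obtained by nested optimisations (max slope, then maximal rank, then recursion), not a single minimisation, so one must encode it as one lower envelope. My plan is to use the additivity and convexity of the HN polygon. The HN filtration is the unique filtration whose polygon is the upper convex hull of the points $(\dim W_\alpha, \int\udim W)$ over all candidates $W$. Its combinatorial type changes only when some triple of candidate points becomes collinear or two points swap order. Each such event is the vanishing of a difference of degree $d-1$ multilinear polynomials, using the constant first coordinates. I would therefore take the finite set to consist of these polynomials, suitably scaled, and show that the cells of their arrangement refine, and coincide with, the $\sim_{\mathrm{HN}(\module)}$ classes.
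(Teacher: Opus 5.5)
You have the right ingredients. The shift of generator degrees within a cell is \autoref{prop:grid_similar}, and the cancellation you are unsure about is exactly \autoref{lem:slope_poly} together with \autoref{obs:lower_envelope}. Note that $\int\udim\langle\langle W\rangle_\beta\rangle$ is multilinear of degree $d$, not $d-1$ as your second paragraph first asserts. Its top coefficient is $(-1)^d\dim W$, so after dividing by $\dim W$ every inverse slope has the same leading term $(-1)^d\beta_1\cdots\beta_d$. Consequently every pairwise slope comparison is multilinear of degree at most $d-1$, and so is every orientation determinant of three points $(\dim W,\int\udim\langle\langle W\rangle_\beta\rangle)$. Your collinearity computation is therefore sound.

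The genuine gap is the final step, which you flag yourself. Your plan produces an \emph{arrangement}, i.e.\ the sign cells of a finite family of polynomials, not a minimisation diagram. Its cells also do not coincide with the $\sim_{\mathrm{HN}(V)}$ classes. Three candidate points that are not vertices of the HN polygon can become collinear without changing the filtration. Likewise, two such points of equal dimension can swap order without changing it. So you would only obtain that the HN partition is a coarsening of an arrangement of degree-$(d-1)$ polynomials, which is weaker than the statement. The ``lexicographic argmin'' in your third paragraph is likewise asserted, not shown: a greedy or lexicographic optimum is not a single lower envelope.

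The paper avoids needing one global lower envelope by nesting. At the first level, the maximal destabilising submodule at $\beta$ is $\langle\langle W\rangle_\beta\rangle$, where $W\subseteq V_\alpha$ minimises the truncated polynomial $\tilde p_W$ (\autoref{prop:max_slope_proof}). On each face it then recurses on the quotient $\langle V_\alpha\rangle/\langle W\rangle$. The subquotients there again have inverse slopes with the common leading term $(-1)^d\prod_i\beta_i$. Finally, \autoref{prop:grid_similar}(ii) identifies the pieces at different points of the cell via $t_{\beta\to\alpha}$. The result is a slope subdivision (\autoref{prop:hnf_cell_correct}) in which each level is a minimisation diagram of degree-$(d-1)$ multilinear polynomials, restricted to a face of the previous level. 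That nested structure is what the theorem's proof actually delivers. You can keep your convex-hull picture to justify a single step: the first edge of the HN polygon gives the highest-slope submodule. But it is the recursion, not the arrangement, that yields minimisation diagrams.
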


This partition is a slice of the \emph{wall-and-chamber} structure (e.g. \cite{HILLE2002205}). 

For $2$-parameter modules, the $S_i$ are convex polygons and are therefore easy to compute and store. We describe how to compute them in \autoref{alg:full_hnf}, which creates a data structure from which the exact value of the Skyscraper Invariant can be queried in average time $\log(n) +\thick{\module}\log\thick{\module}$, where $n$ is the number of generators and relations of the module. 

Our final \autoref{alg:parallel}, the \emph{parallel grid scan},
blends this exact computation with the approximation to avoid the $(1/\varepsilon^d)$-bottleneck.
We have implemented it and use it to demonstrate the applicability of the Skyscraper Invariant in data analysis by computing filtered Multiparameter Landscapes for biological data from \cite{Vipond21}.

\paragraph{Related Work.} Many informative invariants of MPM have been proposed in the literature, including, among others, \cite{botnan2021signed, blanchette2023exact, kim2021generalized, mccleary2022edit, morozov2021output, ASASHIBA2025107905}, and for some, there are also implemented algorithms \cite{dkm-computing, vipond2018multiparameter, corbet2019kernel, images2020, loiseaux2023stable, loiseaux2025multi, xin2023gril}. All of these are either not stronger than the rank invariant, not (yet) practically computable for large modules, or unstable in interleaving distance, demonstrating the advantage of our work. The idea of computing filtrations has also appeared in \cite{Bjerkevik,millerzhang24}.
An algorithm for the Skyscraper Invariant has been implemented in the special case of ladder persistence modules \cite{JacquardThesis2024}. 
The computation of HN filtrations of quiver representations is an active topic in complexity theory \cite{iwamasa2024algorithmic,franks2023shrunk}.

\section{A Conceptual Introduction to the Skyscraper Invariant}\label{sec:sky_invariant}

\subsection{Persistence Modules}

\paragraph{Filtrations of simplicial complexes.}

Topological Data Analysis seeks to compute topological invariants from geometric data. In a typical setting, one constructs a simplicial complex  filtered by $d\geq 1$ parameters. For example, given a simplicial complex $X\in \simp$ and a map $f \colon X \to \bbR^d$ which is constant on simplices, one constructs the sublevel sets $X_{\leq \alpha} \coloneqq \{ x \in X \mid f(x) \leq \alpha \}$. These assemble to a functor  $X_{\leq \bullet} \colon \bbR^d \to \simp $ via inclusion of sublevel sets. Fix a (in practice often finite) field $\K$. The composition with the homology functor $\bH_*(-;\K)$ turns $X_{\leq \bullet}$ into a functor $\module$ from $\bbR^d $ to $\vect_\K$ - a \emph{Multiparameter} (or  $d$-parameter) persistence module.

Our main examples for filtered simplicial complexes come from \emph{density-scale} bifiltrations. Here $f$ has $d=2$ components: The first indicates the \emph{size} of a simplex and the second its \emph{density} in the dataset. These bifiltrations were introduced in \cite{CZMP} as a way to reveal topological features in noisy data (\autoref{ex:discrete_module}). We review common bifiltrations in \autoref{subsec:bifiltr} and also refer to \cite[Section 5]{botnan2022introduction}.

\begin{example}\label{ex:discrete_module}
A standard 1-parameter pipeline \cite{edelsbrunner2000topological} on a point set $S$ looks as follows: One constructs the \emph{Vietoris-Rips complex}, $\Rips_\varepsilon(S)$, the flag complex of the $\varepsilon$-neighbourhood-graph, which is therefore filtered by $\varepsilon \in \bbR$. Its \emph{Persistent Homology} groups $\bH_*(\Rips (S); \K) \colon \bbR \to \vect_\K$ decompose into a barcode or persistence diagram (\autoref{fig:barcode}).
\begin{figure}[H]
    \centering
    \begin{tikzpicture}
        \node[anchor=center, inner sep=0] (image) at (0,0) 
            {\includegraphics[scale=0.36]{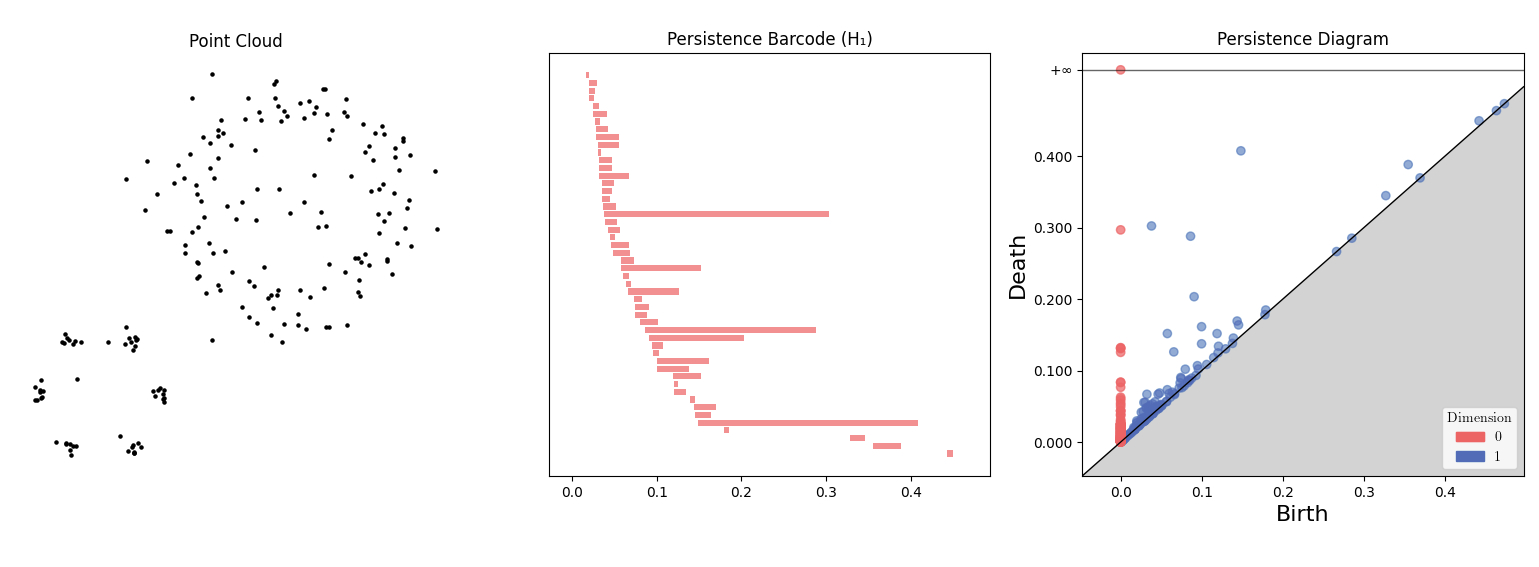}};
        
        \node[blue] at (image.south) [yshift=1.36cm, xshift = 0.3cm] {$\times$};
    \end{tikzpicture}
    \caption{ The lower left circle produces the bar marked with the \textcolor{lightblue}{blue cross}.}
    \label{fig:barcode}
\end{figure}
The large circle on the right is impossible to detect with this technique without knowing the intensity of the noise we need to remove.
Instead, we compute a kernel density estimate for each point. Then for any $\delta \in \bbR$, we filter the simplicial complex again by allowing only points of density $\geq \delta$, constructing the \emph{Density-Rips} bifiltration.
\begin{figure}[H]
    \centering
    \begin{minipage}[c]{0.42\textwidth}
    \centering
    \begin{tikzpicture}
        \node[inner sep=0] (img) {\includegraphics[scale=0.045]{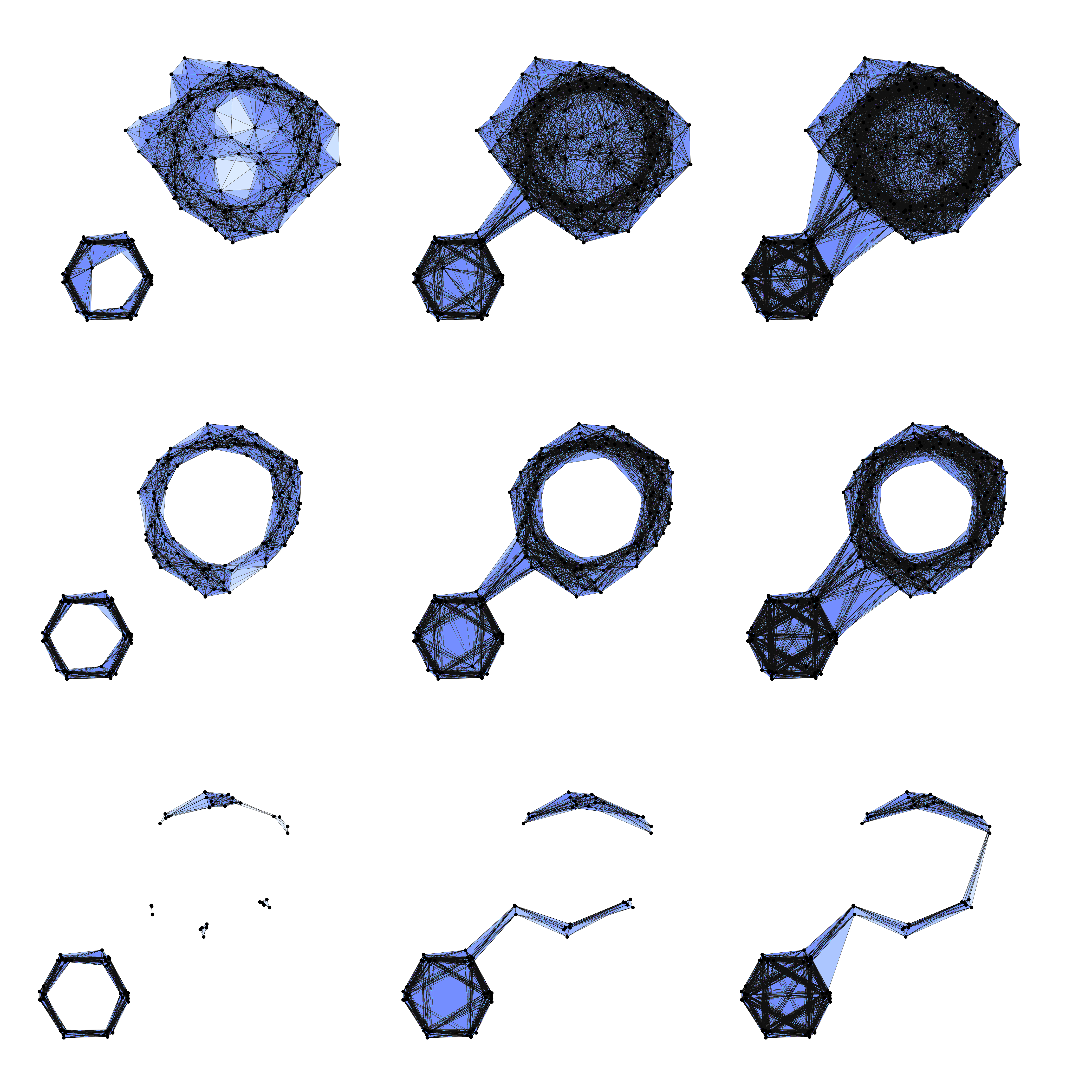}};
        
        % Horizontal arrow below (scale)
        \draw[->, thick] ([yshift=-0.3cm]img.south west) -- 
                         ([yshift=-0.3cm]img.south east) 
                         node[midway, below] {scale};
        
        % Vertical arrow to the left (codensity)
        \draw[->, thick] ([xshift=-0.3cm]img.south west) -- 
                         ([xshift=-0.3cm]img.north west) 
                         node[midway, above, rotate=90] {codensity};
    \end{tikzpicture}
\end{minipage}
    \hfill
\begin{minipage}[c]{0.55\textwidth}
        \centering
        \scalebox{1.25}{$\overset{\bH_1(- ; \bbF_2)}{\Longrightarrow}$}
        \scalebox{1.3}{%
        \begin{tikzcd}[ampersand replacement=\&]
        \bbF_2 \& 0 \& 0 \\
        \bbF_2^2 \& \bbF_2 \& \bbF_2 \\
        \bbF_2 \& 0 \& 0
        \arrow["", from=1-1, to=1-2]
        \arrow["", from=1-2, to=1-3]
        \arrow["\smat{\begin{bmatrix} 0 & 1 \end{bmatrix}}", from=2-1, to=2-2]
        \arrow["", from=2-2, to=2-3]
        \arrow["", from=3-1, to=3-2]
        \arrow["", from=3-2, to=3-3]
        \arrow["\smat{\begin{bmatrix} 1 \\ 0 \end{bmatrix}}", from=3-1, to=2-1]
        \arrow["", from=3-2, to=2-2]
        \arrow["\smat{\begin{bmatrix} 1 & 0 \end{bmatrix}}", from=2-1, to=1-1]
        \arrow["", from=2-2, to=1-2]
        \arrow["", from=3-3, to=2-3]
        \arrow["", from=2-3, to=1-3]
        \end{tikzcd}} 
\end{minipage}
    \caption{A $3 \times 3$ Density-Rips bifiltration of \autoref{fig:barcode}
     and its first persistent homology group over $\bbF_2$}
    \label{fig:density_rips}
\end{figure}
By applying $\bH_1( - ;\K)$, we can now detect the large circle in the middle row in \autoref{fig:barcode}. 
\end{example}

The \emph{support} of $\module$ is $\supp(\module) \coloneqq \set{\alpha\in \bbR^d \mid V_\alpha\neq 0}$.

\paragraph{Intervals.}
The analog of a \emph{bar} in the multiparameter case is the \emph{Interval Module}.

\begin{definition}[Interval Module]\label{def:interval-module}
A sub-poset $I \subset \bbR^d$ is an \emph{interval} if, with respect to the order, it is convex and connected. The module $\K[I]\colon \bbR^d \to \vect_\K$ is defined as 
\[ \K[I]_\alpha = \begin{cases} \K &\text{ if } \alpha \in I \\
0 &\text{ otherwise} \end{cases} 
\quad \quad 
\K[I]_{\alpha \to \beta} = \begin{cases} \Id_\K &\text{ if } \alpha, \beta \in I \\
0 &\text{ otherwise} \end{cases}.
\]
An \emph{interval module} is a persistence module that is isomorphic to $\K[I]$ for some interval $I$.
\end{definition}

When $d=1$, every pointwise finite-dimensional 1-parameter persistence module decomposes as a direct sum of interval modules. The \emph{barcode} of a functor $V\colon \bbR\to \vect_{\K}$ is obtained by depicting the supports of the indecomposable summands of $V$. Unfortunately, $\bbR^d$-persistence modules do not decompose into intervals if $d>1$. Instead, one computes invariants that lose some information.

\subsection{Harder-Narasimhan Filtrations}\label{subsec:skyscraper-informal-def}

The Skyscraper Invariant arises from the \emph{Harder-Narasimhan filtration} \cite{Harder1974} of persistence modules for a specific set of stability conditions. We will explain how one \emph{could} arrive at considering HN filtrations even without prior knowledge of geometric invariant theory \cite{Mumford94}.
 For persistence modules, this theory was originally developed in \cite{HN_discr} and \cite{fersztand2024harder}.

\paragraph{Lifetime.}
Let $\module$ be a \emph{multiparameter} persistence module.
A sensible idea to extend the barcode is the following: search for an element with a "maximal lifetime" (an idea developed further in \cite{millerzhang24}). By lifetime we mean for any element $v \in V_\alpha$, with $\alpha \in \bbR^d$, the set
\[ L_{v} \coloneqq \{ \beta \in \bbR^d \mid \alpha \leq \beta \text{ and } V_{\alpha \to \beta} \left( v \right) \neq 0 \}
\]
of parameter values under which $v$ persists and "maximal" refers here to its volume.
To assure that every lifetime has a finite volume, we will assume that every module is bounded, which can always be realised by restricting the module to a bounded subset of $\bbR^d$.

\begin{example} Consider again the bifiltered simplicial complex $\Rips(X)\colon \bbR^2\to \simp$ obtained from the point set $X$ in \autoref{fig:barcode}.
We depict this point set, coloured by a kernel density estimate, and the \emph{Hilbert function} $\udim$ -- the pointwise dimension -- of the 2-parameter module produced by $\bH_1(\Rips(X), \bbF_2)$:

\begin{figure}[H]
\begin{minipage}[t]{0.38\textwidth}\vspace{20pt}
    \centering
    
    \includegraphics[width=1.12\textwidth]{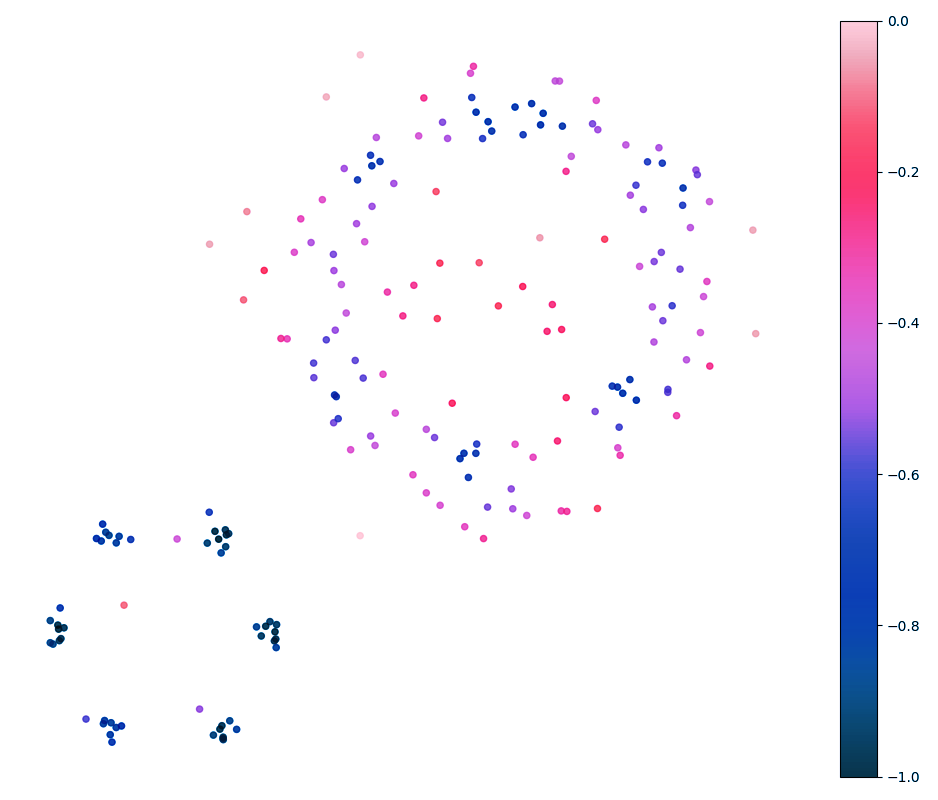}
\end{minipage}
\begin{minipage}[t]{0.57\textwidth}\vspace{0pt}
    \centering
    \includegraphics[width=1.15\textwidth]{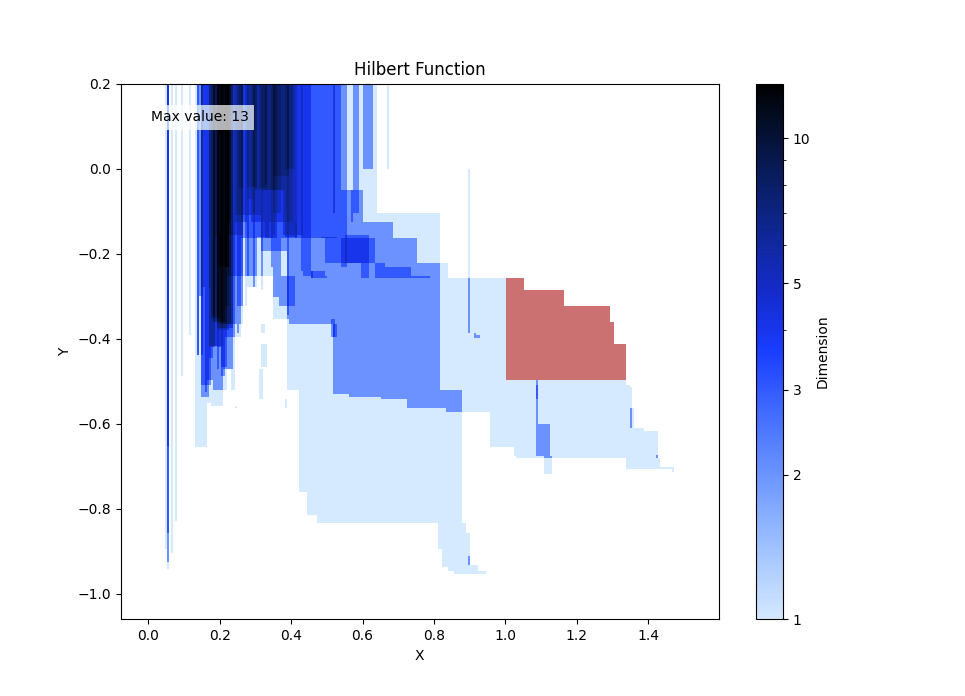}
\end{minipage}
\caption{$X$ filtered by a KDE, less dense points are redder (left) and $\udim \bH_1(\Rips(X),\bbF_2)$ (right).}
\label{fig:HF}
\end{figure}
The smaller circle produces a vertical, almost rectangular shape which lies roughly in the parameter region $[0.4,0.9] \times[-0.8, -0.1]$. The larger circle produces the slightly tilted, staircase-like shape which extends to the right. Then, the $1$-dimensional homology group at $\alpha = (1.0,-0.5)$ is generated by this large circle in $\Rips(X)_{1.0}$ after filtering out points of density lower than $0.5$. Any vector in this space has the same lifetime and we mark it \textcolor{lightred}{red} in \autoref{fig:HF}.
\end{example}

After finding the vector $v$ of maximal lifetime, the submodule $\langle v \rangle$ of $\module$ generated by $v$ is an interval module and $L_v = \supp \langle v \rangle$. Hence if we compute the quotient $\module/\langle v \rangle$, and repeat this process, we obtain a filtration of $\module$ in terms of interval modules.

\paragraph{Additivity.} The element of maximal lifetime is not compatible with direct sums of modules.

\begin{example}
We return to the discretised version of our module from \autoref{fig:density_rips}. It can be split into two indecomposable summands, each roughly corresponding to one circle in the point set. The element of maximal lifetime is the vector $(1,1)^t \in \bbF_2^2$.
\begin{figure}[H]
\centering
\scalebox{1.0}{%
        \begin{tikzcd}[ampersand replacement=\&]
        |[fill=red!20]| \bbF_2  \& 0 \& 0 \\
        |[fill=red!20]|  \bbF_2^2 \& |[fill=red!20]|  \bbF_2 \& |[fill=red!20]|  \bbF_2 \\
        \bbF_2 \& 0 \& 0
        \arrow["", from=1-1, to=1-2]
        \arrow["", from=1-2, to=1-3]
        \arrow["\smat{\begin{bmatrix} 0 & 1 \end{bmatrix}}", from=2-1, to=2-2]
        \arrow["", from=2-2, to=2-3]
        \arrow["", from=3-1, to=3-2]
        \arrow["", from=3-2, to=3-3]
        \arrow["\smat{\begin{bmatrix} 1 \\ 0 \end{bmatrix}}", from=3-1, to=2-1]
        \arrow["", from=3-2, to=2-2]
        \arrow["\smat{\begin{bmatrix} 1 & 0 \end{bmatrix}}", from=2-1, to=1-1]
        \arrow["", from=2-2, to=1-2]
        \arrow["", from=3-3, to=2-3]
        \arrow["", from=2-3, to=1-3]
        \end{tikzcd}}
        $\simeq$ 
        \scalebox{1.2}{%
\begin{tikzcd}[ampersand replacement=\&]
	 \bbF_2 \& 0 \& 0 \\
	 \bbF_2 \& 0 \& 0 \\
    \bbF_2 \& 0 \& 0
	\arrow["", from=1-1, to=1-2]
    \arrow["", from=1-2, to=1-3]
    \arrow["", from=2-1, to=2-2]
    \arrow["", from=2-2, to=2-3]
    \arrow["", from=3-1, to=3-2]
    \arrow["", from=3-2, to=3-3]
    \arrow["", from=3-1, to=2-1]
    \arrow["", from=3-2, to=2-2]
    \arrow["", from=2-1, to=1-1]
    \arrow["", from=2-2, to=1-2]
    \arrow["", from=3-3, to=2-3]
    \arrow["", from=2-3, to=1-3]
\end{tikzcd}}
$\bigoplus$
\scalebox{1.2}{%
\begin{tikzcd}[ampersand replacement=\&]
	 0 \& 0 \& 0 \\
	 \bbF_2 \& \bbF_2 \& \bbF_2 \\
    0 \& 0 \& 0
	\arrow["", from=1-1, to=1-2]
    \arrow["", from=1-2, to=1-3]
    \arrow["", from=2-1, to=2-2]
    \arrow["", from=2-2, to=2-3]
    \arrow["", from=3-1, to=3-2]
    \arrow["", from=3-2, to=3-3]
    \arrow["", from=3-1, to=2-1]
    \arrow["", from=3-2, to=2-2]
    \arrow["", from=2-1, to=1-1]
    \arrow["", from=2-2, to=1-2]
    \arrow["", from=3-3, to=2-3]
    \arrow["", from=2-3, to=1-3]
\end{tikzcd}}
\caption{The maximal lifetime in the left module is indicated by red boxes.}
\label{fig:lifetime_additivity}
\end{figure}

\end{example}

 If we instead consider the \emph{minimal} lifetime of an element, the construction becomes additive, meaning the desired vector lies in a direct summand: 
\begin{equation}\label{eq:additivity}
\text{If } v \in \module, w \in W \text{, then for } v+w \in  \module \oplus W\colon \quad   L_{v + w} = L_{v} \cup L_{w}.
\end{equation}
This enables the use of decomposition algorithms (e.g. \cite{djk25}) to speed up the search. Now, a priori, this element is not well defined, since in every non-zero persistence module we can find a sequence of elements whose lifetime approaches a set of empty volume. To fix this, we choose a concrete parameter value $\alpha \in \bbR^d$ and demand that the element be in $V_\alpha$.

\begin{definition} Let $S \subset \module$ be a sub\emph{set} of $\module$. We denote by $\langle S \rangle \subset \module$ its \emph{induced submodule}. 
\end{definition}

Since we are only considering elements at $\alpha$, we can now restrict our computations to $\langle V_\alpha \rangle$ - and this module is simpler than $V$:

 \begin{definition}
$\module$ is called \emph{uniquely generated} if it has a set of generators of the same degree.
\end{definition}

   In particular, if $V \simeq \oplus V_i$ is an indecomposable decomposition, then $\langle V_\alpha \rangle$ will in a practical setting have a much finer decomposition than $\oplus \langle (V_i)_\alpha \rangle$.
   Still, the choice of an element is not unique: there could be multiple vectors whose lifetimes have the same (minimal) volume. This is the first hint that one should not only consider single elements in $V_\alpha$, but instead consider \emph{submodules} of arbitrary dimension generated at a parameter value $\alpha$.

\paragraph{Stability.}

Let $U \subset V_\alpha$ be any sub vector space. The integral of the Hilbert function $\udim \langle U \rangle$ over $\bbR^d$ would be the natural generalisation of the volume of the lifetime. However, using this definition would never make a sub vector space of dimension higher than $1$ be the candidate for minimal lifetime. We need to average over the dimension of $U$.

\begin{definition}
Let $U \subset V_\alpha$. The \emph{slope} of $\langle U \rangle$ (at $\alpha$) is defined as
\[ \mu(\langle U \rangle) \coloneqq 
\frac{\dim_\K U}{\int_{\bbR^d} \udim_\K \langle U \rangle}
=  \frac{\dim_\K \langle U \rangle_\alpha}{\int_{\bbR^d} \udim_\K \langle U \rangle} 
. \]
By replacing, in the second fraction, $\langle U \rangle$ with \emph{any} persistence module, we get a general definition of the slope at $\alpha$. A module is called \emph{semistable} \cite{Mumford94, King94, bridgeland2007stability} if no submodule has a higher slope. 
 \end{definition}
 
The maximal submodule of highest slope $\langle U_1 \rangle \subset \langle V_\alpha \rangle$ is semistable. Analogously, one defines $\langle U_2 \rangle \subset \langle \module_\alpha \rangle / \langle U_1 \rangle$ and iterating this procedure produces a sequence of semistable modules $(\langle U_i \rangle)_{i \in [l]}$. By defining inductively $F^0 = 0, \ F^{i} \coloneqq U_i + F^{i-1}$ we obtain the \emph{Harder-Narasimhan filtration} \cite{Harder1974}
 \begin{align}\label{eq:quotient} 0= \langle F^0 \rangle \subsetneq \langle F^1 \rangle\subsetneq \dots\subsetneq \langle F^\ell \rangle =\langle V_\alpha \rangle 
 \end{align}
 of $\langle V_\alpha \rangle$ following \cite[Theorem (A)]{fersztand2024harder}. Then, the semistable modules $\langle U_i \rangle$ are isomorphic to the factors $\langle F^{i} \rangle / \langle F^{i-1} \rangle$. The sequence of slopes $\mu(\langle U^{i} \rangle )$ decreases with $i$, which can be seen with \autoref{lmm:see-saw}. 
 We can compute the slopes of the filtration inductively:

\begin{align} \mu( \langle F^{i} \rangle ) =  \mu(\langle  U_i + F^{i-1} \rangle ) = \frac{\dim_\K U_i + \dim_\K F^{i-1}}
{ \int_{\bbR^d} \udim \langle U_i \rangle + \int_{\bbR^d} \udim \langle F^{i-1} \rangle
}. \label{eq:slope_recomputation}
\end{align}
It follows that also the sequence of slopes $\mu(F^{i})$ decreases. We will introduce this again formally in \autoref{subsec:HN-formal}.

\begin{example}\label{ex:stable_module}
In \autoref{fig:ex_two_gen}, we construct a module $\module$ that is not an interval, but is semistable at $(0,0)$.

\begin{figure}[H]
    \centering
\resizebox{1\textwidth}{!}{%repeat commands from main
\definecolor{lightred}{rgb}{1, 0.5, 0.5}
\definecolor{lightblue}{rgb}{0.4, 0.5, 1}
\definecolor{darkblue}{rgb}{0.3, 0.3, 0.8}
\newcommand{\bbracks}[1]{\textcolor{lightblue}{\{} #1 \textcolor{lightblue}{\}} }
\newcommand{\gbracks}[1]{\textcolor{lightred}{(} #1 \textcolor{lightred}{)} }
\newcommand{\lrcell}{\cellcolor{lightred}}
\newcommand{\pblcell}{\cellcolor{lightblue}}

\centering
% First row: Left and Right panels
\begin{minipage}{0.45\textwidth}
    \centering
\begin{tikzcd}[row sep=1em, column sep=1.25em, scale=1, transform shape]
    \ &\\
 \gbracks   0 \ar[r,scale=1.5,"\dots" description,draw=none] \ar[u,"\vdots" description,draw=none]&\  &  & \\
   \gbracks \K \arrow[u] \arrow[r] & 0 \ar[r,scale=1.5,"\dots" description,draw=none] \ar[u,"\vdots" description,draw=none] & \ & \\
    \K^2 \arrow[u,"{[0\ 1]}"] \arrow[r,"{[1\ 1]}"] & \gbracks\K \arrow[u] \arrow[r] & 0 \ar[r,scale=1.5,"\dots" description,draw=none] \ar[u,"\vdots" description,draw=none]& \ \\
   \bbracks{\bbracks{\K^2}} \arrow[u,equals] \arrow[r,equals] & \K^2 \arrow[r,"{[1\ 0]}"] \arrow[u,"{[1\ 1]}"] & \gbracks\K \arrow[r] \arrow[u] & \gbracks 0\ar[u,"\vdots" description,draw=none]\ar[r,scale=1.5,"\dots" description,draw=none] &\ 
\end{tikzcd}

\begin{tikzpicture}[scale=0.9, overlay]
\node at (1.5,3.5) {\Large $V_{|\bbN}$};
\end{tikzpicture}

\end{minipage}

\begin{minipage}{0.45\textwidth}
    \centering
\begin{tikzpicture}[scale=0.9]
    % Axes
    \draw[->, thick] (0,0) -- (4,0) node[right] {$x$};
    \draw[->, thick] (0,0) -- (0,4) node[above] {$y$};

    % Axis labels
    \foreach \x in {0,1,2,3} {
        \node at (\x, -0.2) [below] {\x};
    }
    \foreach \y in {0,1,2,3} {
        \node at (-0.2, \y) [left] {\y};
    }

    % Darker blue region: [0,2)^2 minus [1,2)^2
    \fill[darkblue, opacity=0.8] (0,0) rectangle (2,2);
    \fill[white] (1,1) rectangle (2,2);

    % Lighter blue regions:
    \fill[lightblue, opacity=0.8] (0,2) rectangle (1,3);
    \fill[lightblue, opacity=0.8] (1,1) rectangle (2,2);
    \fill[lightblue, opacity=0.8] (2,0) rectangle (3,1);

    % Generator at (0,0)
    \fill[lightblue, draw=black] (0,0) circle (5pt);
    \draw[black] (0,0) circle (3pt);

    % Relations
    \foreach \x/\y in {2/0, 0/2, 3/0, 1/1, 0/3} {
        \node[rectangle, draw=black, fill=lightred, inner sep=2pt] at (\x,\y) {};
    }
\end{tikzpicture}
\begin{tikzpicture}[scale=0.9, overlay]
\node at (-1,3.8) {\Large $\udim V$};
\end{tikzpicture}
\end{minipage}}
\end{figure}
\begin{figure}[H]
 \centering
\resizebox{1\textwidth}{!}{%repeat commands from main
\definecolor{lightred}{rgb}{1, 0.5, 0.5}
\definecolor{lightblue}{rgb}{0.4, 0.5, 1}
\definecolor{darkblue}{rgb}{0.3, 0.3, 0.8}
\newcommand{\bbracks}[1]{\textcolor{lightblue}{\{} #1 \textcolor{lightblue}{\}} }
\newcommand{\gbracks}[1]{\textcolor{lightred}{(} #1 \textcolor{lightred}{)} }
\newcommand{\lrcell}{\cellcolor{lightred}}
\newcommand{\pblcell}{\cellcolor{lightblue}}

\begin{minipage}{2.25\textwidth}
\centering
\begin{tikzpicture}[scale=5,transform shape]
    % Axes
    \draw[->, thick] (0,0) -- (4,0) node[right] {$x$};
    \draw[->, thick] (0,0) -- (0,4) node[above] {$y$};
    % Axis labels
    \foreach \x in {0,1,2,3} {
        \node at (\x, -0.15) [below] {\x};
    }
    \foreach \y in {0,1,2,3} {
        \node at (-0.15, \y) [left] {\y};
    }
    % Lighter blue regions:
    \fill[lightblue, opacity=0.8] (0,0) -- (3,0) -- (3,1) -- (2,1)--(2,2)--(0,2)-- cycle;
        \node at (2,4){\Large $ \udim \left\langle \begin{pmatrix}
  1 \\
  0
  \end{pmatrix}
\right\rangle$
};
\end{tikzpicture}
\end{minipage}
\hfill
\begin{minipage}{2.25\textwidth}
\centering
\begin{tikzpicture}[scale=5,transform shape]
    % Axes
    \draw[->, thick] (0,0) -- (4,0) node[right] {$x$};
    \draw[->, thick] (0,0) -- (0,4) node[above] {$y$};
    % Axis labels
    \foreach \x in {0,1,2,3} {
        \node at (\x, -0.15) [below] {\x};
    }
    \foreach \y in {0,1,2,3} {
        \node at (-0.15, \y) [left] {\y};
    }
    % Lighter blue regions:
    \fill[lightblue, opacity=0.8] (0,0) -- (0,3) -- (1,3) -- (1,2) -- (2,2) -- (2,0) -- cycle;
        \node at (2,4){\Large  $\udim \left\langle   \begin{pmatrix}
   0\\
    1
  \end{pmatrix}
 \right\rangle$
};
\end{tikzpicture}
\end{minipage}
\hfill
\begin{minipage}{2.25\textwidth}
\centering
\begin{tikzpicture}[scale=5,transform shape]
    % Axes
    \draw[->, thick] (0,0) -- (4,0) node[right] {$x$};
    \draw[->, thick] (0,0) -- (0,4) node[above] {$y$};
    % Axis labels
    \foreach \x in {0,1,2,3} {
        \node at (\x, -0.15) [below] {\x};
    }
    \foreach \y in {0,1,2,3} {
        \node at (-0.15, \y) [left] {\y};
    }
    % Lighter blue regions:
    \fill[lightblue, opacity=0.8] (0,0) -- (3,0)--(3,1)--(1,1)-- (1,3)--(0,3)--cycle;
    \node at (2,4){\Large $ \udim \left\langle \begin{pmatrix}
  1 \\
    -1
  \end{pmatrix}
 \right\rangle$
};
\end{tikzpicture}
\end{minipage}}
    \caption{A module, its dimension over $\bbR^2$, and candidates for submodules of high slope at $(0,0)$.}
    \label{fig:ex_two_gen}
\end{figure}

At $\alpha\coloneqq (0,0)$, each of the three shown submodules has slope $1/5$. Every other dimension $1$ subspace has a lifetime equal to the whole support of the module, and therefore slope $1/6$. The whole module on the other hand has slope $2/9$, so it is semistable.
\end{example}

Because of the existence of modules like \autoref{ex:stable_module}, the factors of the Harder-Narasimhan filtration at $\alpha$ should generally not all be intervals. Surprisingly, they did turn out to be intervals for all of our examples of density-scale bifiltrations  of point sets in $\bbR^2$.

\begin{conjecture}\label{obs:factor_dimension}
Modules coming from the first persistent homology group of a density-scale bifiltration (\autoref{subsec:bifiltr}) of a generic pointset in $\bbR^2$ do not contain indecomposable uniquely generated submodules of dimension larger than 1. In particular, all semistable factors of the Harder-Narasimhan filtration are interval modules. 
\end{conjecture}

\paragraph{The Skyscraper Invariant.}
Let $0 \into F^1 \into \dots \into F^{\ell} = V_\alpha$ generate the HN filtration at $\alpha$.

\begin{definition}{\cite{fersztand2024harder}}
For $\alpha \leq \beta \in \bbR^d$ and $\theta \in \bbR$, the Skyscraper Invariant is
\[ s^\theta_V(\alpha, \beta) \coloneqq \dim_\K \langle F^{i} \rangle_\beta = \sum\limits^{i}_{j = 1} \dim_\K \left\langle U_j \right\rangle_\beta \quad 
\text{ where $i$ is maximal with $\mu(U^{i}) \geq \theta$.} \]
\end{definition}

This value only depends on the Hilbert function of the factors ${ \udim \langle U_j\rangle}$ and their order. Since
$s^0_V(\alpha, \beta) = \dim_\K \langle V_\alpha \rangle_\beta = \rank (V_{\alpha \to \beta})$, the Skyscraper filters the rank invariant.

\begin{example}\label{ex:s_rank_representable} The modules $\module$ and $W$ on the left cannot be distinguished by the rank invariant, whereas their Skyscraper Invariants $s_{-}^\theta(\bullet, \bullet)$ are different for any $\theta \in (\frac13,\infty)$ 
\begin{figure}[H]
        \centering
      \resizebox{1.01\textwidth}{!}{  \begin{tikzpicture}[scale =1]

%theta
\draw[->,>=latex,darkblue,very thick] (-7,-2.25) --  (13,-2.25) node[scale=1.75,above]{$\theta$} ;

\draw[darkblue] (8,-2.35) node[scale=1.5,below right]{1} -- (8,-2.15) ;
\draw[darkblue] (5,-2.35)  -- (5,.-2.15) node[scale=1.5,above right]{$\frac12$};
\draw[darkblue] (2,-2.35 ) node[scale=1.5,below right]{$\frac13$} -- (2,-2.15) ;

\draw[opacity=0.2] (8,-2.25) -- (8,-6.5);
\draw[opacity=0.2] (2,-2.25) -- (2,-6.5);
\draw[opacity=0.2] (5,2) -- (5,-2.25);
% V
\node[scale=1.5] (sV) at (-6,.5) {$\begin{array}{c}
     V(\theta)\text{ such that}\\ s^\theta_{V}=\rank_{V(\theta)}
\end{array}$:};
\node at (2,1.5) {$V(\theta)=V$};

\node[scale=1.5] (sW) at (-6,-4) {$\begin{array}{c}
     W(\theta)\text{ such that}\\ s^\theta_{W}=\rank_{W(\theta)}
\end{array}$:};

\node at (-.5,-3) {$W(\theta)=W$};

% origins

\node (000V) at (0.5,-1){};
\def\xOneV{8}
\def\yOneV{-1}
\node (001V) at (\xOneV,\yOneV){};
\node (000W) at (-2,-5.5){};
\def\xOneW{4}
\def\yOneW{-5.5}
\node (001W) at (\xOneW,\yOneW){};
\def\xTwoW{10}
\def\yTwoW{-5.5}
\node (002W) at (\xTwoW,\yTwoW){};

%SUPPORTS
\def\shift{0.025cm}
\def\bigshift{0.36cm}
\def\verybigshift{0.5cm}

%draw axes
\foreach \x in {000W,001W,000V,002W,001V}{
\coordinate (axisOrigin) at (\x.center);
% x-axis:
\draw[thick,->]   ([xshift=-0.75cm]axisOrigin) -- 
  ([xshift=2.5cm]axisOrigin) node[above] {$x$};
  %y axis
\draw[thick,->]   ([yshift=-0.75cm]axisOrigin) -- 
  ([yshift=2.5cm]axisOrigin) node[above] {$y$};
}

\def\ptconv{28.45274}
%%supportV0
\draw[darkblue, very thick, draw opacity=1,fill= lightblue, fill opacity=0.5] ([xshift=2cm,yshift=1cm]000V.center) rectangle  ([xshift=\shift,yshift=\shift]000V.center) ;
\draw[darkred, very thick, draw opacity=1, fill= lightred,  fill opacity=0.5] ([xshift=1cm,yshift=2cm]000V.center) rectangle   ([xshift=-\shift,yshift=-\shift]000V.center) ;
\draw  ([yshift=1cm,xshift=-0.1cm]000V.center) node[scale=0.75,left]{$1$}-- ([yshift=1cm,xshift=0.1cm]000V.center);
\draw  ([xshift=1cm,yshift=-0.1cm]000V.center) node[scale=0.75,below]{$1$}-- ([xshift=1cm,yshift=0.1cm]000V.center);
%%support V1
\def\thetaVal{1.5}
%x 001V
\draw[darkblue, very thick, draw opacity=1,fill= lightblue, fill opacity=0.5,domain=\xOneV+\shift/\ptconv:\xOneV+ 2-1/(\thetaVal*(1-\shift/\ptconv)), variable=\x] ([xshift=2cm,yshift=1cm]001V.center)   --([xshift=\shift,yshift=1cm]001V.center) --plot ({\x}, {\yOneV+1-1/((2+\xOneV-\x)*\thetaVal))})--  ([xshift=2cm,yshift=\shift]001V.center)    --cycle;
\draw  ([xshift=2cm-\ptconv/\thetaVal,yshift=-0.1cm]001V.center) node[scale=0.75,below]{$2-\frac1\theta$}-- ([xshift=2cm-\ptconv/\thetaVal,yshift=0.1cm]001V.center);
\draw  ([yshift=1cm-\ptconv/(2*\thetaVal),xshift=-0.1cm]001V.center) node[scale=0.75,left]{$1-\frac1{2\theta}$}-- ([yshift=1cm-\ptconv/(2*\thetaVal),xshift=0.1cm]001V.center);
%y 001V
\draw[darkred, very thick, draw opacity=1,fill= lightred, fill opacity=0.5,domain=\xOneV-\shift/\ptconv:\xOneV+ 1-1/(\thetaVal*(2+\shift/\ptconv)), variable=\x]   ([yshift=2cm,xshift=1cm]001V.center)   -- ([yshift=2cm,xshift=-\shift]001V.center) -- plot ({\x}, {\yOneV+2-1/((1+\xOneV-\x)*\thetaVal))})  --([yshift=-\shift,xshift=1cm]001V.center) -- cycle;

%W0
\draw[darkblue, very thick, draw opacity=1,fill= lightblue, fill opacity=0.5] ([xshift=2cm,yshift=1cm+\shift]000W.center) --  ([xshift=2cm,yshift=-\shift]000W.center) --([xshift=-\shift,yshift=-\shift]000W.center) -- ([yshift=2cm,xshift=-\shift]000W.center) -- ([yshift=2cm,xshift=1cm+\shift]000W.center) -- ([xshift=1cm+\shift,yshift=1cm+\shift]000W.center) -- cycle ;
\draw[darkred, very thick, draw opacity=1, fill=lightred,  fill opacity=0.5] ([xshift=1cm-\shift,yshift=1cm-\shift]000W.center) rectangle   ([xshift=\shift,yshift=\shift]000W.center) ;
\draw  ([xshift=1cm,yshift=-0.1cm]000W.center) node[scale=0.75,below]{$1$}-- ([xshift=1cm,yshift=0.1cm]000W.center);
\draw  ([yshift=1cm,xshift=-0.1cm]000W.center) node[scale=0.75,left]{$1$}-- ([yshift=1cm,xshift=0.1cm]000W.center);
%W1
\def\thetaVal{0.5}
\draw[darkblue, very thick, draw opacity=1,fill= lightblue, fill opacity=0.5, domain  =\xOneW-\shift/\ptconv:\xOneW + 2- (\thetaVal+1)/(\thetaVal* (2+\shift/\ptconv)),variable=\x] 
  ([xshift=2cm,yshift=-\shift]001W.center) --([xshift=2cm,yshift=1cm+\shift]001W.center) -- ([xshift=1cm+\shift,yshift=1cm+\shift]001W.center) -- ([yshift=2cm,xshift=1cm+\shift]001W.center)  -- ([yshift=2cm,xshift=-\shift]001W.center)  -- 
  plot ( {\x}, {\yOneW+2- (\thetaVal+1)/(\thetaVal* (2-\x+\xOneW))})-- 
  cycle ;
\draw[darkred, very thick, draw opacity=1, fill=lightred,  fill opacity=0.5] ([xshift=1cm-\shift,yshift=1cm-\shift]001W.center) rectangle   ([xshift=\shift,yshift=\shift]001W.center) ;
\draw  ([xshift=1.5cm-\ptconv/(2*\thetaVal),yshift=-0.1cm]001W.center) node[scale=0.75,below]{$ \frac 32-\frac1{2\theta}$}-- ([xshift=1.5cm-\ptconv/(2*\thetaVal),yshift=0.1cm]001W.center);
%W2
\def\thetaVal{2}
\draw[darkred, very thick, draw opacity=1, fill=lightred,  fill opacity=0.5,domain =\xTwoW+\shift/\ptconv:\xTwoW + 1-1/(\thetaVal*(1-\shift/\ptconv))
] ([xshift=1cm-\shift,yshift=1cm-\shift]002W.center) -- ([xshift=\shift,yshift=1cm-\shift]002W.center) -- plot ({\x}, {\yTwoW+1-1/((1+\xTwoW-\x)*\thetaVal))})--  ([xshift=1cm-\shift,yshift=\shift]002W.center) --cycle;
\def\xMid{\xTwoW + 1-1/(\thetaVal*(1-\shift/\ptconv))}
\draw[darkblue, very thick, draw opacity=1, fill=lightblue,  fill opacity=0.5,domain =\xTwoW-\shift/\ptconv:\xMid+0.01
]  ([xshift=1cm+\shift,yshift=1cm+\shift]002W.center) -- ([xshift=1cm+\shift,yshift=2cm]002W.center) -- ([xshift=-\shift,yshift=2cm]002W.center) -- plot ({\x}, {\yTwoW+2-1/((1+\xTwoW-\x)*\thetaVal))})  ;
\draw[darkblue, very thick, draw opacity=1, fill=lightblue,  fill opacity=0.5,domain =1+\xTwoW:1+\xMid+\shift/\ptconv
] plot ({\x}, {\yTwoW+1-1/((2+\xTwoW-\x)*\thetaVal))}) --  ([yshift=-\shift,xshift=2cm]002W.center) -- ([yshift=1cm+\shift,xshift=2cm]002W.center) -- ([yshift=1cm+\shift,xshift=1cm+\shift]002W.center) ;
\draw[darkblue, very thick, draw opacity=1,domain =\xMid:\xTwoW+1
]  plot ( {\x}, {\yTwoW+2- (\thetaVal+1)/(\thetaVal* (2-\x+\xTwoW))})  ;
\fill[darkblue,  fill=lightblue,  fill opacity=0.5,domain =\xMid:\xTwoW+1+\shift/\ptconv
]  plot ( {\x}, {\yTwoW+2- (\thetaVal+1)/(\thetaVal* (2-\x+\xTwoW))}) --([yshift=1cm+\shift,xshift=1cm+\shift]002W.center) --cycle;
\draw  ([xshift=1cm-\ptconv/(\thetaVal),yshift=-0.1cm]002W.center) node[scale=0.75,below]{$1-\frac1{\theta}$}-- ([xshift=1cm-\ptconv/(\thetaVal),yshift=0.1cm]002W.center);
\draw  ([yshift=2cm-\ptconv/(\thetaVal),xshift=-0.1cm]002W.center) node[scale=0.75,left]{$2-\frac1{\theta}$}-- ([yshift=2cm-\ptconv/(\thetaVal),xshift=0.1cm]002W.center);

\end{tikzpicture}}
    \caption{
    We visualise the Skyscraper Invariant by modules whose rank invariant is $s^\theta(\bullet, \bullet)$.}
        \label{fig:ex_skyscraper}
    \end{figure}
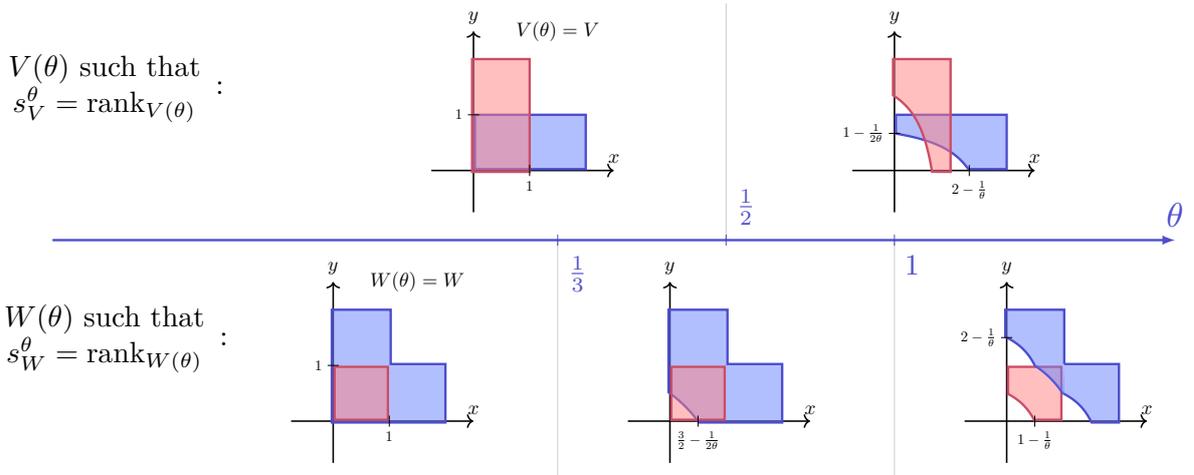
\end{example}

\section{Preliminaries.}

In the previous section, we have touched on many notions without properly defining them. The reader should view this section as a way to read further on this material if they are not familiar with it or, alternatively, skip it and come back to it if necessary.

In \autoref{subsec:bifiltr} we review bifiltrations as used in \autoref{ex:discrete_module} and the important \autoref{obs:factor_dimension}.
In \autoref{subsec:MPM-formal}, we introduce notation for multiparameter persistence theory, and in \autoref{subsec:kan} review discretization and restriction via Kan extensions.
Our definition of the HN filtration at $\alpha\in \bbR^d$, given in the previous section, is actually not consistent with the classical one \cite{Harder1974,King94}. Instead of a filtration of $\langle V_\alpha \rangle$, one really considers a filtration of $V$. In \autoref{subsec:HN-formal}, we will give a small overview of this theory. 

\subsection{Filtrations of simplicial complexes} \label{subsec:bifiltr}

We denote by $\simp$ the category of finite simplicial complexes and fix $d\geq 1$. We see $\bbR^d$ as a poset endowed with the product partial order. A $\bbR^d$-\emph{filtered simplicial complex} is given by a functor $K_\bullet\colon \bbR^d \to \simp$ such that for $\alpha\leq \beta\in \bbR^d$, the simplicial map $K_{\alpha\to \beta}$ is injective.  Depending on the application, the construction of a filtered simplicial complex from the data can take many forms. We now present several standard constructions.

\paragraph{Scale parameter} In one-parameter persistence ($d=1$), we often build a scale-indexed filtration of simplicial complexes from a finite metric space $(X,d_X)$. Given $r\in \bbR$, let $N_r(X)$ be the graph whose vertex set is $X$ and whose edges are all the $\set{x,y}\subset X$ such that $d_X(x,y)\leq r$. The \emph{Rips filtration} of $X$ is the $\bbR$-filtered simplicial complex $\Rips(X)$ given for $r\in \bbR$ by the clique complex of $N_r(X)$. Given $x\in X$, we denote by $\mathrm{Vor}(x)$ the cell of $x\in X$ in the Voronoi diagram of $X$. Moreover, for all $r\in \bbR$, we write $B(x,r)$ for the closed ball of radius $r$ around $x$. The \emph{alpha filtration} of $X$ is the $\bbR$-indexed simplicial complex $\Alpha (X)$ given for $r\in \bbR$ by the nerve of the sets $(B(x,r)\cap \mathrm{Vor}(x))_{x\in X}$. 

\paragraph{Density parameter} Assume that we have built the Rips filtration $\Rips(X)$ of a finite metric space $(X,d_X)$. We now describe ways to add a density parameter to $\Rips(X)$ and obtain a $\bbR^2$-filtered simplicial complex that can deal with noisy pointclouds like the one in \autoref{ex:discrete_module}.

When $(X,d_X)$ is a finite pointcloud in an euclidean space $E$, one can define the \emph{multicover filtration} \cite{sheehy12multicover}, given for $ (r,m)\in \bbR^2$ by the topological space
\begin{equation}
    \label{eqn:multicover}
    \multicover(X)_{r,m}\coloneqq\set{x\in E \mid \mu_X(B(x,r))\geq m},
\end{equation}
where $\mu_X$ denotes the counting measure of $X$. There exist several constructions of  bifiltered simplicial complexes which are topologically equivalent to $\multicover(X)$ \cite{multicover23}. In the rest of this paper, $\multicover(X)$ will denote any of those constructions.

Another approach is to use an external density estimate $\gamma \colon X\to P$. The $\gamma$-\emph{density-Rips complex} of $X$ is the $\bbR^2$-filtered simplicial complex $\Rips^\gamma(X)$ given for $(r,g)\in \bbR^2$ by
\begin{equation}
    \label{eqn:density-rips}\Rips^\gamma(X)_{r,g}\coloneqq\Rips( \gamma^{-1}([-g,+\infty))_r.
\end{equation}
The same construction can be applied to $\Alpha(X)$ instead of $\Rips(X)$, yielding the $\gamma$-\emph{density-alpha complex}. Namely, the $\gamma$-\emph{density-alpha complex} is given for $(r,g)\in \bbR^2$ by
\begin{equation}
    \label{eqn:density-alpha}
    \Alpha ^\gamma(X)_{r,g}\coloneqq\Alpha ( \gamma^{-1}([-g,+\infty))_r.
\end{equation}

\subsection{Persistence Modules}\label{subsec:MPM-formal}

 Let $\K$ be any field. We denote by $\Pers{P} \coloneqq \fun(P,\vect_\K)$ the category of functors from a poset $P$ to the category of finite-dimensional $\K$-vector spaces and call the objects of any such category \emph{Persistence Modules}. We say that $V\in \Pers{\bbR^d}$ is \emph{bounded} if its support is bounded in $\bbR^d$.
A $d$-parameter persistence module can also be seen as a module over the $\bbR^d$-graded algebra $
A \coloneqq \K\{x_1, \dots, x_d\} $
of polynomials in $d$ variables with \underline{real} exponents \cite[Theorem 1]{CZMP}.

\paragraph{Presentations.}  A module $F\colon \bbR^d \to\vect_\K$ is \emph{free} if it is isomorphic to a direct sum of interval modules whose support is $\set{\beta\in\bbR^d\mid \beta\geq\alpha}$ for some $\alpha\in \bbR^d$. A \emph{finite presentation} of $V$ is a map 
\[
F_1\overset\phi\longrightarrow F_0
\]
 between free modules such that $V\simeq \coker \phi$. When $V$ admits a finite presentation, we say that $V$ is \emph{finitely presentable}.
 When a Persistence Module $\module$ is obtained by the homology of a filtered \emph{finite} simplicial complex, then $\module$ must be finitely presentable.
 Henceforth, we will restrict ourselves to the category $\Persfp{\bbR^d}$ of finitely presentable persistence modules. We further denote by $\Persfpb{\bbR^d}$ the category of bounded and finitely presentable modules.

Each finitely presentable persistence module $V\in \Persfp{\bbR^d}$ has a finite minimal presentation, unique up to natural isomorphism. This minimal presentation can be extended to a \emph{minimal resolution} that is at most of length $d+1$. The degrees of the generators, relations, higher relations etc. in a minimal resolution are independent of the chosen resolution and form multi-subsets $b_0(V), \ b_1(V),  \dots , \ b_d(V) \subset \bbR^d$ -- the graded \emph{Betti numbers} of $V$.

A presentation (or rather a map between free modules) can be encoded in a \emph{graded matrix} $M$. Each row of $M$ corresponds to a generator and each column to a relation. The rows and columns are endowed with the parameter value -- or \emph{degree} -- of the corresponding generator or relation, making $M$ a \emph{graded matrix} or \emph{monomial matrix} \cite{miller00}. If we write $A[G]$ for the free module generated by a tuple of degrees $G$, then a graded matrix is nothing but a linear map $M \colon A[R] \to A[G]$ and it presents $V$ if $V \iso A[G]/\Img(M)$.

\paragraph{Invariants.}  In order to compare or vectorise Persistence Modules, we will consider invariants: assignments $I$, from Persistence Modules to some set $\mathcal P$ that satisfy
\[V\simeq W \implies I(V)=I(W).\]
We say that $I$ is an \emph{additive} invariant if $\mathcal P$ is an abelian group and for every $V, \ W$, we have 
\begin{equation}\label{eqn:additive_inv}
   I(V\oplus W) = I(V) + I(W).
\end{equation}

The \emph{Hilbert function} of $\module$, given by the pointwise dimension $\udim \module \colon  \alpha\in\bbR^d\mapsto \dim_\K \module_\alpha$, is an additive invariant; and so is the \emph{rank invariant}, which assigns to $V\colon \bbR^d\to\vect_{\K}$, the collection $\rank_V$ of the integers $(\rank V_{\alpha\to \beta})_{\alpha\leq \beta\in \bbR^d}$. 

\subsection{Change of Poset Domain.}\label{subsec:kan}

In the course of this paper, we will restrict our attention to sub-posets of $\bbR^d$ in multiple ways. To restrict or induce persistence modules supported on one poset to another, we make liberal use of Kan extensions because they make book-keeping easier. This comes at the price of losing some clarity in the exposition, so we will briefly review the theory here. More details on Kan extensions can be found in \cite[\S X.3]{mac1998categories}.

\paragraph{Kan Extensions.} Let $h\colon P\to Q$ be a map of posets. We denote by $h^*$ the restriction functor $-\circ h\colon \Pers Q\to \Pers P$. We denote by, respectively, $h_!\colon \Pers P\to \Pers Q$ and $h_*\colon \Pers P \to \Pers Q$ the left and right \emph{Kan extensions} along $h$. Equivalently, we can define them as the left adjoint $h_! \dashv h^*$ and right adjoint $h^* \dashv h_*$ of $h^*$.

\begin{remark}[{\cite[X.3, Theorem 1]{mac1998categories}}]\label{rem:Kan}
For $V\in \Persfp{P}$, the vector space at $\gamma \in Q$ of a Kan extension along $h$ can be computed via (co)limits with the formulas
\begin{equation}
    \label{eqn:Kan_pointwise}h_! (V)_\gamma \simeq \colim\limits_{h(\alpha)\leq \gamma } V_\alpha \qquad \text{ and } \qquad h_* (V)_\gamma \simeq \lim\limits_{h(\alpha)\geq \gamma} V_\alpha. 
\end{equation}
  Observe that when $h$ is fully faithful, the first unit $u^!_h \colon \Id_{\Pers P} \to h^* h_!$ and the second counit $c^*_h \colon h^* h_* \to \Id_{\Pers P}$ are isomorphisms. 
\end{remark}

\begin{definition} 
    For any subset $S \subset \bbR^d$, we also write $\langle S \rangle \coloneqq \{\alpha \in \bbR^d \mid \exists s \in S \colon s \leq \alpha \}$ for the induced sub-poset.
Let $B \subset \bbR^d$ be any subset, then we treat it as an induced sub-poset and denote by $\iota_B \colon B \to \bbR^d$ the corresponding fully faithful functor. In particular, the unit $u^!_{\iota_B}$ and counit $c^*_{\iota_B}$ are isomorphisms by the previous \autoref{rem:Kan}.
For $\alpha \in \bbR^d$, we write $\iota_\alpha$ as shorthand for $\iota_{\langle \alpha \rangle} \colon \up{\alpha} \into \bbR^d$.
\end{definition}

\begin{example}\label{ex:kan}
    Let $\alpha\in\bbR^d$ and $V \in \Pers{\langle \alpha \rangle}$. (\autoref{eqn:Kan_pointwise}) immediately verifies that $({\iota_\alpha})_! $ extends $V$ by $0$, whereas $({\iota_\alpha})_*$ extends $V$ via isomorphisms along its values on the boundary of $\langle \alpha \rangle$. That is, for every $\gamma \in \bbR^d$, 
$\left( (\iota_\alpha)_* V\right)_\gamma \iso V_{\gamma \vee \alpha}$, where $\vee$ denotes the join in $\bbR^d$.
\end{example}

\paragraph{Grids, Discretisation, and Kan extensions.}
A \emph{grid} $\gri\subset \bbR^d$ is the product of $d$ finite subsets of $\bbR$.  A grid $\gri$ is called \emph{regular} of width $\varepsilon$ if it is the intersection of a cube of $\bbR^d$ with $\varepsilon\bbZ^d$. Let $\gri$ be a grid, we denote by $\gri^+$ the extension of $\gri$ where we formally add $\pm \infty$ as a value in every coordinate. We define for $\alpha \in\bbR^d$
\[
\lfloor \alpha \rfloor_\gri\coloneqq \max\set{g \in \gri^+,\
g \leq \alpha} \quad \text{ and } \quad \lceil \alpha \rceil_\gri \coloneqq \min\set{g \in \gri^+,\
g \geq \alpha}.
\]
The fibres of $\lfloor-\rfloor_\gri$ define a tiling of $\bbR^d$ into rectangular cells:
\[
\bbR^d = \bigcup_{\alpha \in \gri^+}  \set{
\gamma\in \bbR^d\mid \lfloor \gamma\rfloor_\gri = \alpha
}
\]
and we will say that $\beta, \gamma$ are in the same cell iff $\lfloor \beta\rfloor_\gri = \lfloor \gamma\rfloor_\gri $
We treat a grid $\gri\subset \bbR^d$ as a sub-poset. Since, as a sub-poset, a grid has all limits and colimits, the formulas for Kan extensions become easy for the inclusion $\iota_\gri$:

\begin{proposition}[cf. \cite{botnan2021signed,chacholski2021realisations,blanchette2023exact}]
\label{prop:computation_kan_extension}
Let $V \in \Pers{\bbR^d}$. The left Kan extension wrt. $\gri$ satisfies
\[
((i_\gri)_!V)_\alpha = \begin{cases}
    \module_{\lfloor \alpha\rfloor_\gri}&\text{if }\lfloor\alpha\rfloor_\gri \in \gri\\
    0&\text{otherwise}
\end{cases}
\quad \text{ and } \quad 
((i_\gri)_!V)_{\alpha\to \beta} = V_{\lfloor \alpha\rfloor_\gri\to \lfloor \beta\rfloor_\gri}
\]
and the right satisfies the dual statement.
\end{proposition}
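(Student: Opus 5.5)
We apply the pointwise formula of \autoref{rem:Kan} to the inclusion $\iota_\gri$. Here $V$ is understood as its restriction $\iota_\gri^* V\in\Pers{\gri}$, since only its values on $\gri$ enter. For $\alpha\in\bbR^d$, \autoref{eqn:Kan_pointwise} gives
\[
((\iota_\gri)_!V)_\alpha \simeq \colim_{g\in \gri,\ g\leq\alpha} V_g ,
\]
so everything reduces to understanding the indexing poset $D_\alpha\coloneqq\set{g\in\gri\mid g\leq \alpha}$.

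\textbf{Step 1: the shape of $D_\alpha$.} Write $\gri=G_1\times\dots\times G_d$ with each $G_i\subset\bbR$ finite. Since the order on $\bbR^d$ is the product order, $g\leq\alpha$ holds iff $g_i\leq\alpha_i$ for every $i$. Hence
\[
D_\alpha=\prod_i\set{x\in G_i\mid x\leq\alpha_i}.
\]
This set is empty exactly when some factor is empty, that is, when some coordinate of $\lfloor\alpha\rfloor_\gri$ equals $-\infty$, i.e.\ when $\lfloor\alpha\rfloor_\gri\notin\gri$. Otherwise each factor is a finite nonempty chain with a maximum. So $D_\alpha$ has a greatest element, namely the coordinatewise maximum, which is $\lfloor\alpha\rfloor_\gri$ by definition.

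\textbf{Step 2: evaluating the colimit.} A colimit over the empty diagram is the initial object $0$, which gives the second case. A colimit over a poset with a terminal object $t$ is canonically $V_t$, with the structure maps $V_g\to V_t$ as cocone. This gives $((\iota_\gri)_!V)_\alpha\simeq V_{\lfloor\alpha\rfloor_\gri}$.

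\textbf{Step 3: the structure maps.} For $\alpha\leq\beta$ we have $D_\alpha\subseteq D_\beta$. The map $((\iota_\gri)_!V)_{\alpha\to\beta}$ is the map of colimits induced by this inclusion. Under the identifications of Step 2, it is the cocone leg from $V_{\lfloor\alpha\rfloor_\gri}$ into $V_{\lfloor\beta\rfloor_\gri}$, i.e.\ $V_{\lfloor\alpha\rfloor_\gri\to\lfloor\beta\rfloor_\gri}$. This uses $\lfloor\alpha\rfloor_\gri\leq\lfloor\beta\rfloor_\gri$, which follows from monotonicity of $\lfloor-\rfloor_\gri$. If $D_\alpha=\emptyset$ the source is $0$ and the map is zero, consistent with the convention that $V$ vanishes at points outside $\gri$. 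Functoriality is automatic from the universal property.

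\textbf{Dual statement.} Dually, $((\iota_\gri)_*V)_\alpha\simeq\lim_{g\geq\alpha}V_g$. The set $\set{g\in\gri\mid g\geq\alpha}$ is either empty, giving the terminal object $0$, or has least element $\lceil\alpha\rceil_\gri$. This yields $V_{\lceil\alpha\rceil_\gri}$ with maps $V_{\lceil\alpha\rceil_\gri\to\lceil\beta\rceil_\gri}$.

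There is no real obstacle in this argument. The only point needing care is Step 1, namely that a product of finite chains has a maximum below $\alpha$ exactly when no coordinate of the floor is $-\infty$. This is precisely where the assumption that $\gri$ is a product grid, rather than an arbitrary finite set, is used.
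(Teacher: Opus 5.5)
Your proof is correct and follows the same route as the paper. The paper's one-line proof likewise applies the pointwise formula of \autoref{rem:Kan} and notes that the comma category $\{g\in\gri\mid g\leq\alpha\}$ (resp.\ $\{g\in\gri\mid g\geq\alpha\}$) has a maximum (resp.\ minimum); your product-of-chains argument, including the empty case and the structure maps, simply spells out these details.
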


\begin{proof}
Follows directly from \autoref{rem:Kan}, because the comma categories will have a final (cofinal) element given by the maximum (minimum).
\end{proof}

\begin{definition}
    The \emph{induced} grid of a module $V \in \Persfp{\bbR^d}$, which we denote by $\grid(V)$, is the smallest grid  of $\bbR^d$ containing $b_0(V) \cup b_1(V)$. 
\end{definition}
The induced grid of a module contains all of its graded Betti numbers \cite[Theorem 3.1]{Bruns_Herzog_1995}. Finitely presentable persistence modules can be discretised over a grid (e.g. \cite[Lemma 7.13]{blanchette2023exact}): For every grid $\gri$ containing $\grid (V)$, we have that the counit
$ c^!_{\iota_\gri}\colon(i_\gri)_!(i_\gri)^* \module \to \module $ is an isomorphism.

\begin{lemma}
    \label{lem:same-pos-iso}
    Given $V\in\Persfp{\bbR^d}$ and $\alpha,\beta\in \bbR^d$ belonging to the same cell of $\grid (V)$, the map $V_{\alpha\to\beta}$ is an isomorphism.
\end{lemma}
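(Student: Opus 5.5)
The plan is to use the discretisation statement just recalled: for every grid $\gri$ containing $\grid(V)$, the counit $c^!_{\iota_\gri}\colon (\iota_\gri)_!(\iota_\gri)^*V \to V$ is an isomorphism. Apply this with $\gri = \grid(V)$ itself. This is legitimate because $\grid(V)$ is by definition the smallest grid containing $b_0(V)\cup b_1(V)$, and in particular it contains $\grid(V)$.

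Take $\alpha,\beta$ in the same cell and set $g \coloneqq \lfloor \alpha\rfloor_\gri = \lfloor \beta\rfloor_\gri$. First we must check that $\alpha \le \beta$ or $\beta\le\alpha$, since otherwise the map $V_{\alpha\to\beta}$ is not defined. The statement implicitly assumes comparability, so we work with $\alpha\le\beta$. By \autoref{prop:computation_kan_extension}, the module $W\coloneqq(\iota_\gri)_!(\iota_\gri)^*V$ has two cases:

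\begin{itemize}
\item If $g\in\gri$, then $W_\alpha = V_g = W_\beta$, and $W_{\alpha\to\beta} = V_{g\to g} = \Id$.
\item If $g\notin\gri$, meaning some coordinate equals $-\infty$, then $W_\alpha = W_\beta = 0$ and $W_{\alpha\to\beta}$ is trivially an isomorphism.
\end{itemize}

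Now use naturality of the counit: the square
\[
c_\beta \circ W_{\alpha\to\beta} = V_{\alpha\to\beta}\circ c_\alpha
\]
commutes. Here $c_\alpha$ and $c_\beta$ are isomorphisms and $W_{\alpha\to\beta}$ is an isomorphism, so
\[
V_{\alpha\to\beta} = c_\beta\, W_{\alpha\to\beta}\, c_\alpha^{-1}
\]
is an isomorphism.

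The only delicate point is making sure the cited discretisation result applies with exactly $\grid(V)$, and not a strictly finer grid. It requires that all generators and relations of a minimal presentation lie in the grid, which holds by definition of $\grid(V)$. If one prefers to avoid that citation, there is a direct alternative. Take a minimal presentation $F_1\to F_0\to V$. A free summand generated at $\gamma\in b_0\cup b_1$ has $\gamma\le\alpha$ if and only if $\gamma\le g$, because the coordinates of $\gamma$ are grid values. Hence the free modules have identity structure maps from $\alpha$ to $\beta$. The five lemma, applied to the cokernels, then gives that $V_{\alpha\to\beta}$ is an isomorphism.
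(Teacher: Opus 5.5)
Your proof is correct and is essentially the paper's argument: apply the discretisation isomorphism $(\iota_\gri)_!(\iota_\gri)^*V \iso V$ with $\gri=\grid(V)$, then read off from \autoref{prop:computation_kan_extension} that the structure map of the discretised module between two points of the same cell is the identity. Your explicit naturality square and your treatment of the case $\lfloor\alpha\rfloor_\gri\notin\gri$ (where both spaces vanish) make precise what the paper leaves implicit, and your alternative via a minimal presentation and the five lemma is also valid---it is essentially the proof of the cited discretisation result itself.
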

\begin{proof}
Let $\gri = \grid (V)$, we have $(i_\gri)_!i_\gri^* V \overset{c^!_{\iota_\gri}}{\iso} V$ and since $\lfloor \alpha\rfloor_\gri =  \lfloor \beta\rfloor_\gri$
\[
 ( (i_\gri)_!i_\gri^* V) _{\alpha\to \beta} \overset{\ref{prop:computation_kan_extension}}= (i_\gri^* V)_{\lfloor \alpha\rfloor_\gri\to \lfloor \beta\rfloor_\gri}  =\Id.
\]
Hence, $V_{\alpha\to \beta} $ is an isomorphism.
\end{proof}

\paragraph{More Kan extensions.}
When the sub-poset $\iota \colon Q \into \bbR^d$ is a grid, we can see by calculating the Kan extensions $\iota_!$, $\iota_*$ pointwise that they are actually exact functors, and so themselves have a left and right adjoint respectively. If $Q$ is an up or down set in $\bbR^d$, we have seen the same in \autoref{rem:Kan} and for this case we will compute them explicitly, purely out of curiosity:

\begin{definition}
For $\module \in \Persfp{\bbR^d}$ and $B \subset \bbR^d$, the cokernel of the counit $c_! \colon (\iota_{\bbR^d \setminus B })_! (\iota_{\bbR^d \setminus B })^* \module \to V$ defines a functor, $\coker_B V$, equipped with a natural transformation $q^B \colon \module \onto \coker_B V$. 
Dually, the kernel of the unit $u_* \colon \module \to (\iota_{\bbR^d \setminus B})_* (\iota_{\bbR^d \setminus B})^* \module $ defines a functor, $\ker_B$, equipped with a natural transformation, $i^B \colon \ker_B \module \into \module$
\end{definition}

\begin{proposition}\label{prop:left_kan_exact}
If $B$ is an up set, then the functor $\coker_B - $ is left adjoint to $(\iota_B)_!$. In particular, $(\iota_B)_!$ is exact.
Dually, if $B$ is a down set, then $ker_B(-)$ is right adjoint to $(\iota_B)_*$. 
\end{proposition}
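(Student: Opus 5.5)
The plan is to verify the adjunction $\coker_B \dashv (\iota_B)_!$ directly, by constructing a natural bijection
\[
\Hom_{\Pers{B}}\bigl(\coker_B V, W\bigr) \cong \Hom_{\Pers{\bbR^d}}\bigl(V, (\iota_B)_! W\bigr)
\]
for $V \in \Pers{\bbR^d}$ and $W \in \Pers{B}$. Here $\coker_B V$ is read as a module on $B$, via its restriction to $B$. Exactness of $(\iota_B)_!$ then follows formally, since a functor with a left adjoint preserves limits, and in particular kernels. Combined with the adjunction $(\iota_B)_! \dashv \iota_B^*$ already recorded in \autoref{rem:Kan}, $(\iota_B)_!$ is both a left and a right adjoint, hence exact.

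The first step is to make both functors explicit using \autoref{rem:Kan}. Since $B$ is an up set, its complement $D := \bbR^d\setminus B$ is a down set.

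For $\gamma \in \bbR^d$, the left Kan extension $(\iota_B)_!W$ is computed by the colimit over $\{\alpha\in B \mid \alpha\le\gamma\}$. For $\gamma\in B$ this index set has maximum $\gamma$, so the colimit is $W_\gamma$. For $\gamma\notin B$ the index set is empty, because $B$ is an up set. Hence $(\iota_B)_!W$ is simply $W$ extended by zero, as in \autoref{ex:kan}. This is a genuine functor because no map leaves $B$ into $D$.

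Likewise, $(\iota_D)_!(\iota_D)^*V$ at $\gamma$ is the colimit of $V_\alpha$ over $\alpha\in D$ with $\alpha\le\gamma$. For $\gamma\in D$ this colimit is $V_\gamma$, since $D$ is down-closed and so contains $\gamma$ as the maximum. For $\gamma \in B$ it is a general colimit whose counit to $V_\gamma$ has image $\sum_{\alpha\in D,\alpha\le\gamma}\Img V_{\alpha\to\gamma}$. Therefore $\coker_B V$ vanishes on $D$, and on $\gamma\in B$ it equals
\[
V_\gamma \Big/ \sum_{\alpha\in D,\ \alpha\le\gamma} \Img V_{\alpha\to\gamma}.
\]
In other words, it is the quotient of $V$ by the submodule generated by $V|_D$.

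The second step is to build the bijection. A morphism $f\colon V\to(\iota_B)_!W$ is zero on $D$, because the target vanishes there. Naturality then forces $f_\gamma\circ V_{\alpha\to\gamma} = 0$ for all $\alpha\in D$ with $\alpha\le\gamma$. So $f$ kills the submodule generated by $V|_D$, and it factors uniquely through $q^B\colon V\onto\coker_B V$, giving a map $\coker_B V|_B\to W$. Conversely, any morphism $g\colon \coker_B V|_B\to W$ extends by zero on $D$ to a morphism into $(\iota_B)_!W$, and composing with $q^B$ yields the desired $f$. These two constructions are mutually inverse, and naturality in $V$ and $W$ follows from the universal property of cokernels.

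The dual statement, that $\ker_B$ is right adjoint to $(\iota_B)_*$ when $B$ is a down set, follows by applying the same argument in the opposite poset, where right Kan extensions become left ones.

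The only real obstacle is the bookkeeping in the first step: one must check that $\coker_B$ lands in modules supported on $B$ and is functorial. Once both sides are identified as "extension by zero" and "quotient by the part generated outside $B$", the adjunction is a routine cokernel argument.
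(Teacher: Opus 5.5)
Your proposal is correct and follows essentially the same route as the paper. A map $f\colon V\to(\iota_B)_!W$ must kill the image of the counit $(\iota_{\bbR^d\setminus B})_!(\iota_{\bbR^d\setminus B})^*V\to V$, so it factors uniquely through $q^B$, and maps from the $B$-supported module $\coker_B V$ into $(\iota_B)_!W$ are the same as maps $\iota_B^*\coker_B V\to W$; the paper gets the vanishing via naturality of the counit and $(\iota_{\bbR^d\setminus B})_!(\iota_{\bbR^d\setminus B})^*(\iota_B)_!W\simeq 0$, while your pointwise formulas and your explicit exactness argument (being both a left and a right adjoint) simply spell out what the paper leaves implicit.
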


\ignore{
\begin{remark}
We cannot directly use Eilenberg-Watts in this graded case. Instead we can represent these functors pointwise. For $\alpha \in \bbR^d$, $\left(\ker_B V \right)_\alpha \coloneqq \Hom\left ( \K [B \cap \langle \alpha \rangle], V \right)_0$
and $\left( \coker_B \module \right)_\alpha \coloneqq \left( V \otimes ??? \right)_0$.
For $\alpha \leq \beta$, $\left(\ker_B V \right)_{\alpha \to \beta} \coloneqq \left( \K (B \cap \langle \beta \rangle) \into \K (B \cap \langle \alpha \rangle) \right)^*$ and $\left( \coker_B \module \right)_{\alpha \to \beta} \coloneqq ??$. 
\end{remark}
 }

\begin{proof}
   Let $f \colon V \to (\iota_B)_!W$, then this diagram % https://q.uiver.app/#q=WzAsOCxbMSwwLCJWIl0sWzEsMSwiKFxcaW90YV9cXGFscGhhKV8hVyJdLFswLDAsIihcXGlvdGFfe1xcbGFuZ2xlIFxcYWxwaGEgXFxyYW5nbGVeXFxjb21wbGVtZW50fSlfIVxcaW90YV97XFxsYW5nbGUgXFxhbHBoYSBcXHJhbmdsZV5cXGNvbXBsZW1lbnR9XipWIl0sWzAsMSwiKFxcaW90YV97XFxsYW5nbGUgXFxhbHBoYSBcXHJhbmdsZV5cXGNvbXBsZW1lbnR9KV8hXFxpb3RhX3tcXGxhbmdsZSBcXGFscGhhIFxccmFuZ2xlXlxcY29tcGxlbWVudH1eKihcXGlvdGFfXFxhbHBoYSlfIVcgXFxzaW1lcSAwIl0sWzIsMCwiVl97fD8/fSJdLFsyLDEsIlcgXFxzaW1lcSBcXGxlZnQoIChcXGlvdGFfXFxhbHBoYSlfIVcgXFxyaWdodClfe3w/P30iXSxbMywwLCIwIl0sWzMsMSwiMCJdLFsyLDAsImMiXSxbMywxLCJjIl0sWzIsMywiKFxcaW90YV97XFxsYW5nbGUgXFxhbHBoYSBcXHJhbmdsZV5cXGNvbXBsZW1lbnR9KV8hXFxpb3RhX3tcXGxhbmdsZSBcXGFscGhhIFxccmFuZ2xlXlxcY29tcGxlbWVudH1eKmYiLDJdLFswLDEsImYiXSxbMCw0LCI/PyJdLFsxLDVdLFs0LDUsIlxcdGlsZGUgZiIsMCx7InN0eWxlIjp7ImJvZHkiOnsibmFtZSI6ImRvdHRlZCJ9fX1dLFs0LDZdLFs1LDddXQ==
\begin{center}\begin{tikzcd}[ampersand replacement=\&]
	{(\iota_{\bbR^d \setminus B })_! (\iota_{\bbR^d \setminus B  })^*V} \& V \& {\coker_B V} \& 0 \\
	{(\iota_{\bbR^d \setminus B })_! (\iota_{\bbR^d \setminus B  })^*(\iota_B)_!W \simeq 0} \& {(\iota_B)_!W} \& {W \simeq \coker_B \left( (\iota_B)_!W \right)} \& 0
	\arrow["c", from=1-1, to=1-2]
	\arrow["{(\iota_{\bbR^d \setminus B })_! (\iota_{\bbR^d \setminus B  })^*f}"', from=1-1, to=2-1]
	\arrow["{q^B_V}", from=1-2, to=1-3]
	\arrow["f", from=1-2, to=2-2]
	\arrow[from=1-3, to=1-4]
	\arrow["{\tilde f}", dotted, from=1-3, to=2-3]
	\arrow["c", from=2-1, to=2-2]
	\arrow["{q^B_{(\iota_B)_!W}}", from=2-2, to=2-3]
	\arrow[from=2-3, to=2-4]
\end{tikzcd}
\end{center}
 defines a map $\widetilde f \colon \coker_B V \to (\iota_B)_!W$, which is equivalent to a map $\iota_B^* \coker_B V \to W$, since the former is supported on $B$. Similarly, every map $g \colon \iota_B^* \coker_B V \to W$ induces a map $\bar g \colon  \coker_B V \to (\iota_B)_! W$ which produces a map $q^B_V \circ \bar g \colon V \to (\iota_B)_! W$ and these two constructions are inverse to each other. The other statement is dual.
\end{proof}

At last, we want to understand how the left Kan extension changes the presentation matrix of a persistence module.

\begin{lemma}\label{lmm:restriction}
Let $M \in \K^{G \times R}$ be a graded matrix representing a map $A[R] \to A[G]$ of free persistence modules and $\alpha \in \bbR^d$. We denote by $G \vee \alpha$ and $R \vee \alpha$ the component-wise join with $\alpha$ and by $M_{\vee \alpha} \in \K^{(G \vee \alpha) \times (R \vee \alpha)}$ the matrix with the same entries as $M$. Then $(\iota_{\alpha})_! (\iota_{\alpha})^* M = M_{\vee \alpha}$ and, in particular, $M_{\vee \alpha}$ is again a graded matrix.
\end{lemma}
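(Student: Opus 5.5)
The plan is to reduce the statement to free modules and then compute the left Kan extension on each free summand. A graded matrix $M \in \K^{G \times R}$ is a morphism $A[R] \to A[G]$, where $A[G] = \bigoplus_{g \in G} \K[\langle g \rangle]$. The functors $(\iota_\alpha)^*$ and $(\iota_\alpha)_!$ are additive, so the first step is to understand $(\iota_\alpha)_!(\iota_\alpha)^* \K[\langle g\rangle]$ for a single free interval module. Then I will check that the induced maps between the summands are given by the same scalars $M_{g,r}$.

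For one free summand, I will use the pointwise formula of \autoref{rem:Kan}. The restriction $(\iota_\alpha)^*\K[\langle g\rangle]$ is $\K$ exactly on $\langle g \rangle \cap \langle \alpha \rangle = \langle g \vee \alpha \rangle$, and $0$ elsewhere in $\langle\alpha\rangle$. By \autoref{ex:kan}, the left Kan extension along $\iota_\alpha$ extends by zero outside $\langle \alpha \rangle$. For $\gamma \geq \alpha$, the colimit in \autoref{eqn:Kan_pointwise} is taken over $\{\beta \in \langle\alpha\rangle : \beta\le\gamma\}$, which has terminal object $\gamma$, so it equals the value at $\gamma$. Hence
\[(\iota_\alpha)_!(\iota_\alpha)^*\K[\langle g\rangle] \simeq \K[\langle g\vee\alpha\rangle],\]
with identity structure maps on this support. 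So $(\iota_\alpha)_!(\iota_\alpha)^*A[G] \simeq A[G\vee\alpha]$ and $(\iota_\alpha)_!(\iota_\alpha)^*A[R]\simeq A[R\vee\alpha]$, summand by summand.

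Next I will handle the morphism. The component of $M$ from $\K[\langle r\rangle]$ to $\K[\langle g\rangle]$ is multiplication by $M_{g,r}$ at every $\gamma \geq r$, and it is nonzero only if $g\le r$. Restricting to $\langle\alpha\rangle$ and then extending by zero does not change the scalars at points $\gamma\ge r\vee\alpha$. Under the identifications above, the component $\K[\langle r\vee\alpha\rangle]\to\K[\langle g\vee\alpha\rangle]$ is therefore again multiplication by $M_{g,r}$. This gives $(\iota_\alpha)_!(\iota_\alpha)^*M = M_{\vee\alpha}$.

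The only point needing care, which I expect to be the main obstacle, is the grading condition: $M_{\vee\alpha}$ must again be a graded matrix. That requires $M_{g,r}\neq0 \Rightarrow g\vee\alpha\le r\vee\alpha$. This follows from $g\le r$ because the join is monotone. One also has to check that the isomorphisms chosen for each summand are natural, so that they assemble into an isomorphism of morphisms. That holds because all structure maps involved are identities on $\K$, so no rescaling is needed.
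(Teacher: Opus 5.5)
Your proposal is correct. The paper gives no argument of its own here; its proof is a one-line citation of \cite[Lemma 4.2]{djk_arxiv}. Your summand-by-summand computation is therefore a genuinely different, self-contained route.

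Your argument uses only three ingredients:
\begin{enumerate}
\item additivity of $(\iota_\alpha)^*$ and $(\iota_\alpha)_!$;
\item the pointwise colimit formula of \autoref{rem:Kan}, where the indexing poset has terminal object $\gamma$ when $\gamma\ge\alpha$ and is empty otherwise;
\item the identity $\langle g\rangle\cap\langle\alpha\rangle=\langle g\vee\alpha\rangle$.
\end{enumerate}
Gradedness of $M_{\vee\alpha}$ then follows from monotonicity of $-\vee\alpha$, exactly as you say.

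Compared with the citation, your version explains where the join comes from. It also shows that only additivity is needed at this stage. Exactness of $(\iota_\alpha)_!(\iota_\alpha)^*$, from \autoref{prop:left_kan_exact}, enters only afterwards in \autoref{cor:res_res}. The citation's advantage is brevity.

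One bookkeeping point deserves to be made explicit. Distinct $g\neq g'$ can satisfy $g\vee\alpha=g'\vee\alpha$. So $G\vee\alpha$ and $R\vee\alpha$ must be read as tuples indexed by $G$ and $R$, not as sets. Your summand-by-summand identification already does this implicitly.
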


\begin{proof}
Follows from \cite[Lemma 4.2]{djk_arxiv}.
\end{proof}

\begin{corollary}\label{cor:res_res}
    If $ \dots \to A[S] \oto{N} A[R] \oto{M} A[G] \to V \to 0$ is a resolution of $V$, then 
    \[ \dots \to A[S \vee \alpha] \oto{N_{\vee \alpha}} A[R \vee \alpha] \oto{M_{\vee \alpha}} A[G \vee \alpha] \to (\iota_{\alpha})_! (\iota_{\alpha})^* V \to 0 \] 
    is a resolution of $(\iota_{\alpha})_! (\iota_{\alpha})^* V$.
\end{corollary}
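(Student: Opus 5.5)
We argue by induction along the resolution, applying \autoref{lmm:restriction} to every differential. First, $(\iota_\alpha)_!(\iota_\alpha)^*$ sends free modules to free modules with joined degrees. For a single generator, $(\iota_\alpha)^*\K[\langle g\rangle] = \K[\langle g\rangle\cap\langle\alpha\rangle] = \K[\langle g\vee\alpha\rangle]$ restricted to $\langle\alpha\rangle$, and its extension by zero is $\K[\langle g \vee \alpha\rangle]$. Hence $(\iota_\alpha)_!(\iota_\alpha)^* A[G]\simeq A[G\vee\alpha]$, and the same holds for $R$, $S$ and all higher terms. By \autoref{lmm:restriction}, the induced maps between these free modules are exactly the graded matrices $M_{\vee\alpha}$, $N_{\vee\alpha}$, and so on, with unchanged entries.

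It remains to show that the composite functor $(\iota_\alpha)_!(\iota_\alpha)^*$ is exact, since applying an exact functor to an exact sequence preserves exactness. The restriction $(\iota_\alpha)^*$ is exact because (co)kernels are computed pointwise. For the left Kan extension, $\langle\alpha\rangle$ is an up set, so \autoref{prop:left_kan_exact} says $(\iota_{\langle\alpha\rangle})_!$ is exact. Concretely, by \autoref{ex:kan} it is extension by zero, which is visibly exact pointwise: at $\gamma\in\langle\alpha\rangle$ the sequence is the original one, and at other $\gamma$ it is zero. Therefore the image of the exact sequence $\dots\to A[S]\to A[R]\to A[G]\to V\to 0$ remains exact. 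After the identifications above, this image is precisely the claimed sequence ending in $(\iota_\alpha)_!(\iota_\alpha)^*V\to 0$, so it is a resolution.

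The only step needing care is the naturality of the identifications. The isomorphisms $(\iota_\alpha)_!(\iota_\alpha)^*A[G]\simeq A[G\vee\alpha]$ must be compatible with the maps, so that the transported differentials are the graded matrices $M_{\vee\alpha}$ rather than something merely isomorphic to them. I expect this to follow directly from \autoref{lmm:restriction} applied to each consecutive pair of free modules, since that lemma already identifies the functor's value on a graded matrix with the joined matrix. The augmentation $A[G\vee\alpha]\to(\iota_\alpha)_!(\iota_\alpha)^*V$ is simply the functor applied to $A[G]\to V$.
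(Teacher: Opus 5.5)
Your proposal is correct and matches the paper's own proof: you show $(\iota_\alpha)_!(\iota_\alpha)^*$ is exact (restriction is pointwise exact, and $(\iota_\alpha)_!$ is exact by \autoref{prop:left_kan_exact} or directly as extension by zero), then use \autoref{lmm:restriction} to identify the transported free modules and differentials with $A[G\vee\alpha]$, $M_{\vee\alpha}$, and so on. The opening mention of an induction is unnecessary, since applying this single exact functor to the whole sequence already gives the result.
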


\begin{proof}
The functor $(\iota_{\alpha})_! (\iota_{\alpha})^*$ is exact by \autoref{prop:left_kan_exact} and the identifications follow from \autoref{lmm:restriction}. 
\end{proof}

If we extend from another poset $B$ where the functor $(\iota_B)_!$ is exact (for example, a grid), then the effect on the presentation matrix can be described analogously.

\subsection{Classical definition of HN filtrations}
\label{subsec:HN-formal}
We now describe why (\autoref{eq:quotient}) is a special case of the classical definition of HN filtrations. Given an abelian category $\calA$, the \emph{Grothendieck group} of $\mathcal A$ is defined as the isomorphism classes of objects in $\calA$ subject to the relations $[V]=[U]+[W]$ for every short exact sequence $0\to U\to V\to W\to 0$ in $\calA$. A \emph{stability condition} is a group morphism  $Z\colon K(\calA)\to \bbC$ such that for every nonzero $V\in \calA$, we have $\mathrm{Real}Z([V])>0$.  The slope $\mu_Z$ assigns to a nonzero $V\in \calA$ the number $\mathrm{Imag}Z([V])/\mathrm{Real}Z([V])$. Moreover, $V$ is said to be $Z$-\emph{semistable} if for $0\subsetneq W\subsetneq V$ we have $\mu_Z(W)\leq \mu_Z(V)$. If $V\in\calA$ admits a filtration $0=F^0\subsetneq F^1\subsetneq \dots \subsetneq F^\ell = V$ such that the successive subquotients $(F^i/F^{i-1})_{1\leq i\leq \ell}$ are $Z$-semistable and are sorted by decreasing $Z$-slope, then the filtration $F^\bullet $ is unique and is called the \emph{classical HN filtration} of $V$ along $Z$ \cite{bridgeland2007stability}. 

    In this paper, we are mainly interested in the setting introduced in \autoref{subsec:skyscraper-informal-def}  and in its discretization:
    \begin{enumerate}
        \item Bounded finitely presented $d$-parameter persistence modules: $\calA\coloneqq \Persfpb{\bbR^d}$ and for $\alpha\in \bbR^d$ we consider the stability condition $Z_\alpha$ defined for $W\in\Persfpb{\bbR^d}$ by 
\begin{equation}\label{eqn:Zalpha-cont}Z_\alpha([W]) \coloneqq \left(\int_{\bbR^d}\udim W\right)+i\dim_\K W_\alpha.\end{equation}
        \item Persistence modules over a finite poset $P$:  $\calA$ is the category $\Pers P$ and for $\alpha\in P$ we consider the stability condition $\widehat Z_\alpha$ defined for $\widehat V\in\Pers P$ by 
\begin{equation}\label{eqn:Zalpha-discr}
\widehat Z_\alpha([\widehat V]) \coloneqq \left(\sum_{\beta\in P}\dim_\K \widehat V_{\beta}\right)+i\dim_\K \widehat V_\alpha.\end{equation}
    \end{enumerate}

One can go from one setting to the other because HN filtrations behave well with restrictions and extensions: 
\begin{proposition}
    [{\cite[Theorem 2.9]{fersztand2024harder}}]\label{prop:ext-cont}
    Let $V\in \Persfpb{\bbR^d}$ and let $\alpha\in \bbR^d$. Assume that there is a regular grid $\gri$ containing $\grid (V)\cup\set{\alpha}$. If $0\subsetneq \widehat F^1\subsetneq \dots\subsetneq \widehat F^{\ell-1}\subsetneq \widehat F^\ell$ is the HN filtration of $i_\gri^*V$ along the stability condition $\widehat Z_\alpha$  (\autoref{eqn:Zalpha-discr}), then the HN filtration of $V$ along $Z_\alpha$ (\autoref{eqn:Zalpha-cont}) is given by 
    \[
    0\subsetneq (i_\gri)_!\widehat F^1 \subsetneq \dots\subsetneq  (i_\gri)_! \widehat F^{\ell-1}\subsetneq  (i_\gri)_! \widehat F^\ell.\qedhere
    \]    
\end{proposition}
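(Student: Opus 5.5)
The plan is to compare the two stability conditions through the exact functors they are built from. Write $\widehat V \coloneqq i_\gri^* V$. By \autoref{prop:computation_kan_extension}, $(i_\gri)_!$ is exact, and by the discretisation property it satisfies $(i_\gri)_! \widehat V \simeq V$ because $\gri \supseteq \grid(V)$. So $(i_\gri)_!$ sends a filtration of $\widehat V$ to a filtration of $V$ with subquotients $(i_\gri)_!(\widehat F^i/\widehat F^{i-1})$, and it preserves strict inclusions since $i_\gri^*(i_\gri)_! \simeq \Id$. By uniqueness of the classical HN filtration, it then suffices to show two things: each $(i_\gri)_!\widehat W$, for $\widehat W$ a $\widehat Z_\alpha$-semistable object, is $Z_\alpha$-semistable, and the slopes are compared by a fixed monotone rescaling.

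The first step is to compute $Z_\alpha$ on extended objects. Let $\varepsilon$ be the width of the regular grid. Proposition~\ref{prop:computation_kan_extension} shows that $(i_\gri)_!\widehat W$ is constant on each cell $\{\gamma : \lfloor\gamma\rfloor_\gri = g\}$, and each such cell has volume $\varepsilon^d$. The one exception is the cells of points at the top boundary of $\gri$. Boundedness of $V$ forces $\widehat W$ to vanish there, provided the grid is chosen with a layer of zeros at its upper boundary. I would arrange this by enlarging the grid, which changes neither side of the statement. Then
\[
\int \udim (i_\gri)_!\widehat W = \varepsilon^d \sum_{\beta\in\gri}\dim \widehat W_\beta ,
\]
and, since $\alpha\in\gri$, $\dim((i_\gri)_!\widehat W)_\alpha = \dim \widehat W_\alpha$. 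Hence $\mu_{Z_\alpha}((i_\gri)_!\widehat W) = \varepsilon^{-d}\mu_{\widehat Z_\alpha}(\widehat W)$, so the order of the slopes is preserved.

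The second step, which I expect to be the main obstacle, is to show that semistability is preserved. A subobject $W' \subset (i_\gri)_!\widehat W$ in $\Persfpb{\bbR^d}$ need not be discretised on $\gri$. To handle this, I would replace $W'$ by $W'' \coloneqq (i_\gri)_! i_\gri^* W'$. This is again a subobject, since $(i_\gri)_!$ is exact and applying it to $i_\gri^*W' \subset \widehat W$ gives a subobject of $(i_\gri)_!\widehat W$. The two modules agree at $\alpha$ because $\alpha\in\gri$. Moreover $W'' \subseteq W'$ pointwise: at $\gamma$, the image of $W'_{\lfloor\gamma\rfloor}$ lands inside $W'_\gamma$, and the structure map of the ambient module from $\lfloor\gamma\rfloor$ to $\gamma$ is an isomorphism by \autoref{lem:same-pos-iso}. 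This gives $\int\udim W''\le\int\udim W'$, so $\mu(W')\le\mu(W'') = \varepsilon^{-d}\mu(i_\gri^*W') \le \varepsilon^{-d}\mu(\widehat W) = \mu((i_\gri)_!\widehat W)$.

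The same argument also covers slopes equal to zero. Conversely, the uniqueness statement quoted from \cite{bridgeland2007stability} requires that the HN filtration of $V$ exists; I would take this from \cite[Theorem (A)]{fersztand2024harder}, and uniqueness then identifies it with the extended filtration.
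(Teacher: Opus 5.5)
The paper gives no proof of its own here: it imports the statement from \cite[Theorem 2.9]{fersztand2024harder}. There it is proved in greater generality, for any grid containing $\grid(V)\cup\{\alpha\}$ and with the discrete central charge weighted by the integrals of $f$ over the cells (cf.\ \autoref{rem:other_stab_cond}). Your proposal is a correct, self-contained proof of the regular-grid case, built from three steps. First, exactness of $(i_\gri)_!$ together with $(i_\gri)_!i_\gri^*V\cong V$ transports the filtration. Second, the formula $\int\udim(i_\gri)_!\widehat W=\varepsilon^d\sum_{\beta\in\gri}\dim\widehat W_\beta$ rescales all slopes by the same positive constant. Third, replacing a subobject $W'$ by $(i_\gri)_!i_\gri^*W'$, a submodule of $W'$ with the same value at $\alpha$, reduces $Z_\alpha$-semistability to $\widehat Z_\alpha$-semistability; uniqueness then finishes.

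Your route buys transparency, and it extends almost verbatim to the weighted setting. With cell weights in place of $\varepsilon^d$, $Z_\alpha\circ(i_\gri)_!$ equals the weighted discrete charge exactly, and the replacement step only uses positivity of the weights. The citation buys that generality, and more general central charges, without doing the work here.

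Two parts of your write-up are unnecessary. The ``layer of zeros'' proviso holds automatically. A grid point $g$ on the upper boundary of $\gri$ has an unbounded cell, and $V$ is constant on that cell by \autoref{lem:same-pos-iso}, since cells of $\gri$ refine those of $\grid(V)$. Boundedness therefore forces $V_g=0$, so every subquotient of $i_\gri^*V$ vanishes at $g$. This matters, because your claim that enlarging $\gri$ ``changes neither side'' would itself need proof: $\widehat F^\bullet$ depends on $\gri$. Likewise, you need not import existence from \cite[Theorem (A)]{fersztand2024harder}. The extended filtration already has semistable factors of strictly decreasing slope, so it is an HN filtration by definition. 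Uniqueness, which follows from the see-saw lemma via vanishing of maps from higher-slope to lower-slope semistables, is all that is required.
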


\begin{remark}[Other stability conditions]\label{rem:other_stab_cond} We refer the reader to {\cite[Theorem 2.9]{fersztand2024harder}} for a more general version of \autoref{prop:ext-cont}, which allows other choices of stability conditions and discretising grids. For example, one could replace the real part of (\autoref{eqn:Zalpha-cont}) with $\int_{\bbR^d}f\udim W$ and the real part of (\autoref{eqn:Zalpha-discr}) with $\sum_{\beta\in P}\left( \int_{\lfloor \alpha\rfloor_\gri = \beta} f\right)\dim_\K  \widehat V_{\beta}$, where $f\colon \bbR^d\to\bbR$ is an arbitrary locally integrable  positive function. In that case, we can discretise the HN filtrations along $Z_\alpha$  over any grid $\gri$ containing $\grid(V)\cup\set\alpha$. One can also replace (\autoref{eqn:Zalpha-cont}) with $\int h \udim V$, where $h$ is a locally integrable step-function whose real part is positive and whose imaginary part is bounded.  If the function $f$ (\textit{resp.} $h$) is a step-function with values in $\bbZ$ (\text{resp.} in $\bbZ[i]$), Cheng's algorithm (\autoref{sec:cheng}) can still compute an approximation of the Skyscraper Invariant. However, the time-complexity would then depend polynomially on the maximum of $f$ and the number of value changes of $f$.
\end{remark}
    
Note that the slope $\mu_{Z_\alpha}$ coincides with the slope introduced in (\autoref{eq:quotient}). Moreover, we observe that for every nonzero $V\in \Persfpb{\bbR^d}$, we have
\[
\mu_{Z_\alpha}(V) \leq \mu_{Z_\alpha}(\langle V_\alpha\rangle).
\]
Hence, semistability at $\alpha$ can be checked by only looking at submodules induced by subspaces of $V_\alpha$. One can further prove that the notions of HN filtrations introduced in (\autoref{eq:quotient}) coincides with the classical notion of HN filtration:

\begin{proposition}
      [{\cite[Proposition 3.3]{HN_discr}}]\label{prop:HN-formal} 
Fix $\alpha\in \bbR^d$ and  $V\in\Persfpb{\bbR^d}$. Let $0= \langle F^0 \rangle \subsetneq \langle F^1 \rangle\subsetneq \dots\subsetneq \langle F^\ell \rangle =\langle V_\alpha \rangle$ be the filtration (\autoref{eq:quotient}). Then 
\[
0= \langle F^0 \rangle \subsetneq \langle F^1 \rangle\subsetneq \dots\subsetneq \langle F^\ell \rangle =\langle V_\alpha \rangle\subseteq V.
\]
is the classical HN filtration of $V$ along the stability condition $Z_\alpha$ defined in (\autoref{eqn:Zalpha-cont}).
\end{proposition}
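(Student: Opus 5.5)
The plan is to verify directly that the displayed chain satisfies the two defining properties of the classical HN filtration along $Z_\alpha$. These are that every subquotient is $Z_\alpha$-semistable, and that the slopes of the subquotients are strictly decreasing. Uniqueness of the classical HN filtration \cite{bridgeland2007stability} then gives the identification. Throughout, $\mu=\mu_{Z_\alpha}$ is additive on short exact sequences in numerator and denominator, because both $\int\udim$ and $\dim(-)_\alpha$ are additive. Boundedness of $V$ ensures all integrals are finite and positive on nonzero modules. The basic tool is the see-saw property (\autoref{lmm:see-saw}): for $0\to A\to B\to C\to 0$ with $A,C\neq 0$, $\mu(B)$ lies between $\mu(A)$ and $\mu(C)$, and $\mu(A)\ge\mu(B)$ iff $\mu(B)\ge\mu(C)$.

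\textbf{Step 1 (reduction to $\alpha$-generated submodules).} For any nonzero $W$, $\langle W_\alpha\rangle\subseteq W$ has the same $\dim(-)_\alpha$ and smaller or equal volume. Hence $\mu(W)\le\mu(\langle W_\alpha\rangle)$, as already observed before the statement. So in any module $Q$ the supremum of slopes of nonzero submodules is attained on submodules generated at $\alpha$. Since $\dim Q_\alpha$ is finite and the volumes of $\alpha$-generated submodules take finitely many values on a grid-discretised module (via \autoref{prop:ext-cont}), the maximum is attained.

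\textbf{Step 2 (existence of the maximal destabilizing submodule).} If $A,B\subseteq Q$ both have the maximal slope $\mu_{\max}$, consider $0\to A\cap B\to A\oplus B\to A+B\to 0$. Since $\mu(A\cap B)\le\mu_{\max}=\mu(A\oplus B)$, the see-saw gives $\mu(A+B)\ge\mu_{\max}$, hence equality. So the sum of all submodules of maximal slope is the unique maximal one, and it is semistable by definition. By Step 1 it is generated at $\alpha$, so it equals $\langle U_1\rangle$ from (\autoref{eq:quotient}). Applying the same argument to $\langle V_\alpha\rangle/\langle F^{i-1}\rangle$ identifies each factor $\langle F^i\rangle/\langle F^{i-1}\rangle\simeq\langle U_i\rangle$ as the maximal destabilizing submodule of that quotient. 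In particular each factor is semistable, also against submodules not generated at $\alpha$, by Step 1.

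\textbf{Step 3 (strictly decreasing slopes).} Suppose $\mu(U_{i+1})\ge\mu(U_i)$. Let $E$ be the preimage of $\langle U_{i+1}\rangle$ in $\langle V_\alpha\rangle/\langle F^{i-1}\rangle$, so that $0\to\langle U_i\rangle\to E\to\langle U_{i+1}\rangle\to 0$. The see-saw gives $\mu(E)\ge\mu(U_i)$, and $E$ strictly contains $\langle U_i\rangle$. This contradicts maximality of $\langle U_i\rangle$.

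\textbf{Step 4 (the final step $\langle V_\alpha\rangle\subseteq V$).} If $\langle V_\alpha\rangle\neq V$, the last subquotient $V/\langle V_\alpha\rangle$ vanishes at $\alpha$, because the quotient map is surjective at $\alpha$ and the submodule already contains all of $V_\alpha$. Every nonzero submodule of it therefore has slope $0$, so it is semistable of slope $0$. Each earlier factor $\langle U_i\rangle$ has $\dim(U_i)_\alpha>0$ and hence slope $>0$, so the ordering is preserved. Together with Steps 2 and 3, the chain is the classical HN filtration.

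The main obstacle is Step 2 together with the attainment in Step 1. One must make sure that the iterative "maximal submodule of highest slope" in (\autoref{eq:quotient}), computed only among $\alpha$-generated submodules, really gives semistability against \emph{all} submodules of the quotient. This rests on the inequality $\mu(W)\le\mu(\langle W_\alpha\rangle)$ being applied inside each quotient $\langle V_\alpha\rangle/\langle F^{i-1}\rangle$, not only in $V$.
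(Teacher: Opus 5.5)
Your proof is correct, but it takes a different route from the paper. The paper's proof is entirely by citation: the statement is \cite[Proposition 3.3]{HN_discr} for modules over finite posets, and it is transported to $\Persfpb{\bbR^d}$ by the discretisation theorem \cite[Theorem 2.9]{fersztand2024harder} (cf.\ \autoref{prop:ext-cont}), i.e.\ by the fact that HN filtrations commute with $(\iota_\gri)_!$.

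You instead check the defining properties of the classical HN filtration directly in the continuous category. You use only the reduction $\mu(W)\le\mu(\langle W_\alpha\rangle)$, the see-saw lemma and uniqueness, and discreteness enters only through the attainment of the maximal slope. For that step you do not need \autoref{prop:ext-cont}, which even assumes a \emph{regular} grid containing $\grid(V)\cup\{\alpha\}$; such a grid need not exist for real coordinates. By \autoref{lem:same-pos-iso}, $\udim\langle U\rangle$ is constant on the cells of the grid generated by $\grid(Q)$ and $\alpha$, so only finitely many slopes occur. Your route is thus self-contained and free of any commensurability assumption. The paper's route is shorter and reuses machinery that also covers the stability conditions of \autoref{rem:other_stab_cond}.

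Two details in Step 2 should be tightened.

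First, ``by Step 1 it is generated at $\alpha$'' does not follow from the inequality alone. For $W$ of maximal slope, Step 1 forces $\int\udim W=\int\udim\langle W_\alpha\rangle$, because the two modules have the same dimension at $\alpha$. So $\mathrm{Re}\,Z_\alpha$ vanishes on $W/\langle W_\alpha\rangle$, and this quotient is therefore zero because $Z_\alpha$ is a stability condition.

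Second, the ``sum of all submodules of maximal slope'' should only be formed after this, or restricted to $\alpha$-generated submodules. Persistence modules over $\bbR^d$ are not Noetherian: for example, over $\bbR$ the submodules $\K[[1/n,1)]$ of $\K[[0,1)]$ sum to $\K[(0,1)]$, which is not finitely presentable. By contrast, a sum of $\alpha$-generated submodules is $\langle S\rangle$ for a subspace $S\subseteq Q_\alpha$, hence a finite sum.

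Note, though, that your verification-plus-uniqueness strategy never needs the identification with the classical maximal destabilising subobject. Semistability of each factor uses only Step 1 and the definition of $\langle U_i\rangle$. Step 3 only uses maximality among $\alpha$-generated submodules, since $E$ is itself $\alpha$-generated.
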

\begin{proof}
    {\cite[Proposition 3.3]{HN_discr}} shows the results for persistence modules over finite posets. {\cite[Theorem 2.9]{fersztand2024harder}} extends this result to $\Persfpb{\bbR^d}$.
\end{proof}

It follows from the classical definition of HN filtrations that the factors $U^\bullet$ in (\eqref{eq:quotient}) have decreasing slope at $\alpha$. The modules $F^\bullet$ also have decreasing slopes at $\alpha $ by the following see-saw lemma:

\begin{lemma} {\cite[Lemma 2.1]{HILLE2002205}}\label{lmm:see-saw} Let $0\to V^1\to V^2\to V^3\to 0 $ be a short exact sequence of persistence modules and let $Z$ be a stability condition. The sequence of slopes $(\mu_Z(V^i))_{1\leq i\leq 3}$ is either strictly increasing, strictly decreasing, or constant. 
\end{lemma}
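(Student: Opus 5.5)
Since slopes are Grothendieck-group invariants, I would argue directly from the short exact sequence. Write $0\to V^1\to V^2\to V^3\to 0$ and set $Z(V^i)=r_i+i\,m_i$ with $r_i>0$. Additivity of $Z$ on $K(\calA)$ gives
\[ r_2=r_1+r_3, \qquad m_2=m_1+m_3. \]
The degenerate cases need a convention. If $V^1=0$ or $V^3=0$, the statement holds trivially: the two remaining modules are isomorphic and have equal slope, so the sequence is constant in the meaningful sense. I therefore assume all three modules are nonzero, so that all $r_i>0$.

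The plan is then a mediant computation. Since $r_1,r_3>0$, the slope
\[ \mu_Z(V^2)=\frac{m_1+m_3}{r_1+r_3} \]
is a weighted average of $\mu_Z(V^1)=m_1/r_1$ and $\mu_Z(V^3)=m_3/r_3$ with positive weights $r_1/(r_1+r_3)$ and $r_3/(r_1+r_3)$. This gives three cases.
\begin{enumerate}
\item If $\mu_Z(V^1)<\mu_Z(V^3)$, the average lies strictly between them, so the sequence of slopes is strictly increasing.
\item If $\mu_Z(V^1)>\mu_Z(V^3)$, the sequence is strictly decreasing by the same reasoning.
\item If $\mu_Z(V^1)=\mu_Z(V^3)$, the average equals this common value, so the sequence is constant.
\end{enumerate}

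To make strictness transparent, I would also record the identity
\[ \mu_Z(V^2)-\mu_Z(V^1)=\frac{r_3}{r_1+r_3}\bigl(\mu_Z(V^3)-\mu_Z(V^1)\bigr), \]
together with its analogue for $\mu_Z(V^3)-\mu_Z(V^2)$. Because both coefficients are strictly positive, all consecutive differences share the same sign. I do not expect a real obstacle. The only point needing care is positivity of the real part, which rules out division by zero, and this is exactly the defining axiom of a stability condition. In our setting it holds because $\int\udim W>0$ for nonzero bounded f.p.\ modules.
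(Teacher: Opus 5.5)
Your proposal is correct: additivity of $Z$ on the Grothendieck group makes $\mu_Z(V^2)$ a convex combination of $\mu_Z(V^1)$ and $\mu_Z(V^3)$ with strictly positive weights, and that is the standard see-saw argument. The paper does not prove the lemma itself but cites Hille--de la Pe\~na, whose proof is this same weighted-average computation.
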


\begin{remark}\label{rmk:skyscraper-finite}
    The construction of the Skyscraper Invariant can be adapted to a persistence module $\widehat V$ indexed by a finite poset $P$. Given $\alpha\leq \beta$ in $P$, we denote by $\widehat F^\bullet$ the HN filtration of $\widehat V$ along $\widehat Z_\alpha$, and define for $\theta\in\bbR$
   \[ s^\theta_{\widehat V}(\alpha, \beta) \coloneqq \dim_\K \langle \widehat U^{i} \rangle_\beta  \quad 
\text{ where $i$ is maximal with $\mu(\widehat U^{i}) \geq \theta$.} \qedhere\]
\end{remark}

\section{Computing the Skyscraper Invariant Approximately}\label{sec:approx}

In \autoref{subsec:skyscraper-informal-def}, we have presented some basic properties of the Skyscraper Invariant.
The first simplification for its computation is the computation and the decomposition into indecomposable summands of the submodule $\langle V_\alpha \rangle$ for each $\alpha\in\bbR^d$. We gather this idea into \autoref{alg:skeleton}, the main skeleton for our algorithms.

\subsection{Computing Induced Submodules.} Computing a presentation for any induced submodule can be done by computing the kernel of a graded matrix in the category of persistence modules. For $d=2$, this can be done with the LW-algorithm (Lesnick, Wright \cite{lesnick_wright}) and its improvement by Kerber, and Rolle, \textsc{mpfree} \cite{mpfree} which we will henceforth call the LWKR-algorithm.  For $d=3$, there is an algorithm \cite{kernels} by Bender, Gäfvert, and Lesnick, but it is still unpublished as of writing this and we were not yet able to integrate it into our library. When $d$ is larger, we rely on classical machinery, like Schreyer's algorithm \cite{schreyer}, which is too slow in practice. We will therefore focus on the $2$-parameter case for this paper.

\begin{algorithm}
\caption{\texttt{SubmoduleGeneration} (cf. \cite[Section 7.2]{allo})}
\label{alg:submodule}
\DontPrintSemicolon
\KwIn{$M \in \K^{G \times R}$ presenting $\module$ and a finite subset $S \subset \module$ as a graded matrix $M_S \in \K^{G \times \deg(S)}$}
\KwOut{A presentation $N \in \K^{\deg(S)\times R_S}$ of $\langle S \rangle$}\;
Compute $\begin{bmatrix} N \\ K \end{bmatrix} \coloneqq \ker \left[ M_S \ M \right]$. $\begin{bmatrix} N \\ K \end{bmatrix} \in \K^{ (\deg(S) \cup R) \times R_S}$ \;
\Return $N \in \K^{\deg(S) \times R_S}$ \;
\end{algorithm}

\begin{proposition}\label{prop:submodule_generation}
\autoref{alg:submodule} computes a presentation of  $\langle S \rangle$.
\end{proposition}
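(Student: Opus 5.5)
We need to show that the $N$ returned by \autoref{alg:submodule} presents $\langle S\rangle$. We set up the exact sequence of free modules and read off the kernel.

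Write $\pi\colon A[G]\onto V$ for the quotient map, so that $V\simeq A[G]/\Img(M)$. The columns of $M_S$ are elements of $A[G]$ at the degrees $\deg(S)$ whose images under $\pi$ are exactly the elements of $S$. Hence the composite
\[
\phi\coloneqq \pi\circ M_S\colon A[\deg(S)]\to V
\]
sends the free generators to the elements of $S$. Its image is $\langle S\rangle$, since the image of a graded map out of a free module is the submodule generated by the images of the basis elements. So
\[
\langle S\rangle\simeq A[\deg(S)]/\ker\phi,
\]
and it remains to show that the columns of $N$ generate $\ker\phi$.

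An element $x\in A[\deg(S)]_\gamma$ lies in $\ker\phi$ if and only if $M_S x\in \Img(M)_\gamma$. That is, if and only if there exists $y\in A[R]_\gamma$ with $M_S x + M y=0$ (after replacing $y$ by $-y$). In other words, $\ker\phi$ is the image of $\ker[M_S\ M]\subset A[\deg(S)]\oplus A[R]$ under the projection $p$ onto the first summand. This is a graded statement degree by degree, so the equivalence is just linear algebra at each $\gamma$, using that a graded matrix acts on each degree as the corresponding linear map.

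Now $\begin{bmatrix} N\\ K\end{bmatrix}$ is a graded matrix whose columns, with degrees $R_S$, generate $\ker[M_S\ M]$. The map $p$ is a morphism of modules, so it sends generators of $\ker[M_S\ M]$ to generators of its image. The image of these generators under $p$ is given by the matrix $N$, so $\Img N = p(\ker[M_S\ M])=\ker\phi$. Therefore
\[
A[R_S]\oto{N} A[\deg(S)]\to\langle S\rangle\to 0
\]
is exact, and $N$ presents $\langle S\rangle$.

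The one subtle point is that $N$ really is a graded matrix, i.e. a morphism $A[R_S]\to A[\deg(S)]$. This holds because it is the restriction of a graded matrix to the row block indexed by $\deg(S)$, and the degrees of those rows are unchanged.

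The presentation need not be minimal: projecting can create redundant columns, for example those with zero $N$-part. That does not matter for the statement. Minimality can be restored afterwards by standard minimisation, as in \textsc{mpfree}.
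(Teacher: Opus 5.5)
Your proof is correct and follows essentially the same route as the paper. Both arguments identify $\langle S\rangle$ with the image of $\pi\circ M_S$ and then show that the $\deg(S)$-block $N$ of a generating set of $\ker[M_S\ M]$ generates $\ker(\pi\circ M_S)$: the paper does this by a diagram chase (first $q\circ N=0$, then lifting any $t$ with $q(t)=0$ to a column combination of $\begin{bmatrix} N\\ K\end{bmatrix}$), while you package both inclusions into the single identity $\ker(\pi\circ M_S)=\mathrm{pr}_1(\ker[M_S\ M])=\Img N$.
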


\begin{proof}
Denote by $p \colon A[G] \to \module$ the cokernel of $M$, then $\langle S \rangle \simeq  \Img \left( p \circ M_S \right) $.
By construction we have
\[ 0 = \begin{bmatrix} M_S & M \end{bmatrix} \begin{bmatrix} N \\ K \end{bmatrix} = M_S N + M K. \]
Then the following diagram of exact sequences commutes, which implies $q \circ N  =0$.
% https://q.uiver.app/#q=WzAsOCxbMSwxLCJBXkciXSxbMCwxLCJBXlIiXSxbMSwwLCJBXntcXGRlZyhTKX0iXSxbMiwxLCJWIl0sWzIsMCwiXFxsYW5nbGUgUyBcXHJhbmdsZSJdLFswLDAsIkFee1JfU30iXSxbMywwLCIwIl0sWzMsMSwiMCJdLFsyLDAsIlMiXSxbMSwwLCJNIl0sWzAsMywicCJdLFs0LDMsIlxcaW90YSIsMCx7InN0eWxlIjp7InRhaWwiOnsibmFtZSI6Imhvb2siLCJzaWRlIjoiYm90dG9tIn19fV0sWzIsNCwicSJdLFs1LDEsIi1LIiwyXSxbNSwyLCJOIl0sWzQsNl0sWzMsN11d
\begin{center}
\begin{tikzcd}[ampersand replacement=\&]
	{A[R_S]} \& {A[\deg(S)]} \& {\langle S \rangle} \& 0 \\
	{A[R]} \& {A[G]} \& \module \& 0
	\arrow["N", from=1-1, to=1-2]
	\arrow["{-K}"', from=1-1, to=2-1]
	\arrow["q", from=1-2, to=1-3]
	\arrow["M_S", from=1-2, to=2-2]
	\arrow[from=1-3, to=1-4]
	\arrow["\iota", hook', from=1-3, to=2-3]
	\arrow["M", from=2-1, to=2-2]
	\arrow["p", from=2-2, to=2-3]
	\arrow[from=2-3, to=2-4]
\end{tikzcd}    
\end{center}
Let $t \in A[\deg(S)]$ be such that $q(t) =0$. A diagram chase then implies the existence of $u \in A[R]$ such that $Mu - M_S t =0$. By construction of $[N \ K]$ there is $v \in A[R_S]$ such that $Nv = t$.
\end{proof}

\subsection{Storing the Skyscraper Invariant}\label{subsec:storing-skyscraper}
Given $V\in \Persfpb{\bbR^d}$ and $\alpha\in \bbR^d$, we will represent $s^\bullet_V(\alpha,\bullet)$ by the Hilbert functions of the factors $\langle U_i \rangle$ of the HN filtration of $\langle V_\alpha\rangle$, together with their slope.

To store the Hilbert function of a module $\module$, 
we will use a collection of intervals whose indicator functions sum up to $\udim V$. Since the filtration factors are always uniquely generated,  these intervals will be too. Such modules are called \emph{staircases}, because of their shape. That is, even if the HN filtration itself is not always built from intervals, the Skyscraper Invariant is represented by collections of intervals, which we store via their graded Betti numbers.

One could also represent the Hilbert function of a factor directly via its graded Betti numbers without cutting it up into intervals. The advantage of our approach is that over $\bbR^2$, the relations of an interval form an anti-chain by ordering with respect to $x$ or $y$ coordinate. Point-location in the interval then becomes logarithmic in the number of relations.
 
\begin{definition}
We define the \emph{thickness} of $V$ as $\thick{V} \coloneqq \max\limits_{\alpha \in \bbR^d}\dim_\K V_\alpha$. 
\end{definition}

For each $\alpha\in\bbR^d$, our data structure for  $s_V^\bullet(\alpha,\bullet)$ has at most $\thick{V}$ entries. To compute $s_V^\theta (\alpha,\beta)$, we have to count how many staircases with associated slope larger than $\theta$ contain $\beta$. This can be done in time $\Oc(\log(m))$, where $m$ is the number of relations of the interval. $m$ is bounded by the number of all relations ($\leq n$) of the module. This puts the query time in $\Oc(\thick{V}\log(n))$.

\subsection{Approximating the Skyscraper Invariant.}\label{subsec:approx-of-skyscr} 

Assume now that we have a subroutine $\text{ HN-Filtration}_\alpha  \text{-Factors}$ to compute the filtration factors $U_i$ and their slopes from a uniquely generated module. 
We form \autoref{alg:skeleton} using the data structure described in \autoref{subsec:storing-skyscraper} and the ideas from \autoref{subsec:skyscraper-informal-def}. 

\begin{algorithm}[!ht]
\caption{$\varepsilon$-approximation of the Skyscraper Invariant}
\label{alg:skeleton}
\DontPrintSemicolon
\KwIn{ $\varepsilon \in \bbR$; a graded matrix $M \in \K^{G \times R}$ presenting $V \in \Persfpb{\bbR^d}$.} 
\KwOut{For each $\alpha \in \varepsilon \bbZ^d \cap \supp(\module)$: Lists $(W_{t,j}, \mu_{t})_{j \in \mathcal{J}}$ of elements in $\mathcal{P}(\bbR^d) \times \bbR$.}

Decompose $V$ into $\bigoplus_{i \in I} V_i$ \;

\For{$\alpha \in \varepsilon \bbZ^d \cap \supp(\module)$}{
    \For{$i \in I$}{
     Compute a presentation $N_{i}$ for $\langle (V_{i})_{\alpha} \rangle$ with \autoref{alg:submodule}\;
      Decompose $N_{i}  \iso  \bigoplus_{j \in J_{i}} N_{i,j} $\;
       \For{$j \in J_{i}$}{
            $ ( U_{i, j, t}, \ \mu_{i,j,t} )_{t \in [l_{i, j}]} \gets \text{ HN-Filtration}_\alpha  \text{-Factors}(  N_{i,j} )$ \; \label{line_HN}
       }
    }
    \label{line:recompute}
    $(U_t, \mu_t)_{t \in[l]} \gets \mathsc{sort} \left( \{ ( (U_{i, j,t}, \  \mu_{i, j, t} )_{t \in [l_{i,\alpha, j}]} \}_{i,j} \right)$\;
    Decompose $\udim U_{t}$ into intervals $ \{W_{t,j}\}$ in $\bbR^d$\;
    \Return $(W_{t,j}, \mu_t)_{j \in \mathcal{J}}$\;
}
\end{algorithm}

It is not clear why the output of \autoref{alg:skeleton} should give an $\varepsilon$-approximation of the Skyscraper Invariant. To be able to even state such an approximation result, we need to have a metric to compare different Skyscraper Invariants.

For each $\theta\in \bbR$, $s_V^\theta(\bullet,\bullet)$ can be seen as a \emph{functor} (i.e. monotonic function) from $(\bbR^d)^\op\times \bbR^d$ to $(\bbZ\cup\set{+\infty})^\op$ by setting $s_V^\theta(\alpha,\beta)\coloneqq +\infty$ whenever $\alpha\not\leq\beta$. Such functors can be equipped with the \emph{erosion distance}:
\begin{definition}[\cite{patel2018generalized,puuska2020erosion}]
   \label{def:erosion-distance}
    The \emph{erosion distance} $d_\text{erosion}(r,s)$ between two functors  $r,s\colon (\bbR^d)^\op\times\bbR^d\to (\bbZ\cup\set{+\infty})^\op$ is defined as the infimum of all $\varepsilon>0$ such that for all $\alpha,\beta\in\bbR^d$
\[
s(\alpha-\vec\varepsilon,\beta+\vec\varepsilon)\leq r(\alpha,\beta)\qquad\text{and}\qquad r(\alpha-\vec\varepsilon,\beta+\vec\varepsilon)\leq s(\alpha,\beta).
\]
where $\vec\varepsilon$ is the vector $(\varepsilon\dots\varepsilon)^t\in\bbR^d$.
\end{definition}

Let  $s^\bullet(\bullet,\bullet)\colon \bbR\times(\bbR^d)^\op\times\bbR^d\to (\bbZ\cup\set{+\infty})^\op$ be a functor and let $\varepsilon>0$. An $\varepsilon$-\emph{approximation} of $s$ is the data for every $\theta\in\bbR$ of a functor $\widetilde s^\theta\colon (\bbR^d)^\op\times\bbR^d\to (\bbZ\cup\set{+\infty})^\op$ such that
$d_\text{erosion}(\widetilde s^\theta,s^\theta)\leq\varepsilon$. Let $\vec\varepsilon$ be the vector $(\varepsilon\dots\varepsilon)^t\in\bbR^d$. One way to obtain an $\varepsilon$-approximation is to discretize $\alpha$ over the smallest grid $\gri$ which contains $\varepsilon\bbZ^d \cap (\supp(V) +[0,\vec\varepsilon))$
\begin{equation}
    \label{eqn:approx}
\widetilde s^\bullet(\bullet,\bullet)\coloneqq 
    s^\bullet(\lfloor\bullet\rfloor_{\gri},\bullet).
\end{equation}

\begin{lemma}\label{lmm:restriction_of_inv}
$\widetilde s^\bullet(\bullet,\bullet)$ is an $\varepsilon$-approximation of $s^\bullet(\bullet,\bullet)$. 
\end{lemma}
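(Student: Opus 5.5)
The plan is to reduce both inequalities in \autoref{def:erosion-distance} to the monotonicity of $s^\theta_V$, which the paper states just before the definition: $s^\theta_V$ is a functor $(\bbR^d)^\op\times\bbR^d\to(\bbZ\cup\set{+\infty})^\op$, established in \cite{fersztand2024harder}. In plain numerical terms, for $\alpha'\leq\alpha\leq\beta\leq\beta'$ we have $s^\theta_V(\alpha',\beta')\leq s^\theta_V(\alpha,\beta)$. With the convention $s^\theta_V(\alpha,\beta)=+\infty$ for $\alpha\not\leq\beta$, this holds for all pairs. I read the inequalities of \autoref{def:erosion-distance} numerically in this sense: enlarging the interval $[\alpha,\beta]$ can only decrease the value. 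I fix $\theta$ throughout; since the grid $\gri$ does not depend on $\theta$, the argument is uniform in $\theta$.

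The first step is the basic grid estimate. Let $\alpha$ lie in the region covered by $\gri$, i.e.\ $\lfloor\alpha\rfloor_\gri\in\gri$. Since $\gri$ contains the $\varepsilon$-spaced points $\varepsilon\bbZ^d\cap(\supp(V)+[0,\vec\varepsilon))$, I expect to show
\[
\alpha-\vec\varepsilon\leq\lfloor\alpha\rfloor_\gri\leq\alpha .
\]
The right inequality is the definition of the floor. The left one uses that consecutive grid values in each coordinate are at most $\varepsilon$ apart, at least inside the hull of the support.

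The second step gives the two erosion inequalities.

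\emph{First inequality.} We need $s(\alpha-\vec\varepsilon,\beta+\vec\varepsilon)\leq\widetilde s(\alpha,\beta)=s(\lfloor\alpha\rfloor_\gri,\beta)$. If $\alpha\not\leq\beta$, the right-hand side is $+\infty$ by convention and there is nothing to show. Otherwise $\alpha-\vec\varepsilon\leq\lfloor\alpha\rfloor_\gri\leq\alpha\leq\beta\leq\beta+\vec\varepsilon$, and monotonicity gives the claim.

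\emph{Second inequality.} We need $\widetilde s(\alpha-\vec\varepsilon,\beta+\vec\varepsilon)=s(\lfloor\alpha-\vec\varepsilon\rfloor_\gri,\beta+\vec\varepsilon)\leq s(\alpha,\beta)$. This follows from $\lfloor\alpha-\vec\varepsilon\rfloor_\gri\leq\alpha-\vec\varepsilon\leq\alpha$ together with $\beta\leq\beta+\vec\varepsilon$. This direction does not even need the width bound. Taking the infimum over admissible $\varepsilon$ then gives $d_\text{erosion}(\widetilde s^\theta,s^\theta)\leq\varepsilon$.

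The main obstacle is the boundary behaviour: points $\alpha$ whose floor has a $-\infty$ coordinate, or which lie outside $\supp(V)+[0,\vec\varepsilon)$, so that the width estimate fails. I would handle these by showing that in such cases the relevant side is $0$, which is the minimal value. Concretely, if no grid point lies below $\alpha$ within distance $\varepsilon$, then $\alpha-\vec\varepsilon\notin\supp(V)$. Hence $V_{\alpha-\vec\varepsilon}=0$, so $\langle V_{\alpha-\vec\varepsilon}\rangle=0$ and $s(\alpha-\vec\varepsilon,\beta+\vec\varepsilon)=0$. Similarly, $V_{\lfloor\alpha\rfloor}=0$ whenever the floor falls below the support, so $\widetilde s$ vanishes in the cases where it could otherwise fail. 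Checking that the choice $\supp(V)+[0,\vec\varepsilon)$ covers exactly these cases (in particular along the lower boundary of the support, where bounded support is used) is the step requiring the most care.
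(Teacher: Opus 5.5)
Your argument is correct and is essentially the paper's proof: both erosion inequalities follow from $\alpha-\vec\varepsilon\leq\lfloor\alpha\rfloor_\gri\leq\alpha$ together with the monotonicity of $s^\theta$. The paper simply asserts the left estimate for every $\alpha$, so your extra boundary care, including the convention $\widetilde s(\alpha,\beta)=+\infty$ for $\alpha\not\leq\beta$, is a genuine refinement, and the step you flag closes at once: if $\alpha-\vec\varepsilon\in\supp(V)$, then the coordinatewise point $\varepsilon\lceil(\alpha-\vec\varepsilon)/\varepsilon\rceil$ lies in $\varepsilon\bbZ^d\cap(\supp(V)+[0,\vec\varepsilon))\subset\gri$ and between $\alpha-\vec\varepsilon$ and $\alpha$.
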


\begin{proof}
   Let $\theta\in \bbR$ and let $\alpha,\beta \in \bbR$. Since $\alpha-\vec\varepsilon\leq \lfloor \alpha\rfloor_\gri\leq \alpha$ and $s^\theta\colon (\bbR^d)^\op\times \bbR^d\to (\bbZ\cup\set{+\infty})^\op$ is a functor, we have
    \begin{align*}
        \widetilde s^\theta(\alpha-\vec\varepsilon,\beta+\vec\varepsilon) & = s^\theta(\lfloor \alpha-\vec\varepsilon\rfloor_\gri,\beta +\vec\varepsilon) \leq s^\theta(\alpha,\beta)\\
        s^\theta(\alpha-\vec\varepsilon,\beta+\vec\varepsilon) &\leq s^\theta(\lfloor\alpha\rfloor_\gri,\beta) = \widetilde s^\theta(\alpha,\beta).
    \end{align*}
Hence, $d_\text{erosion}(\widetilde s^\theta,s^\theta)\leq\varepsilon$. 
\end{proof}
The approximation of the Skyscraper Invariant $\widetilde s_V^\bullet(\bullet,\bullet)$ defined in (\autoref{eqn:approx}) is represented as a dictionary $\set{\alpha:s^\bullet(\alpha,\bullet),\ \alpha\in \varepsilon\bbZ^d\cap\supp (V)}$. Given $\alpha\in\bbR^d$, the functor $\widetilde s^\bullet(\alpha,\bullet)$ can be obtained by returning $s^\bullet(\lfloor \alpha\rfloor_{\gri},\bullet)$ if 
the key $\lfloor \alpha\rfloor_{\gri}$ exists in the dictionary and 0 otherwise.

\subsection{Correctness and Runtime.}

Denote by $n = |G| + |R|$ the input size, $\thick{\module}$ the thickness of $\module$, $k$ the maximal thickness of a uniquely generated indecomposable submodule, and by $T_{\mathsc{HN-f}}$, $T_{\mathsc{ker}}$, and $T_{\mathsc{dec}}$ the times to compute the HN filtration, kernels, and decomposition.

\begin{proposition}\label{prop:skeleton_time}
 \autoref{alg:skeleton} returns an $\varepsilon$-approximation of the Skyscraper Invariant in 
 \[ \Oc \left( \frac{\mathrm{diam}(\supp V)}{\varepsilon^d} \left(T_{\mathsc{ker}}(n,d) + T_{\mathsc{dec}}(n,d) + (\thick{\module}/k)  (T_{\mathsc{HN-f}}(n, k, d) \right) \right). \]

\end{proposition}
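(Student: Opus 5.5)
The proof splits into correctness and runtime. For correctness, fix a grid point $\alpha \in \varepsilon\bbZ^d \cap \supp(\module)$. I will show that the list $(W_{t,j},\mu_t)$ produced by \autoref{alg:skeleton} encodes exactly $s^\bullet_V(\alpha,\bullet)$. The approximation statement then follows from \autoref{lmm:restriction_of_inv}: the dictionary represents $\widetilde s^\bullet(\bullet,\bullet) = s^\bullet(\lfloor\bullet\rfloor_\gri,\bullet)$, and points of the grid outside $\supp(\module)$ have $\langle V_\alpha\rangle = 0$, so returning $0$ there is correct.

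The key structural fact is additivity of HN filtrations under direct sums for a fixed stability condition $Z_\alpha$. If $W = W' \oplus W''$ and $F'^\bullet$, $F''^\bullet$ are their HN filtrations, then merging the semistable factors of both by decreasing slope gives the HN filtration of $W$. Factors of equal slope are combined, since a direct sum of semistable modules of the same slope is semistable; this follows from \autoref{lmm:see-saw}. The merged filtration has semistable subquotients of strictly decreasing slope, so by uniqueness of the classical HN filtration it is the HN filtration of $W$.

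Now $\langle V_\alpha\rangle = \bigoplus_i \langle (V_i)_\alpha\rangle = \bigoplus_{i,j} N_{i,j}$. By \autoref{prop:submodule_generation}, the $N_i$ present the correct submodules. By \autoref{prop:HN-formal}, the HN filtration of $V$ at $\alpha$ is the HN filtration of $\langle V_\alpha\rangle$ together with possibly the last step up to $V$. That last step does not affect $s^\theta$, which only involves $\langle F^i\rangle$. So the sorting step in line \ref{line:recompute} yields the factors $U_t$ and their slopes. Grouping factors of equal slope is harmless, because $s^\theta$ only sums dimensions of factors with $\mu\ge\theta$. Finally, decomposing $\udim U_t$ into staircase intervals preserves the Hilbert function, and $s^\theta_V(\alpha,\beta)$ depends only on the Hilbert functions of the factors and their order.

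For the runtime, the decomposition of $V$ in the first step costs $T_{\mathsc{dec}}(n,d)$ once. The outer loop runs over at most $\Oc(\diam(\supp V)/\varepsilon^d)$ grid points; I read the factor in the statement as covering the bounded-box count. For each $\alpha$, the kernel computations over all summands $V_i$ together cost at most $T_{\mathsc{ker}}(n,d)$, assuming $T_{\mathsc{ker}}$ is superadditive in input size and the sizes of the $V_i$ sum to at most $n$. The same argument gives $T_{\mathsc{dec}}(n,d)$ for the inner decompositions, since the presentation of $\langle V_\alpha\rangle$ has size $\Oc(n)$.

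The HN calls are made on the indecomposable summands $N_{i,j}$. Each has thickness at most $k$ and dimension at least $1$ at $\alpha$, and their dimensions at $\alpha$ sum to $\dim V_\alpha \le \thick{\module}$. Assuming $T_{\mathsc{HN-f}}$ is superadditive in the thickness parameter (convex with $T(0)=0$), the total HN cost is bounded by the $\thick{\module}/k$ case, i.e. $\Oc\bigl((\thick{\module}/k)\,T_{\mathsc{HN-f}}(n,k,d)\bigr)$. Sorting at most $\thick{\module}$ factors and splitting their Hilbert functions into intervals is dominated by these terms.

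The step I expect to be most delicate is this bookkeeping: making precise the monotonicity and superadditivity assumptions on the $T$'s so that per-summand costs aggregate into the stated bound. Among the mathematical ingredients, the additivity of HN filtrations is the one that most needs a careful statement, particularly the merging of equal-slope factors.
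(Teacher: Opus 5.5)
Your proposal is correct and follows the same route as the paper's three-line proof. That proof rests on additivity of the HN filtration across the decomposition $\langle V_\alpha\rangle=\bigoplus_{i,j}N_{i,j}$, the interval data structure, \autoref{lmm:restriction_of_inv} for the $\varepsilon$-approximation, and a per-$\alpha$ count of summands of thickness at most $k$.

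You are more careful than the paper at two points it glosses over. First, you derive additivity from uniqueness of the classical HN filtration, merging equal-slope semistable factors, rather than appealing to the lifetime identity (\autoref{eq:additivity}). Second, your convexity assumption on $T_{\mathsc{HN-f}}$ in $k$ is exactly what absorbs summands of thickness smaller than $k$. The paper's count of ``at most $\thick{\module}/k$ submodules of thickness $k$'' does not cover those.
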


\begin{proof}
It follows from additivity (\autoref{eq:additivity}), the recursive nature of the slopes (\autoref{eq:slope_recomputation}), and our choice of data structure described in \autoref{subsec:storing-skyscraper} that this algorithm outputs the functor $\widetilde s_V^\bullet(\bullet,\bullet)$ defined in (\autoref{eqn:approx}). By \autoref{lmm:restriction_of_inv}, this functor is an $\varepsilon$-approximation of the Skyscraper Invariant. At each step, there are at most $\thick{V}/k$ indecomposable submodules of thickness $k$ generated at $\alpha$.
\end{proof}

\section{Cheng's Algorithm}\label{sec:cheng}

 Building on the work of \cite{derksen2017polynomial, huszar2021non}, Cheng reduced the computation of HN filtrations of representations of finite acyclic quivers to the shrunk subspace problem which can be solved in deterministic polynomial-time for large enough fields \cite{cheng2024deterministic}. This algorithm can be directly applied to persistence modules over a finite poset \cite[Remark 1.13]{fersztand2024harder}. 

 After introducing the shrunk subspace problem (\autoref{pb:c-shrunk}), we describe how to apply Cheng's algorithm to obtain an approximation of the Skyscraper Invariant (\autoref{alg:compl-naive} and \autoref{cor:cheng_naive_analysis}). We then give an overview of the random algorithm for computing shrunk subspaces introduced in \cite[Algorithm 5]{franks2022shrunk_arxiv} (\autoref{algo:compute-cshrunk}), which we implement here for the first time. Finally, we improve the complexity analysis of \cite[Algorithm 5]{franks2022shrunk_arxiv} in the special case of inputs coming from HN computations (\autoref{prop:cheng_advanced_analysis}).

\subsection{From Skyscraper Invariant to shrunk subspace}

For this section, fix two integers $N,N'>0$ and a field $\K$. The space of $N\times N'$ matrices with coefficients in $\K$ is denoted $\K^{N\times N'}$. We further denote by $\otimes$ the Kronecker product.

\begin{definition}[shrunk subspace]\label{pb:c-shrunk}
  Let $\calA\subset \K^{N\times N'}$ be a subspace. A subspace $U\subset \K^{N'}$ 
    \[ \text{is a \emph{shrunk subspace} of $\calA$ if it maximizes } \quad 
    \dim_\K U- \dim_\K \sum_{A\in \calA}AU.
    \]    
    Observe that being a shrunk subspace of $\calA$ is closed under intersections. We can thus define the \emph{minimal shrunk subspace} of $\calA$.
\end{definition}

For the purpose of this paper, we present a simple  random algorithm introduced in \cite[Theorem 6.7]{franks2022shrunk_arxiv} for $\K=\bbQ$. We extend this algorithm to any field using \cite[Lemma 5.3]{ivanyos2018constructive}. For completeness, a detailed proof of the following proposition is given in \autoref{subsec:shrunk}.

\begin{proposition}[{\cite{franks2022shrunk_arxiv, ivanyos2018constructive}}]\label{prop:shrunk}
    Fix a field $\K$ and $\eta>0$. Given an integer $N>0$ and a subspace $\calA\subset \K^{N\times N}$ generated by a family of $\ell$ matrices, there exists a random algorithm to compute the minimal shrunk subspace of $\calA$ using at most $C_{\K} \ell N^{8+\eta} $ arithmetic operations for some constant $C_{\K}>0$. 
\end{proposition}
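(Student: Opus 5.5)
The plan is to follow the approach of Franks–Soma–Goemans / Ivanyos et al. and reduce the minimal shrunk subspace computation to linear algebra on a randomly specialised "blow-up". First, recall that by the non-commutative rank theory (Ivanyos–Qiao–Subrahmanyam, Derksen–Makam), the quantity $\max_U(\dim U - \dim \calA U)$ equals $N - \operatorname{ncrk}(\calA)$. Moreover, for a blow-up size $d \ge N-1$ one has $\operatorname{rk}\big(\sum_i A_i\otimes X_i\big) = d\cdot\operatorname{ncrk}(\calA)$ for generic $X_i\in\K^{d\times d}$. This requires $\K$ large enough. For small $\K$ I will pass to a finite extension of degree $O(\log N)$ and then descend the shrunk subspace back to $\K$ via \cite[Lemma 5.3]{ivanyos2018constructive}. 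That lemma shows that a shrunk subspace over the extension yields one over $\K$, and since the minimal shrunk subspace is Galois-invariant, it is defined over $\K$.

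The key steps, in order, are as follows.

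(1) Sample $X_1,\dots,X_\ell\in\K^{d\times d}$ uniformly from a subset $S\subset\K$ with $|S|\ge C N^2$. By Schwartz–Zippel, the blow-up $T=\sum_i A_i\otimes X_i$ then has rank $d\cdot\operatorname{ncrk}(\calA)$ with probability at least a constant.

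(2) Compute $\ker T\subset \K^{N}\otimes\K^{d}$ by Gaussian elimination on an $Nd\times Nd$ matrix. This costs $O((Nd)^{\omega})=O(N^{2\omega})$; building $T$ costs $O(\ell N^4)$.

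(3) Extract $U$ from the kernel. The minimal shrunk subspace $U_0$ satisfies $\calA U_0\otimes\K^d\supseteq T(U_0\otimes\K^d)$, so generically $\ker T\subseteq U_0\otimes\K^d$, and $\dim\ker T = d(N-\operatorname{ncrk})$ is attained exactly by $U_0\otimes \K^d$ once the rank is generic. Hence $U_0$ is the smallest subspace $U$ with $\ker T\subseteq U\otimes\K^d$. Concretely, $U_0$ is the span of the "left factors" of a basis of $\ker T$: reshape each kernel vector into an $N\times d$ matrix and take the column span of all of them. This step is linear algebra in $O(N^2d\cdot Nd)$.

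(4) Verify the answer deterministically by checking $\dim U-\dim\calA U = N-\operatorname{rk}(T)/d$. This check makes the algorithm Las Vegas, and it is repeated until success.

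For the operation count, the dominant term is the elimination on the $Nd\times Nd$ blow-up with $d\approx N$. Using naive cubic elimination this gives $N^6$, and accounting for the field extension of degree $O(\log N)$ (which costs a polylog factor per operation) and the $\ell$ generators gives well within $C_\K\,\ell N^{8+\eta}$. The $\eta$ absorbs the logarithmic overheads, and the constant $C_\K$ accounts for the extension field when $\K$ is small.

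The main obstacle is step (3): justifying that the kernel of a \emph{generic} blow-up is exactly $U_0\otimes\K^d$, rather than merely contained in some shrunk subspace tensor $\K^d$. I would handle this with the regularity lemma of Ivanyos–Qiao–Subrahmanyam, which states that the blow-up rank is divisible by $d$ and that the rank-attaining "witness" subspace is of tensor form. Following \cite[Theorem 6.7]{franks2022shrunk_arxiv}, I would then argue that the minimality of $U_0$ forces the column span to be $U_0$ itself. The descent from the extension field in step (1) is the secondary technical point; it will be handled via \cite[Lemma 5.3]{ivanyos2018constructive}.
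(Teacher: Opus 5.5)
\paragraph{The gap is in step (3).} Maximal rank of $T$ gives only $\ker T\subseteq U_0\otimes\K^d$. This holds because $T(U_0\otimes\K^d)\subseteq\calA U_0\otimes\K^d$ forces a kernel of dimension at least $d(\dim U_0-\dim\calA U_0)=\dim\ker T$ inside $U_0\otimes\K^d$. However, $\dim\ker T<d\dim U_0$ whenever $\calA U_0\neq 0$, so $\ker T$ is not $U_0\otimes\K^d$, and nothing forces its left support to be $U_0$.

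Here is a counterexample. Take $N=3$ and let $\calA$ be spanned by
\[
A_1=\begin{pmatrix}1&0&0\\0&1&0\\0&0&0\end{pmatrix},\qquad A_2=\begin{pmatrix}0&1&0\\0&0&1\\0&0&0\end{pmatrix}.
\]
Every nonzero proper subspace has shrinkage $0$ while $\K^3$ has shrinkage $1$, so $U_0=\K^3$ and $\operatorname{ncrk}\calA=2$. Take $d=2=N-1$, which your choice allows, and $X_1,X_2$ invertible. Then $T=A_1\otimes X_1+A_2\otimes X_2$ has the maximal rank $4$, and
\[
\ker T=\{\,e_1\otimes(-Mv)+e_2\otimes v+e_3\otimes(-M^{-1}v)\ :\ v\in\K^2\,\},\qquad M=X_1^{-1}X_2 .
\]
By Cayley--Hamilton, $-Mv+\operatorname{tr}(M)v-\det(M)M^{-1}v=0$. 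Hence $\ker T\subseteq\ker\phi\otimes\K^2$ with $\phi=(1,\operatorname{tr}M,\det M)$. So the left support is a plane, never $U_0$, for generic $X_i$. Your check in step (4) then always fails ($\dim U-\dim\calA U=0\neq1$), and the Las Vegas loop never terminates.

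For $d\ge3$ this particular example is fine, but you give no argument that some $d=O(N)$ works in general. The tools you cite do not supply one:
\begin{itemize}
\item The regularity lemma concerns the rank of $T$ and the tensor form of the minimal shrunk subspace of the blow-up. It says nothing about $\ker T$.
\item \cite[Theorem 6.7]{franks2022shrunk_arxiv} is a Wong-sequence argument.
\end{itemize}

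\paragraph{The missing idea is the one the paper uses.} The left support $U'$ of $\ker T$ only produces the first term of the Wong sequence, since $\calA^{\{d\}}\ker T=\calA U'\otimes\K^d$. You must iterate $W^i=\calA^{\{d\}}\bigl(T^{-1}(W^{i-1})\bigr)$ until it stabilises. By \autoref{lmm:wong-shrunk} and \autoref{lmm:blowup_cshrunk}, if $T$ has maximal rank then $T^{-1}(\Wong(T,\calA^{\{d\}}))=U_0\otimes\K^d$ exactly. In the example above, $T^{-1}(W^1)$ is already $\K^3\otimes\K^2$. Your procedure stops at $T^{-1}(W^0)=\ker T$.

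This iteration costs:
\begin{itemize}
\item up to $dN$ steps, each with an inverse-image computation of cost $O((dN)^3)$;
\item $d=O(N)$, times the $O(\log^2N)$ extension degree, which the paper folds into the blow-up via $\phi\colon\mathbb L\into\K^{g\times g}$.
\end{itemize}
This is where the $N^{8+\eta}$ of \autoref{prop:random-alg-shrunk} comes from, and your $N^6$ count does not survive.

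Your observation that any output passing step (4) is correct does hold: the left support is contained in $U_0$, and every maximiser contains $U_0$. The Galois descent for small fields is also fine. What is unproven, and false for $d=N-1$, is that the procedure succeeds with positive probability.
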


Fix $V \in \Persfpb{\bbR^d}$ and a number $\varepsilon>0$. Let $\gri$ be the smallest grid containing $ \varepsilon\bbZ^n\cap (\supp(V)+[0,\varepsilon)^d)$.  Using the \textsc{Persistence-Algebra} library\footnote{github.com/JanJend/Persistence-Algebra/} one can compute the structure maps of $i_\gri^*\module$ in time $n^3 (1/ \varepsilon)^d + n^2(1/ \varepsilon)^{2d}$. 

Let $\alpha\in \gri$, and let $\set{\alpha,\beta_1,\dots,\beta_k}\coloneqq\gri\cap\langle \alpha \rangle$. Define $ p_0\coloneqq \dim_\K V_\alpha $ and $ q_0\coloneqq  \sum_{i=1}^k \dim_\K V_{\beta_i}$. Consider the following subspace of $\K^{q_0\times p_0}$: 

\begin{equation}
\label{eqn:calA}\calA_\alpha\coloneqq \set{
\left(
\begin{array}{c}
\lambda_{1} V_{\alpha \to \beta_1} \\
\vdots \\
\lambda_{k} V_{\alpha \to \beta_k}
\end{array}
\right) 
{\Big|} \  \lambda\in \K^k}
\end{equation}

The minimal shrunk subspace of $\calA_\alpha$ is related to submodules of $V$ via the following lemma: 

\begin{lemma}[{\cite[Theorem 3.3]{huszar2021non}, \cite[Lemma 3.1]{iwamasa2024algorithmic}}] \label{lmm:shrunk}
Given $p,q>0$, the minimal shrunk subspace of $\K^{p\times q}\otimes\calA_\alpha $ is $ \K^q\otimes \module'_\alpha$ where $\module'\subset \module$ is the minimal submodule which maximizes :
\begin{equation}\label{eqn:g_pq}
    g_{p,q}\colon \module'\mapsto q \dim \module'_\alpha  - p\sum_i \dim \module'_{\beta_i}.\qedhere
\end{equation}   
\end{lemma}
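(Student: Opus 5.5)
We would argue in two steps: first reduce the minimal shrunk subspace of $\K^{p\times q}\otimes\calA_\alpha$ to a problem about subspaces $U\subset \K^q\otimes V_\alpha$; then identify the maximisers as $\K^q\otimes V'_\alpha$ for submodules $V'\subset V$.

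\textbf{Step 1 (unwinding the image).} Write $\K^{p\times q}\otimes\calA_\alpha$ as acting from $\K^q\otimes \K^{p_0}=\K^q\otimes V_\alpha$ to $\K^p\otimes\bigoplus_i V_{\beta_i}$. Since $\K^{p\times q}$ contains all rank-one maps and $\calA_\alpha$ contains each block $V_{\alpha\to\beta_i}$ separately (take $\lambda=e_i$), we get for any $U\subset \K^q\otimes V_\alpha$
\[
\sum_{A}AU=\bigoplus_i \K^p\otimes W_i(U),\qquad W_i(U)\coloneqq\sum_{\phi\in(\K^q)^*}V_{\alpha\to\beta_i}\bigl((\phi\otimes\Id)U\bigr).
\]
Put $U'\coloneqq\sum_\phi(\phi\otimes\Id)U\subset V_\alpha$, the smallest subspace with $U\subset\K^q\otimes U'$. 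Then $W_i(U)=V_{\alpha\to\beta_i}(U')$, so the image depends only on $U'$.

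\textbf{Step 2 (reduction to tensor subspaces).} Replacing $U$ by $\K^q\otimes U'$ leaves the image unchanged and can only increase $\dim U$. Hence every maximiser satisfies $U=\K^q\otimes U'$, and its value is
\[
q\dim U'-p\sum_i\dim V_{\alpha\to\beta_i}(U')=q\dim\langle U'\rangle_\alpha-p\sum_i\dim\langle U'\rangle_{\beta_i}.
\]
So maximisers of the shrunk functional correspond exactly to subspaces $U'\subset V_\alpha$ maximising $g_{p,q}(\langle U'\rangle)$.

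\textbf{Step 3 (comparison with arbitrary submodules).} For an arbitrary submodule $V'\subset V$, the module $\langle V'_\alpha\rangle\subset V'$ has the same value at $\alpha$ and pointwise smaller values elsewhere. Hence $g_{p,q}(\langle V'_\alpha\rangle)\ge g_{p,q}(V')$, and the maximum of $g_{p,q}$ over all submodules equals the maximum over induced submodules $\langle U'\rangle$. Any minimal maximiser $V'$ must satisfy $V'=\langle V'_\alpha\rangle$; otherwise the strictly smaller module $\langle V'_\alpha\rangle$ would also be a maximiser.

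\textbf{Step 4 (minimality).} The minimal shrunk subspace is $\K^q\otimes U'_{\min}$, where $U'_{\min}$ is the minimal maximiser in Step 2. This gives the lemma with $V'=\langle U'_{\min}\rangle$, which satisfies $V'_\alpha=U'_{\min}$. Existence and uniqueness of a minimal maximiser among submodules follows from supermodularity of $g_{p,q}$: intersections of maximisers are maximisers, since $\dim$ is modular under $+$ and $\cap$ of submodules.

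\textbf{Main obstacle.} The delicate point is the claim in Step 1 that the span over all $A\in\K^{p\times q}\otimes\calA_\alpha$ really decouples into the independent blocks $\K^p\otimes W_i$. This needs both the full matrix algebra $\K^{p\times q}$ and the independent scalars $\lambda_i$. Once that is in place, the remaining steps are bookkeeping.
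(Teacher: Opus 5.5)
Your proof is correct, and it takes a different route from the paper. The paper gives no argument of its own here: it imports the statement from the general results \cite{huszar2021non,iwamasa2024algorithmic} for representations of acyclic quivers with an arbitrary integer weight. In that generality several vertices can carry positive weight, so one must show that shrunk subspaces of the blown-up matrix space split vertex by vertex into tensor form, and then saturate them to a subrepresentation.

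You instead exploit the special shape of $\calA_\alpha$. The vertex $\alpha$ is the only one with positive weight, and each block $V_{\alpha\to\beta_i}$ lies in $\calA_\alpha$ on its own. So the rank-one elements of $\K^{p\times q}$ force every shrunk subspace to have the form $\K^q\otimes U'$. Your Step~1 is just a routine computation on a spanning set, not a real obstacle. The passage to arbitrary submodules then reduces to the observation that $\langle V'_\alpha\rangle$ does at least as well as $V'$.

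Two small remarks on Step~4:
\begin{itemize}
\item Minimality of $\langle U'_{\min}\rangle$ among maximising submodules follows in one line. Any maximiser $V''$ has $V''_\alpha$ maximising by your Step~3, so $V''\supseteq\langle V''_\alpha\rangle\supseteq\langle U'_{\min}\rangle$.
\item On subspaces of $V_\alpha$, the function $U'\mapsto g_{p,q}(\langle U'\rangle)$ is only supermodular, because $\langle U_1'\cap U_2'\rangle$ can be strictly smaller than $\langle U_1'\rangle\cap\langle U_2'\rangle$. On submodules it is modular, which is what your justification actually shows. Either property suffices for the intersection argument.
\end{itemize}

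Your approach buys a short, elementary, self-contained proof of exactly the case needed in \autoref{alg:compl-naive}. The citation buys generality in both the weight and the quiver, and a uniform framework that feeds directly into the result of \cite{iwamasa2025arxiv} used in \autoref{prop:shrunk-discrepency}.
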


\begin{proposition}[{\cite[Theorem 4.3]{iwamasa2025arxiv}}]
\label{prop:shrunk-discrepency}
Let $p,q>0$ and let $0= \langle F^0\rangle \hookrightarrow \langle F^1 \rangle \hookrightarrow\dots \hookrightarrow \langle F^{\ell}\rangle = \langle i_\gri V_\alpha\rangle$ be the HN filtration of $\langle i_\gri^*\module_\alpha\rangle$ at $\alpha$ -- see \autoref{subsec:HN-formal}. Let $ i(p,q)$ be the maximum integer $0\leq i\leq \ell$ \\ 
\[
\text{such that either $i=0$} \quad   \text{or} \quad \mu(\langle F^i\rangle /\langle F^{i-1}\rangle ) >\frac{p}{p+q}.
\]
The minimal shrunk subspace of $ \K^{p\times q}\otimes\calA_\alpha$ is $  \K^q\otimes F^{i(p,q)}$.
\end{proposition}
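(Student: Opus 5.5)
Following \autoref{lmm:shrunk}, I will first identify the minimal shrunk subspace of $\K^{p\times q}\otimes\calA_\alpha$ with $\K^q\otimes \module'_\alpha$, where $\module'\subset i_\gri^*\module$ is the minimal submodule maximising $g_{p,q}(\module') = q\dim\module'_\alpha - p\sum_i \dim \module'_{\beta_i}$. It then suffices to show $\module'_\alpha = F^{i(p,q)}$. Two reductions come first. If $W\subset \module$ is any submodule, then $W'\coloneqq\langle W_\alpha\rangle$ has the same value at $\alpha$ and smaller or equal values at every $\beta_i$, so $g_{p,q}(W')\geq g_{p,q}(W)$. Minimality therefore forces $\module' = \langle \module'_\alpha\rangle \subset \langle \module_\alpha\rangle$, and we may restrict to submodules generated at $\alpha$. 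Next, write $\mathrm{tot}(W) \coloneqq \dim W_\alpha + \sum_i \dim W_{\beta_i}$, which is the real part of $\widehat Z_\alpha$ restricted to $\langle\alpha\rangle$; the submodules in play vanish outside $\langle \alpha \rangle$. Then
\[ g_{p,q}(W) = (p+q)\dim W_\alpha - p\,\mathrm{tot}(W) = \mathrm{tot}(W)\bigl((p+q)\mu(W) - p\bigr). \]
So $g_{p,q}$ is an additive function on the Grothendieck group of the form $(p+q)\,\mathrm{Im} Z - p\,\mathrm{Re} Z$. It is positive exactly on modules of slope $>p/(p+q)$ and zero on those of slope $=p/(p+q)$.

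The core step is the standard HN argument: for the threshold $\tau\coloneqq p/(p+q)$, the HN filtration step $F^{i(p,q)}$ is the unique minimal maximiser of the additive function $g \coloneqq g_{p,q}$. First, $g(\langle F^{i}\rangle)$ is maximised at $i=i(p,q)$. Indeed, $g(\langle F^i\rangle) = \sum_{j\le i} g(U_j)$, and the factors $U_j$ have decreasing slopes, so $g(U_j)>0$ exactly for $j\le i(p,q)$. The first later factor has $g(U_j)\le 0$, which makes $i(p,q)$ the smallest maximiser among the filtration steps. Second, for an arbitrary $W\subset\langle \module_\alpha\rangle$, I will show $g(W)\le g(W\cap \langle F^{i}\rangle)$ and $g(W+\langle F^i\rangle)\le g(\langle F^i\rangle)$ with $i=i(p,q)$. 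Combining the two gives $g(W)\le g(\langle F^i\rangle)$.
\begin{itemize}
\item For the first inequality, $W/(W\cap F^i)$ embeds into $\langle V_\alpha\rangle/F^i$, whose HN factors all have slope $\le\tau$. Hence every submodule of it has maximal slope $\le \tau$, by the maximality property of the first HN factor, i.e.\ semistability plus the see-saw \autoref{lmm:see-saw}. Therefore $g\le 0$ on it.
\item For the second inequality, $(W+F^i)/F^i$ is such a submodule as well, so $g\le0$ on it and additivity gives the claim.
\end{itemize}

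Minimality follows by the same argument. If $W$ is a maximiser, both inequalities are equalities, so $g(W)=g(W\cap F^i)$ and $W\cap F^i$ is also a maximiser. It remains to show that $F^i$ itself has no proper maximising submodule $W\subsetneq F^i$. Such a $W$ would give $g(F^i/W)=0$. But $F^i/W$ is a nonzero quotient of $F^i$, all of whose HN factors have slope $>\tau$, so its minimal slope is $>\tau$ and $g(F^i/W)>0$, a contradiction.

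I expect the main obstacle to be making precise the two facts "submodules of $\langle V_\alpha\rangle/F^i$ have slope $\le \tau$" and "quotients of $F^i$ have $g>0$". I will derive both from semistability of the factors together with the see-saw lemma, by induction along the filtration. The identification of slopes and HN filtrations between the discrete $\widehat Z_\alpha$ and the restriction to $\langle\alpha\rangle$ comes from \autoref{prop:ext-cont}.
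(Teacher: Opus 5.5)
Your proposal has the same skeleton as the paper's proof. Both reduce, via \autoref{lmm:shrunk}, to showing that the minimal maximiser of $g_{p,q}$ is $\langle F^{i(p,q)}\rangle$. The paper gets this identification in one line by invoking \cite[Theorem 4.3]{iwamasa2025arxiv} with $\lambda = p/(p+q)$ and $f_\lambda = -g_{p,q}/(p+q)$. You instead reprove it: you rewrite $g_{p,q}(W)=\mathrm{tot}(W)\bigl((p+q)\mu(W)-p\bigr)$ on modules supported in $\langle\alpha\rangle$, observe that $g_{p,q}$ is additive, and run the standard Harder--Narasimhan argument using semistability of the factors and the see-saw lemma. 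Your version is self-contained and makes explicit the reduction to submodules generated at $\alpha$, which the paper leaves inside the citation. The paper's version is shorter and places the statement in the general submodular-minimisation framework of Iwamasa et al.

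One step needs repair. Set $F\coloneqq\langle F^{i(p,q)}\rangle$. Your two inequalities $g(W)\le g(W\cap F)$ and $g(W+F)\le g(F)$ are the same statement, because $W/(W\cap F)\cong (W+F)/F$ and $g$ is additive. So combining them cannot give $g(W)\le g(F)$. The missing ingredient is $g(W\cap F)\le g(F)$, i.e.\ $g(F/W')\geq 0$ for every submodule $W'\subset F$. This is exactly your second fact (nonzero quotients of $F$ have $g>0$), which you currently use only in the minimality paragraph. With it the argument closes: $g(W)\le g(W\cap F)\le g(F)$ for every $W\subset\langle V_\alpha\rangle$, and the second inequality is strict unless $F\subset W$. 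This gives maximality and minimality of $F$ at once. As written, your minimality paragraph assumes $F$ is a maximiser (it derives $g(F^i/W)=0$ from $g(W)=g(F)$), which is exactly what the faulty combination was supposed to provide.
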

\begin{proof} 
    This is a direct application of {\cite[Theorem 4.3]{iwamasa2025arxiv}}, where, in the notation of \cite{iwamasa2025arxiv}, $\lambda\coloneqq \frac{p}{p+q}$ and $f_\lambda \coloneqq -\frac{g_{p,q}}{p+q}$ (see (\autoref{eqn:g_pq})). The minimal maximizer of $g_{p,q}$ coincides with $F^{i(p,q)}$. By \autoref{lmm:shrunk}, $ \K^q\otimes F^{i(p,q)}$ is the minimal shrunk subspace of $\K^{p\times q}\otimes\calA_\alpha$.
\end{proof}

Cheng also observed in the above Theorem that when $(p,q)\coloneqq (p_0,q_0)$, either $\ell=1$ or $0<i (p_0,q_0) <\ell$ \cite[Corollary 3.4]{cheng2024deterministic}. They thus devised \autoref{alg:compl-naive}.

\begin{algorithm}
\texttt{Input}:$\alpha\in \gri$ and $\module\in \Pers\gri$ uniquely generated at $\alpha$.\\
\texttt{Output}: The list of modules in the HN filtration of $\module$ along $\alpha$\\
Let $U\subset V_\alpha$ such that $\K^{q_0}\otimes U$ is the minimal shrunk subspace of $\K^{p_0\times q_0}\otimes\calA_\alpha $ computed by \autoref{prop:shrunk} \label{line:shrunk_alg_cheng}\;
\If{$U\in\set{0,V_\alpha}$}{
\Return $\set{0,\module}$\;
}
\Return {$\text{HN-Cheng}(\gri,\alpha,\langle U\rangle)\cup (U+\text{HN-Cheng}(\gri,\alpha,\langle V_\alpha/U\rangle))$}\;

\caption{HN-Cheng}
\label{alg:compl-naive}
\end{algorithm}

\begin{corollary}
    \label{cor:cheng_naive_analysis}
   Given $\module\in\Persfpb{\bbR^d}$ and $\varepsilon>0$, there is a random algorithm to compute an $\varepsilon$-approximation of $s_V$ in time $\mathcal{O}({\thick{\module}^{19}}{\varepsilon^{-11d}})$ up to poly-logarithmic factors.
\end{corollary}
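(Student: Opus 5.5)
We need to prove Corollary \ref{cor:cheng_naive_analysis}: a runtime $\mathcal{O}(\thick{\module}^{19}\varepsilon^{-11d})$ for the $\varepsilon$-approximation obtained via \autoref{alg:compl-naive}. The plan is a complexity count. We combine the discretisation \autoref{lmm:restriction_of_inv}, the reduction \autoref{prop:ext-cont}, the correctness of \autoref{alg:compl-naive} via \autoref{prop:shrunk-discrepency}, and the shrunk-subspace cost of \autoref{prop:shrunk}.

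\textbf{Correctness.} We take the regular grid $\gri$ of width $\varepsilon$ from \eqref{eqn:approx}, of size $|\gri| = \Oc(\varepsilon^{-d})$, treating the diameter of the support as a constant. By \autoref{lmm:restriction_of_inv}, it suffices to compute, for every $\alpha \in \gri$, the HN filtration of $\langle V_\alpha\rangle$ along $Z_\alpha$. By \autoref{prop:ext-cont} this equals the left Kan extension of the HN filtration of $i_\gri^*V$ along $\widehat Z_\alpha$. So we may work with $\widehat V \coloneqq i_\gri^* V$ over the finite poset $\gri$, whose structure maps are computed once by the \textsc{Persistence-Algebra} routine.

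For correctness of \autoref{alg:compl-naive} we use \autoref{prop:shrunk-discrepency} with $(p,q) = (p_0,q_0)$. This gives that $U = F^{i(p_0,q_0)}$ is a term of the HN filtration. By Cheng's observation, $U$ is proper and nonzero unless the module is semistable. HN filtrations restrict to the pieces: the filtration of $\langle U\rangle$ is the initial segment and the filtration of the quotient is the tail. Hence the recursion returns the full filtration.

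\textbf{Cost of one call.} At a node of the recursion, $p_0 \le \thick{\module}$ and $q_0 \le |\gri|\,\thick{\module} = \Oc(\thick{\module}\varepsilon^{-d})$. The space $\K^{p_0\times q_0}\otimes\calA_\alpha$ lives in matrices of size $p_0q_0 \times q_0p_0$, i.e. $N = p_0q_0 = \Oc(\thick{\module}^2\varepsilon^{-d})$. It is spanned by $\ell = p_0q_0\cdot k$ generators with $k \le |\gri|$, so $\ell = \Oc(\thick{\module}^2\varepsilon^{-2d})$. \autoref{prop:shrunk} then costs
\[ \ell N^{8+\eta} = \Oc\bigl(\thick{\module}^{2}\varepsilon^{-2d}\cdot \thick{\module}^{16}\varepsilon^{-8d}\bigr) = \Oc(\thick{\module}^{18}\varepsilon^{-10d}) \]
up to the $\eta$-power, which is absorbed into the poly-logarithmic or arbitrarily small slack.

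\textbf{Total count.} The recursion tree has at most $2p_0 \le 2\thick{\module}$ nodes, since each split strictly reduces $\dim V_\alpha$. This multiplies the per-node cost to $\thick{\module}^{19}\varepsilon^{-10d}$. Summing over $|\gri| = \varepsilon^{-d}$ base points $\alpha$ gives $\thick{\module}^{19}\varepsilon^{-11d}$.

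Two further costs are dominated by this bound. Forming $\langle U\rangle$ and the quotient at each node is linear algebra of size $\mathrm{poly}(q_0)$. Precomputing the structure maps costs $n^3\varepsilon^{-d}+n^2\varepsilon^{-2d}$; we must either absorb this or assume $n$ is polynomially controlled, and this point needs a remark.

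\textbf{Main obstacle.} The main obstacle is the bookkeeping that makes the exponents come out exactly as $19$ and $11d$. First, the matrices in \autoref{prop:shrunk} are square of size $N$, whereas $\calA_\alpha$ is rectangular ($q_0\times p_0$). After tensoring with $\K^{p_0\times q_0}$ they become square of size $p_0q_0$, and this must be checked. Second, the number of generators must be counted as $p_0q_0 k$ and not larger. I would first try the count above. If it overshoots, I would reuse a basis of $\calA_\alpha$ with only $k$ generators tensored against elementary matrices, and adjust accordingly.
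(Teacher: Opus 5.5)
Your runtime count matches the paper's. One call to \autoref{prop:shrunk} with $N=p_0q_0=\Oc(\thick{\module}^2\varepsilon^{-d})$ and $\ell=kp_0q_0=\Oc(\thick{\module}^2\varepsilon^{-2d})$ costs $\Oc(\thick{\module}^{18}\varepsilon^{-10d})$ up to the $\eta$-slack. The recursion contributes a factor $\thick{\module}$ and the loop over $\alpha\in\gri$ a factor $\varepsilon^{-d}$.

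The gap is in the correctness step. You claim that, for the $\varepsilon$-grid $\gri$ of (\autoref{eqn:approx}), the HN filtration of $\langle V_\alpha\rangle$ along $Z_\alpha$ equals the left Kan extension of the HN filtration of $i_\gri^*\module$ along $\widehat Z_\alpha$, by \autoref{prop:ext-cont}. That proposition assumes a regular grid containing $\grid(\module)\cup\{\alpha\}$. The $\varepsilon$-grid does not contain $\grid(\module)$ in general, because the Betti numbers of $\module$ have arbitrary real coordinates. The claim is in fact false: $i_\gri^*\module$ only sees $\module$ at grid points, whereas $\mu_{Z_\alpha}$ involves $\int\udim$ over all of $\bbR^d$. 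For instance, take $d=1$, $\alpha=0$ and $\module=\K[I_1]\oplus\K[I_2]$ with $I_1=[0,1.05\varepsilon)$ and $I_2=[0,1.95\varepsilon)$. This module is not semistable at $0$, since $\mu(\K[I_1])=1/(1.05\varepsilon)>2/(3\varepsilon)$. After discretisation, however, both summands have support $[0,2\varepsilon)$ and the module becomes semistable. So \autoref{alg:compl-naive} does not return the exact $s_\module(\alpha,\bullet)$ at grid points, and \autoref{lmm:restriction_of_inv} cannot be applied to its output.

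The missing ingredient is the one the paper's proof uses.
\begin{enumerate}[(i)]
\item Set $\module'\coloneqq(i_\gri)_!i_\gri^*\module$. Then $i_\gri^*\module'\cong i_\gri^*\module$ and $\grid(\module')\subseteq\gri$, so \autoref{prop:ext-cont} does apply to $\module'$. Hence Cheng's algorithm computes exactly the Skyscraper Invariant of $\module'$ restricted to $\gri\times\gri$.
\item The modules $\module$ and $\module'$ are $\varepsilon$-interleaved. The stability theorem \cite[Theorem 3.7]{fersztand2024harder} then gives $d_{\text{erosion}}(s^\theta_\module,s^\theta_{\module'})\le\varepsilon$.
\item The discretisation argument of \autoref{lmm:restriction_of_inv}, applied to $\module'$ in both $\alpha$ and $\beta$, contributes another $\varepsilon$.
\end{enumerate}
This gives a $2\varepsilon$-approximation, and rescaling $\varepsilon$ only changes constants. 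Your argument for the correctness of the recursion in \autoref{alg:compl-naive} itself, and your runtime bookkeeping, can stay as they are.
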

\begin{proof} \underline{correctness}: 
By applying \autoref{alg:compl-naive} to every $\alpha\in\gri$, we obtain a representation (see \autoref{subsec:storing-skyscraper}) of the functor
 \[
 \widehat s^\bullet(\bullet,\bullet)\colon \begin{cases}
\bbR\times (\bbR^n)^\op\times\bbR^n &\to (\bbZ\cup\set{+\infty})^\op\\
     (\theta,x,y)&\mapsto s_{i_\gri^*V}^\theta(\lfloor\alpha\rfloor_\gri,\lfloor\beta\rfloor_\gri). 
 \end{cases}
 \]
We now prove that $ \widehat s^\bullet(\bullet,\bullet)$ is an $2\varepsilon$-approximation of $s_V$.  Let $V'\coloneqq (i_\gri)_!i_\gri^*V$. By \autoref{prop:ext-cont}, $s_{i_\gri^*V}^\theta$ is the restriction of the Skyscraper Invariant of $V'$ to $\gri\times \gri$. Since $V$ and $V'$ are $\varepsilon$-interleaved, we have by { \cite[Theorem 3.7]{fersztand2024harder}}, $d_\text{erosion}(s^\theta_V, s^\theta_{V'})\leq \varepsilon$. Moreover, since $\beta \leq \lfloor\beta+\vec\varepsilon\rfloor_\gri$ and $\alpha-\vec\varepsilon\leq \lfloor \alpha\rfloor_\gri$
   \begin{align*}
        \widehat s_V^\theta(\alpha-\vec\varepsilon,\beta+\vec\varepsilon) & = s^\theta_{V'}(\lfloor \alpha-\vec\varepsilon\rfloor_\gri,\lfloor\beta +\vec\varepsilon\rfloor_\gri) \leq s_{V'}^\theta(\alpha,\beta)\\
        s_{V'}^\theta(\alpha-\vec\varepsilon,\beta+\vec\varepsilon) &\leq s_{V'}^\theta(\lfloor\alpha\rfloor_\gri,\lfloor \beta\rfloor_\gri) = \widehat s_V^\theta(\alpha,\beta).
    \end{align*}
    Hence, $d_\text{erosion}(\widehat s^\theta_V,s^\theta_V)\leq d_\text{erosion}(\widehat s^\theta_V,s_{V'}^\theta)+ d_\text{erosion}(s_{V'}^\theta,s^\theta_V)\leq 2\varepsilon$.
 \\ \hspace*{2.75em} \underline{complexity}: By \autoref{prop:shrunk} with $\ell \coloneqq k p_0q_0$ and $N\coloneqq p_0q_0$, \autoref{line:shrunk_alg_cheng} in \autoref{alg:compl-naive} takes $\Oc(\lvert \gri\rvert (p_0q_0)^{9+\eta/2})=\mathcal{O}(\thick{\module}^{18+\eta}{\varepsilon^{-10d-\eta}})$ for any $\eta>0$. The recursive step adds a factor $\thick{\module}$, whence the desired complexity.
\end{proof}

\subsection{Computing shrunk subspaces}\label{subsec:shrunk}

Fix two integers $N,N'>0$. For $1\leq i\leq N$ and $1\leq j\leq N'$, we denote by $E_{ij}^{NN'}\in\K^{N\times N'}$ the matrix with a 1 at entry $(i,j)$ and 0 everywhere else. 

Given a subspace $U\subset \K^{N'}$ and a matrix $A\in\K^{N\times N'}$, let $AU$ and $A^{-1}U$ denote respectively the image and the inverse image of $U$ by the linear map defined by $A$ in the canonical bases. Let $\calA$ be a sub (vector) space of $\K^{N\times N'}$, we denote by $\calA U$ the subspace $\sum_{A\in \calA} AU$.  The \emph{rank} of $\calA$ is the integer
\[
\rank\calA \coloneqq \max_{A\in\calA}\rank A
\]

\paragraph{Wong sequences}  We now describe a key tool to compute the minimal shrunk subspaces introduced in \autoref{pb:c-shrunk}. 

\begin{definition}
    [Wong sequence]\label{def:wong}
    Given a subspace $\calA\subset \K^{N\times N'}$ and a matrix $A\in \calA$, the \emph{Wong sequence} of $(A,\calA)$ is defined by $W^0=0$ and for $i>0$ by $W^{i} = \calA(A^{-1}(W^{i-1}))$.
\end{definition}

Given a Wong sequence $(W^i)_{i\geq 0}$ of $\calA$, if for $i>0 $, we have $W^{i-1}\subset W^{i}$ (\textit{resp}. $W^{i-1}=W^i$) then $W^{i}\subset W^{i+1}$ (\textit{resp}. $W^{i}=W^{i+1}$). As a consequence, $(W^{i})_i$ stabilizes after at most $\min(N,N')$ steps to the subspace of $\K^N$
\begin{equation}\label{eqn:wong-limit}
    \Wong(A,\calA)\coloneqq \bigcup_{i\geq 0} W^i.
\end{equation}

Wong sequences are computable in polynomial-time and their limits are related to shrunk subspaces by the following lemma

\begin{lemma}
    \label{lmm:wong-shrunk}[{\cite[Lemma 9]{ivanyos2015generalized}, \cite[Lemma 6.3]{franks2023shrunk}}]\label{lmm:wong_cshrunk} Given a subspace $\calA\subset \K^{N\times N'}$ and a matrix $A\in\calA$. If $U^*$ denotes the minimal shrunk subspace of $\calA$, then the following two propositions are equivalent 
    \begin{enumerate}[(i)]
        \item $\Wong(A,\calA)\subset \Img A$
        \item $N'-\rank A=\dim U^*-\dim \calA U^*$
    \end{enumerate}
   and if they are satisfied, then  $A^{-1}(\Wong(A,\calA)) = U^*$.
\end{lemma}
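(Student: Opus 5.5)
We prove the lemma in two parts: first the equivalence of (i) and (ii), then the identity $A^{-1}(\Wong(A,\calA))=U^*$. The tool throughout is the elementary inequality: for every subspace $U\subset\K^{N'}$ and every $A\in\calA$,
\[
N'-\rank A \;=\; \dim\ker A \;\geq\; \dim U-\dim AU \;\geq\; \dim U-\dim\calA U .
\]
The first inequality holds because $\dim AU=\dim U-\dim(U\cap\ker A)$. The second holds because $AU\subset\calA U$. Hence $N'-\rank A\geq \dim U^*-\dim\calA U^*$ always, and (ii) says that this bound is attained.

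\textbf{(i) $\Rightarrow$ (ii).} Write $W\coloneqq\Wong(A,\calA)$ and $U\coloneqq A^{-1}(W)$.
\begin{itemize}
\item By stabilisation of the Wong sequence, $\calA U=W$.
\item Since $W\subset\Img A$, we have $AU=W$.
\item Since $U$ is a full preimage, $\ker A\subset U$.
\end{itemize}
Therefore $\dim U-\dim\calA U=\dim U-\dim AU=\dim\ker A=N'-\rank A$. Combined with the bound above, $U$ is a shrunk subspace and (ii) holds.

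\textbf{(ii) $\Rightarrow$ (i).} Assume (ii). Then both inequalities are equalities for $U^*$, so $\ker A\subset U^*$ and $AU^*=\calA U^*$.

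First we show $W^i\subset\calA U^*$ by induction. The case $W^0=0$ is clear. Suppose $W^{i-1}\subset\calA U^*=AU^*$. Since $\ker A\subset U^*$, we get $A^{-1}(W^{i-1})\subset A^{-1}(AU^*)=U^*+\ker A=U^*$. Applying $\calA$ gives $W^i\subset\calA U^*$. Hence $W\subset\calA U^*=AU^*\subset\Img A$, which is (i).

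\textbf{The identity $A^{-1}(W)=U^*$.}
\begin{itemize}
\item From the induction, $A^{-1}(W)\subset U^*$.
\item The first part shows that $A^{-1}(W)$ is a shrunk subspace.
\item $U^*$ is the minimal shrunk subspace, so $U^*\subset A^{-1}(W)$.
\end{itemize}
Hence $A^{-1}(W)=U^*$.

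The only step needing care is the induction in the second direction, which relies on $\ker A\subset U^*$; that inclusion comes from forcing equality in the rank bound.
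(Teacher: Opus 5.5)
Your proof is correct and complete, and it is essentially the standard argument of the cited sources \cite{ivanyos2015generalized, franks2023shrunk}; the paper itself only cites the lemma and does not reprove it. The two facts you use implicitly are both recorded in the paper: the Wong sequence increases monotonically and stabilises, which gives $\calA A^{-1}(W)=W$, and the minimal shrunk subspace is contained in every shrunk subspace because shrunk subspaces are closed under intersection.
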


Note that by taking the maximum over $U\subset \K^{N'}$ and $A\in \calA$ of the inequality $\dim U-\dim \calA U\leq \dim U -\dim AU\leq \dim \ker A=N'-\rank A$, one derives that
\begin{equation}
    \label{eqn:rk-and-ncrank}N'-\rank \calA\geq\dim U^*-\dim \calA U^*.
\end{equation}
When (\autoref{eqn:rk-and-ncrank}) is an equality, finding a matrix of maximum rank in $\calA$ is enough to obtain from
 \autoref{lmm:wong-shrunk} a deterministic algorithm for the shrunk problem (\autoref{pb:c-shrunk}). The key idea introduced in \cite{ivanyos2015generalized} is to transform our matrix space $\calA$ until (\autoref{eqn:rk-and-ncrank}) becomes an equality. Two operations are considered: blow-ups and field extensions.

\begin{definition}
    [blow-up]\label{def:blowup}
    Given $p,q>0$, the $(p,q)$-blow-up of a matrix subspace $\calA\subset \K^{N\times N'}$ is the subspace $\calA^\set{p,q} \coloneqq \K^{p\times q}\otimes \calA $ of $\K^{pN\times qN'}$. If $p=q$, we denote by $\calA^{\set{p}}$, the blow-up $\calA^{\set{p,p}}$. 
\end{definition}

The following lemma guarantees that the minimal shrunk subspace of $\calA$ can be recovered from the minimal shrunk subspace of a blow-up of $\calA$. 

\begin{lemma}
    [{\cite[Proposition 5.2]{ivanyos2017non}, \cite[Lemma 6.6]{franks2023shrunk}}]\label{lmm:blowup_cshrunk} 
Given a subspace $\calA\subset \K^{N\times N'}$ and an integer $p>0$, if ${U^*}$ denotes the minimal shrunk subspace of $\calA$, the minimal shrunk subspace of $\calA^{\set{p}}$ is $\widetilde U\coloneqq \K^p\otimes U^*$.  
\end{lemma}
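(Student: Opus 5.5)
We prove the lemma by computing shrunk subspaces of the blow-up directly, using the tensor structure $\K^{pN}\cong\K^p\otimes\K^N$. Since $\calA^{\set p}=\K^{p\times p}\otimes\calA$ contains every elementary tensor $E^{pp}_{ij}\otimes A$, for any subspace $\widetilde U\subset\K^p\otimes\K^{N'}$ we get
\[
\calA^{\set p}\widetilde U=\K^p\otimes \calA\pi(\widetilde U).
\]
Here $\pi(\widetilde U)\subset\K^{N'}$ is the sum of the ``slices'' $(e_j^*\otimes\Id)(\widetilde U)$, i.e.\ the smallest $W$ with $\widetilde U\subset\K^p\otimes W$. This identity is the first step, and it follows by applying $E_{ij}\otimes A$ to a general element of $\widetilde U$.

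Next we bound the objective. Set $W=\pi(\widetilde U)$. Then $\dim\widetilde U\le p\dim W$, while $\dim\calA^{\set p}\widetilde U=p\dim\calA W$. Hence
\[
\dim\widetilde U-\dim\calA^{\set p}\widetilde U\le p\,(\dim W-\dim\calA W)\le p\,(\dim U^*-\dim\calA U^*).
\]
The right-hand side is attained by $\K^p\otimes U^*$, because $\calA^{\set p}(\K^p\otimes U^*)=\K^p\otimes\calA U^*$. So $\K^p\otimes U^*$ is a shrunk subspace of $\calA^{\set p}$, and the maximal value of the objective is $p$ times the one for $\calA$.

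It remains to show minimality, which is the step needing care. Take any shrunk subspace $\widetilde U$ of $\calA^{\set p}$. Both inequalities above must then be equalities. The first equality, $\dim\widetilde U=p\dim W$, forces $\widetilde U=\K^p\otimes W$. The second forces $W$ to be a shrunk subspace of $\calA$. By minimality of $U^*$ (shrunk subspaces being closed under intersection), $U^*\subset W$, hence $\K^p\otimes U^*\subset\widetilde U$. Therefore $\K^p\otimes U^*$ is contained in every shrunk subspace of the blow-up, so it is the minimal one.

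The main obstacle is justifying that $\widetilde U\subset\K^p\otimes\pi(\widetilde U)$ together with $\dim\widetilde U=p\dim\pi(\widetilde U)$ implies $\widetilde U=\K^p\otimes\pi(\widetilde U)$. This follows because $\K^p\otimes W$ has dimension exactly $p\dim W$, so the inclusion is an equality.
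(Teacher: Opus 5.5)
Your proof is correct. The paper gives no argument of its own for this lemma. It imports the statement from \cite{ivanyos2017non} and \cite{franks2023shrunk} and uses it as a black box in the correctness and success-probability parts of \autoref{prop:random-alg-shrunk}. Your slice-projection argument is therefore a self-contained replacement for the citation, and every step checks out:
\begin{itemize}
\item $\calA^{\set{p}}$ is spanned by the elementary tensors $E^{pp}_{ij}\otimes A$.
\item The formula $(E^{pp}_{ij}\otimes A)\sum_k e_k\otimes u_k=e_i\otimes Au_j$ gives both inclusions of $\calA^{\set{p}}\widetilde U=\K^p\otimes\calA\,\pi(\widetilde U)$.
\item Combined with $\widetilde U\subset\K^p\otimes\pi(\widetilde U)$, this yields your upper bound on the objective.
\item The equality analysis, using $p>0$, forces $\widetilde U=\K^p\otimes W$ with $W$ shrunk, hence $U^*\subset W$.
\end{itemize}

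You actually prove more than the lemma. The shrunk subspaces of $\calA^{\set{p}}$ are exactly the $\K^p\otimes W$ with $W$ shrunk for $\calA$, and the optimal value is multiplied by exactly $p$. That scaling is what underlies the target rank $gpr$ in the proof of \autoref{prop:random-alg-shrunk}.

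Another route is an invariance argument. $\calA^{\set{p}}$ is stable under right multiplication by $\mathrm{GL}_p(\K)\otimes\Id$, so the minimal shrunk subspace is $\mathrm{GL}_p(\K)\otimes\Id$-invariant. Since $\mathrm{GL}_p(\K)$ spans $\K^{p\times p}$, it is then of the form $\K^p\otimes W$, and only the easy shrinkage count for product subspaces remains. That argument is shorter once uniqueness is available, but it only describes the minimal shrunk subspace, whereas yours controls all of them.

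One cosmetic slip: on the domain side the identification you use is $\K^{pN'}\cong\K^p\otimes\K^{N'}$, not $\K^{pN}\cong\K^p\otimes\K^N$.
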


The following lemma allows us to compute the minimal shrunk subspace in an extension of $\K$.

\begin{lemma}[{\cite[Lemma 5.3]{ivanyos2018constructive}}]\label{lmm:shrunk-change-field}
Given matrices $A_1,\dots,A_\ell\in \K^{N\times N'}$ and a field extension $\mathbb L/\K$, let $U^*_{\mathbb L}$ and $U^*_{\K}$ be the minimal shrunk subspace of respectively $\calA_{\mathbb L} \coloneqq\sum_iA_i\mathbb L$ and $\calA_\K\coloneqq\sum_iA_i\K$. We have $\dim U^*_{\mathbb L} - \dim \calA_{\mathbb L} U^*_{\mathbb L} =\dim U^*_{\K} - \dim \calA_\K U^*_{\K}  $.
\end{lemma}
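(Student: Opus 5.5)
The plan is to prove two inequalities for the discrepancy $c_F \coloneqq \max_{U\subset F^{N'}} \bigl(\dim_F U - \dim_F \calA_F U\bigr)$, where $F\in\{\K,\mathbb L\}$. The inequality $c_{\mathbb L}\geq c_\K$ is formal; the reverse inequality is the real content.

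\emph{Easy direction.} Take a maximiser $U\subset \K^{N'}$ over $\K$ and extend scalars to $U_{\mathbb L}\coloneqq U\otimes_\K\mathbb L$. Since $\calA_{\mathbb L}$ is spanned over $\mathbb L$ by the same matrices $A_i$, the space $\calA_{\mathbb L}U_{\mathbb L}$ is spanned over $\mathbb L$ by the vectors $A_iu$ with $u$ ranging over a $\K$-basis of $U$. Hence $\calA_{\mathbb L}U_{\mathbb L}=(\calA_\K U)\otimes_\K\mathbb L$. Dimensions are preserved by base change, so $c_{\mathbb L}\geq c_\K$.

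\emph{Reverse direction.} I would use uniqueness of the minimal shrunk subspace $U^*_{\mathbb L}$ together with descent. First, reduce to a finitely generated extension: $U^*_{\mathbb L}$ has a finite basis, so its coordinates lie in a finitely generated subextension $\K\subset\mathbb L_0\subset\mathbb L$. By the easy direction applied to $\mathbb L_0\subset\mathbb L$ we have $c_{\mathbb L}\geq c_{\mathbb L_0}$; conversely $U^*_{\mathbb L}$ is defined over $\mathbb L_0$, and the base-change identity above gives $c_{\mathbb L_0}\geq c_{\mathbb L}$. So we may assume $\mathbb L$ is finitely generated. Second, pass to a normal closure and split into a separable (Galois) part and a purely inseparable part.

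\emph{Galois case.} Every $\sigma\in\mathrm{Aut}(\mathbb L/\K)$ fixes the $A_i$, so $\sigma(\calA_{\mathbb L}U)=\calA_{\mathbb L}\sigma(U)$. Hence $\sigma$ permutes shrunk subspaces while preserving dimensions, and by uniqueness it fixes the minimal one, $U^*_{\mathbb L}$. Galois descent (Speiser's lemma) then gives $U^*_{\mathbb L}=U_0\otimes_\K\mathbb L$ for a $\K$-subspace $U_0$. By the base-change identity, $U_0$ achieves the value $c_{\mathbb L}$, so $c_\K\geq c_{\mathbb L}$.

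\emph{Main obstacle.} The remaining cases are purely inseparable extensions and transcendental extensions, where there are not enough automorphisms for descent. Here I would avoid descent and instead use the characterisation via Wong sequences and blow-ups (\autoref{lmm:wong-shrunk}, \autoref{lmm:blowup_cshrunk}). Over a field $F$ large enough that, for suitable $p$, the blow-up $\calA^{\set p}_F$ contains a matrix of maximal rank making (\autoref{eqn:rk-and-ncrank}) an equality, we get $c_F = N' - \rank(\calA_F^{\set p})/p$. Moreover $\rank(\calA_F^{\set p})$ equals the rank of the generic element $\sum_{j,i} x_{ji}\,(E_j\otimes A_i)$ over the rational function field $F(x)$. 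That rank is determined by the nonvanishing of minors with coefficients in $\K[x]$, so it is independent of $F\supseteq\K$.

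Granting this formula over $\K(x)$ and $\mathbb L(x)$, we obtain $c_{\K(x)}=c_{\mathbb L(x)}$. Combined with the chains $c_\K\leq c_{\mathbb L}\leq c_{\mathbb L(x)}$ and $c_\K\leq c_{\K(x)}$, it then remains to show $c_{\K(x)}\leq c_\K$. The extension $\K(x)/\K$ is purely transcendental, and I would handle it by specialisation: the minimal shrunk subspace over $\K(x)$ is stable under the $\K$-automorphisms $x\mapsto x+a$ and $x\mapsto \lambda x$, which gives invariance and hence descent to $\K$. This last step, controlling the purely transcendental and inseparable cases, is where I expect the difficulty to concentrate. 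If it resists, I would fall back on citing \cite{ivanyos2018constructive} for the blow-up regularity over sufficiently large fields.
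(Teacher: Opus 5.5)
The parts of your plan that work are: the easy inequality, the reduction to finitely generated $\mathbb L$, the Galois-descent step, the formula $c_F=N'-\rank(\calA_F^{\set p})/p$ for large $F$ (from \autoref{thm:blow-up-ncrank} and \autoref{lmm:blowup_cshrunk}), and hence $c_{\K(x)}=c_{\mathbb L(x)}$.

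\textbf{The gap.} It is the last step, $c_{\K(x)}\le c_\K$, which is itself the purely transcendental case of the lemma. Invariance of $U^*_{\K(x)}$ under a group $G$ of $\K$-automorphisms only shows that $U^*_{\K(x)}$ is defined over the fixed field $\K(x)^G$. For the maps $x\mapsto \lambda x+a$ with $a\in\K$, $\lambda\in\K^\times$, that fixed field is $\K$ only when $\K$ is infinite. If $\K=\bbF_q$, the group is finite, of order $q(q-1)$ per variable. It fixes the non-constant function $(x^q-x)^{q-1}$, so you only descend to a field that is again purely transcendental over $\K$. Naive specialisation $x\mapsto a\in\K$ fails for the same reason: a nonzero polynomial such as $x^q-x$ vanishes on all of $\K$.

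Finite $\K$ is also the only case where your detour through $\K(x)$ is needed. For infinite $\K$, \autoref{thm:blow-up-ncrank} applies to $\K$ and $\mathbb L$ directly, both attain the generic rank of $\calA^{\set p}$, and $c_\K=c_{\mathbb L}$ follows at once. In particular the inseparable case you worried about never arises, since imperfect fields are infinite. And finite $\K$ is exactly where the paper needs the lemma, namely in \autoref{algo:compute-cshrunk}. So as written, your argument fails in the one case that matters.

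\textbf{A repair that keeps your chain $c_{\mathbb L}\le c_{\mathbb L(x)}=c_{\K(x)}\le c_\K$ over every field.} Use degeneration instead of descent. For a $\K(x)$-subspace $W\subseteq\K(x)^{N'}$, let $\mathrm{LT}(W)\subseteq\K^{N'}$ be the span of the top-degree coefficient vectors of the polynomial vectors in $W$.
\begin{enumerate}
\item Because $W$ is closed under multiplication by $x$, this span is exactly the set of such top coefficients together with $0$.
\item It is the reduction modulo $x^{-1}$ of the saturated lattice $W\cap\mathcal O_\infty^{N'}$, where $\mathcal O_\infty$ is the valuation ring at infinity. Hence $\dim_\K\mathrm{LT}(W)=\dim_{\K(x)}W$.
\item The $A_i$ have constant entries. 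So if $w$ is a polynomial vector of degree $D$ with top coefficient $w_D$ and $A_iw_D\neq 0$, then $A_iw_D$ is the top coefficient of $A_iw\in\calA_{\K(x)}W$. This gives $\calA_\K\,\mathrm{LT}(W)\subseteq\mathrm{LT}(\calA_{\K(x)}W)$.
\end{enumerate}
Therefore $\mathrm{LT}(U^*_{\K(x)})$ has discrepancy at least $c_{\K(x)}$ over $\K$, which gives $c_{\K(x)}\le c_\K$. For several indeterminates, iterate one variable at a time. With this repair, your Galois step and the separable/inseparable case split become unnecessary.

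\textbf{An alternative repair for finite $\K$.} Replace $\K(x)$ by a finite extension $\mathbb L'$ that is large enough for \autoref{thm:blow-up-ncrank} and has $\lvert\mathbb L'\rvert>p\min(N,N')$, so that by Schwartz--Zippel it attains the generic rank. Since finite fields are perfect, $\mathbb L'/\K$ is Galois, and your own descent gives $c_{\mathbb L'}=c_\K$. Embedding $\mathbb L$ and $\mathbb L'$ in an algebraic closure $\Omega$ of $\mathbb L$ then gives $c_{\mathbb L}\le c_\Omega=c_{\mathbb L'}=c_\K$.
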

Assume that $\mathbb L$ is an extension of $\K$ of degree $g$ and that $\phi$ is a fixed embedding of $\mathbb L$ into $\K^{g\times g}$. Given a matrix $X\in {\mathbb L}^{N\times N'}$, we denote by $\phi(X)$ the matrix in $\K^{gd\times gN'}$ obtained by replacing each coefficient $x$ of $X$ by the block matrix $\phi(x)$. A consequence of the above lemma is that for $p\geq 1$ and $X_1,\dots,X_\ell\in \K^{p\times p}$, if the pair $(\sum_iX_i\otimes A_i, \calA_{\mathbb L}^{\set p})$ satisfies \autoref{lmm:wong-shrunk}(ii), then $(\sum_i \phi(X_i)\otimes A_i, \calA_{\K}^{\set {gp}})$ also satisfies \autoref{lmm:wong-shrunk}(ii).

\paragraph{Deterministic and random algorithms} Fix a subspace $\calA\subset \K^{N\times N'}$. Lemmas \ref{lmm:blowup_cshrunk} and \ref{lmm:shrunk-change-field} show that we are allowed to blow-up $\calA$ and to take extensions of $\K$. The following theorem controls the size of the blow-up and the degree of the field extension required to guarantee the existence of a matrix satisfying  \autoref{lmm:wong-shrunk}(ii).

\begin{theorem}[{\cite[Theorem 1.5]{ivanyos2018constructive}}] \label{thm:blow-up-ncrank} Given a field $\K$ and a subspace $\calA\subset \K^{N\times N'}$, assume that $\lvert \K\rvert \geq \min(N,N')^{\Omega(1)}$. Let $ p\geq \min(N,N')-1$ and let $\widetilde U$ denote the minimal shrunk subspace of $\calA^\set{p}$, then 
\[
pN'-\rank \calA^{\set p} = \dim \widetilde U- \dim \calA^\set{p} \widetilde U.\qedhere
\]
\end{theorem}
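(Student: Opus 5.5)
This is Theorem 1.5 of Ivanyos, Qiao and Subrahmanyam, and the paper cites it as an external result. The plan below reconstructs the argument from their constructive approach: regularity of blow-ups combined with Wong sequences.

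By \eqref{eqn:rk-and-ncrank} applied to $\calA^{\set p}$, together with \autoref{lmm:blowup_cshrunk}, we always have
\[
pN'-\rank\calA^{\set p}\geq \dim\widetilde U-\dim\calA^{\set p}\widetilde U = p\,(\dim U^*-\dim\calA U^*).
\]
So it suffices to exhibit a matrix $B\in\calA^{\set p}$ with $\rank B\geq pN'-p(\dim U^*-\dim \calA U^*)$. Equivalently, we need to show that the blow-up has commutative rank equal to $p$ times the non-commutative rank $r$ of $\calA$. Write $r\coloneqq N'-(\dim U^*-\dim\calA U^*)$.

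\textbf{Step 1: regularity.} First I would prove the \emph{regularity lemma}: if $B\in\calA^{\set q}$ has rank $s$, then $\calA^{\set{q}}$ contains a matrix of rank at least $s$ that is divisible by $q$. Concretely, if $\rank B=qs'+t$ with $0<t<q$, we replace $B$ by a matrix of rank $\lceil \rank B/q\rceil q$ in a slightly larger blow-up, and then shrink back. This uses only rank-maximisation over the matrix algebra $\K^{q\times q}$, so the blow-up "sees" $\K^{q\times q}$-modules.

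\textbf{Step 2: rank increment.} Starting from $A\in\calA^{\set q}$ of rank $qs<qr$, I would run the Wong sequence of $(A,\calA^{\set q})$. Since $qs<qr$, condition (ii) of \autoref{lmm:wong-shrunk} fails, so (i) fails as well. This produces a \emph{second Wong index} $\ell$, where $W^\ell\not\subset\Img A$, witnessed by a chain $A_1,\dots,A_\ell$ of matrices of $\calA$. I would then form an augmented matrix in $\calA^{\set{q(\ell+1)}}$ of the shape
\[
A\otimes I+\sum_i A_i\otimes E_i,
\]
where the $E_i$ are nilpotent shift matrices. This matrix has rank at least $(\ell+1)(qs)+1$. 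The hypothesis $\lvert\K\rvert\geq\min(N,N')^{\Omega(1)}$ enters here: a generic linear combination is needed, and Schwartz--Zippel guarantees one exists. Applying Step 1 then rounds the rank up to $(\ell+1)q(s+1)$.

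\textbf{Step 3: controlling the size.} I would iterate Step 2 until the rank reaches $qr$. The main obstacle is keeping the blow-up size at most $\min(N,N')-1$. The naive bound multiplies the factors $(\ell+1)$ across iterations, which gives exponential growth. Two ingredients are needed to avoid this:
\begin{enumerate}[(i)]
\item \emph{Reduction of blow-ups.} Whenever $\calA^{\set{q'}}$ with $q'>\min(N,N')-1$ contains a matrix of rank $q'\rho$, one can find a matrix of rank at least $(q'-1)\rho$ in $\calA^{\set{q'-1}}$, by a constructive rank-reduction argument. This lets us compress back down to size $\min(N,N')-1$ after each increment.
\item \emph{Monotonicity.} If $\calA^{\set p}$ attains rank $pr$, then so does $\calA^{\set{p'}}$ for every $p'\geq p$, via block-diagonal embedding together with regularity.
\end{enumerate}
Combining these, at $p=\min(N,N')-1$ we get $\rank\calA^{\set p}=pr$, which is the claimed equality.
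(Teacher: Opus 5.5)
The paper gives no argument here. It imports the statement from Ivanyos--Qiao--Subrahmanyam, and your outline follows their architecture. The parts you actually carry out are fine:
\begin{itemize}
\item the bound $pN'-\rank\calA^{\set p}\geq p(\dim U^*-\dim\calA U^*)$ via \autoref{lmm:blowup_cshrunk};
\item the use of \autoref{lmm:wong-shrunk} to see that the Wong sequence leaves $\Img A$ when $\rank A<qr$;
\item monotonicity: for $p\geq r-1$ and $0\neq A_0\in\calA$, the block-diagonal $B\oplus A_0$ has rank $pr+\rank A_0>(p+1)(r-1)$, so regularity rounds it up.
\end{itemize}
Step~2 needs two corrections. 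First, the witnesses $A_1,\dots,A_\ell$ must lie in $\calA^{\set q}$, not in $\calA$, or $A\otimes I+\sum_i A_i\otimes E_i$ does not make sense. Second, no genericity or Schwartz--Zippel is needed there. Take a single chain $Av_0=0$, $A_iv_{i-1}=Av_i$ for $i<\ell$, and $A_\ell v_{\ell-1}\notin\Img A$; it exists because, for $i<\ell$, endpoints of such chains span $A^{-1}(W^{i})$. Solving the equations of $I_{\ell+1}\otimes A+\sum_{i=1}^{\ell}E_{i+1,i}\otimes A_i$ block by block shows its kernel has dimension at most $(\ell+1)\dim\ker A-1$. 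So its rank is at least $(\ell+1)\rank A+1$ over any field.

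The genuine gap is that the two statements carrying all the difficulty are assumed rather than proved, and your hint for the first one does not work.

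\textbf{Regularity.} ``Go to a slightly larger blow-up and shrink back'' is circular. Your only shrinking tool is (i), whose input is a matrix whose rank is already divisible by the blow-up size, which is exactly what regularity is supposed to produce. Moreover, (i) is only available above $\min(N,N')-1$. Regularity is also where the size of $\K$ genuinely matters, not Step~2. For example, for the pencil $\mathrm{diag}(x,y,x+y)\oplus\left(\begin{smallmatrix}x&y\\ y&x+y\end{smallmatrix}\right)$ over $\bbF_2$, the maximal rank of the $2$-blow-up is $9$, which is odd. One workable argument goes as follows. For generic $q\times q$ matrices $X_k$, regard $\sum_k A_k\otimes X_k$ as an $N\times N'$ matrix over Amitsur's division algebra of generic matrices of degree $q$. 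Gaussian elimination there shows that its rank over the rational function field is $q$ times its rank over the division algebra. Schwartz--Zippel then transfers this to $\K$ once $\lvert\K\rvert>q\min(N,N')$.

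\textbf{The reduction step.} Your (i) is precisely the Derksen--Makam degree-reduction theorem, made constructive by Ivanyos--Qiao--Subrahmanyam. It is the reason the threshold is $\min(N,N')-1$, and ``a constructive rank-reduction argument'' gives no mechanism for it.

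As it stands, your proposal rests on the same external results the paper cites and does not prove the statement independently.
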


In \cite{ivanyos2017non}, \autoref{thm:blow-up-ncrank} is constructive, in the sense that it provides a deterministic algorithm to compute in polynomial time a matrix of maximum rank in $\calA^{\set p}$. Using  Lemmas \ref{lmm:blowup_cshrunk} and \ref{lmm:shrunk-change-field}, they deduce the following:

\begin{theorem}[{\cite[Theorem 1.5, Corollary 1.7]{ivanyos2018constructive}}]\label{thm:shrunk-poly}
       Given a fixed field $\K$, there exists a deterministic polynomial-time algorithm to solve the shrunk subspace problem over $\K$ (this algorithm is not required to be  polynomial in $\lvert \K\rvert$).
\end{theorem}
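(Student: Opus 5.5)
The plan is to reduce \autoref{thm:shrunk-poly} to the three ingredients already on the table: \autoref{thm:blow-up-ncrank}, \autoref{lmm:wong-shrunk}, and \autoref{lmm:blowup_cshrunk} together with \autoref{lmm:shrunk-change-field}.

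Let $\calA\subset\K^{N\times N'}$ be spanned by $A_1,\dots,A_\ell$. First we deal with small fields. If $\lvert\K\rvert<\min(N,N')^{c}$ for the constant $c$ hidden in the $\Omega(1)$ of \autoref{thm:blow-up-ncrank}, we pass to an extension $\mathbb L/\K$ of degree $g=\Oc(\log\min(N,N'))$. Such an extension can be built deterministically by finding an irreducible polynomial of degree $g$ over the prime field, in time polynomial in $g$ and the characteristic. This is where we allow non-polynomial dependence on $\lvert\K\rvert$. We fix an embedding $\phi\colon\mathbb L\into\K^{g\times g}$.

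Next we blow up. Choose $p=\min(N,N')-1$ (or $p=1$ if this is zero). By \autoref{thm:blow-up-ncrank}, applied over $\mathbb L$, equality holds in (\autoref{eqn:rk-and-ncrank}) for $\calA_{\mathbb L}^{\set p}$. Using the constructive version from \cite{ivanyos2017non,ivanyos2018constructive}, we compute deterministically, in polynomial time, a matrix $X=\sum_i X_i\otimes A_i\in\calA_{\mathbb L}^{\set p}$ of maximal rank. By \autoref{lmm:wong-shrunk}, the pair $(X,\calA_{\mathbb L}^{\set p})$ satisfies (ii). The consequence of \autoref{lmm:shrunk-change-field} noted just after it then shows that $Y\coloneqq\sum_i\phi(X_i)\otimes A_i$ satisfies (ii) for $\calA_\K^{\set{gp}}$, so all further computation happens over $\K$.

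Now compute the Wong sequence of $(Y,\calA_\K^{\set{gp}})$. Each step is one preimage under $Y$ followed by a sum of images under the $\ell$ generators $E^{gp,gp}_{ab}\otimes A_i$; this is Gaussian elimination on matrices of size $\mathrm{poly}(N,N')$. The sequence stabilizes after at most $gp\min(N,N')$ steps. By \autoref{lmm:wong-shrunk}, $Y^{-1}(\Wong(Y,\calA_\K^{\set{gp}}))$ is the minimal shrunk subspace of $\calA_\K^{\set{gp}}$. \autoref{lmm:blowup_cshrunk} identifies this subspace as $\K^{gp}\otimes U^*$. We recover $U^*$ by projecting onto the first tensor block, i.e.\ intersecting with $e_1\otimes\K^{N'}$.

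The main obstacle is the rank-maximization step. Everything else is linear algebra, but finding a max-rank element of the blow-up deterministically needs the constructive regularity lemma of \cite{ivanyos2017non} (iterated Wong-sequence rank increments, with reduction of the blow-up size). We would invoke that result as a black box rather than reprove it. A secondary point needing care is the rectangular case $N\neq N'$, where the blow-up is square with $p=q$. We would check that \autoref{lmm:blowup_cshrunk} and \autoref{lmm:wong-shrunk} apply verbatim there, and otherwise pad $\calA$ with zero rows or columns, which changes neither the shrunk subspace nor the value being maximized.
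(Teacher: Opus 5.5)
Your proposal is correct and follows the same route the paper indicates. The paper only cites \cite{ivanyos2018constructive}, and in the surrounding text attributes the result to the constructive version of \autoref{thm:blow-up-ncrank} from \cite{ivanyos2017non} (a deterministic maximum-rank element of the blow-up), combined with \autoref{lmm:blowup_cshrunk}, \autoref{lmm:shrunk-change-field} and the Wong-sequence criterion of \autoref{lmm:wong-shrunk}, exactly as you do. Two small fixes: condition (ii) for $X$ follows from $X$ having maximal rank plus equality in (\autoref{eqn:rk-and-ncrank}), not from \autoref{lmm:wong-shrunk} itself; and an irreducible polynomial of degree $g$ over the prime field yields a field extension of $\K$ only when $g$ is coprime to $[\K:\bbF_p]$, otherwise search for an irreducible polynomial over $\K$ by brute force, which takes polynomial time for fixed $\K$.
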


The worst-case time complexity bounds of deterministic algorithms for the shrunk subspace problem are polynomials of very high degree. Over $\K=\bbQ$ and with $N=N'$, the authors of \cite{franks2023shrunk} found that the best complexity bound available in the literature was $\Oc( N^{12}(N+\dim\calA))$. As a consequence, they proposed a faster random algorithm. This algorithm relies on the fact that in a large enough blow-up and over a large enough field, a random matrix is of maximal rank with high probability. \autoref{algo:compute-cshrunk} is the adaptation of \cite[Algorithm 5]{franks2022shrunk_arxiv} to the case of finite fields.

\begin{algorithm}
\texttt{Input}: {a finite field $\K$, integers $p,N,N'>0$, and a basis $(A_1,\dots,A_\ell)$ of a subspace $\calA\subset \K^{N\times N'}$}\\
\texttt{Output}: {the minimal shrunk subspace $U^*$ of the matrix space $\calA:=\sum_i A_i\K$}
\; Take a field extension $\mathbb L/\K$ of degree $g \coloneqq \max(1,\lceil \log^2_{\lvert \K\rvert} p\rceil)$ and an embedding $\phi\colon\mathbb L\hookrightarrow \K^{g\times g}$
\; Draw uniformly and independently $X_1,\dots,X_\ell$ in $ {\mathbb L}^{p\times p}$
\; 
Let $A\coloneqq \sum_i \phi(X_i)\otimes A_i$ ; \hspace{5em}{\color{gray}
 \textbackslash\textbackslash \   $\phi(X_i)$ is obtained by replacing  each entry $x$
   of $X_i$ by the $g\times g$ block $\phi(x)$
} \\  
$W\coloneqq \Wong(A,\mathcal{A}^{\set{pg}})$\; \If{ $W\subset \Img A$}{ \texttt{Return} $ \pi_{\K^{N'gp}\twoheadrightarrow \K^{N'}}(A^{-1}W)$ ;
\hfill {\color{gray} \textbackslash\textbackslash projection}\\
} 

\caption{Minimal shrunk subspace for finite fields}
\label{algo:compute-cshrunk}
\end{algorithm}

\begin{proposition}[Restatement of \autoref{prop:shrunk} for finite fields]\label{prop:random-alg-shrunk} 
    Having fixed a finite field $\K$, \autoref{algo:compute-cshrunk} runs in time
    \[\Oc((pg+\ell) (pg)^3(N+N')^4)\]
    where $g$ is the poly-log factor $g\coloneqq 1+ \lceil \log^2p\rceil$.
    If \autoref{algo:compute-cshrunk} returns an output, this output is the minimal shrunk subspace $U^*$ of $\calA$. For $p\geq \Oc(\min(N,N'))$, it returns $U^*$  with probability $1-(\frac1p)^{\Omega(1)}$.     
\end{proposition}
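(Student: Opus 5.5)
The plan is to prove the three claims separately: running time, correctness of any returned output, and success probability. The last one is the main obstacle.

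\emph{Running time.} The matrix $A=\sum_i\phi(X_i)\otimes A_i$ has size $pgN\times pgN'$, and forming it costs $\Oc(\ell (pg)^2NN')$. The Wong sequence of $(A,\calA^{\set{pg}})$ stabilises after at most $pg\min(N,N')$ steps. Each step has two parts.
\begin{enumerate}[(a)]
\item An inverse image $A^{-1}(W^{i-1})$, computed by Gaussian elimination in $\Oc((pg(N+N'))^3)$.
\item An application of the blown-up space. Since $\calA^{\set{pg}}=\K^{pg\times pg}\otimes\calA$, for any $X\subset\K^{pg}\otimes\K^{N'}$ we have $\calA^{\set{pg}}X=\K^{pg}\otimes\bigl(\sum_k A_k\,\pi(X)\bigr)$. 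Here $\pi(X)\subset\K^{N'}$ is the span of all $N'$-dimensional slices of vectors of $X$. Computing $\pi(X)$, applying the $\ell$ generators and re-tensoring costs $\Oc(\ell\, pg (N+N')^3)$.
\end{enumerate}
Multiplying the per-step cost by the number of steps gives $\Oc((pg+\ell)(pg)^3(N+N')^4)$. The final test $W\subset\Img A$, the inverse image and the projection are each a single elimination of the same size, so they do not change the bound.

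\emph{Correctness.} All entries of $A$ lie in $\K$ because $\phi(X_i)\in\K^{g\times g}$ blockwise, so $A\in\calA^{\set{pg}}$. Suppose the algorithm returns, i.e.\ $W=\Wong(A,\calA^{\set{pg}})\subset\Img A$. Then \autoref{lmm:wong-shrunk} (i)$\Rightarrow$ conclusion gives that $A^{-1}W$ is the minimal shrunk subspace of $\calA^{\set{pg}}$. By \autoref{lmm:blowup_cshrunk} this subspace equals $\K^{pg}\otimes U^*$. Projecting onto one tensor factor therefore returns exactly $U^*$.

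\emph{Probability.} By \autoref{lmm:wong-shrunk} it suffices to show that, with high probability, $A$ satisfies (ii), i.e.\ $pgN'-\rank A=\dim\widetilde U-\dim\calA^{\set{pg}}\widetilde U$. I would argue in four steps.
\begin{enumerate}[(1)]
\item Work over $\mathbb L$ with $A_{\mathbb L}\coloneqq\sum_iX_i\otimes A_i\in\calA_{\mathbb L}^{\set p}$. Since $p\geq\min(N,N')-1$ and $|\mathbb L|=|\K|^g\geq p^{\log p}$ is large enough, \autoref{thm:blow-up-ncrank} shows that a matrix of maximal rank in $\calA_{\mathbb L}^{\set p}$ achieves the shrunk deficiency.
\item Fix a nonzero maximal minor of such a maximal-rank matrix. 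As a function of the entries of the $X_i$, it is a nonzero polynomial of degree at most $p\min(N,N')$.
\item By Schwartz--Zippel, the random matrix $A_{\mathbb L}$ fails to have maximal rank with probability at most $p\min(N,N')/|\K|^{g}\leq p\min(N,N')/p^{\log p}$. For $p\geq\Oc(\min(N,N'))$ this is $(1/p)^{\Omega(1)}$.
\item Transfer back to $\K$. Applying $\phi$ multiplies ranks and dimensions by $g$. Combining this with \autoref{lmm:shrunk-change-field} and \autoref{lmm:blowup_cshrunk} (the remark following \autoref{lmm:shrunk-change-field}), condition (ii) for $(A_{\mathbb L},\calA_{\mathbb L}^{\set p})$ implies (ii) for $(A,\calA_\K^{\set{pg}})$. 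Hence (i) holds and the algorithm returns.
\end{enumerate}
The delicate point is the bookkeeping in step (4): checking that the deficiency scales by exactly $g$ under $\phi$ and under the blow-up, so that equality in (\autoref{eqn:rk-and-ncrank}) is preserved.
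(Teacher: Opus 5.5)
Your proposal is correct and follows the paper's proof essentially step by step. The running time uses the same Wong-step count together with the observation that the blown-up image depends only on the $N'$-slices; your identity $\calA^{\set{pg}}X=\K^{pg}\otimes\calA\,\pi(X)$ is the paper's ``identical block rows'' remark. Correctness goes through \autoref{lmm:wong-shrunk} and \autoref{lmm:blowup_cshrunk}, and the success probability through \autoref{thm:blow-up-ncrank}, \autoref{lmm:shrunk-change-field}, Schwartz--Zippel on a nonzero $pr\times pr$ minor, and $\rank A=g\rank\sum_i X_i\otimes A_i$, where your condition-(ii) bookkeeping is exactly the paper's target $\rank A=gpr$. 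One small slip is harmless: extracting $\pi(X)$ from up to $(pg)^2N'$ slice vectors costs $\Oc((pg)^2N'^3)$ rather than $\Oc(\ell\,pg(N+N')^3)$, but the inverse-image step dominates this, so your final bound stands.
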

\begin{proof}\ 
 \underline{correctness}: the return statement is reached only when $W=\Wong(A,\calA^{\set {gp}})$ and $W\subset \Img A$. In that case, by Lemma \ref{lmm:wong_cshrunk}, $A^{-1}(W)$ is the minimal shrunk subspace of $\calA^\set{ gp}$, and by \autoref{lmm:blowup_cshrunk} the algorithm is correct. 

 \qquad   \underline{complexity}: The time-complexity of this algorithm is dominated by the computation of the Wong sequence. For simplicity, we restrict ourselves to the case $N=N'$ -- see \autoref{rem:shrunk-cite}.
 
The sequence $\Wong(A,\calA^{\set{pg}})$ stabilizes after at most $pg N$ steps. Step $1\leq s\leq pgN$ involves an inverse image computation of $U^s\coloneqq A^{-1}(W^{s-1})$ in time $\Oc ((pgN)^3)$, and a direct image computation $W^s\coloneqq \calA^{\set{pg}} U^{s}$, which we now analyze. A generating family of $W^s$ is given by 
 \[
 \set{(E^{pg,pg}_{ij}\otimes A_k) U^s\mid i,j\in [pg], k\in [\ell]}.
 \]
 The concatenation of these vectors yields a matrix $M$ composed of $pg\times pg\ell $ blocks of size $N\times \dim U^s$. Each block-row $M_1, \dots M_{pg}$ of $M$ is identical and is composed of the following  blocks $\set{A_k U^s_i\mid k\in [\ell], i\in [pg]}\cup\set {0}$, where $U^s_1\dots U^s_{pg}$ are the blocks-rows of size $N\times \dim U^s$ of $U^s$. It follows that one can compute $W^s$ by reducing the matrix $M_1$ of size $N\times (pg\ell\dim U^s)$ whose blocks can be computed in time $\Oc(N^2\ell \dim U^s )$ and whose reduction can be obtained in time $\Oc(N^2 pg\ell\dim U^s) = \Oc(\ell N(Npg)^2)$. The desired complexity is $\Oc(Npg((Npg)^3 + N^3 (pg)^2\ell )) = \Oc((\ell+pg) (pg)^3N^4)$.

   \qquad \underline{probability of success}: we denote by $r:= N'-\dim U^*+\dim \calA U^*$. Let $p\geq \Oc(\min(N,N'))$, by Lemma \ref{lmm:wong_cshrunk} and \ref{lmm:blowup_cshrunk}, we only need to show that $\rank A=gpr$ with high probability. By \autoref{thm:blow-up-ncrank} and \autoref{lmm:shrunk-change-field}, since the extension $\mathbb L$ has size $\lvert \K\rvert^{g}\geq p^{\Omega(1)}=\min(N,N')^{\Omega(1)}$, there exists $X_1,\dots,X_\ell\in {\mathbb L}^{p\times p}$ such that $\rank \sum_iA_i\otimes X_i =pr$.  More precisely, if we see each entry in $X_1,\dots,X_\ell$ as an indeterminate, there exists a nonzero $pr\times pr$ minor $P$ in $\sum_iA_i\otimes X_i$. By Schwartz-Zippel lemma, the minor does not vanish with probability at least $1-\frac{\deg\ P}{\lvert \mathbb L\rvert}=1-(\frac1p)^{\Omega(1)}$. Observe that $\rank A= g\rank\sum_iA_i\otimes X_i$ so, with probability $1-(\frac1p)^{\Omega(1)}$, we have as desired $\rank A=pgr$. 
\end{proof}

\begin{remark}\label{rem:shrunk-cite}
    When $\K$ is infinite, the random matrices $X_1,\dots ,X_\ell$ in \autoref{algo:compute-cshrunk} can be drawn uniformly from a finite subset of $\K$ of size $p^{\Omega(1)}$. By the same argument as  \autoref{prop:random-alg-shrunk}, this modified version of  \autoref{algo:compute-cshrunk} returns $U^*$ with high probability in time $\Oc((p+\ell) p^3(N+N')^4)$ for any $p\geq \min(N,N')-1$. This argument is proved in detail for $\K=\bbQ$ and $N=N'$ as Theorem 6.7 in \cite{franks2022shrunk_arxiv}. In their notation, $p$, $d$ and $n$ correspond to our $\ell$, $p$ and $(N+N')$. Note that their results for square matrices can be easily adapted to non-square matrices. Indeed, the matrices of the space $\calA\subset\K^{N\times N'}$ can be padded with zeroes to obtain a subspace of $\K^{\max(N,N')\times \max(N,N')}$. One can check that the minimal shrunk subspace $U^*$  of $\calA$ is unchanged if $N\leq N'$ and becomes $U^* \oplus \K^{N-N'}$ if $N>N'$. 
\end{remark}

\subsection{Optimizing Cheng's algorithm}

\paragraph{Optimized Wong sequence computations}
By harnessing the block structure of (\autoref{eqn:calA}), we are able to improve the complexity in \autoref{prop:shrunk} for the subspaces appearing in \autoref{alg:compl-naive}. The rest of this paragraph is devoted to the proof of the following proposition

\begin{proposition}
    \label{prop:cheng_advanced_analysis}
     Given $\module\in\Persfpb{\bbR^d}$ and $\varepsilon>0$, there is a random algorithm to compute an $\varepsilon$-approximation of $s_V$ in time $\mathcal{O}({\thick{\module}^{17}}{\varepsilon^{-9d}})$ up to poly-logarithmic factors.
\end{proposition}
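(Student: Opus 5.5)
We keep the overall structure of the proof of \autoref{cor:cheng_naive_analysis}: correctness is unchanged, since we still run \autoref{alg:compl-naive} at every $\alpha\in\gri$ and obtain a $2\varepsilon$-approximation, which becomes an $\varepsilon$-approximation after rescaling $\varepsilon$. Only the cost of \autoref{line:shrunk_alg_cheng} is re-analysed. The plan is to run \autoref{algo:compute-cshrunk} on the blow-up $\K^{p_0\times q_0}\otimes\calA_\alpha$. We exploit two facts. First, $\calA_\alpha$ is spanned by only $k\le|\gri|=\Oc(\varepsilon^{-d})$ matrices. Second, each generator $B_j=E_j\otimes V_{\alpha\to\beta_j}$ is block-sparse: it is nonzero only in the $j$-th block-row, of height $\dim V_{\beta_j}\le\thick{\module}$, and its width is $p_0\le\thick{\module}$.

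First step: bound the size of the problem. The matrix space $\calA_\alpha\subset\K^{q_0\times p_0}$ has $p_0\le\thick\module$ and $q_0\le k\,\thick\module$. By \autoref{thm:blow-up-ncrank} we may choose a square blow-up of order $P=\Oc(\min(p_0q_0,\,\cdot))$, up to the poly-log factor $g$. I would try to justify $P=\Oc(\thick\module)$: the relevant dimension is $\min(N,N')$, where $N'=p_0$ after combining the $(p_0,q_0)$-blow-up with \autoref{lmm:blowup_cshrunk}. This gives random matrices $A=\sum_j\phi(X_j)\otimes B_j$.

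Second step: redo the Wong-sequence complexity analysis of \autoref{prop:random-alg-shrunk}, exploiting the block structure. The preimage step $A^{-1}(W^{s-1})$ reduces to a linear system whose matrix has $k$ block-rows of height $P\thick\module$ and width $P\thick\module$. This is solvable in time $\Oc(k(P\thick\module)^3)$ rather than $\Oc((kP\thick\module)^3)$. The image step $\calA^{\{P\}}U^s$ splits into the independent images $V_{\alpha\to\beta_j}U^s_i$, one per $j$. This costs $\Oc(kP(\thick\module)^2\dim U^s)$ and avoids the full $N^2\ell$ term. The number of Wong steps is at most $P\,p_0=\Oc(P\thick\module)$, since the sequence lives in a space of dimension bounded by the smaller side.

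Third step: multiply the costs. This means combining the Wong steps with the per-step costs above, then with $|\gri|=\Oc(\varepsilon^{-d})$ base points and with the recursion factor $\thick\module$ from \autoref{alg:compl-naive}. The aim is to check that the exponents collapse to $\thick\module^{17}\varepsilon^{-9d}$, i.e.\ a saving of $\thick\module^2\varepsilon^{-2d}$ over \autoref{cor:cheng_naive_analysis}. Equivalently, we must gain one factor $\thick\module\,\varepsilon^{-d}$ on each of $\ell$ and $N$ in the bound $\ell N^{8}$.

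The main obstacle is the second step: making the block-sparse linear algebra rigorous inside the Kronecker blow-up with the field embedding $\phi$. The point to verify is that the random combination $A$ is still block-structured enough to be exploited. $A$ is block-diagonal with respect to the $\beta_j$ decomposition on the output side, but dense on the input side. If the preimage step cannot be decoupled, I would fall back on maintaining an echelon form of $W^s$ incrementally across the steps. This amortises the cost of reducing the full system over the whole sequence, and should still save the required factor.
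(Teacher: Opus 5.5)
The correctness part is fine, but the complexity part does not prove the bound. The third step is only stated as an aim, and the two estimates meant to drive it are not valid as stated.

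\emph{First}, nothing in the cited results lets you take a blow-up of size $P=\Oc(\thick{\module})$. The space whose minimal shrunk subspace gives $F^{i(p_0,q_0)}$ is $\calA_\alpha^{\{p_0,q_0\}}\subset\K^{p_0q_0\times p_0q_0}$. So \autoref{thm:blow-up-ncrank} requires $P\geq p_0q_0-1\approx\thick{\module}^2k$. \autoref{lmm:blowup_cshrunk} only concerns square blow-ups. Applying \autoref{thm:blow-up-ncrank} to $\calA_\alpha\subset\K^{q_0\times p_0}$ itself (where $\min(q_0,p_0)\leq\thick{\module}$) only certifies the shrunk subspace of $\calA_\alpha$. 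By \autoref{prop:shrunk-discrepency}, that is the filtration step at threshold $1/2$, not at $p_0/(p_0+q_0)$.

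\emph{Second}, after the $(p_0,q_0)$-blow-up the input space is $\K^{Pq_0}\otimes\K^{p_0}$. Its dimension is $Pq_0p_0$, with $q_0$ as large as $k\,\thick{\module}$. Each of the $k$ block-rows $Y_j\otimes V_{\alpha\to\beta_j}$ of $A$ (with $Y_j$ a random $Pp_0\times Pq_0$ matrix) acts on all of it. Hence $A^{-1}(W^{s-1})=\bigcap_j A_j^{-1}(W^{s-1}_j)$ is a coupled system. Your width $P\thick{\module}$ and cost $\Oc(k(P\thick{\module})^3)$ are wrong. The Wong sequence can also take up to $Pp_0q_0$ steps, not $Pp_0$.

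Neither shortcut is needed. Your image-step observation alone suffices, and it is exactly the content of \autoref{lmm:wong-compute-blow-up}. The space handed to the Wong computation is $\K^{rgp_0\times rgq_0}\otimes\calA_\alpha$, with $r\approx p_0q_0$ and $g$ poly-logarithmic. After regrouping rows by $\beta_m$, it has the form of $\mathcal B$ in that lemma, built from the $k$ one-dimensional spaces $\K V_{\alpha\to\beta_m}$. So $N,N'\leq\thick{\module}$ and $\ell=k$.

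Since $(E_{ij}\otimes\calA_{\phi(i)})U_s$ depends only on $(\phi(i),j)$, only $\ell q$ block products are needed. Because $\ell\leq q$, the image step is dominated by the ordinary dense preimage computation. One Wong computation therefore costs $\Oc(M^4)$, where $M\approx rp_0q_0\approx\thick{\module}^4k^2$ is the side length of the blown-up matrix. That is $\Oc(\thick{\module}^{16}k^8)$ up to poly-logarithmic factors.

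Multiply by the $|\gri|$ base points and the recursion factor $\thick{\module}$, with $k\leq|\gri|=\Oc(\varepsilon^{-d})$. This gives $\thick{\module}^{17}\varepsilon^{-9d}$. The gain over \autoref{cor:cheng_naive_analysis} is the removal of the factor $\ell=kp_0q_0\approx\thick{\module}^2\varepsilon^{-2d}$, the number of generators of $\calA_\alpha^{\{p_0,q_0\}}$. It is not a gain of one factor on each of $\ell$ and $N$.
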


 The following lemma is a generalization of the complexity analysis of \autoref{prop:random-alg-shrunk}. 

\begin{lemma}
    \label{lmm:wong-compute-blow-up}
  Let $\calA_1,\dots,\calA_t$ be matrix subspaces of respectively $\K^{N_1\times N'}$, \dots, $\K^{N_t\times N'}$. Let $p,q>0$ be integers, and let $\phi\colon \set{1,\dots,p}\to\set{1,\dots,t}$ be a function.  Consider the matrix space $\mathcal B\subset \K^{\sum_{i=1}^pN_{\phi(i)}\times qN'}$ generated by
    \[
    \left(E_{ij}^{pq}\otimes \calA_{\phi(i)}\right)_{1\leq i\leq p , 1\leq j\leq q}.
    \]
    Given $A\in \mathcal B$ and bases for $\calA_1,\dots,\calA_t$, the Wong sequence of $(A,\mathcal{ B}) $ can be computed in $\mathcal{O}((p+q+\ell)(p+q)^3NN'(N+N')^2)$ where $N\coloneqq \max_{1\leq i\leq t} N_i$ and $\ell\coloneq\sum_{i=1}^t \dim\calA_i$.
\end{lemma}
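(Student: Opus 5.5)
We follow the complexity argument of \autoref{prop:random-alg-shrunk}, replacing the square blow-up $\K^{pg\times pg}\otimes\calA$ by the block space $\mathcal B$. The Wong sequence $W^0=0$, $W^s=\mathcal B\,(A^{-1}W^{s-1})$ lives in $\K^{\sum_i N_{\phi(i)}}$. This space has dimension at most $pN$. Since the sequence is increasing until it stabilizes, it stabilizes after at most $\min(pN,qN')\le (p+q)(N+N')$ steps. It therefore suffices to show that one step costs $\Oc\big((p+q+\ell)(p+q)^2 N N'(N+N')\big)$.

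\textbf{Inverse image.} Each step first computes $U^s\coloneqq A^{-1}(W^{s-1})\subset\K^{qN'}$. We do this by Gaussian elimination on the concatenation $[A\mid W^{s-1}]$ and extract a basis of the preimage. This matrix has at most $pN$ rows and $qN'+pN$ columns, so the cost is $\Oc\big(pN\cdot qN'\cdot (pN+qN')\big)$. This is at most $\Oc\big((p+q)^3NN'(N+N')\big)$, which fits within the claimed per-step budget.

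\textbf{Direct image.} Next we compute $W^s=\mathcal B U^s$ using the block structure. Write each basis vector $u$ of $U^s$ in $q$ blocks $u_1,\dots,u_q\in\K^{N'}$. A generator $E^{pq}_{ij}\otimes B$ with $B\in\calA_{\phi(i)}$ sends $u$ to the vector whose $i$-th block is $Bu_j$ and whose other blocks are $0$. Hence $W^s=\bigoplus_{i=1}^p \calA_{\phi(i)}\,\mathrm{span}\{u_j\}$. Moreover, the $i$-th summand depends only on $\phi(i)$ and on $S^s\coloneqq\mathrm{span}\{u_j : u\in U^s,\ j\le q\}\subset\K^{N'}$. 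It therefore suffices to do the following once per step:
\begin{enumerate}[(i)]
\item reduce the $q\dim U^s\le q^2N'$ vectors spanning $S^s$ to a basis of $S^s$, of dimension at most $N'$, at cost $\Oc(q^2N'^3)$;
\item for each $t$ with $t\in\phi(\{1,\dots,p\})$, apply the $\dim\calA_t$ basis matrices to this basis, at cost $\Oc(\dim\calA_t\, N N'^2)$, and reduce the resulting $N\times(\dim\calA_t\,N')$ matrix, at cost $\Oc(\dim\calA_t\,N^2N')$. Summed over $t$, this costs $\Oc(\ell NN'(N+N'))$;
\item copy the result into the $p$ blocks, at cost $\Oc(pN^2)$.
\end{enumerate}
All three costs lie within $\Oc\big((p+q+\ell)(p+q)^2NN'(N+N')\big)$.

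\textbf{Main obstacle.} The delicate step is the observation that $W^s$ splits blockwise and depends only on $S^s$. Without it, one would reduce a matrix with $pq\ell$ blocks, which produces the worse exponent seen in \autoref{prop:random-alg-shrunk}. With it, multiplying the per-step cost by the $(p+q)(N+N')$ steps gives the stated bound $\Oc\big((p+q+\ell)(p+q)^3NN'(N+N')^2\big)$.
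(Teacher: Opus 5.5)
Your overall plan is the paper's: bound the number of Wong steps, then use the block structure of $\mathcal B$ so the image step never reduces a $pq\ell$-block matrix. Your identity $\mathcal B U^s=\bigoplus_{i}\calA_{\phi(i)}S^s$ is correct.

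The gap is in step (i). Reducing the $q\dim U^s\le q^2N'$ block vectors to a basis of $S^s$ costs $\Oc(q^2N'^3)$. This does \emph{not} lie within your per-step budget $\Oc((p+q+\ell)(p+q)^2NN'(N+N'))$: in $N'$ the budget grows like $NN'^2$, while step (i) grows like $N'^3$ with no compensating factor of $N$. For a concrete case, take $p=\ell=N=1$, $q=2$ and let $N'\to\infty$. Then $A$ is a $1\times 2N'$ matrix and $U^1=\ker A$ has dimension at least $2N'-1$. Step (i) is then an elimination of about $4N'$ vectors of rank $N'$ in $\K^{N'}$, i.e.\ $\Theta(N'^3)$ operations, against a per-step budget of $\Oc(N'^2)$. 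Multiplying by your step count $(p+q)(N+N')$ gives $\Oc(N'^4)$, whereas the lemma asserts $\Oc(N'^3)$. So the final sentence of your argument does not follow as written.

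Either of two repairs works.
\begin{enumerate}[(a)]
\item Drop (i) and do what the paper does. Multiply each basis matrix of $\calA_k$ directly against the $N'\times q\dim U^s$ matrix $[U^s_1\cdots U^s_q]$, then reduce the resulting $N_k\times(q\,\dim U^s\,\dim\calA_k)$ matrix. Summed over $k$, this costs $\Oc(\ell q^2NN'(N+N'))$ per step, which does fit your budget.
\item Keep (i), but charge it against the sharp step count $\min(pN,qN')\le pN$ that you already stated, rather than $(p+q)(N+N')$. Its total is then $\Oc(pq^2NN'^3)\subseteq\Oc((p+q)^3NN'(N+N')^2)$.
\end{enumerate}
Repair (b) is essentially how the paper balances the whole estimate. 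It allows $\Oc((p+q+\ell)(p+q)^2(N+N')^3)$ per step, uses only $\Oc((p+q)\min(N,N'))$ steps, and finishes with $\min(N,N')(N+N')\le 2NN'$. This is more forgiving than your uniform per-step budget.

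Separately, your inverse-image cost $\Oc(pN\cdot qN'\cdot(pN+qN'))$ is attainable, but it presupposes that $W^{s-1}$ is stored in echelon form, so that only the $qN'$ columns of $A$ need elimination. Your image step does provide this, but you should say so explicitly.
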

\begin{proof}
    The Wong sequence $\Wong(A,\mathcal{B})$ converges after at most $\min(Np,N'q)=\mathcal{O}((p+q)\min(N,N'))$ steps. At step $s\geq 0$, we reduce a matrix of size at most $Np\times (Np+N'q)$ to compute $A^{-1}(W_{s-1})$ at a cost of $\Oc((p+q)^3(N+N')^3)$. 

    Let $U_s$ be the subspace $A^{-1}(W_{s-1}) \subset \K^{N'q}$, we show that the image computation $W_s\coloneqq \mathcal BU_s$ can be done using only $\ell q$ block matrix multiplications. The key observation is that for  $1\leq i\leq p$, and $1\leq j\leq q$, the product $(E^{pq}_{ij}\otimes \calA_{\phi(i)})U_s$ only depends on $(\phi(i),j)$ and can be obtained from the multiplication of a basis of  $\calA_{\phi(i)}$ with the corresponding row block $U_{s,j}$ of $U_s$. As a consequence, the computation of a  generating family of $\mathcal BU_s$ can be done using $\ell q$ block multiplication costing $\Oc(NN'^2q)$ each. 
    
    We have obtained a generating family of $\mathcal BU_s$ represented by a diagonal block matrix composed of $p\times p$ blocks. For $k\in\set{1,\dots,t}$, the diagonal blocks indexed by $i\in \phi^{-1}(k)$ are the generating family $M_i$ we computed for $\calA_{k} (U_{s,1} \dots U_{s,q})$. Note that $M_i$ has $1\times q\dim\calA_k$ blocks of size $N_k\times N'q$. Its reduction thus costs $\Oc(N^2q^2 N'\dim \calA_k)$. The total cost to extract a basis for $\mathcal BU_s$ is $\mathcal{O}(N^2N'q^2\ell)$.  Both the inverse image and the image computation at step $s$ is done in $\Oc((p+q+\ell) (p+q)^2(N+N')^3)$, whence the desired complexity.
\end{proof}

\begin{proof}[Proof of \autoref{prop:cheng_advanced_analysis}]  We update the complexity analysis of \autoref{cor:cheng_naive_analysis} and \autoref{prop:random-alg-shrunk} with our improved complexity for Wong computations. Remember the space $\calA_\alpha\subset\K^{q_0\times p_0}$ and the degrees $\set{\beta_1,\dots,\beta_k}$ from (\autoref{eqn:calA}). By \autoref{prop:random-alg-shrunk},
\[
\autoref{algo:compute-cshrunk}(\K,r\geq \Oc(p_0q_0),N=N'\coloneqq p_0q_0,\calA\coloneqq \calA_\alpha^\set{p_0,q_0})
\]
successfully returns the minimal shrunk subspace of $\calA_\alpha^\set{p_0,q_0}$  with probability  $1- (\frac1{r})^{\Omega(1)}$. The complexity $C(\thick V,\varepsilon)$ of \autoref{line:shrunk_alg_cheng} in \autoref{alg:compl-naive} is given by applying \autoref{lmm:wong-compute-blow-up}  with $\ell\coloneqq k$, $(p,q)\coloneqq r\lceil\log_{\lvert \K\rvert}^2r+1\rceil (p_0,q_0)$ and $N,N'\leq \thick{\module}$. Let $\eta>0$ and note that in this case, we have $\ell\leq q = \Oc(p+q)$ so the Wong sequence complexity is dominated by the inverse image computation. We have
\begin{align*}
C(\thick \module,\varepsilon) &=\Oc((r^{1+\eta/8}(p_0+q_0) \thick \module)^4  ) \\
&= \mathcal{O}(  
(
 \thick \module^{4+\eta/4} k^{2+\eta/8}
)^4 )\\
&= \Oc( \thick{\module}^{16+\eta} k^{8+\eta} )
\\&= \Oc( \thick{\module}^{16+\eta} \varepsilon^{-8d-\eta}).
\end{align*}
The total complexity of \autoref{alg:compl-naive} is $\Oc(\lvert \gri\rvert \thick V)C(\thick V,\varepsilon) =  \Oc( \thick{\module}^{17+\eta} \varepsilon^{-9d-\eta}) $, as desired. 
\end{proof}

\paragraph{Empirical improvements} Further empirical improvements can be obtained by computing $F^{i(p,q)}$ in \autoref{prop:shrunk-discrepency} for small integers $p,q>0$ such that $\frac{p}{q}\approx \frac{p_0}{q_0}$. Indeed, the complexity of computing $F^{i(p,q)}$ is heavily dependent on the size of $pq$. For simplicity, assume that $\frac pq>\frac{p_0}{q_0}$. If $F^{i(p,q)}\neq 0$, then we can replace $U$ by $F^{i(p,q)}$ in \autoref{alg:compl-naive}. Otherwise, we know that there are no subspaces $U'\subset \module_\alpha$ such that $\mu(\langle U'\rangle)>\frac{p}{p+q}$. We keep choosing better approximations $\frac pq$ of $\frac{p_0}{q_0}$ (so with higher values of $p,q$) until either we find a proper submodule in the HN filtration of $\module$ at $\alpha$, or we can guarantee that $\module$ is semistable at $\alpha$.

Finally, given $p,q>0$ and a matrix $A\in \K^{p\times q}\otimes\calA_\alpha$, our implementation uses the block structure of $A$ to partially reduce it. More precisely, since there exists $\Lambda\in \K^{pk\times q}$ such that $A = (\lambda_{ij}V_{\alpha \to \beta_{i\text{ modulo }k}})_{ij} $, one can column-reduce the first $q$ lines of blocks by reducing $\Lambda$ and each $( V_{\alpha \to \beta_i})_{1\leq i\leq k}$. 

\autoref{alg:compl-naive} can be used as the routine $\text{ HN-Filtration}_\alpha  \text{-Factors}$ in \autoref{alg:skeleton}. However, \autoref{sec:experiments} will show that even with our optimisations, Cheng's algorithm remains too slow for practical sizes of data. In the next sections, we will present, in the case of finite fields, algorithms that are more efficient than \autoref{alg:compl-naive}.

\section{Finding Highest-Slope Submodule}\label{sec:hn_exhaustive}

We will explain how to compute the Harder-Narasimhan filtration of $\langle V_\alpha \rangle$ with an exhaustive search. To do so, we must compute slopes. The slope $\mu_\alpha(\module)$ at $\alpha$ is entirely determined by the Hilbert function of $\langle V_\alpha \rangle$ and thus by its Betti numbers.

\begin{proposition}\label{prop:compute_slope}
Let $\module\in\Persfpb{\bbR^d}$, then 

\begin{equation}\label{eq:int_dim} \int\limits_{\bbR^d} \udim \module  
=  \sum\limits_{i = 0}^{d} (-1)^{d+i} \sum\limits_{\gamma \in b_i(\module)} \prod\limits_{j = 1}^{d} \gamma_j. \end{equation}
\end{proposition}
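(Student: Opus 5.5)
The plan is to write the Hilbert function of $\module$ as an alternating sum of indicator functions of up-sets, using a minimal free resolution. Then I integrate each indicator over a large box and use boundedness of $\module$ to remove every term that depends on the box.

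\textbf{Step 1 (Hilbert function from Betti numbers).} Take a minimal free resolution $0\to F_d\to\dots\to F_0\to\module\to 0$, where $F_i\simeq\bigoplus_{\gamma\in b_i(\module)}\K[\langle\gamma\rangle]$. Taking dimensions pointwise in this exact sequence gives, for every $x\in\bbR^d$,
\[
\udim\module(x)=\sum_{i=0}^d(-1)^i\sum_{\gamma\in b_i(\module)}\mathds 1[x\geq\gamma].
\]

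\textbf{Step 2 (choice of the box).} Since $\module$ is bounded, fix $c\in\bbR^d$ below its support and $\omega\in\bbR^d$ with $\supp\module\subset[c,\omega]$. I will also require that $\omega$ dominates every Betti degree. This is possible because all graded Betti numbers lie in $\grid(\module)$, and hence in a bounded set.

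Choose $c$ so that it is also below every Betti degree. This is the step I expect to need the most care. Generators of a minimal presentation lie in the closure of the support. Every Betti degree is a join of generator and relation degrees, so each is $\geq$ some generator degree, hence $\geq c$. Relation degrees themselves lie in the finite grid $\grid(\module)$, so they are bounded in any case.

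\textbf{Step 3 (integrate over the box).} Since $\udim\module$ vanishes outside $[c,\omega]$, integrating the identity of Step 1 over this box gives
\[
\int_{\bbR^d}\udim\module=\sum_i(-1)^i\sum_{\gamma\in b_i(\module)}\prod_{j=1}^d(\omega_j-\gamma_j).
\]
This holds for \emph{every} $\omega$ large enough. Expanding the product gives
\[
\prod_{j=1}^d(\omega_j-\gamma_j)=\sum_{S\subseteq[d]}(-1)^{|S|}\prod_{j\in S}\gamma_j\prod_{j\notin S}\omega_j,
\]
so the right-hand side is a polynomial in $\omega$ that is constant on an open cone of $\omega$'s. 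A polynomial that is constant on an open set is constant. Therefore every coefficient of a nonconstant monomial $\prod_{j\notin S}\omega_j$ with $S\neq[d]$ vanishes, that is,
\[
\sum_i(-1)^i\sum_{\gamma\in b_i}\prod_{j\in S}\gamma_j=0 .
\]
(Alternatively, one can note that these sums are Euler-characteristic-type integrals of $\udim\module$ projected along the directions not in $S$, which vanish by boundedness. The polynomial argument avoids having to make this precise.)

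\textbf{Step 4 (conclusion).} Only the term $S=[d]$ survives, and it equals $\sum_i(-1)^{i+d}\sum_{\gamma\in b_i}\prod_j\gamma_j$. This is exactly \eqref{eq:int_dim}. As a sanity check, the statement should be translation invariant, and the vanishing of the lower-order sums in Step 3 is exactly what makes the right-hand side of \eqref{eq:int_dim} translation invariant.
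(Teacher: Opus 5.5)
Your proposal is correct and follows the paper's proof. The paper also expresses $\udim\module$ as the alternating sum of indicators of $\langle\gamma\rangle$ coming from a minimal free resolution. It then integrates over $\prod_j(-\infty,B_j]\supseteq\supp(\module)$ to obtain $\sum_i(-1)^i\sum_{\gamma\in b_i(\module)}\prod_j(B_j-\gamma_j)$, and uses that this polynomial in $B$ is constant to set $B=0$.

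The only difference is cosmetic. Because the paper integrates over a region unbounded below, it needs no lower corner, so the care you take in Step 2 in choosing $c$ below all Betti degrees is unnecessary. It is also trivial in any case, since there are only finitely many Betti degrees.
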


\begin{proof}
Consider a minimal free resolution $0 \to F_d \to \dots \to F_0 \to \module \to 0$ which will be of maximal length $d$ by Hilbert's Syzygy theorem. In our bounded case, it will even be exactly of length $d$ by duality of free and injective resolutions (first proved by Miller in \cite{Miller2001} and rediscovered in \cite{BLL23} for the computation of Multiparameter Persistent Cohomology). Each $F_i$ has a basis with degrees at $b_i(\module)$. By exactness, for each $\alpha \in \bbR^d$ we have

\begin{equation}\label{eq:dim_betti_sum} \dim_\K V_\alpha = \sum\limits_{i=0}^d (-1)^i \dim_\K \left(F_i\right)_\alpha = \sum\limits_{i=0}^d (-1)^i \sum_{\gamma \in b_i(\module)} \mathds{1}_{\{\gamma \leq \alpha \}}. \end{equation}

Let $B \in \bbR^d$ be large enough that $\supp(\module) \subset\prod_{j \in [d]}(-\infty,B_j]$, then

\begin{equation}\label{eq:int_poly} \int\limits_{\bbR^d} \udim \module  
=  \int\limits_{\prod_{j \in [d]}(-\infty,B_j]} \udim \module  
= \sum\limits_{i = 0}^{d} (-1)^{i} \sum\limits_{\gamma \in b_i(\module)} \prod\limits_{j = 1}^{d} \left( B_j - \gamma_j \right). \end{equation}

As a function of $B$, (\autoref{eq:int_poly}) is a polynomial and constant, so we can set $B=0$ which results in (\autoref{eq:int_dim}).
\end{proof}

  \paragraph{Iteration over subspaces.} Instead of considering $\langle V_\alpha\rangle$, we assume without loss of generality, that $\module$ is uniquely generated at $0$. Recall that $\thick{\module}= \max \udim \module$, and choose any basis $\K^{\thick{\module}} \iso V_0$. We want to find the sub vector space $U \subset \K^{\thick{\module}}$ for which $\langle U \rangle \subset \module$ has the highest slope. An exhaustive search would involve computing $\mu(\langle U \rangle)$ with \autoref{alg:submodule} for \emph{all} sub vector spaces $U \subset V_0$. This is clearly possible if we work over a finite field. Since all software for constructing multiparameter Persistent Homology groups so far uses finite fields, this is not a restriction in practice. Still, this procedure works also in the infinite case, because the finite presentation of $\module$ allows only a finite number of submodules with non-isomorphic Hilbert function. We invite the reader to work this out.
  Fix a prime power $q$ and set $\K = \bbF_q$.
Denote by $\mathcal{P}V_0$ the set of 1-dimensional subspaces of $V_0$ and define 
$ W_{\text{max}} \in \argmax\limits_{ W \in \mathcal{P}V_0} \mu( \langle W \rangle ) $

\begin{proposition}\label{prop:criterion}
Let $1<k \leq \thick{\module}$ and define $V_k \coloneqq \{ W \in \mathcal{P}V_0 \mid \mu( \langle W \rangle ) \geq \mu( \langle W_{\text{max}} \rangle )/k \}$.
If for every $k$-subspace $U \subset V_0$, it holds that $\mathcal{P}U \not\subset V_k$, then $\langle W_{\text{max}}\rangle \subset \module$ has a higher slope than any other submodule of thickness $k$. 
\end{proposition}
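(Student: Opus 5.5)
Fix a $k$-dimensional subspace $U\subset V_0$. We write $\mu(\langle U\rangle)=k/\int\udim\langle U\rangle$ and compare it with $\mu(\langle W_{\text{max}}\rangle)=1/\int\udim\langle W_{\text{max}}\rangle$; the slopes of $W_{\text{max}}$ and of $U$ then only have to be compared through integrals of Hilbert functions. The goal is to show $\mu(\langle U\rangle)<\mu(\langle W_{\text{max}}\rangle)$ for every such $U$, which is what "higher slope than any other submodule of thickness $k$" amounts to once we reduce, as in the paragraph after \autoref{rem:other_stab_cond}, to submodules generated by subspaces of $V_0$ (a submodule $X$ of thickness $k$ has $\mu(X)\le\mu(\langle X_0\rangle)$ with $\dim X_0\le k$). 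Subspaces of dimension less than $k$ can be handled by the same argument with $k$ replaced by their dimension, since the threshold $\mu(\langle W_{\text{max}}\rangle)/k$ is then only harder to exceed. Alternatively, they can be covered by the hypothesis for $k'<k$ if that is what is intended.

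The key step is a lower bound on $\int\udim\langle U\rangle$ by the lifetime of a single line. For every $\beta$ and every line $W\subset U$ we have $\langle W\rangle_\beta\subset\langle U\rangle_\beta$. Hence $\udim\langle U\rangle_\beta\ge\mathds 1_{L_w}(\beta)$, and so
\[
\int\udim\langle U\rangle\ \ge\ \int\udim\langle W\rangle=\frac{1}{\mu(\langle W\rangle)}.
\]
By hypothesis $\mathcal PU\not\subset V_k$, so we may choose a line $W\in\mathcal PU$ with $\mu(\langle W\rangle)<\mu(\langle W_{\text{max}}\rangle)/k$. Substituting gives
\[
\mu(\langle U\rangle)=\frac{k}{\int\udim\langle U\rangle}\le k\,\mu(\langle W\rangle)<\mu(\langle W_{\text{max}}\rangle),
\]
which is the claim.

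The main obstacle is bookkeeping rather than mathematics. First, the inequality $\int\udim\langle U\rangle\ge\int\udim\langle W\rangle$ needs boundedness, so that both integrals are finite; this holds since $\module\in\Persfpb{\bbR^d}$. Second, we must make precise which class of submodules ``thickness $k$'' refers to. I would state explicitly that it means submodules generated at $0$ with $\dim\langle U\rangle_0=k$. I would then remark that general submodules reduce to this case via the inequality $\mu(X)\le\mu(\langle X_0\rangle)$, noting that $\langle X_0\rangle$ has thickness at most $k$ and that the strict inequality persists.
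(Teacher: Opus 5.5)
Your core argument is the paper's proof. For $U\subset V_0$ with $\dim U=k$, you pick a line $T\subset U$ with $T\notin V_k$, use $\langle T\rangle\subset\langle U\rangle$ to get $\int\udim\langle U\rangle\ge\int\udim\langle T\rangle$, and chain $\mu(\langle U\rangle)\le k\,\mu(\langle T\rangle)<\mu(\langle W_{\text{max}}\rangle)$. Your non-strict first step is all that is needed; the paper writes it strictly. The paper proves the statement only for submodules $\langle U\rangle$ with $\dim U=k$, and that is all \autoref{alg:brute_force} uses, since it tests the criterion separately for each $k$. For that class your proof is complete.

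The extensions you add do not follow from the hypothesis, and under your broad reading the statement is false.

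\emph{Lower-dimensional subspaces.} The hypothesis only supplies a line outside $V_k$ inside $k$-dimensional subspaces. A subspace $U'$ with $\dim U'<k$ may have all its lines in $V_k$ (for instance $U'=W_{\text{max}}$). Then ``the same argument with $k$ replaced by $\dim U'$'' has no line $T$ to start from.

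\emph{A counterexample.} Let $V'$ be the module of \autoref{ex:stable_module} over $\bbF_2$; its three lines have slope $1/5$, while $\mu(V')=2/9$. Set $V=V'\oplus V''$, where $V''$ is the interval module supported on $[0,200)\times[0,0.1)\cup[0,1.5)^2$, whose area is $22.1$.
\begin{itemize}
\item Every line of $V_0$ not contained in $V'_0$ has lifetime containing $\supp V''$, so its slope is $<1/15=\mu(\langle W_{\text{max}}\rangle)/3$. Hence the hypothesis holds for $k=3$.
\item Yet the $2$-dimensional subspace $V'_0$ gives $\mu(\langle V'_0\rangle)=2/9>1/5$.
\end{itemize}

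\emph{The reduction for general submodules.} Your reduction via $\mu(X)\le\mu(\langle X_0\rangle)$ only settles submodules with $\dim X_0=k$. Take $\beta=(0.9,0.9)$ and $X=V'\oplus\langle V''_\beta\rangle$. Then $\langle V''_\beta\rangle$ is supported on $[0.9,1.5)^2$ and $\dim X_\beta=3$, so $X$ has thickness $3$. Its slope is $2/9.36>1/5$, so the conclusion fails for thickness-$3$ submodules in your sense.

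To fix this, either keep the paper's reading ($\langle U\rangle$ with $\dim U=k$), or assume the hypothesis for every $k'\le k$, as in your ``alternatively''.
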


\begin{proof}
Consider any $U \subset V_0$ of dimension $k$. By assumption, there is a $1$-dimensional subspace $T \in \mathcal{P} U \subset \mathcal{P} V_0$ which is not in $V_k$. Therefore, $\mu( \langle T \rangle ) < \mu( \langle W_{\text{max}} \rangle )/k$. 
\[ \mu(\langle U\rangle ) = 
\bigslant{k}{\int_{\bbR^d} \udim_{\K} \langle U \rangle}
< \bigslant{k}{\int_{\bbR^d} \udim_{\K} \langle T \rangle} = k \mu(\langle T \rangle) < \mu( \langle W_{\text{max}} \rangle).\qedhere\]
\end{proof}

Every $k$-dimensional vector space over $\bbF_q$ contains exactly $(q^k-1)/(q-1)$ subspaces of dimension $1$, so the premise of \autoref{prop:criterion} is trivially satisfied if $|V_k| < (q^k-1)/(q-1)$. $V_0$ has  $ \sim q^{\thick{\module}^2/4}$ subspaces, but only $\sim q^{\thick{\module}}$ of these are $1$-dimensional.

\begin{algorithm}[H]
\caption{Exhaustive search for the highest-slope submodule}
\label{alg:brute_force}
\DontPrintSemicolon
\KwIn{A graded matrix $M \in \bbF_q^{B_0 \times B_1}$ minimally presenting $\module \in \Persfpb{\bbR^d}$ generated at $0$.}
\KwOut{A subspace $S \subset V_0$ such that $\langle S \rangle $ is the submodule of maximal slope.}
$\texttt{slopes}\colon ( \{ \texttt{Subsets of } V_0\} \to \bbR) \gets  \{ \}$\;
\For{$W \in \mathcal{P}V_0$}{
    Pick any $v \neq 0 \in W$, $N_0 \gets  \autoref{alg:submodule}(M, v)$\;
    Compute a resolution $ N_i \in \K^{B_{i-1} \times B_{i}}$ (with LWKR or Schreyer's algorithm)\;
    $b_i(\module) \gets B_{i}$\;
    $\texttt{slopes}.\text{append}\left( v \mapsto \sum\limits_{i = 0}^{d} (-1)^{d+i} \sum\limits_{\gamma \in b_i(\module)} \prod\limits_{j = 1}^{d} \gamma_j \right)$ \;
}
$v_{\text{max}} \gets \argmax \texttt{slopes} $\;
$V_k \gets \{ w \in \mathcal{P}V_0 \mid \texttt{slopes}(w) \geq \texttt{slopes}(v_{\text{max}})/k \}$\;

\For{$1\leq k \leq \thick{\module}$}{
    \If{$|V_k| \geq (q^k-1)/(q-1)$}{
        \For{$W \in \textbf{Gr}_k V_0$}{
            Pick a basis $\mathcal{B} \subset W$, $N_0 \gets  \autoref{alg:submodule}(M, \mathcal{B})$\;
            Compute a resolution $ N_i \in \K^{B_{i-1} \times B_{i}}$ (with LWKR or Schreyer's algorithm)\;
            $b_i(\module) \gets B_{i}$\;
            $\texttt{slopes}.\text{append}\left( \mathcal{B} \mapsto \sum\limits_{i = 0}^{d} (-1)^{d+i} \sum\limits_{\gamma \in b_i(\module)} \prod\limits_{j = 1}^{d} \gamma_j \right)$ \;
        }
    }
}
\Return $\argmax \left( \texttt{slopes} \right)$\;
\end{algorithm}

\begin{proposition}\label{prop:brute_force_time}
Let $T_{\mathsc{ker}}(m , n, d)$ be the time to compute the kernel of an $m \times n$ $\bbR^d$-graded matrix. \autoref{alg:brute_force} returns the correct result in $\Oc(d \cdot T_{\mathsc{ker}}(|B_0| , |B_1|, d) \cdot q^{\thick{\module}^2/4 + \Oc(\thick{\module})})$ time.
\end{proposition}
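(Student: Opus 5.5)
Correctness and running time are handled separately. Correctness: \autoref{alg:brute_force} computes, for every subspace it visits, the exact slope of $\langle W\rangle$. It first builds a presentation with \autoref{alg:submodule} (correct by \autoref{prop:submodule_generation}). It then extends this to a minimal resolution and reads off the graded Betti numbers, so \autoref{prop:compute_slope} gives $\int \udim\langle W\rangle$; the numerator is $\dim W$. The first loop visits all lines $\mathcal P V_0$, so $W_{\max}$ is a genuine maximiser among one-dimensional generated submodules. For each $k$, either the algorithm enumerates all of $\mathbf{Gr}_k V_0$, or $|V_k|<(q^k-1)/(q-1)$. In the latter case no $k$-subspace $U$ can have $\mathcal P U\subset V_k$, since $U$ has exactly $(q^k-1)/(q-1)$ lines. \autoref{prop:criterion} then shows that no $k$-dimensional $U$ beats $\langle W_{\max}\rangle$, which is already in \texttt{slopes}. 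By the remark before \autoref{prop:HN-formal}, every submodule has slope at most that of the submodule generated by its degree-$0$ part. It therefore suffices to range over subspaces $U\subset V_0$, and the returned argmax is the maximal-slope submodule. One point needs care: the filter $V_k$ should be formed per $k$; I would read the listing that way. If the maximal subspace is not unique, one takes the largest, using closure under sums of maximal-slope submodules (see-saw, \autoref{lmm:see-saw}).

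Running time: each visited subspace costs one kernel computation via \autoref{alg:submodule} on a matrix of size about $|B_0|\times(|B_0|+|B_1|)$, which is $T_{\mathsc{ker}}(|B_0|,|B_1|,d)$. The resolution takes up to $d$ further kernel computations, each bounded by the same order (Hilbert's syzygy theorem gives length $\le d$, and the sizes are controlled by the input presentation). Evaluating \eqref{eq:int_dim} is linear in the Betti numbers and so is dominated. This gives a per-subspace cost of $\Oc(d\cdot T_{\mathsc{ker}})$.

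It remains to count the subspaces, with $t=\thick{\module}$. The number of lines is $(q^t-1)/(q-1)=q^{\Oc(t)}$. In the worst case every Grassmannian is enumerated, and $\sum_k |\mathbf{Gr}_k\bbF_q^t|=\sum_k\binom{t}{k}_q$. The Gaussian binomial satisfies $\binom{t}{k}_q\le c_q\, q^{k(t-k)}$ with $c_q=\prod_{i\ge1}(1-q^{-i})^{-1}\le 4$. Since $k(t-k)\le t^2/4$ and there are $t+1$ values of $k$, the total is $q^{t^2/4+\Oc(t)}$. Multiplying by the per-subspace cost gives the stated bound. The main obstacle is bounding the resolution step uniformly by $d\cdot T_{\mathsc{ker}}(|B_0|,|B_1|,d)$: for $d=2$ this holds because LWKR yields a minimal presentation whose kernel is free. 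For general $d$, the claim has to be read with $T_{\mathsc{ker}}$ defined as the time for one syzygy step at the input sizes.
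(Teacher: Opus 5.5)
Your proposal is correct and follows essentially the paper's route. Correctness comes from \autoref{prop:compute_slope} and \autoref{prop:criterion}. Each visited subspace costs one kernel in \autoref{alg:submodule} plus $\Oc(d)$ further kernels for the resolution, and this is multiplied by $\sum_k \binom{\thick{\module}}{k}_q = q^{\thick{\module}^2/4+\Oc(\thick{\module})}$ subspaces. You are somewhat more careful than the paper's proof, which leaves implicit three points you address: the Gaussian-binomial bound $\binom{t}{k}_q\le c_q\, q^{k(t-k)}$, the need to form $V_k$ separately for each $k$, and the uniform bound of the resolution steps by $T_{\mathsc{ker}}(|B_0|,|B_1|,d)$.
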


\begin{proof}
Correctness follows from \autoref{prop:compute_slope} and \autoref{prop:criterion}. \autoref{alg:submodule} computes a kernel for each subspaces of $\bbF_q^{\thick{V}}$ and then we need another $d-1$ kernel computations to compute the whole resolution. There are $\sum_i \binom{\thick{V}}{i}_q \sim q^{(\thick{V}/2)^2 + \Oc(\thick{V})}$.
\end{proof}

\paragraph{Computing the HN filtration.}\label{par:quotient}
Let $\alpha \in \bbR^d$ again be arbitrary and $V$ uniquely generated at $\alpha$.
After \autoref{alg:brute_force} has found a highest slope submodule at $\alpha$ it returns a basis $\mathcal{B}$ of the sub vector space $W \subset V_\alpha$ which spans it. 
 We view $\mathcal{B}$ as a graded matrix or $A$-linear map $A^{\dim_\K W}[-\alpha] \to A^{\dim_\K V_\alpha}[-\alpha]$.
 If $M_\alpha \in \K^{G \times R}$ is a presentation of $\langle V_\alpha \rangle$ with exactly $\dim_\K V_\alpha $ generators, then we can compute a presentation of the quotient $ V_\alpha /\langle W \rangle$ with exactly $\dim_\K V_\alpha - \dim_\K W$ generators using the following procedure:

 \begin{itemize}
 \item Bring $\mathcal{B} \in \K^{\dim_\K V_\alpha \cdot \alpha \times \dim_\K W \cdot \alpha}$ in column echelon form
 \item Let  $P \subset [\dim_\K W]$ be the index sets of the pivots of $\mathcal{B}$
 \item Form the graded matrix $[ \mathcal{B} \ M_\alpha ] \in \K^{G \times ( \dim_K W \cdot \alpha \cup R)}$
 \item Column-reduce it using only operations from $\mathcal{B}$ to $M_\alpha$ until all entries of $M_\alpha$ with row index in $P$ are zero, producing $M_\alpha'$.
 \item Delete the rows of $M'_\alpha$ indexed by $P$ and return $M'_\alpha$.
 \end{itemize}
This is essentially \cite[Proposition 4.14]{djk_arxiv} and one step of the minimisation procedure. We then fully minimise $M'_\alpha$ by computing its kernel. This is necessary to be able to compute its Hilbert function correctly later in the algorithm.
By recursively calling \autoref{alg:brute_force} on $M_\alpha'$ and repeating the quotient calculation we can compute the whole HN filtration at $\alpha$. If we start with a module $V$, then this can repeat at most $\thick{\module}$ times, because the quotient will have dimension at least 1 less than $V$ at $\alpha$ - and in practice \emph{exactly} 1 most of the time (\autoref{obs:factor_dimension}).

\paragraph{Runtime of the assembled algorithm.}
Let $\module$ be a bounded persistence module, minimally presented by $M \in \K^{G \times R}$.
Denote by $n = |G| + |R|$ the input size.
Recall that the thickness $\thick{V}$ is the maximal pointwise dimension and denote by $\mathfrak{t}_l{(\module)}$ the maximal thickness of an indecomposable submodule of $\module$. Let $k \coloneqq \max_{\alpha \in \bbR^d} \mathfrak{t}_l{( \langle \module_\alpha \rangle )}$ be the maximal thickness of a uniquely generated, indecomposable submodule, let
 $l$ be the maximal number of relations of the same degree, and let $T_{\text{ker}}(n,k)$ the time to compute the kernel of a graded matrix of size $k \times n$, and $\omega$ be the matrix multiplication constant.

\begin{proposition}\label{prop:runtime_analysis}
Using \autoref{alg:brute_force} and \mathsc{aida},
 \autoref{alg:skeleton} computes an $\varepsilon$-approximation of the Skyscraper Invariant and its runtime is in
 \begin{align*} 
 \Oc \left( \frac{{\diam}(\supp V)}{\varepsilon^d}  \left( T_{\mathsc{ker}}(n,n,d) + T_{\mathsc{aida}}(n,l) 
 + \thick{V}  \cdot T_{\mathsc{ker}}(k, n, d) \cdot q^{k^2/4 + \Oc(k)} \right) \right)
 \end{align*}

 where $T_{\mathsc{aida}}(n,l) = n^{\omega+1} \mathfrak{t}{(\module)}^\omega + n^5 + n^{\omega+1}l^{\omega-1}(l + \thick{\module}) q^{l^2/4 + \Oc(l)} $.
\end{proposition}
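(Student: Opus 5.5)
The proof has two parts: correctness is inherited from results already in the paper, and the runtime is obtained by charging the cost of each line of \autoref{alg:skeleton} to one grid point $\alpha$ and multiplying by the number of grid points.

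\emph{Correctness.} We invoke \autoref{prop:skeleton_time}, which reduces correctness of \autoref{alg:skeleton} to two facts:
\begin{enumerate}
\item additivity (\autoref{eq:additivity}) together with the slope recomputation (\autoref{eq:slope_recomputation}), which justify sorting the HN factors coming from the summands $N_{i,j}$;
\item correctness of the subroutine $\text{HN-Filtration}_\alpha\text{-Factors}$.
\end{enumerate}
For the subroutine we use \autoref{alg:brute_force}, iterated through the quotient procedure of \autoref{par:quotient}. Its correctness is \autoref{prop:brute_force_time}, which rests on \autoref{prop:compute_slope} and \autoref{prop:criterion}. The $\varepsilon$-approximation property then follows from \autoref{lmm:restriction_of_inv}.

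\emph{Runtime.} The global decomposition of $V$ by \textsc{aida} in line 1 is done once, so its cost is dominated by the per-grid-point terms. The grid $\varepsilon\bbZ^d\cap\supp V$ contains $\Oc(\diam(\supp V)/\varepsilon^d)$ points, reading $\diam$ as the volume-type factor used in \autoref{prop:skeleton_time}. For each grid point $\alpha$ the work splits into four parts.
\begin{enumerate}
\item Computing presentations of the $\langle (V_i)_\alpha\rangle$ with \autoref{alg:submodule}. These are kernels of graded matrices of total size $\Oc(n)\times\Oc(n)$. Since the $V_i$ partition the generators and relations, the costs sum to $T_{\mathsc{ker}}(n,n,d)$ by superadditivity of the kernel cost.
\item Decomposing these submodules with \textsc{aida}, at cost $T_{\mathsc{aida}}(n,l)$. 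We take this bound from the \textsc{aida} paper \cite{djk25}, where $l$ enters through the exhaustive search over blocks of relations sharing a degree. As before, summing over the $V_i$ stays within the bound for total size $n$.
\item The HN computation. Each summand $N_{i,j}$ is indecomposable and uniquely generated at $\alpha$, so its thickness is at most $k$. By \autoref{prop:brute_force_time}, one run of \autoref{alg:brute_force} on it costs $\Oc(d\,T_{\mathsc{ker}}(k,n,d)\,q^{k^2/4+\Oc(k)})$, since it has at most $k$ generators and at most $n$ relations. The recursion over quotients in \autoref{par:quotient} adds at most $k$ further calls, and each quotient is a smaller instance, so the recursion contributes only a factor $k$, which is absorbed into $q^{\Oc(k)}$.
\item Summing over summands. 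The dimensions of the summands at $\alpha$ add up to $\dim V_\alpha\le\thick V$, so there are at most $\thick V$ summands. This gives the factor $\thick V\cdot T_{\mathsc{ker}}(k,n,d)\,q^{k^2/4+\Oc(k)}$.
\end{enumerate}
The remaining steps are cheap: sorting the at most $\thick V$ factors costs $\thick V\log\thick V$, and cutting their Hilbert functions into staircases is linear in the Betti numbers. Both are dominated by the terms above.

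\emph{Expected obstacle.} The delicate part is bounding every HN call by the \emph{local} parameter $k$ rather than by $\thick V$. Two points need care.
\begin{enumerate}
\item Quotients $N_{i,j}/\langle W\rangle$ may decompose further, but their thickness still does not exceed $k$. Each such quotient can be re-decomposed, or simply handled by brute force at thickness at most $k$.
\item After the quotient step, the presentation has to be minimised by one more kernel computation of the same size, so that \autoref{prop:compute_slope} applies to minimal Betti numbers. This extra kernel is again absorbed into $T_{\mathsc{ker}}(k,n,d)$.
\end{enumerate}
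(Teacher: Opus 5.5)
Your proposal is correct and follows the paper's route: plug the costs of \autoref{alg:submodule}, \textsc{aida} and \autoref{alg:brute_force} into \autoref{prop:skeleton_time}, with at most $\dim_\K V_\alpha \le \thick{V}$ brute-force calls per grid point. The paper counts these calls directly rather than as number of summands times recursion depth, but your extra factor $k$ is absorbed into $q^{\Oc(k)}$ as you say. The only inaccuracy is the source of the \textsc{aida} bound: the paper attributes $T_{\mathsc{aida}}(n,l)$ not to the published \textsc{aida} paper but to an unpublished result of the second author using fast homomorphism computation, so that term is an external black box in both arguments.
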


\begin{proof}
We want to fill in the variables corresponding to the subroutines in \autoref{prop:skeleton_time}. From the preceding paragraph, it follows that \autoref{alg:skeleton} will call at most $\dim_\K V_\alpha$ times \autoref{prop:brute_force_time} for each $\alpha \in \bbR^d$. The run-time for the decomposition with \textsc{AIDA} \cite{djk_arxiv} follows from an unpublished result of the second author using fast homomorphism computation with the \textsc{Persistence-Algebra} library\footnote{https://github.com/JanJend/AIDA, https://github.com/JanJend/Persistence-Algebra}.
\end{proof}

\section{Exact Computation of the Skyscraper Invariant}\label{sec:wall_and_chamber}
For many computations on persistence modules it is often enough to perform them only at those parameter values that share 
each of their coordinates with a generator or relation (for example the barcode template for the fibred barcode \cite{lesnick_wright, lesnickwright25}).

\paragraph{The slope in relation to $\alpha$.}

 \begin{figure}[H]
    \includegraphics[width=0.88\textwidth]{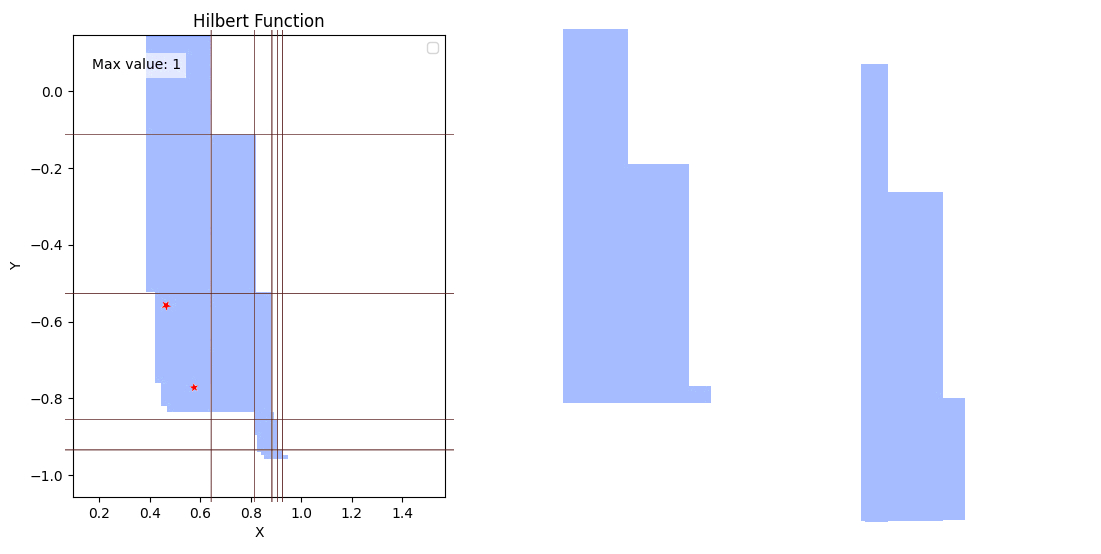}

\begin{tikzpicture}[overlay]
    \node at (1.9,3.3) {$\alpha$};
    \node at (2.2,2.2) {$\beta$};

    \node at (6.7 , 2.0) {$\langle V_\alpha \rangle$};
    \node at (10.5, 0.6) {$\langle V_\beta \rangle$};
    \draw[red, dashed] (2.1,3.6) -- (2.1, 7.1);
    \draw[red, dashed] (2.1,3.6) -- (3.5, 3.6);
    \draw[red, dashed] (2.45,2.55) -- (2.45,7.1);
    \draw[red, dashed] (2.45,2.55) -- (3.55,2.55);
\end{tikzpicture}
\caption{The direct summand corresponding to the small circle in \autoref{fig:barcode} with an overlaid grid. }
\label{fig:grid}
\end{figure}

Consider any cell $C$ in the grid on the left in \autoref{fig:grid}, which is slightly coarser than the induced grid. 
For all $\alpha \in C$, the submodules $\langle V_\alpha \rangle$ will look qualitatively similar: They can be deformed into one another. In consequence, we can find presentations for each with the same underlying matrix and slightly changed degrees. We will use this fact to avoid recomputing decompositions and HN filtrations for every $\alpha$ on the $\varepsilon$-grid.

 \paragraph{Shifting the Presentation Matrix.}
 Consider a module $V$, $\alpha \in \bbR^d$, and let $C$ be the cell in $\grid(V)$ with $\alpha \in C$. For $J \subset [d]$, denote by $F_J \subset \bbR^d$ the hyperplane spanned by the coordinate axes indicated by $J$. There are subsets $b^\alpha_i(\langle V_\alpha\rangle) \subset b_i(\langle V_\alpha\rangle)$ of the graded Betti numbers which lie on one of the hyperplanes $F_J+\alpha$ for $|J| < d$. For all other Betti numbers, none of their coordinates can lie in the projections to the $i$-th coordinate of $C$, $\pi_i(C)$, by definition of $\grid(V)$. For $\beta \in C$, to get the Betti numbers of the module $\langle V_\beta \rangle$ we just shift the center of the hyperplanes $F_J+\alpha$ for $\alpha$ to $\beta$ (\autoref{fig:alpha_change}) and drag all the graded Betti numbers on the hyperplanes along. That is, we change every coordinate $\alpha_i$, $i \in [d]$ to $\beta_i$. 

\begin{figure}[H]
{\definecolor{lightred}{rgb}{1, 0.5, 0.5}
\definecolor{lightblue}{rgb}{0.4, 0.5, 1}
\definecolor{darkblue}{rgb}{0.3, 0.3, 0.8}
\newcommand{\bbracks}[1]{\textcolor{lightblue}{\{} #1 \textcolor{lightblue}{\}} }
\newcommand{\gbracks}[1]{\textcolor{lightred}{(} #1 \textcolor{lightred}{)} }
\newcommand{\lrcell}{\cellcolor{lightred}}
\newcommand{\pblcell}{\cellcolor{lightblue}}

\begin{tikzpicture}[scale=0.9]
    % First diagram
    \begin{scope}
    \draw[->, thick] (0,0) -- (4,0) node[right] {$x$};
    \draw[->, thick] (0,0) -- (0,4) node[above] {$y$};
    \foreach \x in {0,1,2,3} {
        \node at (\x, -0.2) [below] {\x};
    }
    \foreach \y in {0,1,2,3} {
        \node at (-0.2, \y) [left] {\y};
    }
    \fill[darkblue, opacity=0.8] (0,0) rectangle (2,2);
    \fill[white] (1,1) rectangle (2,2);
    \fill[lightblue, opacity=0.8] (0,2) rectangle (1,3);
    \fill[lightblue, opacity=0.8] (1,1) rectangle (2,2);
    \fill[lightblue, opacity=0.8] (2,0) rectangle (3,1);
    \draw[dashed, red] (0,0) rectangle (1,1);
    \draw[dashed, black] (0.73,0.3) -- (3,0.3);
    \draw[dashed, black] (0.73,0.3) -- (0.73,3.0);
    \fill[lightblue, draw=black] (0,0) circle (5pt);
    \draw[black] (0,0) circle (3pt);
    \foreach \x/\y in {2/0, 0/2, 3/0, 1/1, 0/3} {
        \node[rectangle, draw=black, fill=lightred, inner sep=2pt] at (\x,\y) {};
    }
    \end{scope}
    
    \node at (5.2, 2) [font=\Huge] {$\Rightarrow$};
    % Second diagram (shifted right)
    \begin{scope}[xshift=7.5cm]
    
    \draw[->, thick] (0,0) -- (4,0) node[right] {$x$};
    \draw[->, thick] (0,0) -- (0,4) node[above] {$y$};
    \foreach \x in {0,1,2,3} {
        \node at (\x, -0.2) [below] {\x};
    }
    \foreach \y in {0,1,2,3} {
        \node at (-0.2, \y) [left] {\y};
    }
    \fill[darkblue, opacity=0.8] (0.7,0.3) rectangle (2,2);
    \fill[white] (1,1) rectangle (2,2);
    \fill[lightblue, opacity=0.8] (0.7,2) rectangle (1,3);
    \fill[lightblue, opacity=0.8] (1,1) rectangle (2,2);
    \fill[lightblue, opacity=0.8] (2,0.3) rectangle (3,1);
    \fill[lightblue, draw=black] (0.7,0.3) circle (5pt);
    \draw[black] (0.7,0.3) circle (3pt);
    \foreach \x/\y in {2/0.3, 0.7/2, 3/0.3, 1/1, 0.7/3} {
        \node[rectangle, draw=black, fill=lightred, inner sep=2pt] at (\x,\y) {};
    }
    \end{scope}
\end{tikzpicture}}
\caption{The uniquely generated module from \autoref{ex:stable_module} after cutting it off.}
\label{fig:alpha_change}
\end{figure}

We devote the next subsection to proving that this really constructs $\langle V_\beta \rangle$, and that this transformation preserves decompositions.

\subsection{Algebraic Theory of Induced submodules}\label{subsec:alg-details-induced}

To describe what happens formally when looking at $\langle V_\beta \rangle$ for varying $\beta$ in a grid cell, we need to have notions of cutting and extending modules along inclusions of posets. 
Developing this theory has intrinsic value because these techniques are used in other places (for example \cite[Section 4]{djk_arxiv}).

    Let $\beta \in \bbR^d$ and let $ \Persfp{\bbR^d}_{\beta_0 = \beta}$ denote the category of persistence modules that are uniquely generated at $\beta$. The assignment $V\in \Persfp{\bbR^d}\mapsto  \langle V_\beta\rangle\in\Persfp{\bbR^d}_{\beta_0 = \beta}$ defines an additive functor $\langle -_\beta\rangle$ which fits into the adjunction
    \begin{equation}
        \label{eq:adjunction_langle_V_beta_rangle}
     (\Persfp{\bbR^d}_{\beta_0 = \beta} \into \Persfp{\bbR^d}) \quad \dashv  \quad  \langle -_\beta\rangle.
    \end{equation}
    In particular, $\langle -_\beta\rangle$ is left exact.

  \begin{definition}\label{def:tranfer-functor}
For $\alpha, \beta\in\bbR^d$, we define the \emph{transfer functor} 
\[t_{\beta \to \alpha} =
(\iota_\alpha)_!(\iota_\alpha)^*(\iota_\beta)_*(\iota_\beta)^* \colon \Persfp{\bbR^d} \to \Persfp{\bbR^d}.\qedhere\]
\end{definition}

\begin{proposition}\label{prop:sur_restriction}
Let $\alpha\leq\beta$ in $\bbR^d$ and let $V\in\Persfp{\bbR^d}$ such that $V_{\alpha\to \beta}$ is surjective. We have 
\[
 \langle V_\beta \rangle \iso \langle \langle V_\alpha \rangle_\beta \rangle \iso  (\iota_{\beta})_! (\iota_{\beta})^* \langle V_\alpha\rangle \iso  t_{\alpha\to\beta} \langle V_\alpha\rangle.\qedhere
\]
\end{proposition}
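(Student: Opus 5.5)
We prove the three isomorphisms in turn, all as statements about submodules of $V$ or about modules over the up-set $\langle\beta\rangle$; the arguments are pointwise.

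\textbf{First isomorphism.} Since $\alpha\leq\beta$, we have $\langle V_\alpha\rangle_\beta = \Img V_{\alpha\to\beta}$, and this equals $V_\beta$ because $V_{\alpha\to\beta}$ is surjective. Hence the submodule of $\langle V_\alpha\rangle$ generated at $\beta$ is generated by all of $V_\beta$. So $\langle\langle V_\alpha\rangle_\beta\rangle = \langle V_\beta\rangle$ as submodules of $V$, not merely up to isomorphism.

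\textbf{Second isomorphism.} Set $W\coloneqq\langle V_\alpha\rangle$. By \autoref{ex:kan} and \autoref{eqn:Kan_pointwise}, $(\iota_\beta)_!(\iota_\beta)^*W$ is $W_\gamma$ for $\gamma\geq\beta$ and $0$ otherwise, with the structure maps of $W$ on $\langle\beta\rangle$. There is a canonical map $(\iota_\beta)_!(\iota_\beta)^*W\to W$, the counit $c^!$. It is injective at every $\gamma$: for $\gamma\geq\beta$ it is the identity, and otherwise its source is $0$. So $(\iota_\beta)_!(\iota_\beta)^*W$ is identified with the subfunctor $\gamma\mapsto W_\gamma$ for $\gamma\geq\beta$. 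We must show this subfunctor equals $\langle W_\beta\rangle$, i.e. that $W_{\beta\to\gamma}$ is surjective for every $\gamma\geq\beta$. This holds because
\[
W_\gamma=\Img V_{\alpha\to\gamma}=\Img\bigl(V_{\beta\to\gamma}\circ V_{\alpha\to\beta}\bigr)=V_{\beta\to\gamma}(V_\beta)=W_{\beta\to\gamma}(W_\beta),
\]
again using surjectivity of $V_{\alpha\to\beta}$ and $W_\beta=V_\beta$.

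\textbf{Third isomorphism.} By \autoref{def:tranfer-functor},
\[
t_{\alpha\to\beta}W=(\iota_\beta)_!(\iota_\beta)^*(\iota_\alpha)_*(\iota_\alpha)^*W.
\]
Since $\langle\beta\rangle\subset\langle\alpha\rangle$, the counit $c^*_{\iota_\alpha}$ (an isomorphism by \autoref{rem:Kan}) gives
\[
(\iota_\beta)^*(\iota_\alpha)_*(\iota_\alpha)^*W\;\cong\;(\iota_\beta)^*W .
\]
Concretely, by \autoref{ex:kan}, $((\iota_\alpha)_*(\iota_\alpha)^*W)_\gamma\cong W_{\gamma\vee\alpha}=W_\gamma$ for $\gamma\geq\beta\geq\alpha$, naturally in $\gamma$. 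More formally, $\iota_\beta$ factors through $\iota_\alpha$ as $\iota_\alpha\circ j$, so
\[
(\iota_\beta)^*(\iota_\alpha)_*(\iota_\alpha)^* = j^*(\iota_\alpha)^*(\iota_\alpha)_*(\iota_\alpha)^*\cong j^*(\iota_\alpha)^*=(\iota_\beta)^*.
\]
Applying $(\iota_\beta)_!$ then yields the second expression.

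\textbf{Main obstacle.} The step needing the most care is the second isomorphism: we must check that $c^!$ is monic onto exactly $\langle W_\beta\rangle$. This is where surjectivity is essential, since without it $\langle\langle V_\alpha\rangle_\beta\rangle$ would be a proper submodule of the restriction.
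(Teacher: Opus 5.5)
Your proof is correct and is essentially the paper's argument: the pointwise identity $\langle V_\alpha\rangle_\gamma = V_{\beta\to\gamma}(V_\beta)=\langle V_\beta\rangle_\gamma$ for $\gamma\geq\beta$ gives the first two isomorphisms, and the factorisation $\iota_\beta=\iota_\alpha\circ j$ together with the counit $c^*_{\iota_\alpha}$ handles the transfer functor. One correction to your closing remark: surjectivity is not needed for the second isomorphism, because for $W=\langle V_\alpha\rangle$ and $\alpha\leq\beta\leq\gamma$ one always has $W_{\beta\to\gamma}(W_\beta)=V_{\alpha\to\gamma}(V_\alpha)=W_\gamma$, so $\langle W_\beta\rangle$ always equals the image of $(\iota_\beta)_!(\iota_\beta)^*W$; the hypothesis is genuinely used only in the first step, to get $\langle V_\alpha\rangle_\beta=V_\beta$ and hence $\langle\langle V_\alpha\rangle_\beta\rangle=\langle V_\beta\rangle$.
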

\begin{proof}
   Let $V\in\Persfp{\bbR^d}$ such that $V_{\alpha\to \beta}$ is surjective, and let $\gamma\geq\beta$. Since $V_{\alpha\to \gamma}$ factors as $V_\alpha \to V_\beta \to V_\gamma$ it always holds that $ \langle V_\alpha\rangle_\gamma \subset \langle V_\beta \rangle_\gamma$. The first map in the factorisation is surjective, so $ \langle V_\alpha\rangle _\gamma = \langle V_\beta \rangle_\gamma$. In other words, $\langle \langle V_\alpha \rangle_\beta \rangle = \langle V_\beta \rangle$. By \autoref{ex:kan}, the equalities of subspaces assemble to the isomorphism $(\iota_{\beta})_! (\iota_{\beta})^*\langle V_\alpha\rangle\iso (\iota_{\beta})_! (\iota_{\beta})^* \langle V_\beta \rangle$. Denoting by $\iota$ the inclusion of posets $\langle\beta\rangle\hookrightarrow\langle \alpha\rangle$, we have $t_{\alpha\to\beta} = (\iota_\beta)_!( \iota_\alpha \iota)^* (\iota_\alpha)_*\iota_\alpha^*  = (\iota_\beta)_!\iota^*c^*_{\iota_\alpha}\iota_\alpha^* = (\iota_{\beta})_! (\iota_{\beta})^*$. Finally, 
   \[
  t_{\alpha\to\beta} \langle V_\alpha\rangle \iso  (\iota_{\beta})_! (\iota_{\beta})^*  \langle V_\alpha\rangle \iso (\iota_{\beta})_! (\iota_{\beta})^* \langle V_\beta \rangle \iso \langle V_\beta \rangle.\qedhere\] 
\end{proof}

\begin{proposition}\label{prop:grid_similar}
Let $\module \in \Persfp{\bbR^d}$ be a persistence module, $\alpha \in \bbR^d$, and $\beta \geq \alpha$ arbitrary such that $\alpha$ and ${\beta}$ are in the same cell of $\grid(V)$. It holds that 
\begin{enumerate}[(i)]
\item If $(A[R_i], d_i)_{i \in [d]}$ is a minimal resolution of $\langle V_\alpha \rangle$, then $\left( (\iota_{\beta})_! (\iota_{\beta})^* A[R_i], (\iota_{\beta})_! (\iota_{\beta})^* d_i \right)_{i \in [d]}$ is a minimal resolution of $\langle V_\beta\rangle$. For $i\in [d]$, $\left( (\iota_{\beta})_! (\iota_{\beta})^* A[R_i], (\iota_{\beta})_! (\iota_{\beta})^* d_i \right)$ is formed by replacing, for every $j\in [d]$ and every $\gamma \in R_i$, every coordinate $\gamma_j$ by $\beta_j$ if $\gamma_j = \alpha_j$. In particular, $b_i(\langle V_\beta \rangle) = b_i(\langle V_\alpha \rangle) \vee \beta$.

\item There is an isomorphism $t_{\beta \to \alpha}\langle V_\beta \rangle \iso \langle V_\alpha \rangle$

\item If $\varphi \colon \langle V_\alpha\rangle \iso \bigoplus V_j$ is an indecomposable decomposition, then applying $\langle -_\beta \rangle$ induces an indecomposable decomposition $\varphi_{| \langle V_{\beta} \rangle} \colon \langle V_{\beta} \rangle \iso \bigoplus \langle (V_j)_{\beta} \rangle $ 

\end{enumerate}
\end{proposition}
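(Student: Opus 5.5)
The plan is to reduce everything to \autoref{prop:sur_restriction} and \autoref{cor:res_res}. Since $\alpha\leq\beta$ lie in the same cell of $\grid(V)$, \autoref{lem:same-pos-iso} says $V_{\alpha\to\beta}$ is an isomorphism, in particular surjective. \autoref{prop:sur_restriction} then gives $\langle V_\beta\rangle \iso (\iota_\beta)_!(\iota_\beta)^*\langle V_\alpha\rangle$.

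For (i), apply \autoref{cor:res_res} to the minimal resolution $(A[R_i],d_i)$ of $\langle V_\alpha\rangle$, with $\beta$ in place of $\alpha$. This yields a resolution of $(\iota_\beta)_!(\iota_\beta)^*\langle V_\alpha\rangle\iso\langle V_\beta\rangle$ with degrees $R_i\vee\beta$ and the same matrices. Two things then remain to be checked.

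\emph{The degree-replacement description.} Every $\gamma\in R_i$ satisfies $\gamma\geq\alpha$, since $\langle V_\alpha\rangle$ is supported in $\langle\alpha\rangle$ and a minimal resolution of a module generated at $\alpha$ has all Betti degrees $\geq\alpha$. Moreover, each coordinate $\gamma_j$ is either equal to $\alpha_j$ or is a coordinate of some Betti number of $V$. For the second claim I would argue via the kernel computation (\autoref{alg:submodule}), whose output degrees are joins of $\alpha$ with degrees of $G\cup R$; the higher syzygy degrees are joins of these. Since $\alpha_j$ and $\beta_j$ lie in the same cell interval of $\pi_j(\grid V)$, we get: if $\gamma_j>\alpha_j$ then $\gamma_j\geq\lceil\alpha_j\rceil>\beta_j$ or $\gamma_j=\ldots$ is a grid value $\geq\beta_j$. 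Hence $(\gamma\vee\beta)_j=\gamma_j$ when $\gamma_j\neq\alpha_j$, and $(\gamma\vee\beta)_j=\beta_j$ when $\gamma_j=\alpha_j$.

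\emph{Minimality.} Minimality means that no nonzero matrix entry connects equal degrees. Suppose $\gamma\vee\beta=\gamma'\vee\beta$ for $\gamma\in R_{i}$, $\gamma'\in R_{i+1}$. Coordinatewise, by the description above, this forces $\gamma_j=\gamma'_j$ in every coordinate: either both equal $\alpha_j$, or both are unchanged. So $\gamma=\gamma'$, and minimality is inherited. This step, pinning down that the Betti degrees of $\langle V_\alpha\rangle$ have coordinates in $\{\alpha_j\}\cup\pi_j(\grid V)$ and that $\beta$ does not cross a grid line, is the main obstacle. It needs care at the upper boundary of the cell, since a cell is half-open; I would use that $\beta_j<$ the next grid value.

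For (ii), $t_{\beta\to\alpha}\langle V_\beta\rangle=(\iota_\alpha)_!(\iota_\alpha)^*(\iota_\beta)_*(\iota_\beta)^*\langle V_\beta\rangle$. By \autoref{ex:kan}, $(\iota_\beta)_*$ extends by $\gamma\mapsto\langle V_\beta\rangle_{\gamma\vee\beta}$. I would compare this pointwise with $\langle V_\alpha\rangle_\gamma$ for $\gamma\geq\alpha$, using the natural map $\langle V_\alpha\rangle_\gamma\to\langle V_\alpha\rangle_{\gamma\vee\beta}=\langle V_\beta\rangle_{\gamma\vee\beta}$. Here $\gamma$ and $\gamma\vee\beta$ lie in the same cell of the grid of $\langle V_\alpha\rangle$: by (i), the coordinates of its Betti numbers avoid the half-open interval $(\alpha_j,\beta_j]$. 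So \autoref{lem:same-pos-iso}, applied to $\langle V_\alpha\rangle$, makes this map an isomorphism. Naturality in $\gamma$ assembles these into the isomorphism.

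For (iii), $\langle -_\beta\rangle$ is additive, so $\varphi$ induces $\langle V_\beta\rangle\iso\bigoplus\langle(V_j)_\beta\rangle$, and each $\langle (V_j)_\beta\rangle\iso(\iota_\beta)_!(\iota_\beta)^*V_j$ by \autoref{prop:sur_restriction}. For indecomposability, (ii) applied to $V_j$ (whose induced grid refines compatibly, as in (i)) gives $t_{\beta\to\alpha}\langle (V_j)_\beta\rangle\iso V_j$. The functor $t_{\beta\to\alpha}$ is additive, so any nontrivial splitting of $\langle (V_j)_\beta\rangle$ would map to a splitting of $V_j$. Nonzero summands stay nonzero because they are generated at $\beta$, hence nonzero at $\beta$, and after transfer nonzero at $\alpha$. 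This would contradict the indecomposability of $V_j$.
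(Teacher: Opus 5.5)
Your proposal is correct. For (i) and (ii) it follows the paper's argument. In (i) you use \autoref{cor:res_res}, \autoref{lem:same-pos-iso} and \autoref{prop:sur_restriction}, with minimality coming from injectivity of $-\vee\beta$ on the Betti degrees. In (ii) you show that the structure map $\langle V_\alpha\rangle_\delta\to\langle V_\alpha\rangle_{\delta\vee\beta}$ is an isomorphism, which is the paper's unit $u_*$ written pointwise.

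In (i) you go beyond the paper: you justify the coordinate claim it only asserts, using that kernels of graded matrices are generated in the join-closure of the column degrees. For this, run \autoref{alg:submodule} on a minimal presentation, so that $R=b_1(V)$. State the conclusion strictly, though. Since no coordinate of $\grid(V)$ lies in $(\alpha_i,\beta_i]$, $\gamma_i\neq\alpha_i$ forces $\gamma_i>\beta_i$. The non-strict inequality in your sketch would let $\gamma_i=\beta_i$ collide with $\gamma'_i=\alpha_i$ under $-\vee\beta$.

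The genuine difference is in (iii). The paper builds a bijection $\End(\langle (V_j)_\beta\rangle)\cong\End(V_j)$ through a chain of adjunctions ending in (ii). You instead apply (ii) to $V_j$, which is valid because $b_\bullet(V_j)\subseteq b_\bullet(\langle V_\alpha\rangle)$ has no coordinate in any $(\alpha_i,\beta_i]$. You then use that $t_{\beta\to\alpha}$ is additive and that $(t_{\beta\to\alpha}X)_\alpha=X_\beta\neq 0$ for every nonzero summand $X$ of a module generated at $\beta$. So a splitting of $\langle (V_j)_\beta\rangle$ would split $V_j$.

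Your version is more elementary. It also avoids a point the paper leaves unchecked: that its Hom-bijections respect composition, which is needed to transport idempotents. The paper's version, once that is verified, gives the stronger fact that the two endomorphism algebras are isomorphic.
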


The consequences of \autoref{prop:grid_similar} are as follows: $(i)$ lets us compute integrals of $\udim\langle V_\beta\rangle$ as a function of integrals of $\udim\langle V_\alpha \rangle$, $(ii)$ lets us compare Harder-Narasimhan filtrations, and $(iii)$ means that we do not need to recompute decompositions if $\beta$ changes.
\begin{proof}
\begin{enumerate}[(i)]
\item \autoref{cor:res_res} shows that  $\left( (\iota_{\beta})_! (\iota_{\beta})^* A[R_i], (\iota_{\beta})_! (\iota_{\beta})^* d_i \right)_{i \in [d]}$ is a resolution of  $ (\iota_{\beta})_! (\iota_{\beta})^* \langle V_\alpha \rangle$, which by \autoref{lem:same-pos-iso} and  \autoref{prop:sur_restriction} is isomorphic to   $\langle V_\beta \rangle$.  

We now prove the minimality of $\left( (\iota_{\beta})_! (\iota_{\beta})^* A[R_i], (\iota_{\beta})_! (\iota_{\beta})^* d_i \right)_{i \in [d]}$. Let $M \colon A[S] \to A[R]$ be any map in the resolution of $\langle V_\alpha \rangle$. Since $\alpha$ and $\beta$ lie in the same cell of $\grid(V)$, the $i$-th coordinate of every $s \in S$ and every $r \in R$ is either $\alpha_i$ or larger than $\beta_i$. The application of the functor is equivalent to applying $\vee \beta$ by \autoref{lmm:restriction} and so this will turn any coordinate of the form $\alpha_i$ to $\beta_i$. Therefore, the map $\vee \beta \colon \bbR^d \to \bbR^d$ given by this procedure is injective when restricted to $S \cup R$ and so must preserve minimality.

\item Denote by $\iota\colon \langle \beta \rangle \into \langle \alpha \rangle$ the inclusion of posets. 
Again, by \autoref{lem:same-pos-iso} and \autoref{prop:sur_restriction} there is an isomorphism $\iota_\beta^* \langle V_\beta \rangle \iso \iota_\beta^*(\iota_\beta)_!\iota_\beta^* \langle V_\alpha \rangle \iso  \iota_\beta^* \langle V_\alpha \rangle \iso \iota^* \iota_\alpha^* \langle V_\alpha \rangle$. Denote by $u_*$ the unit of the adjunction $\iota^* \dashv \iota_*$. For any $\delta \geq \alpha$, the map of vector spaces
\[ \langle V_\alpha \rangle _\delta = \iota_\alpha^* \langle V_\alpha \rangle _\delta  \overset{u_*}\iso \iota_* \iota^* \iota_\alpha^* \langle V_\alpha \rangle_\delta \iso  \iota_*  \iota_\beta^* \langle V_\beta \rangle_\delta \overset{\mathclap{\ref{ex:kan}}} \iso \langle V_\alpha\rangle_{\delta \vee \beta}\]
is given by the structure map $\langle V_\alpha\rangle_{\delta \to \delta \vee \beta}$. $\delta$ and $ \delta \vee \beta$ lie again in the same cell of $\grid (V)$. Since they also belong to $\langle \alpha\rangle$, they lie in the same grid cell of $\grid(\langle V_\alpha\rangle)$. Thus, $\langle V_\alpha\rangle_{\delta\to \delta\vee\beta}$ is an isomorphism by \autoref{lem:same-pos-iso} and $(u_*)$ is an isomorphism at $(\iota_\alpha)^* \langle V_\alpha \rangle$.

\begin{align*}
(\iota_\alpha)_! \iota_\alpha^* (\iota_\beta)_* \iota_\beta^* \underline{ \langle V_\beta \rangle}
& \overset{ \mathclap{\ref{prop:sur_restriction}}}{\iso} 
(\iota_\alpha)_! \iota_\alpha^* (\iota_\beta)_* \iota_\beta^* \langle V_\alpha \rangle 
\iso 
(\iota_\alpha)_! \iota_\alpha^* (\iota_\alpha)_* \iota_* \iota^* \iota_\alpha^* \langle V_\alpha \rangle \\
& \overset{ \mathclap{\ref{rem:Kan}}}{\iso} (\iota_\alpha)_! \iota_* \iota^* \iota_\alpha^* \langle V_\alpha \rangle 
\overset{u_*^{-1}}{\iso} (\iota_\alpha)_! \iota_\alpha^* \langle V_\alpha \rangle
\iso \langle V_\alpha \rangle
\end{align*}

\item Since $\langle -_\beta \rangle$ is an additive functor, so $\langle \varphi_\beta \rangle \colon
 \langle \langle V_\alpha \rangle_\beta \rangle \overset{ \mathclap{\ref{prop:sur_restriction}}}{=} \langle V_\beta \rangle \iso \bigoplus \langle (V_j)_\beta \rangle$ is an isomorphism.
 For any $j$, the module $V_j$ is again generated at $\alpha$ and $\beta$ is in the same position as $\alpha$ wrt. $\grid(V_j)$. Therefore,
 
\begin{align*} \End(\langle (V_j)_\beta \rangle  ) 
\overset{ \mathclap{\ref{prop:sur_restriction}}}{\iso} & \Hom \left( (\iota_\beta)_!(\iota_\beta)^* V_j, (\iota_\beta)_! (\iota_\beta)^* V_j \right) 
 \overset{ \mathclap{\ref{ex:kan}}}{\iso}  \Hom \left( (\iota_\beta)^* V_j, (\iota_\beta)^* V_j\right)  \\
  \iso & \Hom \left(  V_j, (\iota_\beta)_*(\iota_\beta)^* V_j\right)  
\overset{ \mathclap{(\ref{eq:adjunction_langle_V_beta_rangle})}}{\iso} \Hom \left(  V_j, \langle {(\iota_\beta)_*(\iota_\beta)^* V_j}_\alpha \rangle \right)  \\
\iso & \Hom \left(  V_j, (\iota_\alpha)_! (\iota_\alpha)^* (\iota_\beta)_*(\iota_\beta)^* V_j \right)  
 \overset{(ii)}{\iso} \Hom \left( V_j, V_j\right)  
\end{align*}
It follows that $\langle (V_j)_\beta \rangle$ is indecomposable, too.\qedhere
\end{enumerate}
\end{proof}

\begin{remark}\label{rmk:generalize_transfer_fun_partition}   Given $V\in\Persfp{\bbR^d}$, the transfer functors define the following relation $\sim_{t,V}$ on $\bbR^d$:
\[
\alpha \sim_{t,V} \beta \quad \Longleftrightarrow \quad (\alpha\leq \beta \text{ and }t_{\beta \to \alpha}\langle V_\beta \rangle  \iso \langle V_\alpha \rangle).
\]
Extending this relation by symmetry and transitivity yields an equivalence relation. Combining \autoref{prop:grid_similar}(ii) and \autoref{prop:sur_restriction}, we see that the partition $\bbR^d/\!\sim_{t,V}$ is coarser  than the one induced by $\grid (V)$. A possible generalisation of \autoref{prop:grid_similar}(i,iii) is to replace the assumption $\lfloor\alpha\rfloor_{\grid(V)} = \lfloor \beta\rfloor_{\grid(V)}$ with $\alpha\sim_{t,V}\beta$. This would allow \autoref{alg:full_hnf} to iterate at \autoref{line:loop_grid_Vi} over the coarser partition $\bbR^d/\!\sim_{t,V_i}$ instead of the one induced by $\grid (V_i)$. For instance, the degrees $\alpha,\beta\in\bbR^2$ in \autoref{fig:grid} satisfy $\alpha\mathrel{\sim_{t,V}} (\alpha\vee\beta)\mathrel{\sim_{t,V} } \beta$ but do not lie in the same rectangular cell of $\grid(V)$.
\end{remark}

\subsection{Slope Polynomials}
For $\alpha \in \bbR^d$, we write $\supp \alpha \coloneqq \{i \in [d] \mid \alpha_i \neq 0\}$.

\begin{lemma}\label{lem:slope_poly}
Let $\module$ be generated at $0$.
For $\alpha$ in the cell of $\grid(\module)$ containing $0$, the function
 $\alpha \mapsto \mu \left( \langle V_\alpha \rangle \right)^{-1}$ is a multilinear polynomial of degree $d$ which we denote by $p_V$ and
\begin{equation}\label{eqn:pvalpha}
    p_V(\alpha)  = \frac{1}{\thick{V}} \sum\limits_{J \subset [d]} (-1)^{|J|} \int\limits_{F_{[d] \setminus J}} \udim \iota_ {F_{[d] \setminus J}}^* V\alpha ^J = \sum\limits_{J \subset [d]} (-1)^{|J|} \mu \left( \iota_ {F_{[d] \setminus J}}^* V\right)^{-1} \alpha ^J.
\end{equation} 
In particular, the leading coefficient is $(-1)^d$ and the constant term is $\mu(\module)^{-1}$.
\end{lemma}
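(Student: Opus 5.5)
The plan is to compute $\mu(\langle V_\alpha\rangle)^{-1}$ directly as an integral over an up-set, and then expand that integral by inclusion–exclusion over the coordinates of $\alpha$. Write $\gri=\grid(V)$ and let $C$ be the cell containing $0$. Since $V$ is generated at $0$, its generators sit at $0$, so $0\in\gri$ and $C=\prod_j[0,g_j)$ for the next grid values $g_j>0$. In particular every $\alpha\in C$ satisfies $\alpha\geq 0$.

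\emph{Step 1: identify $\langle V_\alpha\rangle$.} By \autoref{lem:same-pos-iso}, the map $V_{0\to\alpha}$ is an isomorphism, so $\dim_\K\langle V_\alpha\rangle_\alpha=\dim V_0=\thick{V}$; here I use that $V$ is generated at $0$ and therefore $\udim V$ is maximal at $0$. For $x\geq\alpha$, the map $V_{0\to x}$ is surjective because $V$ is generated at $0$, and it factors through the isomorphism $V_{0\to\alpha}$. Hence $V_{\alpha\to x}$ is surjective and $\langle V_\alpha\rangle_x=V_x$. Equivalently, $\langle V_\alpha\rangle\simeq(\iota_\alpha)_!(\iota_\alpha)^*V$, which is also the content of \autoref{prop:sur_restriction}. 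This gives
\[
\mu(\langle V_\alpha\rangle)^{-1}=\frac{1}{\thick{V}}\int_{\{x\geq\alpha\}}\udim V .
\]

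\emph{Step 2: inclusion–exclusion.} The region $\{x\geq 0\}$ is the union of the sets
\[
S_J=\{x\geq0 \mid x_j<\alpha_j \text{ iff } j\in J\}, \qquad J\subset[d],
\]
and $\{x\geq\alpha\}=S_\emptyset$. By \autoref{lem:same-pos-iso}, or directly from \autoref{prop:computation_kan_extension} applied to $\gri$, for $x\in S_J$ we have $\udim V(x)=\udim V(x')$, where $x'$ is obtained from $x$ by setting the coordinates in $J$ to $0$. The reason is that these coordinates lie in $[0,\alpha_j)\subset[0,g_j)$, so $\lfloor x\rfloor_\gri=\lfloor x'\rfloor_\gri$. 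Write $\Phi(K)\coloneqq\int_{F_{[d]\setminus K}}\udim\iota^*_{F_{[d]\setminus K}}V$ for the integral over the face where the coordinates in $K$ vanish. Then
\[
\int_{S_J}\udim V=\alpha^J\,\Phi_{\alpha}(J),
\]
where $\Phi_\alpha(J)$ denotes the face integral restricted to $x_j\geq\alpha_j$ for $j\notin J$. I would avoid this nested dependence by running Möbius inversion on the sets $T_J=\{x\geq 0\mid x_j\in[0,\alpha_j) \ \forall j\in J\}$ instead. Each satisfies $\int_{T_J}\udim V=\alpha^J\,\Phi(J)$, with no restriction on the other coordinates. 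Inclusion–exclusion then gives
\[
\int_{\{x\geq\alpha\}}\udim V=\sum_{J\subset[d]}(-1)^{|J|}\alpha^J\,\Phi(J).
\]
Dividing by $\thick{V}$ yields the first expression in (\autoref{eqn:pvalpha}), which is visibly multilinear of degree $d$.

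\emph{Step 3: identify the coefficients.} The restriction $\iota^*_{F_{[d]\setminus J}}V$ is generated at $0$ and has dimension $\thick{V}$ there. Hence $\Phi(J)/\thick{V}$ is exactly $\mu(\iota^*_{F_{[d]\setminus J}}V)^{-1}$ for the slope computed in the face $F_{[d]\setminus J}$ with its own Lebesgue measure. This gives the second expression. For $J=\emptyset$ we get the constant term $\mu(V)^{-1}$. For $J=[d]$ the face is the single point $0$ with counting measure, so $\Phi([d])=\thick{V}$ and the leading coefficient is $(-1)^d$.

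\emph{Main obstacle.} The delicate point is Step 2. I need the constancy of $\udim V$ along the $J$-coordinates to hold on the whole of $T_J$, including points whose other coordinates are far from $0$. This does hold, since $\lfloor\cdot\rfloor_\gri$ acts coordinatewise, but it must be checked carefully. I also need finiteness of the face integrals, which follows from boundedness of $\supp V$. As a cross-check, the same polynomial should come out of \autoref{prop:compute_slope} combined with \autoref{prop:grid_similar}(i): replacing the zero coordinates of each Betti degree $\gamma$ by $\alpha_j$ turns $\prod_j\gamma_j$ into a multilinear monomial in $\alpha$.
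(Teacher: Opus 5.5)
Your proof is correct, and it reaches the formula by a more direct route than the paper.

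The paper proceeds in two separate steps. It first gets polynomiality from the Betti numbers: \autoref{prop:poly_slope} combines the shifted minimal resolution of \autoref{prop:grid_similar}(i) with \autoref{prop:compute_slope}. It then identifies each coefficient as a mixed partial derivative at the corner. In \autoref{prop:variation} it uses the same inclusion--exclusion over the sets $\prod_{i\in I}[0,\alpha_i]\times\prod_{i\notin I}[0,\infty)$ that you call $T_J$, differentiates term by term with the fundamental theorem of calculus, and lets $\alpha\to 0^+$.

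You instead evaluate each inclusion--exclusion term exactly. On $T_J$ the $J$-coordinates stay inside the first cell, so $\udim V$ is constant in them and $\int_{T_J}\udim V=\alpha^J\Phi(J)$. This gives polynomiality and the coefficients at once. It also makes explicit the cell-constancy on which the paper's limit step implicitly relies. Your argument is self-contained: it needs no minimal resolutions, no length-$d$ resolution behind \autoref{prop:compute_slope}, and no differentiation. The partition into the sets $S_J$ is a false start and can simply be dropped from the write-up.

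What the paper's detour buys is generality. \autoref{prop:variation} is stated for an arbitrary weight $f\in L^1(\bbR^d)$. For such $f$, $\int\udim\langle V_\alpha\rangle f$ is no longer a polynomial and your factorization $\alpha^J\Phi(J)$ breaks down. The corner derivatives, however, are still the face integrals. The paper uses exactly this to propose the slope polynomial as a local approximation for other stability conditions.
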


We prove a more general version that can be used for a central charge where $\bbR^d$ is equipped with a measure coming from an $L^1$ function.

\begin{proposition}\label{prop:poly_slope} 
Let $\module$ be generated at $0$, $\beta$ in the first cell of $\grid(\module)$, and $f \in L^1(\bbR^d)$, then
\begin{equation}\label{eq:int_dim_f}
 \int\limits_{\bbR^d} \udim (\langle V_\beta \rangle) f  = \sum\limits_{i = 1}^{d} (-1)^{i} \sum\limits_{\gamma \in b_i(\module)} 
\int\limits_{\prod\limits_{j \not \in \supp(\gamma)}[\beta_j, \infty)} 
\int\limits_{\prod\limits_{j  \in \supp(\gamma)}[\gamma_j, \infty)}  f.
 \end{equation}
 
 If $\supp(\module)$ is bounded and $f=\mathds{1}_{\supp(\module)}$, this is a multi-linear polynomial in $\beta$.
\end{proposition}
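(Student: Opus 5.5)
The plan is to combine the resolution-shifting statement \autoref{prop:grid_similar}(i) with the alternating-sum formula \eqref{eq:dim_betti_sum}, and then integrate term by term against $f$.

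\textbf{Step 1 (shifted resolution).} Fix a minimal resolution $(A[R_i],d_i)_{i}$ of $\module=\langle V_0\rangle$. Since $\beta$ lies in the first cell of $\grid(\module)$, the cell containing $0$, \autoref{prop:grid_similar}(i) gives a minimal resolution of $\langle V_\beta\rangle$. It is obtained by replacing every coordinate $\gamma_j=0$ with $\beta_j$, so that
\[
b_i(\langle V_\beta\rangle)=b_i(\module)\vee\beta .
\]
Moreover, for $\gamma\in b_i(\module)$, the coordinates $j\in\supp(\gamma)$ satisfy $\gamma_j>\beta_j$, because they lie strictly beyond the cell. The remaining coordinates are exactly $0$ and become $\beta_j$.

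\textbf{Step 2 (indicator expansion).} Apply \eqref{eq:dim_betti_sum} pointwise to get
\[
\udim\langle V_\beta\rangle=\sum_i(-1)^i\sum_{\gamma\in b_i(\module)}\mathds 1_{\{\gamma\vee\beta\le\bullet\}} .
\]
The upset of $\gamma\vee\beta$ is the product $\prod_{j\notin\supp\gamma}[\beta_j,\infty)\times\prod_{j\in\supp\gamma}[\gamma_j,\infty)$. Multiplying by $f$ and integrating yields \eqref{eq:int_dim_f}.

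For an arbitrary $f\in L^1$, each integral over such an orthant is finite, so the exchange of sum and integral is fine because the sums are finite. I would note the index range carefully. The statement starts at $i=1$, presumably meaning the $i=0$ term (the $\thick V$ copies of the generator at $0$, whose $\supp$ is empty) is included. Otherwise I would adapt the indexing to $b_0,\dots,b_d$ as in \autoref{prop:compute_slope}.

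\textbf{Step 3 (multilinearity).} Take $f=\mathds 1_{\supp\module}$ with bounded support, and choose $B$ with $\supp\module\subset\prod_j(-\infty,B_j]$. I would not evaluate each orthant integral directly, since the integrand is not a product. Instead I would rerun the argument of \autoref{prop:compute_slope}: replace $f$ by $\mathds 1_{\prod_j(-\infty,B_j]}$. This substitution is valid because $\udim\langle V_\beta\rangle$ vanishes outside $\supp\module$ and the box contains $\supp\module$, so multiplying by $\mathds 1_{\supp\module}$ or by the box indicator gives the same integrand. Each term then becomes
\[
\prod_{j\notin\supp\gamma}(B_j-\beta_j)\prod_{j\in\supp\gamma}(B_j-\gamma_j),
\]
which is affine in each $\beta_j$ separately, so the total is multilinear in $\beta$.

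\textbf{Main obstacle.} The subtle point is exactly this replacement of $f$ by the box indicator. Individual terms are integrals over unbounded orthants, which are infinite for a constant $f$, so one must argue at the level of the whole alternating sum (the integrand) before splitting. Only the total is independent of $B$.
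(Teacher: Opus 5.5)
Your proposal is correct and is essentially the paper's proof. You shift the Betti numbers by $\vee\beta$ via \autoref{prop:grid_similar}(i), integrate the alternating indicator sum \eqref{eq:dim_betti_sum} against $f$, and get multilinearity from the argument of \autoref{prop:compute_slope}. The paper substitutes $b_i(\module)\vee\beta$ into the $B=0$ form of that formula, giving $\sum_{i}(-1)^{d+i}\sum_{\gamma\in b_i(\module)}\prod_{j\notin\supp\gamma}\beta_j\prod_{j\in\supp\gamma}\gamma_j$. You instead keep a fixed large $B$, which is equally valid provided $B$ dominates every Betti number, so each box integral really is the stated product. Your reading of the index range is also right: the paper's own chain of equalities starts at $i=0$, and the lower limit $i=1$ in its final line and in the statement is a typo.
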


\begin{proof}
By \autoref{prop:grid_similar} (i), the Betti-numbers of $\langle V_\beta \rangle$ are those of $\module$ by replacing every $0$ in the $j$-th coordinate with $\beta_j$:
\begin{align*} 
\int_{\bbR^d} \udim  \langle V_{\beta} \rangle f\ 
&\overset{\mathclap{(\ref{eq:dim_betti_sum})}}{=} \quad 
\int_{\bbR^d} \left(
\sum_{i=0}^d (-1)^i \sum_{\gamma \in b_i(\langle \module_{\beta} \rangle)} \mathds{1}_{\langle \gamma \rangle}\right)f \\
&= \quad\sum_{i=0}^d (-1)^i \sum_{\gamma \in b_i(\langle \module_{\beta} \rangle)}  \int_{\langle \gamma\rangle} f\\
 & \overset{\mathclap{\ref{prop:grid_similar}(i)}}{=}\quad  \sum\limits_{i=0}^d (-1)^i \sum_{\gamma \in b_i(\module) \vee \beta} \int_{ \langle \gamma\rangle } f \\
&= \quad\sum_{i=0}^d (-1)^i \sum_{\gamma \in b_i(\module)} \int_{\langle \gamma\vee \beta\rangle} f \\
&\overset{\mathclap{\ref{prop:grid_similar}(i)}}{=} \quad \sum\limits_{i = 1}^{d} (-1)^{i} \sum\limits_{\gamma \in b_i(\module)} 
\int\limits_{\prod\limits_{j \not \in \supp(\gamma)}[\beta_j, \infty)} 
\int\limits_{\prod\limits_{j  \in \supp(\gamma)}[\gamma_j, \infty)}  f.
 \end{align*}

If $\supp(\module)$ is bounded and $f= \mathds{1}_{\supp(\module)} \in L^1(\bbR^d)$ then by \autoref{prop:compute_slope} 

\begin{equation*} 
  \int\limits_{\bbR^d} \udim (\langle V_\beta \rangle) = \sum\limits_{i = 0}^{d} (-1)^{d+i} \sum\limits_{\gamma \in b_i(\module)} \prod\limits_{j \not\in \supp(\gamma)} \beta_j \prod\limits_{j \in \supp(\gamma)} \gamma_j, 
 \end{equation*}
 which is a multilinear polynomial in $\beta_1, \dots, \beta_d$.
\end{proof}

\begin{proposition}\label{prop:variation} Let $\module$ be generated at $0$ and $f \in L^1(\bbR^d)$. Then $\intdim \langle V_\alpha \rangle f$ is differentiable wrt. $\alpha$ on the interior of the first grid cell $C$ and
\[ \lim\limits_{\varepsilon \to^+ 0}
   \left( \frac{d^{|J|}}{d\alpha^J} \right)_{\alpha = \vec \varepsilon} \intdim \langle V_\alpha \rangle f = (-1)^{|J|}  \int\limits_{F_{[d] \setminus J}} \udim \iota_ {F_{[d] \setminus J}}^* V \]
\end{proposition}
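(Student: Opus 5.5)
The plan is to start from formula (\ref{eq:int_dim_f}) of \autoref{prop:poly_slope}. On the interior of the first cell $C$ of $\grid(\module)$ it writes $\intdim \langle V_\alpha\rangle f$ as a finite signed sum, over Betti degrees $\gamma\in b_i(\module)$, of integrals of $f$ over orthants $\langle \gamma\vee\alpha\rangle$. For $j\in\supp(\gamma)$ the lower bound is the fixed value $\gamma_j$, and for $j\notin\supp(\gamma)$ it is the variable $\alpha_j$. Each summand is therefore a function of the form
\[ g_\gamma(\alpha)=\int_{\prod_{j\notin\supp\gamma}[\alpha_j,\infty)\times\prod_{j\in\supp\gamma}[\gamma_j,\infty)} f , \]
which depends only on the coordinates $\alpha_j$ with $j\notin\supp(\gamma)$.

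Differentiability comes from Fubini together with the fundamental theorem of calculus for $L^1$ functions. Differentiating with respect to $\alpha_j$ replaces the integral over $[\alpha_j,\infty)$ by $-1$ times the evaluation of the $j$-th slice at $\alpha_j$. Iterating over $J$ gives
\[ \partial^J g_\gamma(\alpha) = (-1)^{|J|}\int_{\{x_J=\alpha_J\}\cap\langle\gamma\vee\alpha\rangle} f \]
when $J\cap\supp(\gamma)=\emptyset$, and $0$ otherwise. Literally this holds only almost everywhere in $\alpha$. To get genuine differentiability I would argue that on bounded support with $f=\mathds 1_{\supp \module}$, or more generally for $f$ continuous near the relevant faces, the slice integrals are continuous in $\alpha$. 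I expect this regularity issue to be the main obstacle, and the cleanest route is to restrict to the setting of \autoref{lem:slope_poly}, where everything is polynomial, and treat general $f$ by the same slice formula, interpreting derivatives in the a.e.\ or Lebesgue-point sense.

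Next I regroup the sum. For fixed $J$, only those $\gamma$ with $\gamma_j=0$ for all $j\in J$ survive, that is, the Betti degrees lying on the hyperplane $F_{[d]\setminus J}$. Taking $\alpha=\vec\varepsilon$ and letting $\varepsilon\to0^+$, the slice $\{x_J=\varepsilon\}$ tends to $F_{[d]\setminus J}$ and $\langle\gamma\vee\vec\varepsilon\rangle\cap F$ tends to $\langle\gamma\rangle\cap F$. So the limit is
\[ (-1)^{|J|}\sum_i(-1)^i\sum_{\gamma\in b_i(\module)\cap F_{[d]\setminus J}}\int_{F_{[d]\setminus J}}\mathds 1_{\langle\gamma\rangle} , \]
with $f$ restricted to $F_{[d]\setminus J}$; in the statement $f$ seems implicitly $1$.

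Finally I identify this alternating sum with $\udim\iota^*_{F_{[d]\setminus J}}V$ via (\ref{eq:dim_betti_sum}). Restricting a minimal resolution to the hyperplane $F_{[d]\setminus J}$, which is a down-closed face of the orthant $\langle 0\rangle$ containing all generators, is exact, since restriction $\iota^*$ is exact. Free summands $A[\gamma]$ with $\gamma\notin F$ restrict to $0$ on $F$, because $\gamma\le x\in F$ forces $\gamma_J\le 0$ and hence $\gamma_J=0$ since $\gamma\ge0$. The remaining summands restrict to free modules on $F$, so the pointwise Euler-characteristic formula on $F$ yields exactly the claimed integrand.
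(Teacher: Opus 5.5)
Your proposal is correct and takes a different route from the paper's. The paper does not use Betti numbers here. It applies \autoref{prop:sur_restriction} to write $\int\udim\langle \module_\alpha\rangle f=\int_{\langle\alpha\rangle}\udim\module\cdot f$. It then expands $\langle\alpha\rangle$ inside the positive orthant by inclusion--exclusion into slabs $\prod_{i\in I}[0,\alpha_i]\times\prod_{i\notin I}[0,\infty)$ and keeps only those with $I\supseteq J$. Next it applies the fundamental theorem of calculus to the whole integrand $\udim\module\cdot f$, recombines the resulting slices at $x_J=\alpha_J$ by inclusion--exclusion, and lets $\alpha\to 0^+$. That last step implicitly uses that $\udim\module$ is constant in the $J$-directions on the first cell.

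You instead start from the Betti expansion \eqref{eq:int_dim_f}. This buys three things: all differentiation falls on orthant integrals of $f$ alone, the vanishing of the terms with $J\cap\supp(\gamma)\neq\emptyset$ is immediate, and one formula serves both \autoref{prop:poly_slope} and the derivative computation. The price is also threefold. First, you depend on \autoref{prop:grid_similar}(i). Second, you need the extra identification step on the face, though it suffices to evaluate \eqref{eq:dim_betti_sum} at points of $F_{[d]\setminus J}$; down-closedness plays no role. Third, for $f\equiv 1$ (the case of \autoref{lem:slope_poly}) your individual orthant integrals diverge, so you must insert a box cutoff as in \autoref{prop:compute_slope}. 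The paper's slabs, by contrast, are integrated against the compactly supported $\udim\module$ (for bounded $\module$) and stay finite. Note too that \eqref{eq:int_dim_f} as printed starts at $i=1$; you need the $i=0$ generator terms, which is what its derivation actually gives.

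Your regularity caveat is justified, and it applies equally to the paper's argument, which invokes the fundamental theorem of calculus without comment. For general $f\in L^1$ the function is only differentiable almost everywhere. Already for $d=1$, $\module=\K[[0,2)]$ and $f=\mathds 1_{[1,2]}$, it equals $\min(1,2-\alpha)$, which has a kink at $\alpha=1$ in the interior of the first cell. Moreover, the restriction of $f$ to $F_{[d]\setminus J}$ is not defined for an arbitrary $L^1$ function. One therefore needs some continuity of $f$, or $f\equiv 1$ with bounded support. The right-hand side should also carry $f|_{F_{[d]\setminus J}}$, as in the last line of the paper's proof, rather than dropping $f$ as the statement does.
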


 The upshot is that although the inverse slope is not a polynomial in this general case, in a small neighbourhood of $0$ we can still approximate it by the slope polynomial (cf \textbf{Open questions} in \autoref{sec:conclusion}), this is especially true if $f$ is a step function.

\begin{proof}

 $\udim { \module }$ is supported on the positive orthant $\bbR_+^d$.  Let $\alpha \in \bbR^d_+$, then for every $I \subset [d]$, we get a subset $\prod_{i \in I}[0,\alpha_i] \times \prod_{i \not \in I }[0, \infty) \subset \bbR_+^d$ which can be used to describe the set $\langle \alpha\rangle $ via inclusion-exclusion:

For any positive $L^1$ function $f$ we have (\autoref{fig:proof}) 

\begin{equation}\label{eq:ex_in} \int_{ \langle \alpha\rangle} f = \sum\limits_{I \subset [d]} (-1)^{|I|}  \int\limits_{\prod_{i \in I}[0,\alpha_i]} \int\limits_{\prod_{i \not \in I } [0, \infty)} f. 
\end{equation}

\begin{figure}[H]
\raisebox{-\height}{\definecolor{lightred}{rgb}{1, 0.5, 0.5}
\definecolor{darkred}{rgb}{0.8, 0.3, 0.4}
\definecolor{darkerred}{rgb}{0.7,0,0}
\definecolor{lightblue}{rgb}{0.4, 0.5, 1}
\definecolor{darkblue}{rgb}{0.3, 0.3, 0.8}
\definecolor{mygray}{rgb}{0.3,0.3,0.3}

\newcommand{\bbracks}[1]{\textcolor{lightblue}{\{} #1 \textcolor{lightblue}{\}} }
\newcommand{\gbracks}[1]{\textcolor{lightred}{(} #1 \textcolor{lightred}{)} }
\newcommand{\lrcell}{\cellcolor{lightred}}
\newcommand{\pblcell}{\cellcolor{lightblue}}

\begin{tikzpicture}[scale=1.5]
    % Axes
    \draw[->, thick] (-0.25,0) -- (3.25,0) node[below] {$x$};
    \draw[->, thick] (0,-.25) -- (0,3.25) node[ left] {$y$};
    % Axis labels
    \foreach \x in {0,1,2,3} {
        \node at (\x, -0.25) [below] {\x};
        \draw (\x,-0.05)--(\x,0.05);
    }
    \foreach \y in {0,1,2,3} {
        \node at (-0.25, \y) [left] {\y};
        \draw (-0.05,\y)--(0.05,\y);
    }
    \pgfmathsetmacro{\alphax}{0.7}
    \pgfmathsetmacro{\alphay}{0.5}

    % Coordinates
    \node at (-0.1, \alphay) [left] {$\alpha_2$};
    \node at (\alphax, -0.1) [below] {$\alpha_1$};

    % Darker blue region: [0,2)^2 minus [1,2)^2
    \fill[darkblue, opacity=0.8] (0,0) rectangle (2,2);
    \fill[white] (1,1) rectangle (2,2);

    % Lighter blue regions:
    \fill[lightblue, opacity=0.8] (0,2) rectangle (1,3);
    \fill[lightblue, opacity=0.8] (1,1) rectangle (2,2);
    \fill[lightblue, opacity=0.8] (2,0) rectangle (3,1);

    % Gray
    \fill[pattern=north east lines, opacity=0.8] (0.09,0.09) rectangle (\alphax-0.02,\alphay-0.02);

    \draw[darkerred, very thick](\alphax,3.5) -- (\alphax,0.025) -- (0.025,0.025) -- (0.025,3.5)--cycle;
    
    \draw[darkerred,dashed,very thick] (3.5,0.06) -- (0.07,0.06) -- (0.07,\alphay) --  (3.5,\alphay)--cycle;

    \node at (\alphax+0.2,\alphay+0.15) [black] {$\alpha$};
    \draw[black,very thick] (3.5,\alphay+0.05) -- (\alphax+0.05,\alphay+ 0.03) -- (\alphax+0.05,3.5) -- (3.5,3.5) -- cycle;

    %cell
    \draw[mygray, opacity=0.8] (0,1)--(1,1)--(1,0);

\coordinate (legend-blue) at (-4,2.5);

\begin{scope}[shift={(legend-blue)}]

    \fill[lightblue,opacity=0.8] (0,0.2) rectangle (0.2,0.4);
    \fill[darkblue,opacity=0.8]  (0,0.4) rectangle (0.2,0.6);
    \draw (0,0) rectangle (0.2,0.6);
    
    \node[anchor=west] at (-0.25,0.8) {$\udim V$:};
    \node at (0.35,0.1){0};
     \node at (0.35,0.3){1};
      \node at (0.35,0.5){2};

\end{scope}

\coordinate (legend-rect) at (-4,0.25);

\begin{scope}[shift={(legend-rect)}]

\node[anchor=west] at (-0.25,1.7) {Integrals of $\udim V$:};

        \fill[pattern=north east lines, very thick] (0,0) rectangle (0.2,0.2);
            \draw[very thick,red,dashed] (0,0.4) rectangle (0.2,0.6);
                \draw[very thick,red] (0,0.8) rectangle (0.2,1);
                    \draw[very thick] (0,1.2) rectangle (0.2,1.4);

    \node[ anchor=west] at (0.3,1.3){$p_V(\alpha)$};
    \node[ anchor=west] at (0.3,0.9){$ \left( \int_\bbR  \udim (\iota^*_{\{0\} \times \mathbb{R}}V) \right) \cdot \alpha_1  $};
   \node[ anchor=west] at (0.3,0.5){$ \left( \int_\bbR \udim ( \iota^*_{  \mathbb{R} \times \{0\} }V) \right) \cdot \alpha_2 $};
  \node[ anchor=north west,align=left] at (0.3,0.3)
 {$ \left( \int_{\{0\}} \udim (\iota^*_{\{0\} \times \{0\}}V)  \right) \cdot  \alpha_1  \alpha_2$ \\[2pt] $
    = \mathfrak{t}(V) \cdot  \alpha_1  \alpha_2  $};
\end{scope}

\end{tikzpicture}}
\caption{Continuing \autoref{ex:discrete_module}, (\autoref{eqn:pvalpha}) is obtained  for $\alpha\in[0,1)^2$ by exclusion-inclusion of the red and black rectangles. On the red rectangles, $\module$ is constant in one direction.}
\label{fig:proof}
\end{figure}

We use the fundamental theorem of calculus.
 \begin{equation*}
\begin{aligned}
   \left( \frac{d^{|J|}}{d\alpha^J} \right)_{\!\!\alpha}\  \int\limits_{\bbR^d} \udim \langle V_\alpha \rangle f \:
  & \overset{ \mathclap{\ref{prop:sur_restriction}}}{=} \quad 
    \left( \frac{d^{|J|}}{d\alpha^J} \right)_{\!\!\alpha}\  \ 
    \int\limits_{ \prod\limits_{j \in [d]} [\alpha_j, \infty)} \udim \module f  \\
  & \overset{\mathclap{(\ref{eq:ex_in})}}{=}  \quad 
   \sum\limits_{I \subset [d]} (-1)^{|I|} 
   \left( \frac{d^{|J|}}{d\alpha^J} \right)_{\!\!\alpha}\ 
   \int\limits_{\prod\limits_{i \in I}[0,\alpha_i]}
   \int\limits_{F_{[d] \setminus I}} \udim \module f 
\\
   &\hspace{-5em}\left( \text{If } I \cap J \neq \emptyset \text{ and say } j \in J \setminus I, \text{ then the integral is constant wrt. } \alpha_j \right)
\\
& =  \quad
\sum_{I \supset J} (-1)^{|I|} 
\left( \frac{d^{|J|}}{d\alpha^J} \right)_{\!\!\alpha}\ 
\int\limits_{\prod_{i \in I}[0,\alpha_i]}
\int\limits_{F_{[d] \setminus I}} \udim \module f
   \\
  &=\quad 
   \sum\limits_{I \supset J} (-1)^{|I|} 
   \int\limits_{\prod\limits_{i \in I \setminus J}[0,\alpha_i]}
   \int\limits_{F_{[d] \setminus I}} 
   \ \left( \udim \module f \right)_{| F_{[d] \setminus J} \times \{\alpha_j\}_{j \in J}}  \\
   &=\quad
   \sum\limits_{I \subset \left( [d] \setminus J\right)} (-1)^{|I|+|J|} 
   \int\limits_{\prod\limits_{i \in I}[0,\alpha_i]}
   \int\limits_{F_{([d]\setminus J) \setminus I}} 
   \ \left( \udim \module f \right)_{| F_{[d] \setminus J} \times \{\alpha_j\}_{j \in J}}  \\
  &  \overset{\mathclap{(\ref{eq:ex_in})}}{=} \quad  (-1)^{|J|} 
   \int\limits_{\prod\limits_{i \in [d] \setminus J}[0,\alpha_j]}
   \left(  \udim \module f \right)_{| F_{[d] \setminus J} \times \{\alpha_j\}_{j \in J}}.
\end{aligned}
\end{equation*}
Hence,
\[
\lim\limits_{\vec \varepsilon \to^+ 0}
   \left( \frac{d^{|J|}}{d\alpha^J} \right)_{\!\alpha = \vec \varepsilon}\ \int\limits_{\bbR^d} \udim \langle V_\alpha \rangle f = 
 \quad  (-1)^{|J|}  \int\limits_{F_{[d] \setminus J}} \left( \udim V f \right)_{| F_{[d] \setminus J} }. \qedhere
\]
\end{proof}

\begin{proof}[Proof of \autoref{lem:slope_poly}]
See \autoref{fig:alpha_change} for an illustration in the case $d=2$. By \autoref{prop:poly_slope}, the inverse slope is a  degree-$d$ multilinear polynomial $p_V$. The coefficient of $\alpha^J$ is therefore given by $\left( \frac{d^{|J|}}{d\alpha^J} \right)_{\alpha = 0} p_V(\alpha)$. Since $p_V(\alpha)$ equals the inverse slope only on the positive orthant, we can pass to the inverse slope once we replace $0$ with any sequence in $\bbR^d_+$ that converges to $0$. Using \autoref{prop:variation}, we have

\begin{align*}
\left( \frac{d^{|J|}}{d\alpha^J} \right)_{\alpha = 0} p_V(\alpha) \ 
&= \quad \lim\limits_{\varepsilon \to^+0} \left( \frac{d^{|J|}}{d\alpha^J} \right)_{\alpha = \vec \varepsilon} p_V(\alpha) \\
& \overset{\mathclap{\text{definition}}}{=} \quad \lim\limits_{\varepsilon \to^+0} \left( \frac{d^{|J|}}{d\alpha^J} \right)_{\alpha = \vec \varepsilon} \mu \left( \langle V_\alpha \rangle \right)^{-1} \\
&= \quad \lim\limits_{\vec \varepsilon \to^+ 0}
   \left( \frac{d^{|J|}}{d\alpha^J} \right)_{\alpha = \vec \varepsilon} \frac{ \intdim \langle V_\alpha \rangle }{\thick{V}}\\
&=  \quad\frac{(-1)^{|J|}}{\thick{V}}  \int\limits_{F_{[d] \setminus J}} \udim (\iota_{F_{[d] \setminus J}}^* V )  \\
&= \quad (-1)^{|J|} \mu \left( \iota_ {F_{[d] \setminus J}}^* V\right)^{-1}.\qedhere
\end{align*}
\end{proof}

\begin{observation}\label{obs:lower_envelope}
To find the submodule of highest slope for all $\alpha$ in a grid cell $C$ at the same time, compute the lower envelope of the slope polynomials of all non-isomorphic submodules. Since they have the same leading term $(-1)^d \alpha^{[d]}$ we can omit it for the computation of this diagram. We call this the \emph{truncated} slope polynomial and  denoted by $\tilde p_V$. 
\end{observation}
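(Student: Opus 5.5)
The plan is to reduce the pointwise question ``which submodule of $\langle V_\alpha\rangle$ has highest slope?'' to comparing finitely many polynomials in $\alpha$. The two inputs are \autoref{lem:slope_poly} and the transport results \autoref{prop:sur_restriction} and \autoref{prop:grid_similar}. Throughout, I assume without loss of generality that $V$ is uniquely generated at $0$ and that $C$ is the cell of $\grid(V)$ containing $0$, as in \autoref{lem:slope_poly}.

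\textbf{Step 1: one family of candidates for every $\alpha\in C$.} By \autoref{lem:same-pos-iso}, $V_{0\to\alpha}$ is an isomorphism for $\alpha\in C$. So every subspace of $V_\alpha$ is $V_{0\to\alpha}(U)$ for a unique $U\subset V_0$. Applying \autoref{prop:sur_restriction} to the submodule $\langle U\rangle$, whose structure map $\langle U\rangle_{0\to\alpha}$ is surjective, gives
\[
\langle V_{0\to\alpha}(U)\rangle = \langle \langle U\rangle_\alpha\rangle \iso (\iota_\alpha)_!(\iota_\alpha)^*\langle U\rangle .
\]
Since semistability at $\alpha$ only needs to be tested on submodules induced by subspaces of $V_\alpha$ (\autoref{subsec:HN-formal}), the candidates at every $\alpha\in C$ are exactly the modules $\langle\langle U\rangle_\alpha\rangle$ for $U\subset V_0$. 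This family is independent of $\alpha$.

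\textbf{Step 2: each candidate's inverse slope is a polynomial on all of $C$.} I want to apply \autoref{lem:slope_poly} to $W=\langle U\rangle$ on the whole of $C$. This is the step I expect to be the main obstacle, since the lemma only holds on the cell of $\grid(W)$ containing $0$. So I must show this cell contains $C$. I would argue that the Betti numbers of $\langle U\rangle$ have coordinates among those of $\grid(V)$, together with $0$. \autoref{alg:submodule} presents $\langle U\rangle$ as the kernel of a graded matrix whose row and column degrees lie in $\grid(V)$. Syzygy degrees of such matrices are joins of the input degrees, which is the standard lcm-lattice argument. A join takes coordinatewise maxima, so no new coordinate values appear. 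Hence $\grid(\langle U\rangle)\subset\grid(V)$ and the cell inclusion follows. With this, \autoref{lem:slope_poly} gives $\mu(\langle\langle U\rangle_\alpha\rangle)^{-1}=p_{\langle U\rangle}(\alpha)$ for all $\alpha\in C$.

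\textbf{Step 3: lower envelope and truncation.} Since slopes are positive, maximising $\mu$ is the same as minimising $p_{\langle U\rangle}(\alpha)$. Two isomorphic submodules have the same Hilbert function, hence by \autoref{prop:compute_slope} the same polynomial. Over $\bbF_q$ there are only finitely many $U$, and even over an infinite field there are only finitely many distinct Hilbert functions. So the set $\{p_{\langle U\rangle}\}$ is finite. Its lower envelope gives, for each $\alpha\in C$, the maximal slope value at $\alpha$ together with the set of minimisers. The sum of the minimisers is the maximal highest-slope submodule, i.e.\ $\langle F^1\rangle$.

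By \autoref{lem:slope_poly}, the coefficient of $\alpha^{[d]}$ in every $p_{\langle U\rangle}$ is $(-1)^d$, because $\mu(\iota_{\{0\}}^*\langle U\rangle)^{-1}=\dim U/\dim U=1$. Subtracting this common term $(-1)^d\alpha^{[d]}$ from all polynomials shifts every value at $\alpha$ by the same amount. It therefore changes neither the minimisers nor the minimisation diagram. So one may work with the truncated polynomials $\tilde p_{\langle U\rangle}$, which have degree $d-1$.
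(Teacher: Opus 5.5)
Your proposal is correct and follows essentially the same route as the paper's own justification, which appears in the proof of \autoref{prop:max_slope_proof}: transport subspaces of $V_\beta$ back to $V_0$ via the isomorphism of \autoref{lem:same-pos-iso}, identify the candidate submodules via \autoref{prop:sur_restriction}, read off their inverse slopes from \autoref{lem:slope_poly}, and drop the common leading term $(-1)^d\alpha^{[d]}$. Your Step 2, which checks that $\grid(\langle U\rangle)\subseteq\grid(V)$ so that \autoref{lem:slope_poly} applies to each $\langle U\rangle$ on all of $C$, makes explicit a hypothesis that the paper uses without comment.
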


\subsection{Computing Arrangements for the Highest Slope Submodule.}

 For the important case of $d=2$, \autoref{obs:lower_envelope} tells us that the regions with equivalent highest slope submodules form the minimisation diagram of the lower envelope of a set of \emph{planes} in $\bbR^3$. This subdivision is therefore convex-polygonal and it can be computed efficiently as the upper convex hull of its dual point set \cite[14.2]{deBerg2008}.

\begin{example}
We first perform the calculation by hand on the stable module $\module$ from \autoref{ex:stable_module}, as indicated by \autoref{fig:proof}.
Recall that there were 4 sub vector spaces of $\module_0 = \K^2$ to consider, regardless of the choice of field.
Using the formula \autoref{lem:slope_poly} for the slope polynomials, we only need to compute the slope of the 1-parameter modules given by the restrictions to the $x$- and $y$-axis. We draw the barcodes of the restrictions:

\begin{align*}
  U_1 \coloneqq \left\{ \begin{pmatrix} 1 \\ 0\end{pmatrix} \begin{pmatrix} 0 \\ 1\end{pmatrix} \right \} \colon &
  \quad \iota_{F_{\{1\}}}^* \langle U_1 \rangle \simeq \begin{tikzpicture}[scale=0.6]
  \draw[thick] (0,0.3) -- (2,0.3);
  \draw[thick] (0,0)   -- (3,0);
\end{tikzpicture} & \quad  &
\iota_{F_{\{2\}}}^* \langle U_1 \rangle \simeq \begin{tikzpicture}[scale=0.6]
  \draw[thick] (0,0.3) -- (2,0.3);
  \draw[thick] (0,0)   -- (3,0);
\end{tikzpicture}
\\
U_2 \coloneqq \left\{ \begin{pmatrix} 1 \\ -1\end{pmatrix} \right \} \colon & \quad
\iota_{F_{\{1\}}}^* \langle U_2 \rangle \simeq \begin{tikzpicture}[scale=0.6]
  \draw[thick] (0,0) -- (3,0);
\end{tikzpicture} & \quad &
\iota_{F_{\{2\}}}^* \langle U_2 \rangle \simeq  \begin{tikzpicture}[scale=0.6]
  \draw[thick] (0,0) -- (3,0);
\end{tikzpicture}
\\
U_3 \coloneqq \left\{ \begin{pmatrix} 0 \\ 1\end{pmatrix} \right \} \colon & \quad
\iota_{F_{\{1\}}}^* \langle U_3 \rangle \simeq \begin{tikzpicture}[scale=0.6]
  \draw[thick] (0,0) -- (2,0);
\end{tikzpicture} & \quad &
\iota_{F_{\{2\}}}^* \langle U_3 \rangle \simeq \begin{tikzpicture}[scale=0.6]
  \draw[thick] (0,0) -- (3,0);
\end{tikzpicture} 
\\
 U_4 \coloneqq \left\{ \begin{pmatrix} 1 \\ 0\end{pmatrix} \right \} \colon & \quad
 \iota_{F_{\{1\}}}^* \langle U_4 \rangle \simeq  \begin{tikzpicture}[scale=0.6]
  \draw[thick] (0,0) -- (3,0);
\end{tikzpicture} & \quad &
\iota_{F_{\{2\}}}^* \langle U_4 \rangle \simeq \begin{tikzpicture}[scale=0.6]
  \draw[thick] (0,0) -- (2,0);
\end{tikzpicture}
\end{align*}
The sum of the lengths of the bars gives the coefficients of the polynomials and we draw the minimisation diagram. It will already determine the complete HN filtration because the module has thickness $2$.
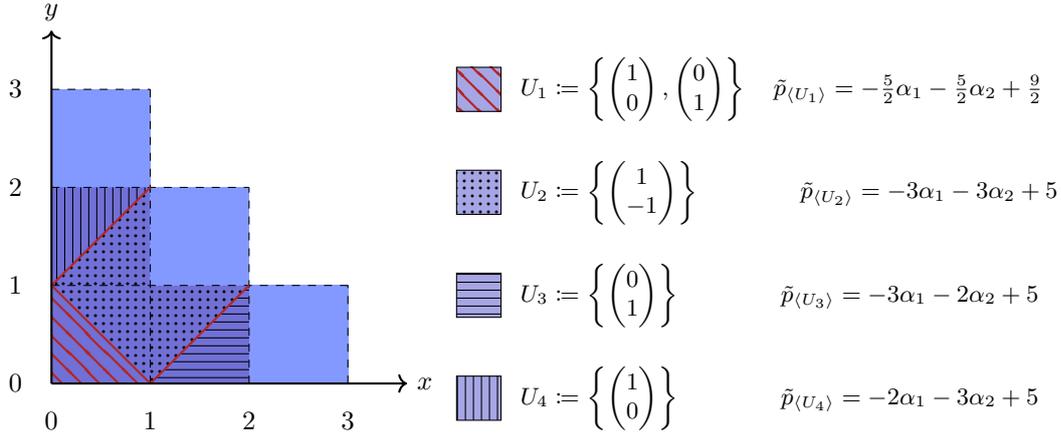
\begin{figure}[H]
\definecolor{lightred}{rgb}{1, 0.5, 0.5}
\definecolor{lightblue}{rgb}{0.4, 0.5, 1}
\definecolor{darkblue}{rgb}{0.3, 0.3, 0.8}
\definecolor{darkred}{rgb}{0.7, 0.15, 0.15}

\begin{tikzpicture}[scale=1.3]
    % ---- Base colored regions ----
    \fill[darkblue, opacity=0.8] (0,0) rectangle (2,2);
    \fill[white] (1,1) rectangle (2,2);
    \fill[lightblue, opacity=0.8] (0,2) rectangle (1,3);
    \fill[lightblue, opacity=0.8] (1,1) rectangle (2,2);
    \fill[lightblue, opacity=0.8] (2,0) rectangle (3,1);

    % ---- Pattern overlays ----
    % (A) Red thick NW-hatch (\): corner triangle at origin
    \fill[pattern=thick nw lines, pattern color=darkred]
        (0,0) -- (1,0) -- (0,1) -- cycle;

    % (B) Black dots: only darkblue part of diamond (exclude lightblue triangle (1,1)-(2,1)-(1,2))
    \fill[pattern=dots, pattern color=black]
        (0,1) -- (1,0) -- (2,1) -- (1,1) -- (1,2) -- cycle;

    % (C) Black horizontal lines: bottom-right triangle, basis {(0,1)}
    \fill[pattern=horizontal lines, pattern color=black]
        (1,0) -- (2,0) -- (2,1) -- cycle;

    % (D) Black vertical lines: upper-left triangle, basis {(1,0)}
    \fill[pattern=vertical lines, pattern color=black]
        (0,1) -- (0,2) -- (1,2) -- cycle;

    % ---- Dashed grid lines ----
    \draw[dashed, black] (0,1) -- (3,1);
    \draw[dashed, black] (0,2) -- (2,2);
    \draw[dashed, black] (0,3) -- (1,3);
    \draw[dashed, black] (1,0) -- (1,3);
    \draw[dashed, black] (2,0) -- (2,2);
    \draw[dashed, black] (3,0) -- (3,1);

    % ---- Diagonal lines in darkred ----
    \draw[darkred, thick] (1,0) -- (0,1);
    \draw[darkred, thick] (1,0) -- (2,1);
    \draw[darkred, thick] (0,1) -- (1,2);

    % ---- Axes ----
    \draw[->, thick] (0,0) -- (3.6,0) node[right] {$x$};
    \draw[->, thick] (0,0) -- (0,3.6) node[above] {$y$};
    \foreach \x in {0,1,2,3} { \node at (\x,-0.2) [below] {\x}; }
    \foreach \y in {0,1,2,3} { \node at (-0.2,\y) [left]  {\y}; }

    % ---- Legend ----
    % (A) red thick NW-hatch: basis {e1, e2}
    \fill[darkblue, opacity=0.5] (4.1,2.775) rectangle (4.55,3.225);
    \fill[pattern=thick nw lines, pattern color=darkred]
        (4.1,2.775) rectangle (4.55,3.225);
    \draw (4.1,2.775) rectangle (4.55,3.225);
    \node[right, font=\small] at (4.65,3.0)
        {$U_1 \coloneqq \left\{\begin{pmatrix}1\\0\end{pmatrix},\begin{pmatrix}0\\1\end{pmatrix}\right\} \quad  \tilde p_{\langle U_1 \rangle} = - \frac{5}{2} \alpha_1 - \frac{5}{2}  \alpha_2 + \frac{9}{2}  $};

    % (B) dots: basis {(1,1)}
    \fill[darkblue, opacity=0.5] (4.1,1.725) rectangle (4.55,2.175);
    \fill[pattern=dots, pattern color=black]
        (4.1,1.725) rectangle (4.55,2.175);
    \draw (4.1,1.725) rectangle (4.55,2.175);
    \node[right, font=\small] at (4.65,1.95)
        {$U_2 \coloneqq\left\{\begin{pmatrix}1\\-1\end{pmatrix}\right\} \quad \quad \quad  \quad \tilde p_{\langle U_2 \rangle} = - 3 \alpha_1 - 3 \alpha_2 + 5
        $};

    % (C) horizontal lines: basis {(0,1)}
    \fill[darkblue, opacity=0.5] (4.1,0.675) rectangle (4.55,1.125);
    \fill[pattern=horizontal lines, pattern color=black]
        (4.1,0.675) rectangle (4.55,1.125);
    \draw (4.1,0.675) rectangle (4.55,1.125);
    \node[right, font=\small] at (4.65,0.9)
        {$U_3 \coloneqq \left\{\begin{pmatrix}0\\1\end{pmatrix}\right\} \quad \quad \quad  \quad\tilde p_{\langle U_3 \rangle} = - 3 \alpha_1 - 2 \alpha_2 + 5$};

    % (D) vertical lines: basis {(1,0)}
    \fill[darkblue, opacity=0.5] (4.1,-0.375) rectangle (4.55,0.075);
    \fill[pattern=vertical lines, pattern color=black]
        (4.1,-0.375) rectangle (4.55,0.075);
    \draw (4.1,-0.375) rectangle (4.55,0.075);
    \node[right, font=\small] at (4.65,-0.15)
        {$U_4 \coloneqq \left\{\begin{pmatrix}1\\0\end{pmatrix}\right\} \quad \quad \quad  \quad\tilde p_{\langle U_4 \rangle} = - 2 \alpha_1 - 3 \alpha_2 + 5$};

\end{tikzpicture}
\caption{Left: The truncated slope polynomials and their minimisation diagram overlaid on the dimension-$2$ part of the Hilbert function of $\module$.}
\label{fig:slope_subdivision}
\end{figure}

This example reveals a small but important feature: At the intersection of $k\geq 2$ regions -- i.e. on lines and points -- corresponding to modules $U_1,\dots U_k$, the \emph{largest} submodule of highest slope should be $U\coloneqq U_1+\dots+U_k$. In our algorithms, we did not specify what to do if a point lands exactly on a boundary between regions. A brief calculation shows that it does not matter if we choose $U_1$ instead of $U$. Indeed, \autoref{lmm:see-saw} ensures that $U/U_1$ is the first module in the HN filtration of $V/U_1$ and has the same slope as $U$.
\end{example}

We write the preceding procedure into an algorithm over $\bbR^2$
and remark that it works just as well for higher $d$ by replacing the appropriate formulas, but computing the minimisation diagrams becomes much more time intensive and floating point arithmetic will necessarily produce larger errors when the degree of the polynomials increases.

\begin{algorithm}[H]
\caption{Highest Slope Submodule in a Grid Cell.}
\label{alg:all_max_slope}
\DontPrintSemicolon
\KwIn{A graded matrix $M$ presenting $V$ uniquely generated at $\alpha$; $C \subset \bbR^2$}
\KwOut{A convex polygonal subdivision $\{S_i \}_{i \in I}$ of $C$. For each $i \in I$ a sub vector space $W \subset V_\alpha$ and the slope polynomial $p_W$.}
 \texttt{slope-polynomials} $ \in \left(  \{\text{Finite Subsets of } V_\alpha\} \times \bbR^3 \right)[ \ ] \gets  \text{ empty array }$\;
\For{sub vector space $W \subset V_\alpha$}{
    $N \gets $ \autoref{alg:submodule}$(M, B)$, for some basis $B \subset W$\;
    $K \in \K^{R \times S} \gets \ker N $ (\textsc{mpfree}), $b_0(\langle W \rangle) \gets (\deg(B)), \ b_1(\langle W \rangle) \gets R, \ b_2(\langle W \rangle) \gets S$ \;
    $\tilde p_B \gets \frac{1}{|B|} \left( \intdim (\iota^*_{ \{0\} \times \bbR}V)x_1 +\intdim(\iota_{ \bbR \times \{0\}}^*V) x_2 + \intdim \module\right) $ with \autoref{prop:compute_slope}\;
    $\texttt{slope-polynomials}.\text{append}\left(B, \tilde p_B \right)$\;
}
\texttt{subdivision} $\{S_i\}_{i \in I} \gets $ 
\textsc{LowerEnvelope}( \texttt{slope-polynomials}) \;
$\forall i \in I \colon S_i \gets S_i \cap C $; $J \coloneqq \{i \in I \mid S_i \neq \emptyset \}$\;

\Return $\{(S_j,\texttt{slope-polynomials}_{j} )\}_{j \in J}$\;
\end{algorithm}

In the last line, $\texttt{slope-polynomials}_{j}$
is the pair $(B, \tilde p_{B})$ such that $ \tilde p_{B}$ is minimal on $S_j$.

\begin{example}
The thickness-$6$ module from \autoref{tab:combined_exp} produces the following subdivision for its first cell.

\begin{figure}[H]
\begin{minipage}[t]{0.4\textwidth}
    \centering
    \tikz[remember picture, baseline=(leftimg.south)]{
        \node[inner sep=0pt] (leftimg)
            {\includegraphics[width=1.15\textwidth]{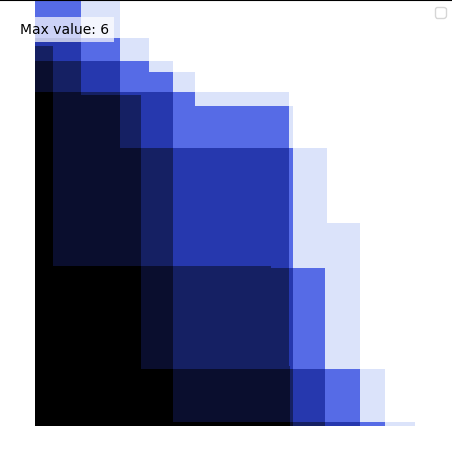}};
    }
\end{minipage}
\hfill
\begin{minipage}[t]{0.51\textwidth}
    \centering
    \tikz[remember picture, baseline=(rightimg.south)]{
        \node[inner sep=0pt] (rightimg)
            {\includegraphics[width=\textwidth]{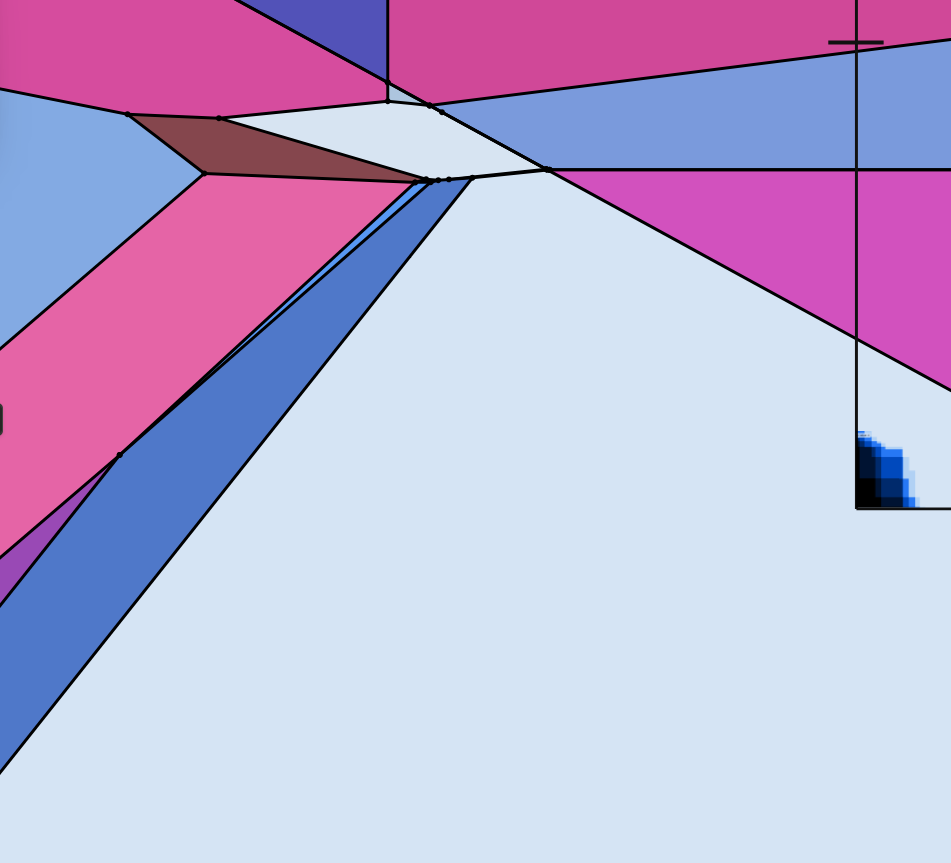}};
    }
\end{minipage}
\begin{tikzpicture}[remember picture, overlay]
    \draw[dashed, thick, opacity=0.4, dash pattern=on 8pt off 4pt] ([xshift= 0.6cm]leftimg.north west) -- ([yshift= 0.1cm, xshift=-0.7cm]rightimg.east);
    \draw[dashed, thick, opacity=0.4, dash pattern=on 8pt off 4pt] ([xshift= -0.6cm, yshift=0.6cm]leftimg.south east) -- ([yshift=-0.75cm,  xshift=-0.35cm]rightimg.east);
\end{tikzpicture}
\caption{The minimisation diagram overlaid on the Hilbert function}
\label{fig:torus_indecomp}
\end{figure}

\end{example}

\begin{remark}\label{rem:subdivision_size}
We ran \autoref{alg:all_max_slope} for all modules used in \autoref{tab:combined_exp}. In each case, as in \autoref{fig:torus_indecomp}, the first grid cell lies entirely in one face of the subdivision. This means that the subdivision will be just $\{C\}$ together with one slope polynomial.
\end{remark}

\begin{proposition}\label{prop:max_slope_proof}
\autoref{alg:all_max_slope} runs in expected time $\Oc(\thick{\module}^3 2^{\thick{\module}^2/4 + \Oc(\thick{\module})})$.
Let $\{(S_j, B_j, p_j)\}_{j \in J}$ be its output. If $C$ is contained in the grid cell of $\langle V_\alpha \rangle$ with lower left corner $\alpha$, then for any $j$ and $\beta \in S_j$, $\langle \langle B_j \rangle_\beta \rangle \subset \langle V_\beta \rangle$ is the submodule of highest slope at $\beta$ and its slope is $p_j(\beta)^{-1}$.
\end{proposition}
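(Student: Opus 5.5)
The proof splits into a runtime count and a correctness argument. Correctness comes first, since the runtime is essentially a loop count. Translate so that $\alpha = 0$ and $V = \langle V_0\rangle$, and fix $\beta \in C$. By \autoref{prop:sur_restriction} (applicable because $V_{0\to\beta}$ is surjective, by \autoref{lem:same-pos-iso}), we have $\langle V_\beta\rangle \simeq (\iota_\beta)_!(\iota_\beta)^* V$.

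The first step is to show that every submodule of $\langle V_\beta\rangle$ generated at $\beta$ arises from a subspace of $V_0$. If $W' \subset V_\beta$, then $W = V_{0\to\beta}^{-1}(W')$, or any complement of the kernel inside it, gives a subspace $W \subset V_0$ with $\langle \langle W\rangle_\beta\rangle = \langle W'\rangle$. We may restrict to submodules generated at $\beta$ by the remark before \autoref{prop:HN-formal}: $\mu(X) \le \mu(\langle X_\beta\rangle)$, so the highest slope submodule is induced by a subspace of $V_\beta$. Hence the maximum of $\mu(\langle W'\rangle)$ over $W' \subset V_\beta$ equals the maximum over $W \subset V_0$ of $\mu(\langle\langle W\rangle_\beta\rangle)$, and these are exactly the candidates the loop enumerates.

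The second step is to show that for each $W$, the function $\beta \mapsto \mu(\langle\langle W\rangle_\beta\rangle)^{-1}$ is the slope polynomial $p_{\langle W\rangle}(\beta)$ on $C$. This requires $C$ to lie in the first grid cell of $\grid(\langle W\rangle)$, not only of $\grid(V)$. I expect this to be the main obstacle. I would argue it as follows. The Betti numbers of $\langle W\rangle$ are computed by \autoref{alg:submodule} as a kernel of $[M_S\ M]$. Their coordinates other than $0$ are joins of coordinates of Betti numbers of $V$, so each coordinate is either $0$ or at least the corresponding grid coordinate of $V$. Therefore $\grid(\langle W\rangle)$ has no coordinates strictly inside the first cell of $\grid(V)$, and \autoref{lem:slope_poly} applies on $C$. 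A subtlety to watch is the dimension at $\beta$: $\dim \langle W\rangle_\beta = \dim W$ only if $W$ injects into $V_\beta$. I would handle this by noting that non-injective choices induce the same module as a smaller $W$, and that they are dominated in the lower envelope.

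Given the polynomials, the leading term $\alpha_1\alpha_2$ is common to all of them after normalisation (\autoref{obs:lower_envelope}). Minimising the inverse slope is therefore equivalent to minimising the truncated affine functions $\tilde p$. The lower envelope returns, for each $S_j$ and each $\beta \in S_j$, the minimiser $B_j$, so $\mu(\langle\langle B_j\rangle_\beta\rangle) = p_j(\beta)^{-1}$ is maximal. On boundaries, ties are resolved by the see-saw remark after \autoref{fig:slope_subdivision}.

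For the runtime, the loop runs over all $\sum_i \binom{\thick V}{i}_2 = 2^{\thick V^2/4+\Oc(\thick V)}$ subspaces (taking the field $\bbF_2$, with general $q$ analogous). Each iteration does one kernel computation with \textsc{mpfree} on a matrix with $\thick V$ rows plus relations, costing $\Oc(\thick V^3)$ in expectation, followed by constant-size arithmetic for the coefficients via \autoref{prop:compute_slope}. The lower envelope of $m$ planes costs $\Oc(m\log m)$ by the dual convex-hull computation, and $\log m = \Oc(\thick V^2)$ is absorbed into the exponent. Clipping to $C$ is linear in the number of faces. Summing these gives the stated bound.
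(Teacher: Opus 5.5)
Your proposal is correct and follows the paper's proof essentially step for step. $V_{\alpha\to\beta}$ is an isomorphism by \autoref{lem:same-pos-iso}. Every $U\subset V_\beta$ equals $\langle\langle W\rangle_\beta\rangle$ with $W=V_{\alpha\to\beta}^{-1}(U)$. \autoref{lem:slope_poly} turns the inverse slopes into the enumerated polynomials, and the lower envelope picks the maximiser. The runtime is the subspace count times a cubic kernel computation, plus an $m\log m$ envelope.

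Your explicit check that $C$ lies in the first cell of $\grid(\langle W\rangle)$ fills in a step the paper uses silently. You argue it via kernel degrees being joins of column degrees of a minimal presentation.

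Your injectivity worry is vacuous, because $V_{0\to\beta}$ is an isomorphism. Were it not vacuous, your ``dominated'' argument would go the wrong way: normalising by $\dim W$ instead of the true dimension overstates the slope, so such candidates would win the envelope rather than lose it.
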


\begin{proof}
 The runtime is the same as for \autoref{alg:brute_force}, except that kernel computation is in $\thick{\module}^3$ for 2-parameter modules and we compute a lower envelope. That can be done in expected time $n\log n$ yielding
\[ \Oc( 2^{\thick{\module}^2/4 + \thick{\module}}\log(2^{\thick{\module}^2/4 + \Oc(\thick{\module})}) = \Oc( \thick{\module}^2 2^{\thick{\module}^2/4 + \Oc(\thick{\module})} ) \qedhere.\]
Let $\beta \in S_j$. By \autoref{lem:same-pos-iso} the map $V_\alpha \to V_\beta = \langle V_\alpha \rangle_\beta$ is an isomorphism and $\langle \langle V_\alpha \rangle_\beta \rangle = \langle V_\beta \rangle$ by \autoref{prop:sur_restriction}. For every sub vector space $U \subset V_\beta$ there is a sub vector space $W \coloneqq V_{\alpha \to \beta}^{-1}(U) \subset V_\alpha$ such that $(\iota_\beta)_! \iota_\beta^* \langle W \rangle \iso \langle \langle W \rangle_\beta \rangle$ is equal to $\langle U \rangle$ as a submodule and this $W$ is chosen with some basis $B$ in the outer loop. Then \autoref{lem:slope_poly} verifies that $\mu(U)^{-1} = \mu(\langle \langle W \rangle_\beta \rangle)^{-1} = p_B(\beta)^{-1}$.
Let $B_j$ be the basis associated with the face $S_j$.  Since $p_j = p_{B_j}$ is minimal here, $p_j(\beta)$ is the minimum of all the slope polynomials at $\beta$, so the slope $\mu(\langle \langle B_j \rangle_\beta \rangle)$ is the maximum of all values $\mu(U)$.
\end{proof}

\subsection{Wall-and-Chamber Structure.}

Over a finite poset, the notion of $S$-equivalence for a module $V$ defines an equivalence relation on the space of \emph{all} stability conditions or central charges (\autoref{subsec:HN-formal}). This equivalence partitions the space of stability conditions into a finite number of codimension 0 and codimension 1 subspaces called \emph{chambers} and \emph{walls} \cite{HILLE2002205}. Here, we work over $\bbR^d$ and are only interested in the HN filtration of uniquely generated modules along \emph{Skyscraper} stability conditions, which are concentrated at a single $\alpha$. One can extend the wall-and-chamber structure to the space of Skyscraper stability conditions after pulling back the modules and stability conditions to appropriate finite sub-posets \cite[Appendix A]{fersztand2024harder}.

\begin{definition}\label{def:S_equ}Let $V \in \Persfpb{\bbR^d}$ and let $\alpha \leq \beta$ be in the same cell of $\grid(V)$. We say that $\alpha$ and $\beta$ are \emph{HN equivalent}, $\alpha \sim_{\text{HN}(V)}\beta$, if the Harder-Narasimhan filtrations $\langle V_\alpha \rangle^\bullet$ of $\langle V_\alpha \rangle$ and $\langle V_\beta\rangle^\bullet$ of $\langle V_\beta \rangle$ have the same length $\ell$ and for all $1\leq i\leq \ell$, we have $\langle V_\alpha\rangle^i \iso t_{\beta \to \alpha} \langle V_\beta\rangle^i$. We extend $\sim_{\text{HN}(V)}$ by symmetry and transitivity into an equivalence relation. 
\end{definition}

\autoref{alg:all_max_slope} does not yet return the whole HN filtration. For that we need to call it recursively on every face and so we will get a new subdivision of this face.

\begin{definition}
 A \emph{nested subdivision} of length $d$ of $C \subset \bbR^d$ is a rooted tree $T=(T, r \in T, \ \rho\ \colon T \setminus \{r\} \to T )$ of depth $d$ and a tuple of connected regions $ S_t \subset \bbR^d$ indexed by $T$ such that $S_r = C$ and
\begin{equation}
\forall t \in \Img(\rho) \colon \bigcup\limits^{\bullet}_{c \in \rho^{-1}(t)} S_c = S_t.
\end{equation}
\end{definition}

For every $\alpha \in C$ there is a unique leaf $l \in T$ with $\alpha \in S_l$. If the regions $S_v$ are convex polygonal, then point location can be done in $\log(|T|)$ time after appropriate preprocessing \cite[Theorem 6.3]{deBerg2008}.

We then want to endow each face $t \in T$ in the nested subdivision with a basis for the sub vector space $W_t$ of $V_\alpha$ which induces the filtration factor $F$ associated with it. 
Viewing $T$ as a poset with all arrows pointing to the root this is just a functor $T^{op} \to \mathcal{P}(V_\alpha)$ to the poset of subsets with inclusions. 

\begin{definition}
A pair
$\left((T, r, \rho), (S_t, W_t,  p_t)_{t \in T} \right)$ where
\begin{itemize}
    \item $(S_t)_{t \in T}$ is a nested subdivision,
    \item $(W_t)$ is a functor $T \to \mathcal{P}(V_\alpha)$ and
    \item $(p_t)_{t\in T}$ is a tuple of functions $S_t \to \bbR$ 
\end{itemize}
is called \emph{slope subdivision} of $C$ for $V$ uniquely generated at $\alpha$ if:

For an arbitrary $\beta \in C$, and $t_0 \ t_1 \dots t_\ell$  the unique directed path in $T$ from the root to a leaf such that $\beta \in S_{t_i}$ for $i = 0,\dots, \ell$, the Harder-Narasimhan filtration of $\langle V_\alpha \rangle$ \emph{at $\beta$} is 
  \[ 0 \into \langle W_{t_1} \rangle \into  \langle W_{t_2} \rangle \into \dots \into \langle W_{t_\ell} \rangle \into \langle V_\beta \rangle \into \langle V_\alpha \rangle \]
  and for each $i$, $\mu( W_{t_i}/W_{t_{i-1}}) =  p_{t_i}(\beta) + (-1)^{d} \prod_{i \in [d]} \beta_i$.
\end{definition}

We can now compute the whole HN filtration recursively.

\begin{algorithm}[H]
\caption{Harder-Narasimhan filtration for a grid cell.}
\label{alg:exact_hnf}
\DontPrintSemicolon
\KwIn{A graded matrix $M \in \K^{G \times R}$ presenting $V$ uniquely generated at $\alpha$; $C \subset \bbR^2$}
\KwOut{Rooted tree $(T, r, \rho)$. A nested convex polygonal subdivision $\{S_t \}_{t \in T}$ of $C$. A functor $W_\bullet \colon T^{op} \to \vect_K^{inj}$ and slope polynomials $\left( \tilde p_{W_t} \right)_{t \in T}$.}
Initialise $r, T \gets \{r\} , \rho$; $S = \{S_r\} \gets \{C\}$; $\left(W_t \right)_{t \in T} = (W_r) \gets (\emptyset)$; $(\tilde p_t)_{t \in T} = (\tilde p_r) \gets (0)$\;
$\{(S_i, B_i, \tilde p_{V_i})\}_{i \in I} \gets$ \autoref{alg:all_max_slope}($M, C$) \;
\For{$i \in I$}{

$M_i \gets \text{presentation of } \langle V_\alpha \rangle / \langle B_i \rangle$ (\autoref{par:quotient})\;
$\left((T_i, r_i, \rho_i), (S_t, W_t, \tilde p_t)_{t \in T_i} \right) \gets$ \autoref{alg:exact_hnf}$(M_i, S_i)$\;
$T \gets T \cup T_i$; $\rho(r_i) \gets r$\;
$(S_t,W_t, \rho_t)_{t \in T} 
\gets (S_t,W_t, \rho_t)_{t \in T} \cup (S_t, W_t \cup B_i,\tilde p_t)_{t \in T_i}$
}
\Return $\left((T, r, \rho), (S_t, W_t,\tilde  p_t)_{t \in T} \right) $\;
\end{algorithm}

\begin{proposition}\label{prop:hnf_cell_correct}
  If $C$ is contained in the grid cell of $V$ which contains $\alpha$, then the output \\
  
$\left((T, r, \rho), (S_t, W_t, \tilde p_t)_{t \in T} \right)$  of \autoref{alg:exact_hnf}$(M, C)$ is a
  \emph{slope subdivision} of $C$ for $V$.
\end{proposition}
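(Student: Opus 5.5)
We argue by induction on $\thick{V} = \dim_\K V_\alpha$, following the recursion of \autoref{alg:exact_hnf}. If $\thick V = 0$ the output is the trivial tree with root $r$, $S_r = C$, $W_r = \emptyset$, and there is nothing to check. For the inductive step, fix $\beta \in C$. Since $C$ lies in the grid cell of $V$ containing $\alpha$, \autoref{prop:max_slope_proof} applies. It gives a unique face $S_i$ containing $\beta$ (ties on walls are handled by the see-saw remark after \autoref{fig:slope_subdivision}) such that $\langle\langle B_i\rangle_\beta\rangle \subset \langle V_\beta\rangle$ is the maximal-slope submodule at $\beta$, with slope $p_{B_i}(\beta)^{-1}$. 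This is the first step $F^1$ of the HN filtration at $\beta$. By \autoref{prop:sur_restriction} and \autoref{lem:same-pos-iso}, $V_{\alpha\to\beta}$ is an isomorphism, so $\langle V_\beta\rangle = (\iota_\beta)_!\iota_\beta^*\langle V_\alpha\rangle$. We may therefore identify subspaces of $V_\beta$ with subspaces of $V_\alpha$, which is how $W_t \subset V_\alpha$ is meant in the definition of slope subdivision.

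Next we pass to the quotient. HN filtrations are compatible with quotients by the first step: if $F^1$ is the maximal destabilising submodule, then the HN filtration of $X/F^1$ is $F^i/F^1$. This follows from uniqueness of the classical HN filtration (\autoref{subsec:HN-formal}) together with \autoref{prop:HN-formal}. We apply this with $X = \langle V_\beta \rangle$. By \autoref{par:quotient}, $M_i$ presents $Q \coloneqq \langle V_\alpha\rangle/\langle B_i\rangle$, which is uniquely generated at $\alpha$ with $\dim Q_\alpha < \thick V$.

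We must check that the hypothesis of the proposition still holds for $Q$ and $S_i \subset C$, namely that $S_i$ lies in the grid cell of $Q$ containing $\alpha$. This is the step I expect to be the main obstacle. My plan is to use \autoref{prop:grid_similar}(i) for $\langle V_\alpha\rangle$ and for the submodule $\langle B_i\rangle$. The Betti numbers of both have each coordinate either equal to $\alpha_j$ or beyond the cell, and a presentation of the quotient can be built from their generators and relations, so $\grid(Q)$ introduces no new coordinates inside $C$. If the minimisation of $M_i$ removes Betti numbers, this only coarsens the grid, which is harmless. We also need $(\iota_\beta)_!\iota_\beta^*$ to commute with the quotient, so that $\langle V_\beta\rangle/\langle\langle B_i\rangle_\beta\rangle \cong \langle Q_\beta\rangle$. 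This holds because that functor is exact by \autoref{prop:left_kan_exact}.

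By induction, the recursive output on $(M_i, S_i)$ is a slope subdivision of $S_i$ for $Q$. Prepending $B_i$ to every $W_t$ (the operation $W_t \cup B_i$ in the algorithm), the preimages in $V_\alpha$ of the subspaces $W_t$ of $Q_\alpha$ give the filtration $\langle W_{t_1}\rangle \subset \langle W_{t_2}\rangle \subset \cdots$ along the path through $S_i$. Hence the HN filtration at $\beta$ is exactly the one read off the path root $\to S_i \to\cdots$. The nested-subdivision axioms hold because the faces $S_i$ partition $C$, by construction of the minimisation diagram intersected with $C$, and the subdivisions returned recursively partition each $S_i$.

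Finally, for the slope identity we use \autoref{lem:slope_poly}. The inverse slope of each factor is its slope polynomial, and the truncated polynomials differ from the full ones by the common leading term $(-1)^d\prod_j\beta_j$ (\autoref{obs:lower_envelope}). This gives the stated formula for $\mu(W_{t_i}/W_{t_{i-1}})$, read as the inverse slope.
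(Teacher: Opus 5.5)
Your proof is correct and follows the same route as the paper. Both argue by induction on $\thick{V}$: \autoref{prop:max_slope_proof} supplies the first HN step at each $\beta$, the inductive hypothesis is applied to the quotient $\langle V_\alpha\rangle/\langle B_i\rangle$, and its filtration is pulled back by enlarging the bases. You additionally check two points the paper leaves implicit: that $S_i$ still lies in the relevant grid cell of the quotient, and that $\langle V_\beta\rangle/\langle\langle B_i\rangle_\beta\rangle \cong \langle Q_\beta\rangle$ via exactness of $(\iota_\beta)_!\iota_\beta^*$.
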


\begin{proof}
Assume that the algorithm works for uniquely generated modules $V$ with $\thick{V} \leq n$. For $\thick{V}=1$ there is nothing to do.

Let $V$ be with $\thick{V}= n+1$, then by \autoref{prop:max_slope_proof} the first step finds all possible modules of highest slope. In particular $\mu( W_{t_1}/W_{t_{0}}) = \mu(W_{t_1}/0) = p_{t_1}(\beta) = \tilde p_{t_1}(\beta) + \beta_1 \beta_2$. The quotients must all be of thickness $\leq n$, where we know that \autoref{alg:exact_hnf} returns the correct result. To pull back the filtrations of the quotients it is enough to enlarge the basis for each sub vector space by the basis of the highest slope submodule, which is done in the last step.
\end{proof}

\begin{proof}[Proof of \autoref{thm:main}]
A slope subdivision is exactly the Wall-and-Chamber decomposition of the space of stability conditions restricted to the central charges indexed by $\alpha \in \bbR^d$ which we use here. \autoref{thm:main} is a direct consequence of
\autoref{prop:hnf_cell_correct} using the definition of equivalence in \autoref{def:S_equ} and \autoref{prop:grid_similar} (ii).
\end{proof}

\paragraph{Exact Computation of the Skyscraper Invariant.}
We assemble an algorithm to compute the Skyscraper Invariant up to arbitrary precision.

\begin{algorithm}[H]
\caption{Exact Skyscraper Invariant}
\label{alg:full_hnf}
\DontPrintSemicolon
\KwIn{A graded matrix $M \in \K^{G \times R}$ presenting $\module \in \Persfpb{\bbR^2}$}
\KwOut{A set of convex polygonal subdivisions $\{S_{i, \alpha, j,} \}_{I \times G_I \times J_{I, G_I} }$ of $\supp \module \subset \bbR^2$ and for each $F \in S_{i,j,l}$ an ordered list $HNF_l$ of semistable modules equipped with degree-$2$ multilinear polynomials, which compute their inverse slopes}
Decompose $\module$ into $\bigoplus_{i \in I} V_i$ with \texttt{AIDA} \;
\For{ $i \in I$}{
    \For{$\alpha \in \grid(V_i)$}{\label{line:loop_grid_Vi}
        Compute $\langle V_{i,\alpha} \rangle$ with \autoref{alg:submodule} \;
        Decompose $\langle V_{i,\alpha} \rangle \iso  \bigoplus_{j \in J_{i, \alpha}} V_{i, \alpha, j}$ with \texttt{AIDA} \;
        \For{$j \in J$}{
            Each time when calling \autoref{alg:all_max_slope} in the next step, for each basis $B_t$ also add a set of intervals $\Int_t \subset \mathcal{P}(\bbR^2)$ to the slope subdivision such that
            $\sum_{\mathcal{I} \in \Int_t} \mathds{1}_{\mathcal{I}} = \udim \langle B_t\rangle$ \;
            $\left((T_j, r_j, \rho_j), (S_t, W_t, \Int_t,\tilde p_t)_{t \in T_j} \right) \gets$ \autoref{alg:exact_hnf} $( V_{i, \alpha, j}, C_\alpha)$ \;
            
        }
    }
}
\Return{ $\left((T_j, r_j, \rho_j), (S_t, \Int_t, \tilde p_t)_{t \in T_j} \right)_{j \in  \bigcup_{i \in I, \alpha \in \grid(V_i)}, J_{i,\alpha}}$}
\end{algorithm}

\paragraph{Query time.}
We need to explain how to get the values of the Skyscraper Invariant $s_V^\theta(\beta, \gamma)$ from the output $\left((T_j, r_j, \rho_j), (S_t, \Int_t, \tilde p_t)_{t \in T_j} \right)_{j \in  \bigcup_{i \in I, \alpha \in \grid(V)}, J_{i,\alpha}}$.

Locate for each $i$ the next lower $\alpha \in \bbR^d$ such that for any $j \in J_{i,\alpha}$, $\beta \in S_{r_j}$. This can be done in $|I|\log(n)$ time, although in practice we will only want to sweep over $I$ and each $\grid(V_i)$ once and get the values  $s_V^\theta(\beta, \gamma)$ for all $\beta$ e.g. on an $\varepsilon$-grid at once.

For each $j \in J_{i, \alpha}$ locate the path $t_0 \dots t_\ell$ from the root to a leaf in $T_j$ such that $\beta \in S_{t_\ell}$ and consider the sets of intervals $\Int_{t_0} \dots \Int_{t_\ell}$  together with the numbers $p_{t_0}(\beta)  \dots  p_{t_\ell}(\beta)+ \beta_1 \beta_2$. 
Form a list containing all of these for all $i$ and $j$, then sort it (in $\thick{\module}\log \thick{\module}$ time). Traverse this list until the slope would become lower than $\theta$. For each $\Int_\bullet$ and $\mathcal{I} \in \Int_\bullet$ count how often $\gamma \in \mathcal{I}$ - this is the value $s_V^\theta(\beta, \gamma)$.

\begin{proposition}
The Skyscraper Invariant can be correctly computed from the output of \autoref{alg:full_hnf} with the procedure above and, using the same variables as in \autoref{prop:runtime_analysis}, its runtime is in
\[ \Oc \left( n^2\left( n^{1+\omega} \mathfrak{t}{(\module)}^\omega + n^5 + n^{\omega+1}\ell^{\omega-1}(\ell + \thick{\module}) q^{\ell^2/4 + \Oc(\ell)} + \thick{V}  \cdot kn^2 \cdot q^{k^2/4 + \Oc(k)} \right) \right) \]
if we assume that the subdivisions contain an average number of faces.
\end{proposition}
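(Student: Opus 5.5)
The statement has two parts: correctness of the query procedure, and the runtime of \autoref{alg:full_hnf}. I would treat them separately.

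\emph{Correctness.} Fix $\beta\le\gamma$ and $\theta$. The plan has four steps.

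First, using the decomposition $V\iso\bigoplus_{i\in I}V_i$ from \textsc{aida}, we have $\langle V_\beta\rangle\iso\bigoplus_i\langle (V_i)_\beta\rangle$, since $\langle -_\beta\rangle$ is additive. By additivity of HN filtrations (\autoref{eq:additivity}, \autoref{eq:slope_recomputation}), the HN filtration of a direct sum is obtained by merging the factors of the summands and sorting them by slope. So $s^\theta_V(\beta,\gamma)$ is the sum, over all factors of slope $\ge\theta$ across all summands, of their dimensions at $\gamma$.

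Second, for each $i$ let $\alpha$ be the lower-left corner of the cell of $\grid(V_i)$ containing $\beta$, which is the located point. Since $\alpha\le\beta$ lie in the same cell, \autoref{prop:grid_similar}(iii) turns the decomposition $\langle (V_i)_\alpha\rangle\iso\bigoplus_j V_{i,\alpha,j}$ into a decomposition of $\langle (V_i)_\beta\rangle$. By \autoref{prop:sur_restriction}, $\langle (V_i)_\beta\rangle\iso(\iota_\beta)_!\iota_\beta^*\langle (V_i)_\alpha\rangle$.

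Third, \autoref{prop:hnf_cell_correct} states that each tree $T_j$ is a slope subdivision of $C_\alpha$. Hence the root-to-leaf path through $\beta$ yields the HN filtration of $\langle (V_{i,\alpha,j})_\beta\rangle$. Its factor slopes are $\big(\tilde p_{t}(\beta)+\beta_1\beta_2\big)^{-1}$ by \autoref{lem:slope_poly}, with the truncation of \autoref{obs:lower_envelope} undone.

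Fourth, the stored intervals $\Int_t$ sum to $\udim\langle B_t\rangle$ at $\alpha$. I then need their indicator functions at $\gamma\ge\beta$ to agree with the factor dimensions at $\beta$. This holds because for $\gamma\ge\beta$, $(\iota_\beta)_!\iota_\beta^*$ does not change values (\autoref{ex:kan}). Counting the intervals containing $\gamma$ along the sorted list, truncated at slope $\theta$, therefore returns exactly $s^\theta_V(\beta,\gamma)$.

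\emph{Runtime.} Here I would sum costs over the loops of \autoref{alg:full_hnf}. The initial \textsc{aida} call costs $T_{\mathsc{aida}}(n,l)$, as in \autoref{prop:runtime_analysis}. For $d=2$, $\grid(V_i)$ has $\Oc(n^2)$ points in total, which gives the outer factor $n^2$.

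At each grid point we pay one kernel computation via \autoref{alg:submodule}. This is dominated by the \textsc{aida} term, which explains why $T_{\mathsc{ker}}(n,n,2)$ is absorbed. We also pay one \textsc{aida} call on $\langle (V_i)_\alpha\rangle$, whose size is at most $n$; this accounts for the $n^{1+\omega}\mathfrak t(V)^\omega+n^5+\dots$ term.

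Finally we pay the recursive \autoref{alg:exact_hnf} on each indecomposable summand of thickness at most $k$. There are at most $\thick V/k$ summands. Each call performs at most $k$ recursive steps of \autoref{alg:all_max_slope}, and each step costs $q^{k^2/4+\Oc(k)}$ kernel computations of size $k\times n$ (\autoref{prop:max_slope_proof}, \autoref{prop:brute_force_time}), each in $\Oc(kn^2)$ for $d=2$. The lower envelope computations add only logarithmic factors.

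The main obstacle is the recursion count. Because of the assumption of an average number of faces, each recursion level multiplies the work only by a constant number of subdivision faces. I would justify this using \autoref{rem:subdivision_size}, treating the number of faces per node as $\Oc(1)$, so that the recursion contributes at most the $\thick V\cdot kn^2 q^{k^2/4+\Oc(k)}$ term. The query-time claims then follow from point location in $\log n$ (\cite[Theorem 6.3]{deBerg2008}) and from sorting at most $\thick V$ factors.
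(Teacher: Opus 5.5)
Your proposal is correct and follows the paper's argument. Correctness comes from \autoref{prop:hnf_cell_correct} together with additivity of HN filtrations; you additionally spell out the transfer within a cell via \autoref{prop:grid_similar}(iii), \autoref{prop:sur_restriction} and the restriction of the stored intervals to $\langle\beta\rangle$, which the paper leaves implicit. The runtime is the analysis of \autoref{alg:skeleton} with the $\varepsilon$-grid replaced by $\bigcup_i\grid(V_i)$, which has size $\Oc(n^2)$.

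The only small difference is the number of faces per node. You take it to be $\Oc(1)$ from \autoref{rem:subdivision_size}, whereas the paper estimates the average face count as $\Oc(k^2)$ and absorbs it into the $q^{k^2/4+\Oc(k)}$ iteration over subspaces. Under the stated averaging assumption, both are acceptable.
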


\begin{proof}
The runtime analysis is almost the same as for \autoref{alg:skeleton}. The only difference is that we iterate over $\bigcup_i\grid(\module_i)$ whose total size is at most $\Oc(n^2)$. Each time \autoref{alg:exact_hnf} is called, we compute as many new filtrations as there are faces in the minimisation diagram computed by \autoref{alg:all_max_slope}. This number is on average $\log 2^{k^2/4 + \Oc(k)} \sim \Oc(k^2)$ and is negligible compared to the iteration over all subspaces. Since we deal with $2$-parameter modules we use \textsc{mpfree} for kernel computation which is essentially a column reduction of a $k \times n$ matrix.

Correctness follows from the correctness of \autoref{alg:exact_hnf} (\autoref{prop:hnf_cell_correct}) and additivity of Harder-Narasimhan filtrations.
\end{proof}

\subsection{Parallel Grid Scanning}
In practice, where one wants to compute filtered vectorisations of the Skyscraper Invariant, it is not a problem to restrict the module to an $\varepsilon$-grid $\varepsilon \bbZ^2$ and use \autoref{alg:skeleton}. The real power of \autoref{thm:main} and \autoref{alg:all_max_slope} lies in the fact that it is a recipe to avoid recomputation: For any direct summand $V_i$ of the input module, the induced grid $\grid(V_i)$ is coarser than $\varepsilon \bbZ^2$ in many places, especially when $\varepsilon$ is small. Since, as described in \autoref{rem:subdivision_size}, the slope subdivision of a cell typically consists of a single region, we mostly need to compute a single HN filtration per cell of the induced grid and not one for every point on $\varepsilon \bbZ^2$. We therefore overlay the two grids and use \autoref{alg:all_max_slope} only for those cells in each $\grid(V_i)$ that intersect $\varepsilon \bbZ^2$. We then store the Hilbert functions of all factors in an array. This enables the computation of the Skyscraper Invariant at practically arbitrary resolution. 

\begin{definition}
Let $\mathcal{G} \subset \bbR^d$ be a grid and $\mathcal{H} \subset \bbR^d$ a poset. We define the (grid) restriction
\[ \mathcal{G}_{|\mathcal{H}} \coloneqq \mathcal{G}_{|\mathcal{H}} \coloneqq  (\lfloor -\rfloor_\gri)^{-1}(\mathcal H). \]
\end{definition}
If $\mathcal{H}$ is a grid, then $\mathcal{G}_{|\mathcal{H}}$ is also a grid.

\begin{example}\label{ex:grid_comparison}
We consider modules generated from the point sets LargeHypoxicRegion and LargeHypoxicRegion2 from \cite{Vipond21} which we will use later in \autoref{sec:experiments} to showcase the runtimes of the algorithms on non-synthetic large data (in \autoref{tab:large_data}). We computed the first homology group of the density-alpha bifiltration of LargeHypoxicRegion and consider its two largest indecomposable components, $V$ and $W$, by number of generators. We set $\varepsilon = 0.5 \%$, plot the induced grid, the $\varepsilon$-grid, and the restrictions $\grid(V)_{|\varepsilon \bbZ^2} \cap \supp(V)$ and $\grid(W)_{|\varepsilon \bbZ^2} \cap \supp(W)$ over a small patch of the parameter space. 

\begin{figure}[H]
\begin{minipage}{0.48\textwidth}
\centering
    \includegraphics[width = \textwidth]{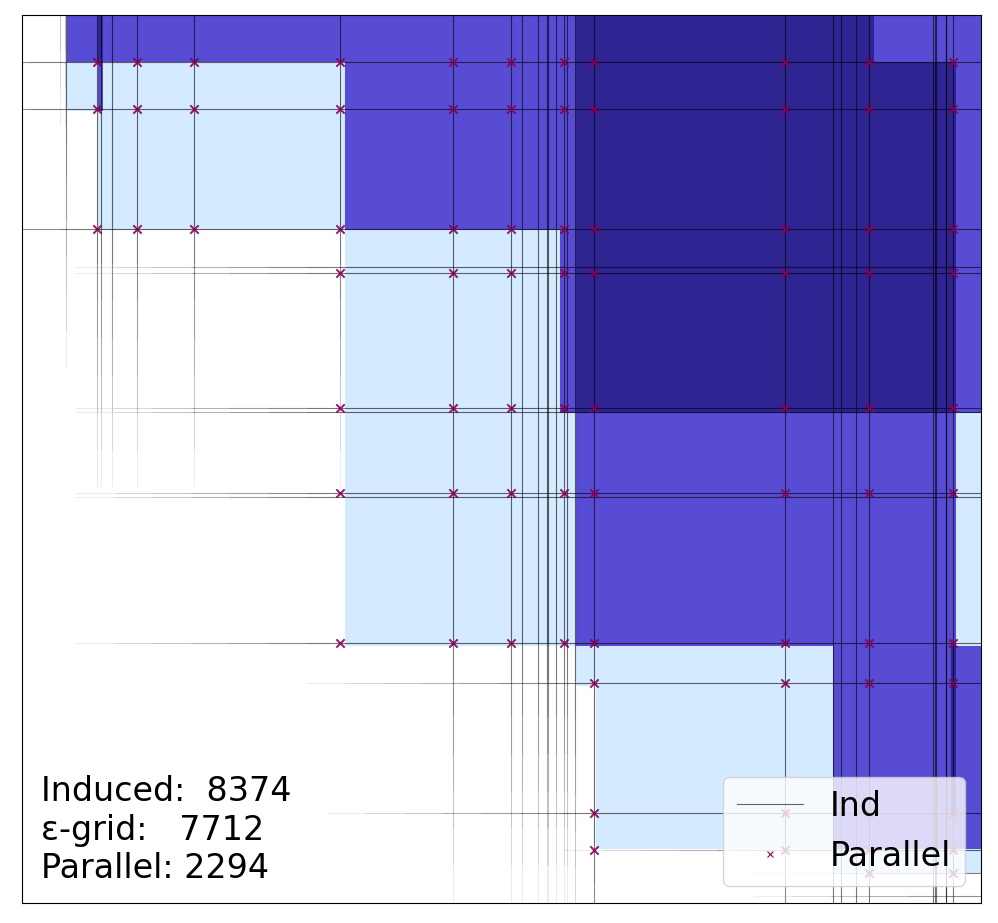}
\end{minipage}
\hfill
\begin{minipage}{0.48\textwidth}
\centering
    \includegraphics[width = \textwidth]{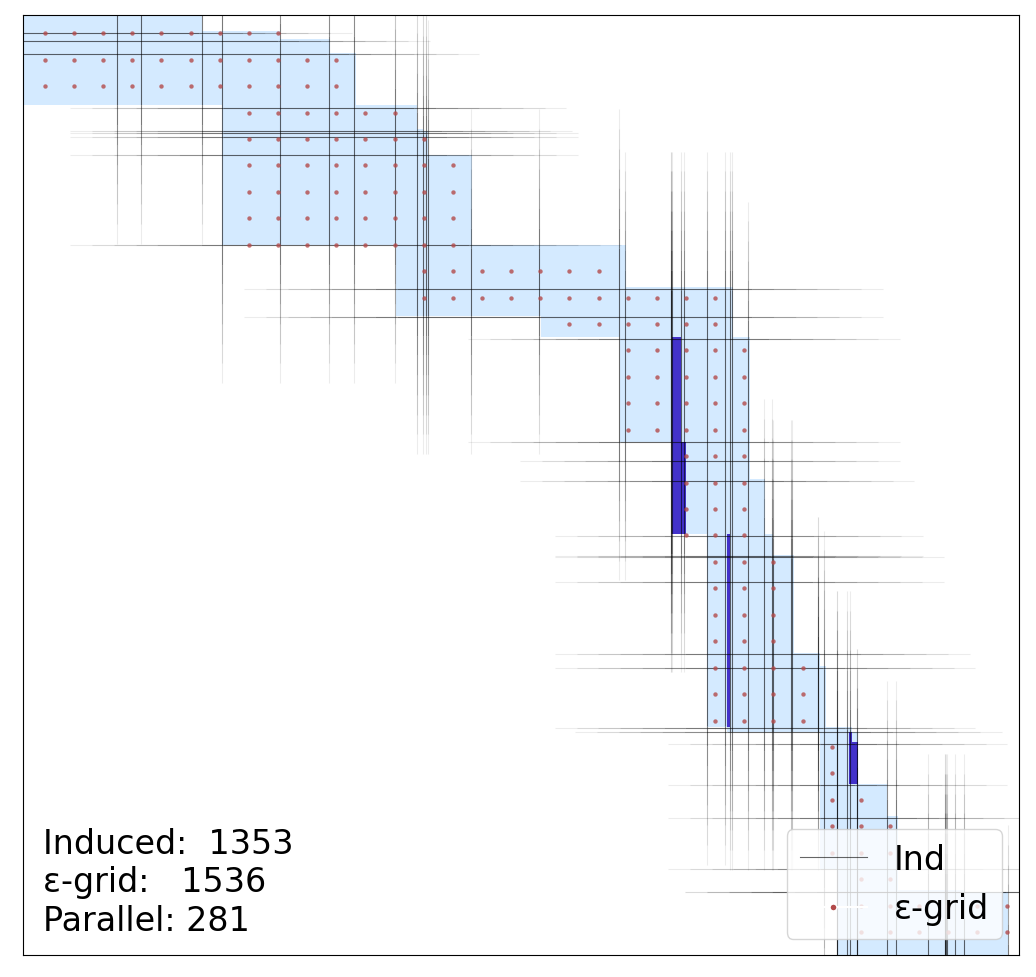}
\end{minipage}
\caption{Enlarged detail of the Hilbert functions of the modules $\module$ and $W$. Left: the induced grid $\grid(V)$ in black lines overlaid on the support and its restriction $\grid(V)_{|\varepsilon \bbZ^2} \cap \supp(V)$ marked with purple crosses for the parallel grid scan. Right: the induced grid $\grid(W)$ indicated by black lines and the $\varepsilon$-grid $\varepsilon \bbZ^2 \cap \supp(W)$ indicated by red dots.\\
Bottom left corners: number of points in the overlaid grids.}
\label{fig:induced_restricted}
\end{figure}
Observe how there are, in some places on the right, many points of the $\varepsilon$-grid in one cell of the induced grid. Conversely, there are sometimes clusters of close lines that will disappear in the restriction on the left.
By counting the number of grid points on the support, we get a precise count of how often the main loops in \autoref{alg:skeleton}, \autoref{alg:full_hnf}, and \autoref{alg:parallel} will be executed. As we can see, for these examples, the $\varepsilon$-grid and the induced grid are of very similar sizes, but the restricted grid is much smaller in both cases.
\end{example}

Since we first decompose any input module $\module$ into its components $\bigoplus \module_i$, the induced grids $\grid(V_i)$ and their restrictions $\grid(V_i)_{|\varepsilon \bbZ^2}$ become particularly small, as depicted in  \autoref{fig:induced_restricted}. 
It is now cheaper to compute all decompositions and slope subdivisions only for the points on each $\grid(V_i)_{|\varepsilon \bbZ^2}$ and then collect the filtration factors and their associated slopes for each point on $\varepsilon \bbZ^2$ where we really need them.
\begin{algorithm}[H]
\caption{Parallel Grid Scan}
\label{alg:parallel}
\DontPrintSemicolon
\KwIn{$\varepsilon \in \bbR$; a graded matrix $M \in \K^{G \times R}$ presenting $V \in \Persfpb{\bbR^2}$.} 
\KwOut{$\forall \beta \in \varepsilon \bbZ^2 \cap \supp(\module)$, a list of pairs $\udim W, \mu\in \text{fun}(\bbR^2, \bbN) \times \bbR$}
Decompose $\module$ into $\bigoplus_{i \in I} V_i$\;
\For{ $i \in I$}{
    $\gri_i \gets \grid(V_i)_{|\varepsilon \bbZ^2} \cap \supp(V_i)$ \;
    \For{$\alpha \in \gri_i$}{
        Compute $\langle V_{i,\alpha} \rangle$ with \autoref{alg:submodule} \;
        Decompose $\langle V_{i,\alpha} \rangle \iso  \bigoplus_{j \in J_{i, \alpha}} V_{i, \alpha, j}$\;
        \For{$j \in J$}{
            Each time when calling \autoref{alg:all_max_slope} in the next step, for each basis $B_t$ also add a set of intervals $\Int_t \subset \mathcal{P}(\bbR^2)$ to the slope subdivision such that
            $\sum_{\mathcal{I} \in \Int_t} \mathds{1}_{\mathcal{I}} = \udim \langle B_t\rangle$ \;
            $\left((T_j, r_j, \rho_j), (S_t, W_t, \Int_t, \tilde p_t)_{t \in T_j} \right) \gets$ \autoref{alg:exact_hnf} $( V_{i, \alpha, j}, C_\alpha)$ \;
        }
    }
}

% Step 3: iterate over epsilon-grid points
\For{$\beta \in \varepsilon \bbZ^2 \cap \supp(\module)$}{
\texttt{Initialise} a list of elements in $\text{fun}(\bbR^2, \bbN) \times \bbR$, $L_\beta \gets \{ \}$ \;
\For{$i \in I$}{
Locate the largest lower bound $\alpha_i \in \gri_i$ with $\alpha_i \leq \beta$ \;
\For{$j \in J_{i, \alpha_i}$}{
$t_1 \dots t_l \gets$ path of $\beta$ in the nested subdivision $\left((T_j, r_j, \rho_j), (S_t)_{t \in T_j} \right)$ \;
\For{$k = 1 \dots l$}{
$L_\beta$.append($(\Int_{t_k})_{|\langle \beta \rangle}, \tilde p_{t_k}(\beta) + \beta_1 \beta_2$) \;
}
}
}
\Return{$L_\beta$}
}
\end{algorithm}

In practice, we do not want to compute all of these slope subdivisions first, because then we would need to store all of them, which can be memory intensive. Instead, we put a co-lexicographic order on the grid $\varepsilon \bbZ \cap \supp(V)$ and the induced grid. We scan the $\varepsilon$-grid along this order and maintain, for each $i$, a pointer $p_i$ to the nearest lower element in each $\gri_i \coloneqq \grid(V_i)_{|\varepsilon \bbZ^2}$.
Whenever one of the pointers $p_i$ is moved in $x$-direction, we compute a decomposition and slope subdivision with \autoref{alg:exact_hnf}. Whenever it is moved in $y$-direction we delete all previously computed decompositions and slope subdivisions belonging to $V_i$.
By scanning the two grids in \emph{parallel} we can be sure to only store, for each $i$, at most $ |\pi_x (\grid(V_i)_{|\varepsilon \bbZ^2})|$ many decomposed modules and corresponding slope subdivisions.

\section{Experiments}\label{sec:experiments}
All experiments are performed on an i7-1255U Processor without any parallelisation. A timeout was set to $10$ min and each run is averaged over $5$ instances whenever possible. Our software and experiments are built using the \textsc{Persistence-Algebra} C++ library\footnote{https://github.com/JanJend/Skyscraper-Invariant, https://github.com/JanJend/Persistence-Algebra}.

\paragraph{Harder-Narasimhan filtrations.}
To generate indecomposable uniquely generated submodules of non-zero thickness we used synthetically generated noisy samples of a torus embedded in $\bbR^3$ (cf. \autoref{obs:factor_dimension}) of thickness up to $7$ and random presentation matrices.
We test the exhaustive search \autoref{alg:brute_force} and its filter \autoref{prop:criterion}. For thickness $9$ modules, \autoref{alg:brute_force} always ran for more than 10 minutes.

\begin{table}[h]
\centering
\caption{Runtime (ms) of \autoref{alg:brute_force}. Columns indexed by dimension/thickness}
\label{tab:combined_exp}
\begin{tabular}{ll|ccccccccc}
\hline
\textbf{Dataset} & \textbf{Variant} & \textbf{2} & \textbf{3} & \textbf{4} & \textbf{5} & \textbf{6} & \textbf{7} & \textbf{8}\\
\hline
Torus & with filter & 2.2 & 2.8 & 5.5 & 19.5 & 202 & 2703 & -  \\
Torus & w/o filter & 2.8 & 2.8 & 5.3 & 18.8 & 206 & 2732 & -  \\
Random & with filter & 8.1 & 4.8 & 6.4 & 28.9 & 454 & 5888 & 100031 \\
Random & w/o filter & 3.6 & 2.3 & 5.1 & 27.1 & 265 & 3624 & 61499  \\
\hline
\end{tabular}
\end{table}

We conclude that the filter (\autoref{prop:criterion}) for \autoref{alg:brute_force} is barely helpful.

The brute force approach of \autoref{alg:brute_force} has a doubly exponential dependence on the thickness but comparatively no dependence on the size of the presentation. Therefore, any restriction to a grid also does not really change the runtime, which is also why we have not included these in \autoref{tab:combined_exp}. Cheng's algorithm, on the other hand, needs us to restrict the module to a regular grid and is highly sensitive to the size of this grid. We will now test just how much time this takes in practice.

\paragraph{Runtime of Cheng's algorithm}
We have restricted the torus generated modules to equidistant grids and used Cheng's algorithm (\autoref{alg:compl-naive}) over both $\bbF_2$ and $\bbQ$. Since we can only generate modules from bifiltrations over $\bbF_2$, we interpreted them as $\bbQ$-modules when possible.

\begin{table}[H]
\centering
\setlength{\tabcolsep}{3pt}
\caption{Runtimes of \autoref{alg:compl-naive}(ms). An asterisk means that some, and a dash that all instances timed out. Columns are indexed by dimension/thickness, and rows by grid size}
\label{tab:cheng_exp}
\begin{tabular}{c|cccccc}
\hline
\textbf{$\bbQ \  / \  \bbF_2$} & \textbf{2} & \textbf{3} & \textbf{4} & \textbf{5} & \textbf{6} & \textbf{7} \\
\hline
2x2 & 1.9/1.8 & 1.8/3.0 & 1.8/4.3 & 5.0/5.0 & 4.4/10.3 & 11.6/33.4 \\
3x3 & 8.7/18.5 & 15.2/27.4 & 21.8/43.2 & 76.6/157k* & 40.8/3.7k & 53.0/301 \\
4x4 & 113/2.0k & 1.1k/117k* & 412/223* & 720/33.6k & 499/66.6k & 177/38.4k* \\
5x5 & 1.8k/154k* & 11.8k/283* & 36.8k/279k* & 11.3k/- & 5.2k/254k* & 2.7k/- \\
6x6 & 61.5k/- & 205k/- & 323k/- & 39.4k/- & 66.6k/- & 16.1k/- \\
\hline
\end{tabular}
\end{table}

We observe a great variability of runtimes for Cheng's algorithm depending on whether the heuristic introduced at the end of \autoref{sec:cheng} successfully computes and certifies the correctness of all the HN filtrations. The random algorithm from \cite{franks2023shrunk} performs computations on square matrices with $\thick{\module}^2\varepsilon^{-d}r$ rows for a large enough $r>0$. For $\K=\bbQ$ (but not for $\K=\bbF_2$), we have observed in our experiments that $r=1$ is large enough. The runtime of Cheng’s algorithm, when the empirical optimisations described in the final paragraph of \autoref{sec:cheng} are omitted, is significantly worse than that of our implementation, particularly over $\bbF_2$.

\begin{table}[H]
\centering
\setlength{\tabcolsep}{3pt}
\caption{Runtimes of \autoref{alg:compl-naive}(ms) with and without optimisations. A dash means that some instances timed out. Columns are indexed by dimension/thickness, and rows by grid size.}
\label{tab:runtime_methods}
\begin{tabular}{c|ccc}
    \hline
    \multicolumn{4}{c}{\textbf{without optimisations}} \\
    \hline
    \textbf{$\bbQ \ / \ \bbF_2$} & \textbf{2} & \textbf{3} & \textbf{4} \\
    \hline
    2x2 & 5/10k & 44/8.4k & 62/40k \\
    3x3 & 33/- & 563/- & 474/- \\
    4x4 & 357/- & 5.1k/- & 5k/- \\
    \hline
\end{tabular}
\hspace{2em}
\begin{tabular}{c|ccc}
    \hline
    \multicolumn{4}{c}{\textbf{with optimisations}} \\
    \hline
    \textbf{$\bbQ \ / \ \bbF_2$} & \textbf{2} & \textbf{3} & \textbf{4} \\
    \hline
    2x2 & 3/4 & 30/132 & 13/62 \\
    3x3 & 14/25 & 133/4k & 118/1.4k \\
    4x4 & 131/19k & 3.7k/- & 172/24k \\
    \hline
\end{tabular}
\end{table}

\paragraph{Comparison.}  \autoref{tab:cheng_exp} shows the strong dependence  of \autoref{alg:compl-naive} on the size of the grid to which the uniquely generated module has to be restricted. \autoref{alg:brute_force} does not require this restriction but, as visible in \autoref{tab:combined_exp}, it is highly dependent on the thickness of the input. 

The asymptotic runtimes of the algorithms suggest that for practical grids $(> 50 \times 50)$, Cheng's algorithm would outperform the exhaustive search not before $\thick{\module} \geq 20$. At that point, the number of arithmetic operations is too large for any computer to handle. 

\paragraph{Runtime on practical data.}
 We used benchmark data from \cite{benchmark_repo, Vipond21} to compare the parallel grid scan \autoref{alg:parallel} with \autoref{alg:skeleton}. 

\begin{table}[H]
\centering
\caption{Runtime (s) over a $50\times 50$ grid on small point sets: density-alpha \cite{function_delaunay}, multicover bifiltrations \cite{sheehy12multicover, multicover23}  (\autoref{subsec:bifiltr}) from sampled spheres and of random points. Number of points in name.}
\label{tab:small_data}
\begin{tabular}{c|ccccccc}
\hline
\textbf{Data} & Circle & $\Alpha^\gamma$7.5k & $\Alpha^\gamma$15k & $\Alpha^\gamma$30k & $\multicover$46 & $\multicover$96 & $\multicover$175 \\
\hline
n & 100 & 140 & 219 & 333 & 3313 & 9142.8 & 19,618.6 \\
\autoref{alg:parallel} & 0.03 & 0.16 & 0.42 & 1.25 & 0.08 & 0.41 & 1.45 \\
\autoref{alg:skeleton} & 1.87 & 0.47 & 0.41 & 0.76 & 12.68 & 32.97 & 73.54  \\
\hline
\end{tabular}
\end{table}

\autoref{alg:parallel} should not show the same quadratic dependency on the grid size of \autoref{alg:skeleton}. \autoref{tab:large_data} shows this and also that our algorithm computes the Skyscraper Invariant in a practical timeframe for relevant data sizes.

\begin{table}[H]
\centering
\caption{Runtime (s) on 40k points, uniformly sampled, and 4k locations in hypoxic2 (\emph{grid-size}).}
\label{tab:large_data}
\begin{tabular}{c|cccccc}
\hline
\textbf{Data} & uni $200$, $\bH_1$& uni $200$, $\bH_0$ & Hyp $20$ & Hyp $40$  & Hyp $80$  & Hyp $160$ \\
\hline
n & 113042  & 93480  & 9443  & 
    9443  &  9443  & 9443  \\
\autoref{alg:parallel} & 299.7 & 422.0 & 54.8 & 93.4 & 167.5 & 311.3  \\
\hline
\end{tabular}
\end{table}

\section{Applications and Open Problems}\label{sec:conclusion}

\paragraph{Conclusion.}
Our algorithms, together with the filtered Landscapes give researchers a new invariant for topological data analysis that is robust to noise, finer than the rank invariant, efficiently computable, and directly interpretable from a visual representation.

To develop fast algorithms, we studied the wall-and-chambers structure on skyscraper stability conditions for persistence modules and obtained a full description in \autoref{thm:main}. 

Our experiments with Cheng's algorithm empirically show that higher blow-up sizes are necessary for finite fields compared to infinite ones, confirming the intuition behind the results of \cite{ivanyos2018constructive}.  

\subsection{Filtered Landscapes} 
The Multiparameter Landscape \cite{vipond2018multiparameter} is directly defined by the ranks of the diagonal maps $V_{\alpha} \to V_{\alpha + (t,t)}$. Therefore, we can filter the Landscape by replacing the rank with the Skyscraper Invariant. As an example, we will use the IHC stained sample LargeHypoxicRegion2 of a tumour\footnote{https://github.com/MultiparameterTDAHistology/SpatialPatterningOfImmuneCells}, for which the authors of \cite{Vipond21} analyse the location of immune cells using Multiparameter Landscapes. 

\begin{figure}[h]
    \begin{minipage}[c]{0.48\textwidth}
        \centering
        \includegraphics[width= \textwidth]{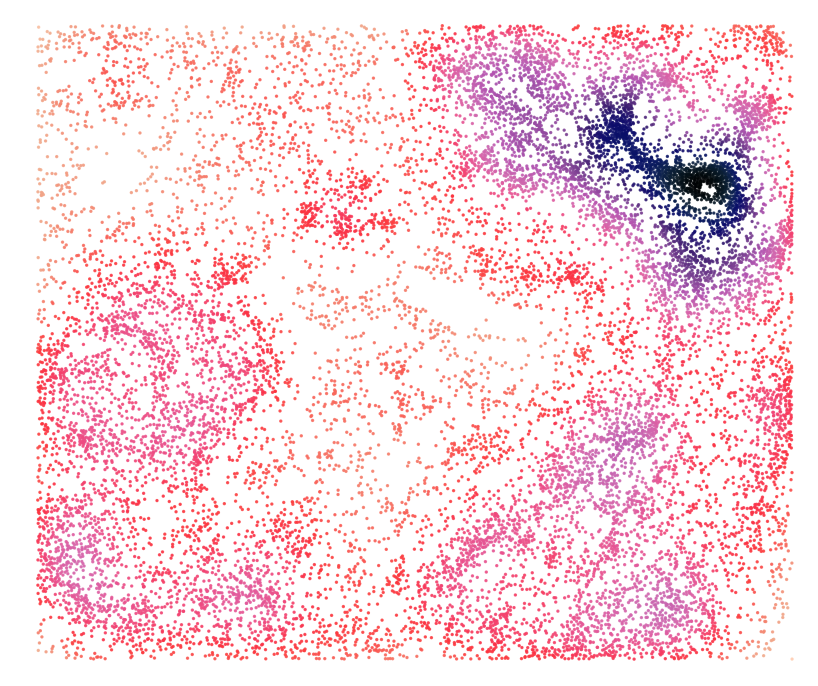}
    \end{minipage}
    \begin{minipage}[c]{0.49\textwidth}
        \centering
        \includegraphics[width= \textwidth]{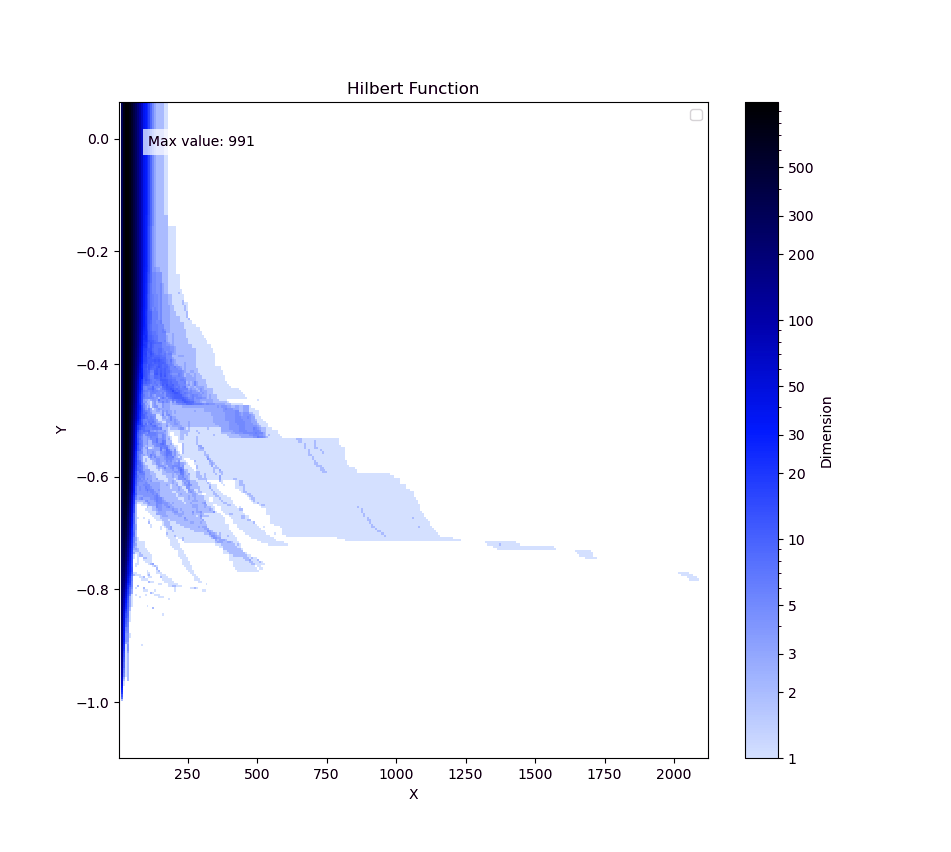}
    \end{minipage}
    \caption{Immune cells in tumor tissue with density estimate and Hilbert function of $H_1$. }
    \label{fig:hypoxic}
\end{figure}

Using \textsc{multipers} \cite{multipers}, we computed its density-alpha bifiltration with \textsc{function-delaunay} \cite{function_delaunay}, a minimal presentation of the first persistent homology group thereof with \textsc{mpfree} \cite{mpfree}, and restricted it to a $300\times 300$ grid. 
 Computing the Skyscraper Invariant on this $35360 \times 25553$ matrix took around 5 minutes. 

\begin{figure}[H]
\resizebox{\textwidth}{!}{\input{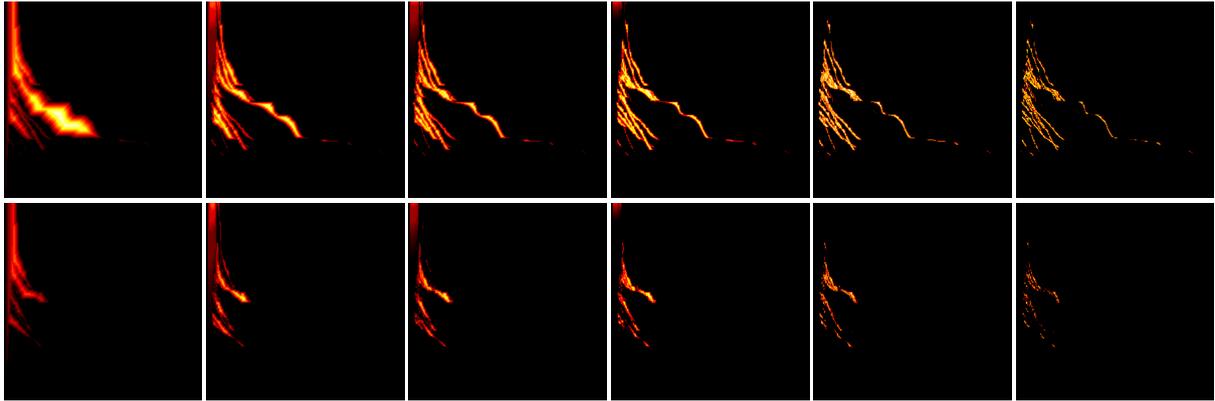}}
\caption{The filtered Landscape for $\bH_1$ of \autoref{fig:hypoxic} with $\theta = 0, 450, 900, \dots, 7200$ and $k=1,2$.}
\label{fig:landscape}
\end{figure}

Since the filtered Landscapes are stable \cite[Corollary 3.12]{fersztand2024harder}, their differences are, too, revealing elements of \emph{long} lifetime. Observe, e.g. the three strands in the fourth picture.

\begin{figure}[H]
    \centering
    \begin{minipage}[c]{0.24\textwidth}
        \centering
        \includegraphics[scale=0.32]{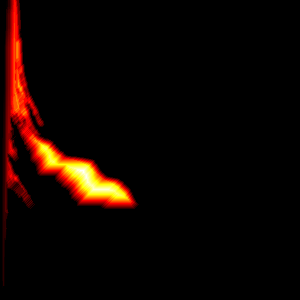}
    \end{minipage}
    \hfill
    \begin{minipage}[c]{0.24\textwidth}
        \centering
        \includegraphics[scale=0.32]{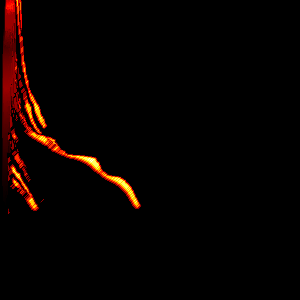}
    \end{minipage}
    \hfill
    \begin{minipage}[c]{0.24\textwidth}
        \centering
        \includegraphics[scale=0.32]{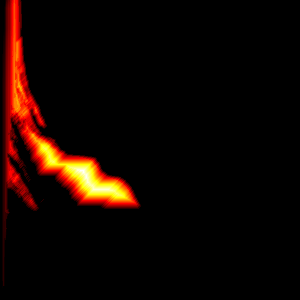}
    \end{minipage}
    \hfill
    \begin{minipage}[c]{0.24\textwidth}
        \centering
        \includegraphics[scale=0.32]{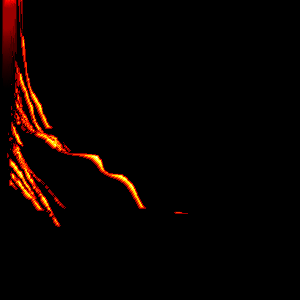}
    \end{minipage}
    \caption{The differences $L_0-L_1$, $L_1-L_2$, $L_0-L_2$, and $L_2-L_4$ from the first row.}
    \label{fig:diff}
\end{figure}

\subsection{Open Problems.}
\paragraph{Improving the runtime.}
\autoref{fig:grid} and \autoref{fig:induced_restricted} make it clear that, in practice, we are still computing the HN filtration too often. In \autoref{rmk:generalize_transfer_fun_partition}, we outline a strategy to prove that it is enough to compute the slope subdivision only for the equivalence classes of $\sim_{t, V}$. These are coarser than the grid by \autoref{prop:grid_similar} (ii), but we would probably need a condition that is easy to compute and that guarantees that two points in $\grid(V)$ are related by $\sim_{t, V}$ to experience a significant speed up. 

\paragraph{Differentiability.}
  To incorporate invariants of MPM into Machine Learning pipelines, the authors of \cite{diff} define a framework, which needs to be extended to include the Skyscraper Invariant. 
Instead of Landscapes, one could also filter other vectorizations of the rank invariant.
There are qualitatively different HN filtrations of MPM via changing the definition of the slope \cite{fersztand2024harder}. Can our methods be adapted to this general setting? 

\paragraph{Other Stability Conditions.}
 \autoref{rem:other_stab_cond} points towards a possible generalisation to another stability condition. Consider a positive locally integrable function $f\colon \bbR^d\to\bbR$. One can replace the real part of the stability condition $\int_{\bbR^d}\udim V$ with $\int_{\bbR^d}f \cdot\udim V$. For each $J \subset [d]$, we can compute the slopes $\mu( \langle V_\alpha \rangle_{|F_J} )$ and construct the slope polynomial with these coefficients. Using \autoref{prop:variation}, this polynomial will approximate the slope. If $f$ is Lipschitz or has other bounds on its growth, then we can bound the error of this approximation. If for some parameter value $\alpha$, one slope polynomial is not just minimal, but even minimal up to this error, we have again found the unique submodule of highest slope for the stability condition at $\alpha$.  

\paragraph{Möbius Inversion.}
In \cite{mccleary2022edit}, McCleary and Patel Möbius-inverted the rank invariant (\cite{Rota1964}) to represent it as a signed set of rectangles. 
For each $\theta \in \bbR$ the same can be done with $s^\theta\colon (\bbR^d)^{op} \times \bbR^d\to \bbZ^\op$, yielding a path in the space of signed $d$-cells in $\bbR^d$, $\Hom( \text{Int}(\bbR^d),\bbZ)$. What does it look like? 

For a specific $\theta$, $s^\theta$ is not always represented by a persistence module, but it sometimes is. An example was given in \autoref{ex:s_rank_representable} and we can see that these modules are not finitely presented anymore, but they are finitely encoded over a semi-algebraic set in the sense of Miller's work \cite{Miller25}. 
If $s^\theta$ can be represented by a module, we conjecture that this is always the case. In general, this implies that the Möbius inversion will not be finite. The question is therefore: Is there a signed semi-algebraic barcode or persistence diagram \cite{Kim2021, botnan2021signed} that represents each $s^\theta$ and can this be used to construct an efficient data structure to store the Skyscraper Invariant?

\printbibliography

\appendix

\end{document}